\newtheorem{theorem}{Theorem}
\newtheorem {lemma}{Lemma}
\newtheorem{defnn}{Definition}
\newtheorem {corollaryy}{Corollary}
\newtheorem{example}{Example}
\newenvironment{proof}{\noindent {\bf Proof :\ } }{$\Box$ }
\newenvironment{corollary}{\begin{corollaryy}{\bf :}\sl}{\end{corollaryy}}
\newtheorem {conjecture}{Conjecture}
\newtheorem {oproblem}{Open Problem}
\def\boxit#1{%
  \smash{\fboxsep=0pt\llap{\rlap{\fbox{\strut\makebox[#1]{}}}~}}\ignorespaces
}
\begin{document}

\begin{frontmatter}

%% Title, authors and addresses

%% use the tnoteref command within \title for footnotes;
%% use the tnotetext command for theassociated footnote;
%% use the fnref command within \author or \affiliation for footnotes;
%% use the fntext command for theassociated footnote;
%% use the corref command within \author for corresponding author footnotes;
%% use the cortext command for theassociated footnote;
%% use the ead command for the email address,
%% and the form \ead[url] for the home page:
%% \title{Title\tnoteref{label1}}
%% \tnotetext[label1]{}
%% \author{Name\corref{cor1}\fnref{label2}}
%% \ead{email address}
%% \ead[url]{home page}
%% \fntext[label2]{}
%% \cortext[cor1]{}
%% \affiliation{organization={},
%%            addressline={}, 
%%            city={},
%%            postcode={}, 
%%            state={},
%%            country={}}
%% \fntext[label3]{}

\title{Maximal Length Cellular Automata : A Survey} %% Article title

%% use optional labels to link authors explicitly to addresses:

\author[1]{Sumit Adak}\corref{cor1}%
\fnref{fn1}
\ead{suad@dtu.dk}
\affiliation[1]{organization={Technical University of Denmark},
            addressline={Kgs. Lyngby},
            postcode={2800},
            country={Denmark}}           
\author[2]{Sukanta Das}
\ead{sukanta@it.iiests.ac.in}
\affiliation[2]{organization={Indian Institute of Engineering Science and Technology, Shibpur},
            addressline={Howrah},
            postcode={711103},
            country={India}}

\cortext[cor1]{Corresponding author}
\fntext[fn1]{The majority of the work was completed when the author was in Indian Institute of Engineering Science and Technology, Shibpur} 

\begin{abstract}
This article surveys some theoretical aspects of Cellular Automata (CAs) research. In particular, we discuss on {\em maximal length} CA. An $n$-cell CA is a maximal length CA, if all the configurations except one form a single cycle. There is a bonding between maximal length CA and primitive polynomial. So, primitive polynomials occupy a good amount of space in this survey. The main goal of this survey is to provide a tutorial on maximal length CA theory to researchers with classical and new results on {\em maximality}. We also give a compact collection of known results with references to their proofs, and to suggest some open problems. Additionally, some new theorems and corollaries are added to bridge the gaps among several known results.
\end{abstract}

%%Graphical abstract
%\begin{graphicalabstract}
%\includegraphics{grabs}
%\end{graphicalabstract}

%%Research highlights
% \begin{highlights}
% \item Research highlight 1
% \item Research highlight 2
% \end{highlights}

%% Keywords
\begin{keyword}
%% keywords here, in the form: keyword \sep keyword
Cellular Automata (CAs) \sep maximal length CA \sep primitive polynomial \sep rule \sep linear CA

%% PACS codes here, in the form: \PACS code \sep code

%% MSC codes here, in the form: \MSC code \sep code
%% or \MSC[2008] code \sep code (2000 is the default)

\end{keyword}

\end{frontmatter}

\section{Introduction}
\label{section:intro}

This article presents a tutorial on a special class of non-uniform cellular automata (CAs); for a general survey on CAs, please see \cite{kamalikasurvey}. Non-uniform (hybrid) finite CAs under null boundary condition share a very special property which the classical finite (non-uniform) CAs under periodic boundary condition do not. The property is the {\em maximality} in cycle length, which means, the presence of a cycle of length $2^n-1$ in an $n$-cell binary cellular automaton (CA). This property was first observed by Pries et al.~\cite{Pries86}. These CAs are traditionally named as {\em maximal length} CAs. There is a connection between primitive polynomials over {\em Galois Field} (GF) and maximal length CAs. The main focus of this work on linear maximal length CA and primitive polynomial. In our further reference, by ``CA'' and ``maximal length CA'', we shall mean ``linear CA'' and ``linear maximal length CA'', if not mentioned otherwise.

Although from 1980s, maximal length CA is being observed, they were initially searched by experiments. In early phase, Hortensius et al.~\cite{Horte89a,Horte89b,Horte89c} used maximal length CAs to generate pseudo-random number. Using computer simulation, they proposed a table which gives the non-uniform constructions necessary to get a CA with maximal length cycle. Serra et al.~\cite{Serra90c} showed that maximal length CAs with all non-zero configurations lying in a single cycle produce high quality pseudo-random patterns. In \cite{CattellM96, Cattell2}, Cattell and Muzio developed an adequate algebraic framework for the study of maximal length CAs. It was also shown that deciding primitive polynomials and deciding maximal length CAs are two equivalent problems. 

In this tutorial article, we explore the maximal length CAs under three different categories of cellular automata - linear, complemented and non-linear. At first, we define cellular automata and maximal length CAs. Next, we establish the relation between linear maximal length CA and primitive polynomial (Section~\ref{subsection:relation}). Next, we show the different types of approaches which are applied to analyze maximal length CAs (Section~\ref{section:analysis}). We explore various types research on maximal length CA which are done from beginning and the present situation for some basic conditions to get maximality, such as dependence on the boundary conditions. Synthesis of maximal length CA from a given primitive polynomial over GF(2) is discussed in next section (Section~\ref{section:synthesis}). This work was most attractive work for many researchers till date. Next, we synthesize the primitive polynomials using cellular automata as a tool (Section \ref{section:synthesis-PP}). Different types of greedy strategies are discussed here. All the prior topics are based on linear maximal length CA, but in next two, we show the complemented and non-linear maximal length CAs. That means, generation of complemented and non-linear maximal length CAs (Section~\ref{section:complemented} and~\ref{section:nonlinear}). All the previous discussions are based on GF(2), next we extend it over GF($q$) to finding maximal length CAs (Section~\ref{section:MLCAGF(q)}). Besides, maximal length CA have been used for diverse applications, but most attractive applications are pseudo-random number generator (PRNG) and cryptography (Section~\ref{section:applications}). Several open problems are discussed at the end of most of the sections. In discussion, we conclude with an important open problem based on maximal length CA.

\section{One-dimensional Hybrid Cellular Automata}
\label{section:preliminaries}

\subsection{Basics}
\label{subsection:CA}

A cellular automaton (CA) that we are considering here consists of an array of $n$ cells numbered from 0 to $n-1$. Cells can assume either 0 or 1 as their state. Let $x_i$ denote the state of cell $i$. Then, a configuration of the CA is $x=(x_0x_1\cdots x_{n-1})$ where $x_i \in \{0,1\}$. Let us consider that $\mathcal{C}$ is the collection of all possible configurations of an $n$-cell CA (that is $|\mathcal{C}|$=$2^n$). Then, a CA acn be seen as a function $G$: $\mathcal{C}$ $\rightarrow$ $\mathcal{C}$, which satisfies the following conditions: $y=G(x)$, $x,y \in {\mathcal{C}}$, where $x=(x_i)_{0\leq i\leq {n-1}}$, $y=(y_i)_{0\leq i\leq {n-1}}$ and $y_i={\mathcal{R}_i}(x_{i-1},x_i,x_{i+1})$. The ${\mathcal{R}_i}:\{0,1\}^3 \mapsto \{0, 1\}$ is a next state function for the cell $i$, commonly known as {\em rule}. Here, $y$ is called the {\em successor} of $x$, and $x$ the {\em predecessor} of $y$. In this work, we consider null boundary condition where left and right neighbors of cell 0 and cell $n-1$ are always in state 0. That is, $y_0=\mathcal{R}_0(0,x_0,x_1)$ and $y_{n-1}=\mathcal{R}_{n-1}(x_{n-2},x_{n-1},0)$. 

\begin{table}\small
	\begin{center}	
		\caption{The rules 90 and 150}	
		\label{table:rules90150}
		\begin{tabular}{cccccccccc}\hline
		Present~state &  111 & 110 & 101 & 100 & 011 &  010 &  001 &  000 & Rule \\
		(RMT)& (7) & (6) & (5) & (4) & (3) & (2) & (1) & (0) &  \\\hline
{\rm (i)~Next~state}    &   0 &  1  &  0  &  1  &   1  &   0  &   1  & 0 & 90 \\
{\rm (ii)~Next~state}    &   1  &  0 &  0  &  1  &   0  &   1  &   1  &   0  &  150 \\\hline
\end{tabular}
	\end{center}
\end{table}

The rule $\mathcal{R}_i$ can be expressed in tabular form (Table~\ref{table:rules90150}) and decimal equivalent of the eight-bit binary sequence generally identifies the rule. Obviously, there are $2^8$ distinct rules. If all the cells of a CA use a single rule, the CA is called {\em uniform} CA; otherwise it is a {\em non-uniform} or {\em hybrid} CA \cite{SukantaTh,Adak-IS-2021}. To define an arbitrary CA, therefore, we need a {\em rule vector} ${\mathcal {R}} = ({\mathcal R_0}, {\mathcal R_1}, \cdots, {\mathcal R_i}, \cdots, {\mathcal R_{n-1}})$, where ${\mathcal R_i}$ is the rule of cell $i$. If not specified otherwise, by ``CA'' we shall mean hereafter one-dimensional hybrid CA under null boundary condition.

For a particular rule ${\mathcal R}_i$,  let ${\mathcal R_i[x_{i-1}x_{i}x_{i+1}]}$ denote the next state of cell $i$ for the present states combination $x_{i-1}x_{i}x_{i+1}$ of cell $i$ and its neighbors. Each $x_{i-1}x_{i}x_{i+1}$ is called as Rule Min Term (RMT). RMTs are identified by their decimal equivalents. For example, 010 of the first row of Table~\ref{table:rules90150} is the RMT 2, next state against which is 0 for rule 90, 1 for rule 150. If $r$ is an RMT of ${\mathcal R_i}$, we write ${\mathcal R_i}[r]$ to denote its next state. Hence, 90[2]=0, 150[2]=1 (see Table~\ref{table:rules90150}).

Sometimes, a configuration can alternatively be represented by its {\em RMT Sequence (RS)}.
\begin{defnn}
Let $x=(x_i)_{0\le i\le n-1}$ be a configuration of an $n$-cell CA. The  RMT sequence of $x$, denoted as $\tilde{x}$, is $(r_i)_{0\le i\le n-1}$ where $r_i$ is the RMT $x_{i-1}x_ix_{i+1}$.
\end{defnn}
For example, (0, 1, 3, 7, 7, 6) or simply, 013776 is the RMT sequence of the configuration 001111. An interesting relation is followed in the sequence: $r_{i+1}$ is either $2r_i$ or $2r_i+1$ under modulo 8 operation. Obviously, an arbitrary sequence of RMTs cannot form an RMT sequence.

\begin{defnn}
Two RMTs $r$ and $s$ ($r\ne s$) are said to be equivalent to each other if $2r\equiv 2s\pmod8$ and said to be sibling to each other if $\lfloor \frac{r}{2} \rfloor$=$\lfloor \frac{s}{2} \rfloor$ (\cite{SukantaTh}). 
\end{defnn}
Therefore, RMTs 0 and 4 are equivalent to each other, and RMTs 0 and 1 are sibling to each other. That is, the RMTs $0xy$ and $1xy$, and the RMTs $xy0$ and $xy1$ form equivalent and sibling RMT sets respectively to any $x,y\in\{0,1\}$.

The 256 rules can be divided into three categories - {\em linear}, {\em complemented} and {\em non-linear} rules \cite{ppc1}. Following is the basic definition of these categories.
\begin{defnn}
\label{definition:linearrules}
A rule ${R}:\{0,1\}^3\rightarrow \{0,1\}$ is {\bf linear} if $R(x,y,z)=ax+dy+bz\pmod2$ for some constants $a,d,b\in \{0,1\}$. A rule $\overline{R}:\{0,1\}^3\rightarrow \{0,1\}$ is {\bf complemented} if $\overline{R}(x,y,z)=1-{R}(x,y,z)$ for some linear rule $R$. We call a rule \textbf{non-linear} if it is neither linear nor complemented.  
\end{defnn}
There are eight linear rules - 0, 60, 90, 102, 150, 170, 204 and 240, and eight complemented rules --15, 51, 85, 105, 153, 165, 195 and 255 (see Table~\ref{table:linearrules}). From Table~\ref{table:linearrules}, we get that $a=1$, $d=1$ and $b=0$ for rule 60, whereas for rule 90, $a=1$, $d=0$ and $b=1$. Sometime, a linear or complemented rule can be written as - ($a,d,b$). As an example, rule 90 can be written as (1,0,1). Apart from the 16 rules noted in Table~\ref{table:linearrules}, the rest 240 rules are non-linear.

\begin{table}[h]\tiny
	\setlength{\tabcolsep}{5.5pt}
	\begin{center}
		\caption{Linear and complemented CA rules}	
		\label{table:linearrules}
		\resizebox{0.7\textwidth}{!}{
		\begin{tabular}{c||c}\hline
Linear Rules & Complemented Rules \\\hline
${0}(x,y,z)=0$ & ${255}(x,y,z)=1$\\
${60}(x,y,z)=x+y\pmod{2}$ & ${195}(x,y,z)=1-{60}(x,y,z)$\\
${90}(x,y,z)=x+z\pmod{2}$ & ${165}(x,y,z)=1-{90}(x,y,z)$\\
${102}(x,y,z)=y+z\pmod{2}$ & ${153}(x,y,z)=1-{102}(x,y,z)$\\
${150}(x,y,z)=x+y+z\pmod{2}$ & ${105}(x,y,z)=1-{150}(x,y,z)$\\
${170}(x,y,z)=z$ & ${85}(x,y,z)=1-{170}(x,y,z)$\\
${204}(x,y,z)=y$ & ${51}(x,y,z)=1-{204}(x,y,z)$\\
${240}(x,y,z)=x$ & ${15}(x,y,z)=1-{240}(x,y,z)$\\ \hline
\end{tabular}
		}
	\end{center}
	%\vspace{-1.0em}
\end{table}

\begin{defnn}
\label{definition:diffCA}
If all the rules of a rule vector ${\mathcal R}$ are linear, then the CA is {\bf linear}. If at least one rule is complemented and the rest are linear, then the CA is a {\bf complemented CA}. Otherwise, (that is, at least one rule of ${\mathcal R}$ is non-linear) the CA is {\bf non-linear}.
\end{defnn}
For example, the CA with rule vector (90, 170, 102, 60) is linear whereas the CA (90, 170, 153, 60) is a complemented CA because rule 153 is a complemented rule. The CA (90, 170, 86, 60) is a non-linear CA because 86 is a non-linear rule.

The sequence of configurations generated during their evolution (with time) directs the CA behaviour. If CAs are finite, the sequence of configurations can be represented by {\em transition diagram}. Figure~\ref{figure:st-150-90-90-90} shows such a diagram of 4-cell CA $(150, 90, 90, 90)$. The transition diagram of a CA may contain {\em cyclic} and {\em acyclic} configurations (see Definition~\ref{definition:cyclelength}). Based on this, a CA can be categorized as either {\em reversible} and {\em irreversible}.

\begin{figure}
	\centering
	\includegraphics[width= 4.2in, height = 1.2in]{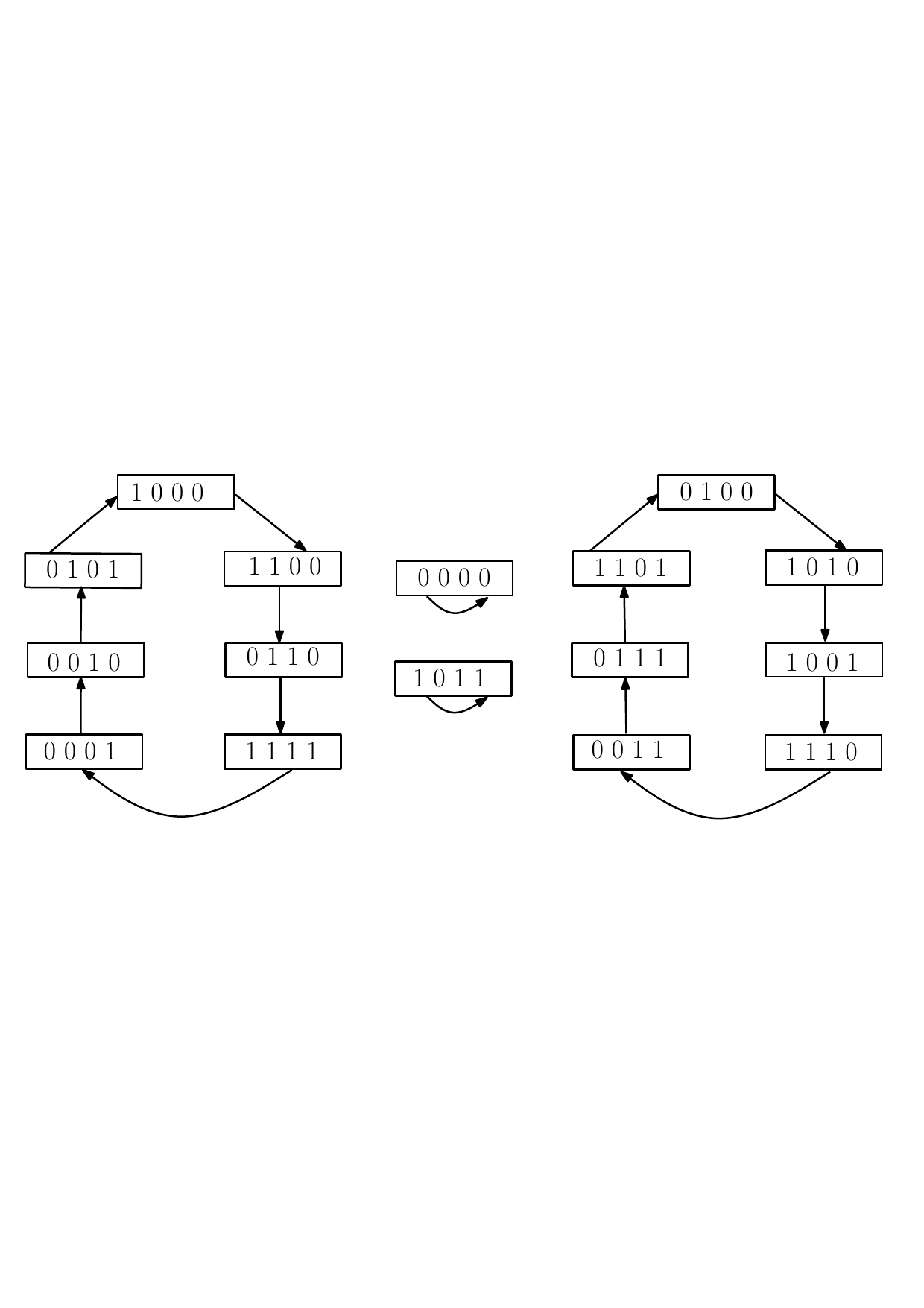}
	\caption{Transition diagram of CA $(150, 90, 90, 90)$}
	\vspace{-1.0em}  
	\label{figure:st-150-90-90-90}
\end{figure}

\begin{defnn}
\label{definition:cyclelength}
A configuration $x\in \{0,1\}^n$ is said to be cyclic if $x=G^t(x)$ for some $t\in\{1, 2, \cdots, 2^n-1\}$. Otherwise the configuration is acyclic. The CA has a cycle of length $t$ if $x=G^t(x)$ but $x\ne G^{t_1}(x)$ for any $t_1< t$.
\end{defnn}

\begin{defnn}
\label{sefinition:reversible}
A CA is reversible if all the configurations are cyclic; otherwise the CA is irreversible.
\end{defnn}

In other word, a CA is reversible if its (global transition) function $G$ is bijective. Deciding an arbitrary CA as reversible is an issue. This issue has been addressed in \cite{SukantaTh} for an $n$-cell non-uniform CA under null boundary condition. In a reversible CA, the initial configuration repeats after certain number of time steps and each configuration has exactly one predecessor (Figure~\ref{figure:st-150-90-90-90}). On the other hand, in an irreversible CA, there are some configurations which are having more than one predecessor.

\subsection{Maximal Length Cellular Automata}
\label{section:MLCA}

Reversible CA can be classified as maximal length and non-maximal length CA: 

%The maximal length CA is the special class of reversible CA having a cycle of length $2^n-1$ where $n$ is the number of cells in the CA. Following is the basic definition of maximal length CA.

\begin{defnn}
\label{definition:MLCA}
A finite CA is a maximal length CA if all but one configuration of it are in a single cycle.
%An $n$-cell CA is maximal length if the CA consists of two cycles, one is length 1, and the other is of length $2^n-1$.
\end{defnn}

%Furthermore, we can divide maximal length CAs in three categories - {\em linear}, {\em complemented} and {\em non-linear} maximal length CA. Linear and complemented CA can be characterized by matrix algebra where non-linear CA cannot. Further, in Section~\ref{section:complemented}, we discuss about complemented maximal length CA and in Section~\ref{section:nonlinear} we discuss about non-linearity in  maximal length CA. Rest of the sections are dependent on linear maximal length CA. Therefore, further by ``maximal length CA'' we shall mean ``linear maximal length CA''.

%For linear maximal length CA the length 1 cycle is always formed by the configuration $0^n$. Maximal length CAs are always reversible and length of cycle formed by a binary maximal length CA is $2^n-1$. The CA $(150, 150, 90, 150)$ is a maximal length CA, which is having a cycle of length $2^4-1 = 15$ (see Figure~\ref{figure:st-150-150-90-150}). Whether the CA $(90, 150, 90, 90)$ is not maximal length (see Figure~\ref{figure:st-90-150-90-90}). 
%Sometimes, the local rule of a CA is a linear function, which can be represented as below.

The researches on maximal length CA are primarily based on binary CAs that use Wolfram's rules. So, major share of this survey is occupied by this class of CAs. For binary maximal length CA, length of maximal cycle is $2^{n}-1$ where $n$ is number of cells of the CA. In our further discussion, if not stated otherwise, by maximal length CA we shall mean this class of CAs. In a linear maximal length CA of size $n$, the marginal configuration is $0^n$ which forms a single length cycle (fixed point), and the rest $2^n-1$ (non-zero) configurations form another cycle. For example, the CA $(150, 150, 90, 150)$ is a maximal length CA, which is having a cycle of length $2^4-1 = 15$ (see Figure~\ref{figure:st-150-150-90-150}). In case of complemented CAs and some of the non-linear maximal length CAs, the marginal configuration is some non-zero configuration. In literature, however, most of works with maximality are centred around linear CAs. In our further reference, if not stated otherwise, by ``maximal length CA'', we shall mean only linear maximal length CA. 

\begin{figure}
	\centering
	\includegraphics[width= 4.2in, height = 1.2in]{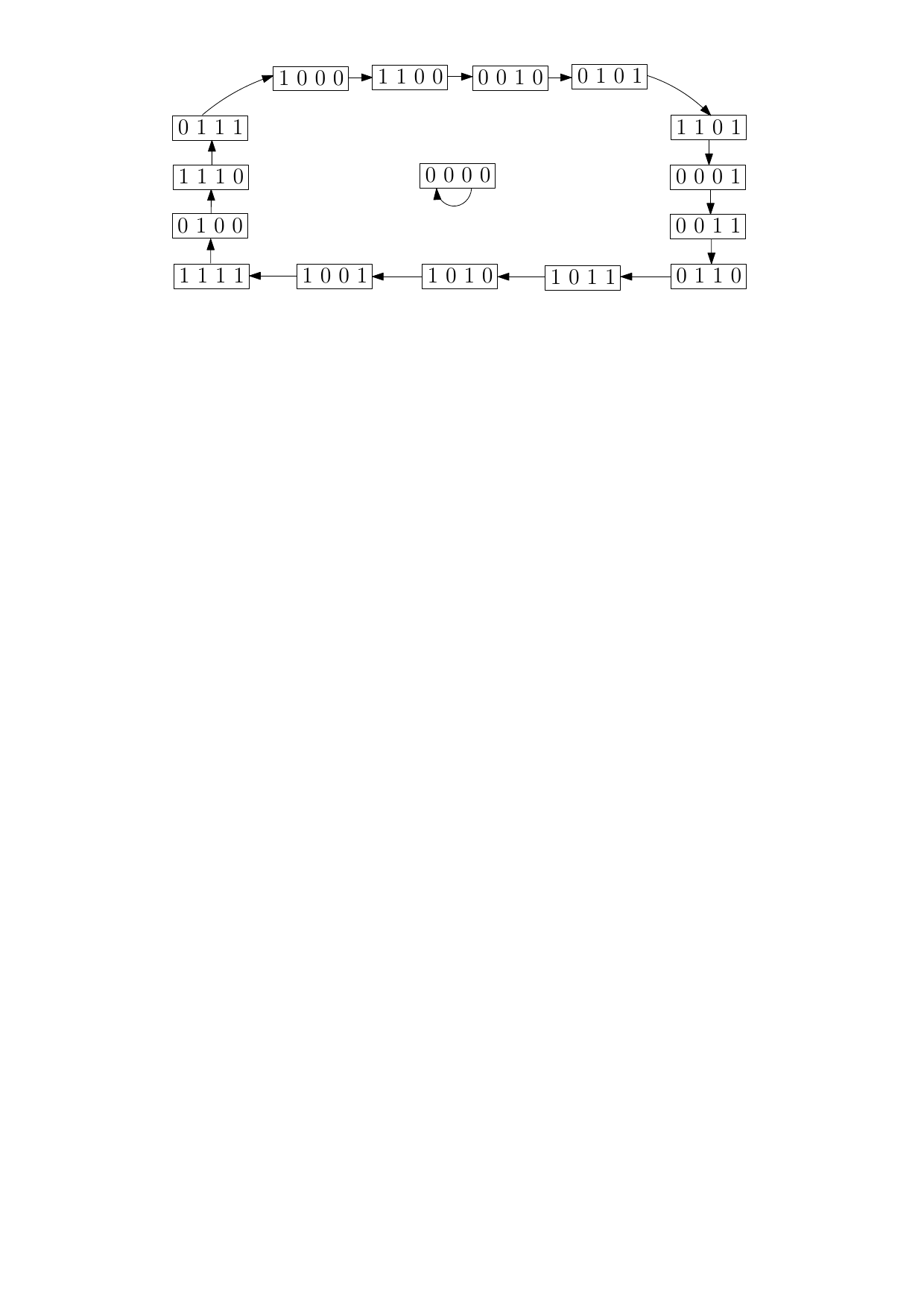}
	\caption{Transition diagram of CA $(150, 150, 90, 150)$}
	\vspace{-1.0em}  
	\label{figure:st-150-150-90-150}
\end{figure}

\subsubsection{Matrix Algebra : Characterization Tool}
\label{subsection:matrix}

Out of total 256 CA rules, 8 are linear rules (see Table~\ref{table:linearrules}). Linear CAs are formed with these linear rules only. The linear CAs can be efficiently characterized by matrix algebra. An $n$-cell 1-dimensional linear CA is represented by a characterization matrix $[T]_{n\times n}$. The $i^{th}$ row of $T$ corresponds to the neighborhood relation of the $i^{th}$ cell, where
%These rules has a special interest to the CA researchers as it can be characterized by matrix algebra tools. The matrix algebraic tools are also used to represent a linear CA that consider different rules for different CA cells (non-uniform). In this work, we concentrate on characterizing such a non-uniform CA. A brief overview of this model is next outlined. 
%
%An $n$-cell 1-dimensional linear CA is characterized by a linear operator $[T]_{n\times n}$ matrix. $T$ is the characterized matrix of the CAs. The $i^{th}$ row of $T$ corresponds to the neighborhood relation of the $i^{th}$ cell, where
\[
T[i,j]=\left\{ 
\begin{array}{cl}
1, & \mbox{if the next state of cell $i$ depends on the present state of cell $j$}\\
0, & \mbox{otherwise.}\\
\end{array} 
\right.
\]
Since our CA uses 3-neighborhood dependency, $T[i,j]$ can have non-zero values for $j$ = $(i-1)$, $i$, $(i+1)$. That is, the $T$-matrix is a tridiagonal matrix where the main, upper and lower diagonal can only be non zero. If $c_t$ is the configuration of the CA at $t^{th}$ instant of time, then the next configuration - that is, the configuration at $(t+1)^{th}$ time instant, is $c_{t+1} = T.c_t$ where $c_t$ and $c_{t+1}$ are presented as column vectors. In our further discussion, when we multiply $T$ with a configuration, we assume that the configuration is a column vector. However, if $c_{t+p}$ is reachable from $c_t$ after $p$ time steps, that is, if $c_{t+p}=G^p(c_t)$, then we get that  
\begin{equation}
\label{equation:nextstatelinear}
c_{t+p} = T^p.c_t
\end{equation}
Reversibility of a CA can easily be decided from its matrix representation. If $T$ is non-singular, the CA is reversible; otherwise it is irreversible.
\[
\det(T) =\left\{ 
\begin{array}{cl}
1, & \mbox{if the CA is reversible}\\
0, & \mbox{if the CA is irreversible}\\
\end{array} 
\right.
\]
where $\det(T)$ is determinant of matrix $T$. Under null boundary condition, $T$ can be represented as the following: 
%As per above discussion, an $n$-cell 3-neighborhood linear CA can also be expressed by a square matrix ($T$) of dimension $n$. The matrix shows the dependency of a cell on other cells. The matrix is tridiagonal, where the lower, upper and main diagonals depend on the rules $\mathcal{R}_i$ in the form of ($a_i, d_i, b_i$) where $0\leq i\leq n-1$. Under null boundary condition, it can be represented as the following: 
\[
T=
    \begin{bmatrix}
        d_0 & b_0 & 0 & \cdots & 0 & 0 \\
        a_1 & d_1 & b_1 & \ddots &  & 0 \\
        0 & a_2 & d_2 & \ddots &  & \vdots \\
        \vdots & \ddots & \ddots & \ddots & \ddots & \vdots \\
        0 &  &  & \ddots & d_{n-2} & b_{n-2} \\
        0 & 0 & \cdots & \cdots & a_{n-1} & d_{n-1} \\

    \end{bmatrix}_{n \times n}
\]

%\begin{figure}
%     \begin{subfigure}[b]{1.0\textwidth}
%      \centering
%     \includegraphics[width= 4.5in, height = 0.5in]{Survey/Null}
%     \caption{Null Boundary CA}
%     \label{Figure:NUll-Boundary}
%     \end{subfigure}
%     \begin{subfigure}[b]{1.00\textwidth}
%      \centering
%     \includegraphics[width= 4.5in, height = 0.62in]{Survey/Periodic}
%     \caption{Periodic Boundary CA}
%     \label{Figure:Perodic-Boundary}
%     \end{subfigure}
%     \begin{subfigure}[b]{1.00\textwidth}
%      \centering
%     \includegraphics[width= 5.0in, height = 0.62in]{Survey/Intermediate}
%     \caption{Intermediate Boundary CA}
%     \label{Figure:Intermediate-Boundary}
%     \end{subfigure}
%     \caption{Block diagram of null, periodic and intermediate boundary CAs}        		
%\label{Figure:Null-Perodic-CA}       		
%\end{figure}

\begin{figure}
				\subfloat[Null Boundary CA]{\includegraphics[width= 4.5in, height = 0.4in]{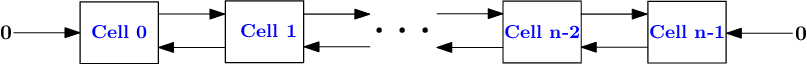}}\\
				\subfloat[Periodic Boundary CA]{\includegraphics[width= 4.5in, height = 0.5in]{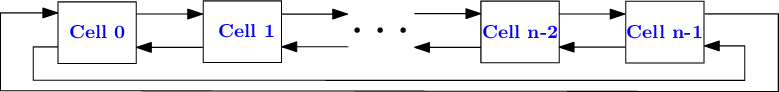}}\\
				\subfloat[Intermediate Boundary CA]{\includegraphics[width= 5.0in, height = 0.5in]{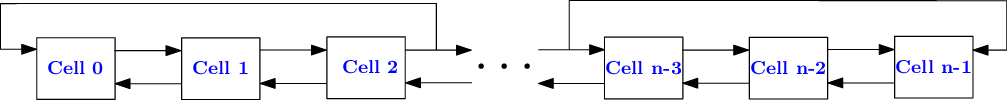}}
\caption{Block diagram of null, periodic and intermediate boundary CAs}        		
\label{figure:boundary-3}  
%\vspace{-1.0em}      		
\end{figure}

Note that, under null boundary condition, the left (resp. right) neighbor of the first (resp. last) cell is considered as in state 0, so the cell is independent of the neighbor. In case of periodic boundary condition, the rightmost and leftmost cells are neighbors of each other. On the other hand, in case for intermediate boundary, the leftmost (rightmost) cell consider the two successive cells on its right (left) as its neighbors (see Figure \ref{figure:boundary-3}). As an example, a $5$-cell CA with rule vector (90, 150, 90, 150, 150) represented by matrix $T_{NB}$ under null boundary condition. This CA is represented by the matrices $T_{PB}$ and $T_{IB}$ when the used boundary conditions are periodic and intermediate respectively. 
\[
T_{NB} =
    \begin{bmatrix}
        0   & 1   & 0   & 0   & 0   \\
        1   & 1   & 1   & 0   & 0   \\
        0   & 1   & 0   & 1   & 0   \\
        0   & 0   & 1   & 1   & 1   \\
        0   & 0   & 0   & 1   & 1   \\
    \end{bmatrix}
    \hspace{2em}
    T_{PB} =
     \begin{bmatrix}
        0   & 1   & 0   & 0   & {\bf 1}   \\
        1   & 1   & 1   & 0   & 0   \\
        0   & 1   & 0   & 1   & 0   \\
        0   & 0   & 1   & 1   & 1   \\
        {\bf 1}   & 0   & 0   & 1   & 1   \\
    \end{bmatrix}
     \hspace{2em}
    T_{IB} =
     \begin{bmatrix}
        0   & 1   & {\bf 1}   & 0   & 0   \\
        1   & 1   & 1   & 0   & 0   \\
        0   & 1   & 0   & 1   & 0   \\
        0   & 0   & 1   & 1   & 1   \\
        0   & 0   & {\bf 1}   & 1   & 1   \\
    \end{bmatrix}
\]
Further, we get the characteristic polynomial of a $T$-matrix as $P(x)=\det(xI+T)$ $\pmod 2$. The characteristic polynomial can be obtained using an efficient recurrence relation which is noted down in Lemma~\ref{lemma:recurrence} \cite{Cattell1}. Before going to prove that, let us define some terms. We adopt the notation $\mathcal{R}_{i,j}$ ($i\leq j$) to refer to a sub-rule vector of $\mathcal{R}$: $\mathcal{R}_{i,j}$ = ($\mathcal{R}_{i}, \mathcal{R}_{i+1}, \cdots, \mathcal{R}_{j}$). Such a sub-rule vector consisting of rules from cell $i$ to cell $j$. We use ${\Delta}_{i,j}$ to denote the characteristic polynomial of $\mathcal{R}_{i,j}$, and abbreviate ${\Delta}_{0,j}$ as ${\Delta}_{j}$. We call ${\Delta}_{i,j}$ a subpolynomial. The complete rule vector $\mathcal{R}$ is written as $\mathcal{R}_{0,n-1}$ with corresponding characteristic polynomial ${\Delta}$. In our further discussion, null boundary condition is assumed as default boundary.
\[
T_{k}=
    \begin{bmatrix}
        d_0 & b_0 & 0 & \cdots & 0 & 0 \\
        a_1 & d_1 & b_1 & \ddots &  & 0 \\
        0 & a_2 & d_2 & \ddots &  & \vdots \\
        \vdots & \ddots & \ddots & \ddots & \ddots & \vdots \\
        0 &  &  & \ddots & d_{k-2} & b_{k-2} \\
        0 & 0 & \cdots & \cdots & a_{k-1} & d_{k-1} \\
    \end{bmatrix}_{k \times k}
\]
\begin{lemma}
\label{lemma:recurrence}
Let $T_{k}$ denote the $k\times k$ submatrix of $T$ and ${\Delta}_{k-1}=\det(xI+T_k)$. Then,
${\Delta}_{k-1}$ satisfies the following recurrence:\\ 
\hspace*{9em}${\Delta}_{-1} = 1$, \\
\hspace*{9em}${\Delta}_0 = (x+d_0)$ \\
\hspace*{9em}${\Delta}_{k-1} = (x+d_{k-1}){\Delta}_{k-2} + b_{k-2} a_{k-1} {\Delta}_{k-3}$
\end{lemma}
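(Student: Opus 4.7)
The plan is to prove the recurrence by induction on $k$, with the inductive step coming from a cofactor (Laplace) expansion of $\det(xI+T_k)$ along its last row. First I would verify the base cases directly: $\Delta_{-1}=1$ is the standard convention for the determinant of the empty $0\times 0$ matrix, and $\Delta_0=\det(xI+T_1)=\det([\,x+d_0\,])=x+d_0$. These make the recurrence well-defined once $k\ge 2$.

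For the inductive step, I would expand $\det(xI+T_k)$ along the last (i.e., $k$-th) row of the tridiagonal matrix
\[
xI+T_k=\begin{bmatrix}
x+d_0 & b_0 & 0 & \cdots & 0 & 0 \\
a_1 & x+d_1 & b_1 & \ddots &  & 0 \\
0 & a_2 & x+d_2 & \ddots &  & \vdots \\
\vdots & \ddots & \ddots & \ddots & \ddots & \vdots \\
0 &  &  & \ddots & x+d_{k-2} & b_{k-2} \\
0 & 0 & \cdots & \cdots & a_{k-1} & x+d_{k-1}
\end{bmatrix}.
\]
That row has exactly two nonzero entries, $a_{k-1}$ in column $k-1$ and $x+d_{k-1}$ in column $k$, so only two cofactors contribute. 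The cofactor associated with the diagonal entry $x+d_{k-1}$ is the determinant of the top-left $(k-1)\times(k-1)$ block, which is exactly $\Delta_{k-2}$ by definition.

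The cofactor associated with $a_{k-1}$ is the slightly more delicate step and is where I expect the bookkeeping to be trickiest. Deleting row $k$ and column $k-1$ of $xI+T_k$ yields a $(k-1)\times(k-1)$ matrix whose rightmost column is $(0,\ldots,0,b_{k-2})^{\top}$ (the former column $k$ with its last entry dropped). Expanding this subdeterminant along that last column produces a single term equal to $b_{k-2}$ times $\det(xI+T_{k-1-1})$ on the first $k-2$ rows and columns, i.e., $b_{k-2}\,\Delta_{k-3}$. Tracking signs, the two Laplace signs $(-1)^{k+(k-1)}$ from the first expansion and $(-1)^{(k-1)+(k-1)}$ from the second combine to a single minus sign, giving
\[
\Delta_{k-1}=(x+d_{k-1})\,\Delta_{k-2} - a_{k-1} b_{k-2}\,\Delta_{k-3}.
\]
Finally, since the whole framework is taken over $\mathrm{GF}(2)$, the minus sign is the same as a plus sign, and the recurrence in the lemma follows. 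The main obstacle is really only the careful sign/index tracking in the two nested cofactor expansions; once that is done, the proof is essentially a one-line observation.
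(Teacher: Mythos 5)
Your proposal is correct and follows essentially the same route as the paper: a cofactor expansion of $\det(xI+T_k)$ along the last row, followed by a second expansion of the off-diagonal minor (the paper expands that minor along its last row and picks up an extra vanishing term $a_{k-2}\cdot 0$, while you expand it along its last column, but this is the same computation). Your explicit sign-tracking, with the observation that the resulting minus sign is immaterial over GF(2), is a point the paper glosses over, but the argument is otherwise identical.
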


\begin{proof}
By expanding $(xI+{T_k})$ along the last row, we have \\
\hspace*{4em} ${\Delta}_{k-1} = (x+d_{k-1}){\Delta}_{k-2} + a_{k-1} \det(B)$,\\
where
\[
B =
    \begin{bmatrix}
        x+d_0 & b_0 & 0 & \cdots & 0 & 0 \\
        a_1 & x+d_1 & b_1 & \ddots &  & 0 \\
        0 & a_2 & x+d_2 & \ddots &  & \vdots \\
        \vdots & \ddots & \ddots & \ddots & \ddots & \vdots \\
        0 & 0 & \cdots & a_{k-3} & x+d_{k-3} & b_{k-3} \\
        0 & 0 & \cdots & \cdots & a_{k-2} & b_{k-2} \\

    \end{bmatrix}
\]
Now, $\det(B) = b_{k-2}(xI+T_{k-2}) + a_{k-2}\cdot 0$\\
Therefore, \\
\hspace*{4em} ${\Delta}_{k-1} = (x+d_{k-1}){\Delta}_{k-2} + b_{k-2} a_{k-1} {\Delta}_{k-3}$
\end{proof}

The characteristic polynomial can be calculated using the above recurrence relation. Lets take an example to get the characteristic polynomial of a CA.

%The characteristic polynomial can be calculated using the following recurrence relation~\cite{Cattell2}, where $\Delta = \det(xI-T) \pmod 2$.

% \begin{multline}
% \label{equation:polynomial}
% \begin{array}{l} 
% {\Delta}_{-1} = 1, \\
% {\Delta}_0 = (x-d_0) \\
% {\Delta}_1 = (x-d_1){\Delta}_0 - b_0 a_1 {\Delta}_{-1}, \\
% {\Delta}_2 = (x-d_2){\Delta}_1 - b_1 a_2 {\Delta}_0, \\   
% \vdots \\
% {\Delta}_k = (x-d_{k}){\Delta}_{k-1} - b_{k-1} a_{k} {\Delta}_{k-2}, \\
% \vdots \\
% {\Delta} = {\Delta}_{n-1} = (x-d_{n-1}){\Delta}_{n-2} - b_{n-2} a_{n-1} {\Delta}_{n-3} \\
% \end{array}
% \end{multline}

\begin{example}
\label{Matrix_example-4}
Let us take a 4-cell CA with rule vector $(90, 150, 90, 150)$ represented by matrix $T$:
\[
T=
    \begin{bmatrix}
        0   & 1   & 0   & 0   \\
        1   & 1   & 1   & 0   \\
        0   & 1   & 0   & 1    \\
        0   & 0   & 1   & 1   \\
    \end{bmatrix}_{4 \times 4}
\]
We can find out the characteristic polynomial of any submatrix of $T\pmod{2}$ according to recurrence relation given above: \\
\hspace*{6em} ${\Delta}_{-1} = 1$, \\
\hspace*{6em} ${\Delta}_0 = (x+0) = x,$ \\
\hspace*{6em} ${\Delta}_1 = (x+1)x + 1 = x^2+x+1,$ \\
\hspace*{6em} ${\Delta}_2 = (x+0)(x^2+x+1) + x = x^3+x^2,$ \\   
\hspace*{6em} ${\Delta}_3 = (x+1)(x^3+x^2) + (x^2+x+1) = x^4+x+1$
\end{example}
Hence, the characteristic polynomial of the given CA is $x^4+x+1$.

%\begin{example}
%\label{Matrix_example-4}
%Take a 4-cell CA with rule vector $(90, 60, 170, 150)$ represented by matrix $T'$. The characteristic polynomial of this matrix is $\det(xI-T') \pmod 2$ = $x^4+x^2+1$ = $(x^2+x+1)^2$.
%\[
%T'=
%    \begin{bmatrix}
%        0   & 1   & 0   & 0   \\
%        1   & 1   & 0   & 0   \\
%        0   & 0   & 0   & 1    \\
%        0   & 0   & 1   & 1   \\
%    \end{bmatrix}_{4 \times 4}
%\]
%Similarly, in another example of a $4$-cell CA of $(90, 150, 90, 150)$, the matrix is \[
%T''=
%    \begin{bmatrix}
%        0   & 1   & 0   & 0   \\
%        1   & 1   & 1   & 0   \\
%        0   & 1   & 0   & 1    \\
%        0   & 0   & 1   & 1   \\
%    \end{bmatrix}_{4 \times 4}
%\]
%We can calculate the characteristic polynomial according to recurrence relation by stepwise. \\
%\hspace*{6em} ${\Delta}_{-1} = 1$, \\
%\hspace*{6em} ${\Delta}_0 = (x-0) = x,$ \\
%\hspace*{6em} ${\Delta}_1 = (x-1)x - 1 = x^2-x-1,$ \\
%\hspace*{6em} ${\Delta}_2 = (x-0)(x^2-x-1) - x = x^3-x^2,$ \\   
%\hspace*{6em} ${\Delta}_3 = (x-1)(x^3-x^2) - (x^2-x-1) = x^4+x+1$
%\end{example}

\subsubsection{Characteristic Polynomial}
\label{subsection:relation}

The characteristic polynomials of $T$-matrices can be reducible, irreducible and primitive. Types of such polynomials depend on two factors - the rules used in rule vectors and sequence of rules. 

%Furthermore, the characteristic polynomials has two options, either its reducible or irreducible (see Definition~\ref{definition:irreducible}). If its irreducible, then there is a chance to be primitive (see Definition~\ref{definition:primtive}). Now, we show that, six linear rules except rules 90 and 150 produce characteristic polynomials which are always reducible. These six linear rules has either $a_i=0$ or $b_i=0$. Beyond this, we make the following theorem (Theorem~\ref{theorem:reduciblepolynomial}).

\begin{defnn}
\label{definition:irreducible}
A nonzero polynomial $P(x)$ over GF($q$) is said to be \emph{irreducible} if it cannot be factored into two non-constant polynomials $G(x)$ and $H(x)$ over the same field, that is, $P(x)\ne G(x)\times H(x)$ for any non-constant $G(x)$ and $H(x)$; otherwise, it is reducible.
\end{defnn}

\begin{defnn}
\label{definition:primtive}
The polynomial $P(x)$ over GF($q$) is \emph{primitive}, if it is irreducible and $\min\limits_{n\in \mathbb{N}}$ $\{n | P(x) \\ \text{ divides } x^n-1\}=q^k-1$. In this case, $P(x)$ has a root $\alpha$ in GF$(q^k)$ such that, it generates all the elements of the extension field GF$(q^k)$ from the base field GF($q$) as successive power of $\alpha$. That is, $\{0,1,\alpha$, $\alpha^2$, $\cdots$, $\alpha^{q^k-2}\}$ is the entire field GF$(q^k)$.
\end{defnn}

\begin{theorem}
\label{theorem:reduciblepolynomial}
In an $n$-cell CA, if $a_i=0$ for any $i \in\{1,2,\cdots,n-1\}$ or $b_j=0$ for any $j \in\{0,1,\cdots,n-2\}$, then ${\Delta}$ is always reducible~\cite{Serra90c,Cattell1}.
\end{theorem}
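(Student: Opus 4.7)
The plan is to exploit the tridiagonal structure of $T$: as soon as a single off-diagonal entry $a_i$ or $b_j$ vanishes, the matrix becomes block-triangular, and $\det(xI + T)$ splits as the product of the characteristic polynomials of its two diagonal blocks. I would treat the two cases separately but by essentially the same argument, and then read off reducibility via Definition~\ref{definition:irreducible}.

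First, suppose $b_j = 0$ for some $j \in \{0, 1, \ldots, n-2\}$. Partition the rows and columns of $T$ into the two index sets $\{0, 1, \ldots, j\}$ and $\{j+1, \ldots, n-1\}$. Because $T$ is tridiagonal, the only possibly non-zero entry in the upper-right block is $T[j, j+1] = b_j$; once this is zero, the entire upper-right block vanishes, so $T$ is block lower triangular with diagonal blocks of sizes $(j+1)\times(j+1)$ and $(n-j-1)\times(n-j-1)$. Hence $\Delta$ factors as the product of the characteristic polynomials of these two submatrices, each of which has positive degree thanks to the range of $j$. The case $a_i = 0$ for $i \in \{1, \ldots, n-1\}$ is symmetric: the lower-left block of $T$ has $T[i, i-1] = a_i$ as its only potentially non-zero entry, so $T$ becomes block upper triangular with diagonal blocks of sizes $i\times i$ and $(n-i)\times(n-i)$, and $\Delta$ again factors into two non-constant polynomials. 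In both cases the index restrictions ensure that both factors have degree at least $1$, so $\Delta$ is reducible.

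There is no serious obstacle in this argument; the whole thing hinges on the elementary fact that the determinant of a block-triangular matrix is the product of the determinants of its diagonal blocks. An alternative route more in the style of Lemma~\ref{lemma:recurrence} would be inductive: when $a_i = 0$, the recurrence at index $k = i+1$ collapses to $\Delta_{i} = (x + d_{i})\Delta_{i-1}$, and then a short induction on $k$ shows that $\Delta_{i-1}$ divides $\Delta_{k-1}$ for every $k \geq i+1$, in particular dividing $\Delta = \Delta_{n-1}$; a dual argument handles $b_j = 0$. I would prefer the block-triangular presentation, however, because it exposes the factorization in a single conceptual step rather than relying on bookkeeping through the recurrence.
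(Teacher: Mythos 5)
Your proof is correct, but it takes a genuinely different route from the paper. The paper argues through the recurrence of Lemma~\ref{lemma:recurrence}: when $a_k=0$ the recurrence collapses to $\Delta_k=(x+d_k)\Delta_{k-1}$, and a short induction shows that $\Delta_{k-1}$ (of degree $k\geq 1$) divides every subsequent subpolynomial, in particular $\Delta=\Delta_{n-1}$ — this is exactly the ``alternative route'' you sketch at the end. Your primary argument instead observes that the vanishing of a single off-diagonal entry makes $xI+T$ block triangular, so $\Delta$ factors in one step as the product of the characteristic polynomials of the two diagonal blocks, and the index ranges guarantee both factors are non-constant; your bookkeeping of which block entry is $b_j$ (resp.\ $a_i$) and of the block sizes is accurate. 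The block-triangular presentation buys something the paper's proof does not make explicit: it exhibits \emph{both} factors, namely $\Delta_{0,j}$ and $\Delta_{j+1,n-1}$ in the paper's notation, and so shows that severing a link decomposes the CA into two independent components — in effect the degenerate case of the concatenation relation of Theorem~\ref{theorem:concatenation} in which the cross term vanishes. The paper's recurrence proof, on the other hand, stays entirely inside the algebraic machinery of Lemma~\ref{lemma:recurrence} that the rest of the survey builds on, and only needs divisibility by $\Delta_{k-1}$ rather than the full factorization. Both arguments work verbatim over any field, so nothing is lost either way.
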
 

\begin{proof}
Suppose, $a_{k}=0$ ($0 < k \le n-1$). Then, according to Lemma~\ref{lemma:recurrence}, we get ${\Delta}_k = (x+d_{k}){\Delta}_{k-1}$. Hence, ${\Delta}_{k-1}$ is a factor of ${\Delta}_k$. Therefore, ${\Delta}_{k+1} = (x+d_{k+1}){\Delta}_{k} + {\Delta}_{k-1}$ is divisible by ${\Delta}_{k-1}$. Continuing by induction, ${\Delta}$ is also divisible by ${\Delta}_{k-1}$. Since ${\Delta}_{k-1}$ is a polynomial of degree $k\geq 1$, it is the case that ${\Delta}$ is reducible. If we now similarly consider $b_{k}=0$, for any $k\in \{0, 1, \cdots, n-2\}$, we get the similar result. Hence the proof.
\end{proof}

From Table~\ref{table:linearrules}, we find that all the linear rules except rules 90 and 150 either $a=0$ or $b=0$. Then, according to Theorem~\ref{theorem:reduciblepolynomial}, these rules cannot participate in maximal length CAs. Hence, we have only two rules, rule 90 and rule 150, which are capable to generate characteristic polynomials as irreducible and primitive. 

\begin{theorem}
\label{theorem:sequence90150}
Only a specific sequence of rules 90 and 150 can form a maximal length CA.
\end{theorem}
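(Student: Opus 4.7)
The plan is to prove the theorem in two stages: first, to show that only rules 90 and 150 can appear in any cell of a maximal length CA, and second, to justify the word \emph{specific} by showing that not every arrangement of these two rules yields maximality.

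For the first stage, I would argue by contrapositive using Theorem~\ref{theorem:reduciblepolynomial}. Since a maximal length CA has the $2^n-1$ non-zero configurations in one cycle, it is reversible, and (as will be formalized later in the survey) its characteristic polynomial $\Delta$ must in fact be primitive over GF$(2)$. In particular, $\Delta$ must be irreducible. By Theorem~\ref{theorem:reduciblepolynomial}, irreducibility of $\Delta$ forces $a_i\ne 0$ for all $i\in\{1,\dots,n-1\}$ and $b_j\ne 0$ for all $j\in\{0,\dots,n-2\}$; equivalently, every cell's rule satisfies $a=b=1$. Inspecting the linear rules of Table~\ref{table:linearrules}, we see that rules $0,\,60,\,102,\,170,\,204,\,240$ all have either the $x$-coefficient or the $z$-coefficient equal to $0$, so they cannot appear. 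The only two linear rules with $a=b=1$ are rule $90$ (with $d=0$) and rule $150$ (with $d=1$). Hence the rule vector of any linear maximal length CA is necessarily a word over the two-letter alphabet $\{90,150\}$.

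For the second stage, I would emphasize that not every such word gives maximality. Using the recurrence of Lemma~\ref{lemma:recurrence}, which for $a_i=b_j=1$ reduces to
\[
\Delta_{k-1}=(x+d_{k-1})\Delta_{k-2}+\Delta_{k-3},
\]
the characteristic polynomial $\Delta$ depends explicitly on the sequence $(d_0,d_1,\dots,d_{n-1})\in\{0,1\}^n$ encoding the choice of $90$ versus $150$ at each cell. Different sequences generally produce different polynomials. Since the number of primitive polynomials of degree $n$ over GF$(2)$ is $\phi(2^n-1)/n$, which for most $n$ is strictly smaller than $2^n$, only a proper subset of the $2^n$ possible $\{90,150\}$-sequences can yield a primitive $\Delta$, i.e.\ a maximal length CA. A concrete witness (e.g.\ the four-cell CA $(90,90,90,90)$, whose characteristic polynomial from the recurrence factors rather than being primitive) suffices to show non-maximality for some sequences, while Example~\ref{Matrix_example-4} already exhibits a sequence $(90,150,90,150)$ whose characteristic polynomial $x^4+x+1$ is primitive.

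I expect the first stage to be routine once Theorem~\ref{theorem:reduciblepolynomial} is in hand, so the main obstacle — really more of a delicate point than a hard step — is making the phrase ``specific sequence'' precise. The statement as given does not pin down which sequences work; the cleanest reading is that the allowed rules are exhausted by $\{90,150\}$ but that the ordering matters, and I would handle this by a small counterexample combined with the counting remark above, leaving the full enumeration of ``good'' sequences to the synthesis sections of the survey.
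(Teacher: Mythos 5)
Your proposal is correct and follows essentially the same route as the paper: the paper establishes your first stage in the paragraph immediately preceding the theorem (invoking Theorem~\ref{theorem:reduciblepolynomial} and Table~\ref{table:linearrules} to rule out all linear rules except 90 and 150), and its proof proper consists only of your second stage, namely a single counterexample --- the CA $(150,90,90,90)$ of Figure~\ref{figure:st-150-90-90-90} --- showing that not every $\{90,150\}$-sequence is maximal. Your version is more self-contained (and your witness $(90,90,90,90)$ together with the counting remark is a fine substitute), but the underlying argument is the same.
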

\begin{proof}
By contradiction, consider that any sequence of 90 and 150 of length $n\geq 2$ form a maximal length CA. The automaton of Figure~\ref{figure:st-150-90-90-90} contradicts the proposition. Hence, a specific sequence of 90 and 150 forms rule vector of a maximal length CA. 
\end{proof}

An efficient algorithm has been developed by Cattell and Muzio. to synthesize maximal length CA from a primitive polynomial \cite{CattellM96} which will be discussed in Section~\ref{section:synthesis}. This implies that deciding primitive polynomials and deciding maximal length CAs are two equivalent problems. According to the discussions of Section~\ref{section:analysis} and Section~\ref{section:synthesis}, we get the following theorem which relates maximal length CA and primitive polynomials. The proof of this theorem can be derived from these discussion.

%After summarized all the works of Cattell et al.~\cite{CattellM96,Cattell1,CattellTh,Cattell2}, we get the following theorem which relates maximal length CA and primitive polynomials and prove that these two problems are equivalent.

\begin{theorem}
\label{theorem:pp-mlca}
A linear CA is a maximal length CA under null boundary condition iff its characteristic polynomial is primitive over GF(2).
\end{theorem}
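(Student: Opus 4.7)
The plan is to give a direct algebraic proof by identifying $\mathrm{GF}(2)^n$ with the quotient ring $\mathrm{GF}(2)[x]/(P)$ whenever the characteristic and minimal polynomials of $T$ coincide, and then reading off the CA dynamics as multiplication by $\bar{x}$ in this ring. First I would set up the module-theoretic dictionary: by Equation~\ref{equation:nextstatelinear} the CA acts as iterated left-multiplication by $T$, so view $\mathrm{GF}(2)^n$ as a $\mathrm{GF}(2)[x]$-module with $x \cdot v := Tv$. By the Cayley--Hamilton theorem $P(T) = 0$, so the minimal polynomial $m_T$ divides $P$; and a cyclic vector $v$ (one whose $T$-orbit spans the whole space) exists exactly when $m_T = P$, in which case the map $\phi \colon \mathrm{GF}(2)[x]/(P) \to \mathrm{GF}(2)^n$, $f(x) \mapsto f(T)v$, is an isomorphism of $\mathrm{GF}(2)[x]$-modules that intertwines multiplication by $\bar{x}$ with the action of $T$.

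For the direction ``primitive $\Rightarrow$ maximal length'', primitivity of $P$ implies irreducibility, so $m_T \mid P$ forces $m_T = P$ and the setup above applies. Then $\mathrm{GF}(2)[x]/(P)$ becomes the field $\mathrm{GF}(2^n)$, and primitivity (Definition~\ref{definition:primtive}) says the class $\bar{x}$ generates $\mathrm{GF}(2^n)^{\ast}$, i.e.\ has multiplicative order $2^n - 1$. Through $\phi$ the $T$-orbit of $v$ corresponds to the $\bar{x}$-orbit of $1$ in the field, which visits all $2^n - 1$ non-zero field elements exactly once. Combined with the fixed point $0^n$, this is precisely the single-cycle structure demanded by Definition~\ref{definition:MLCA}.

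For the converse ``maximal length $\Rightarrow$ primitive'', pick any non-zero $v$ on the single cycle of length $2^n - 1$ guaranteed by maximality. Then $v, Tv, \ldots, T^{2^n - 2}v$ are distinct and enumerate all non-zero vectors, so $v$ is a cyclic vector and $m_T = P$ (both of degree $n$). Transport this information back through $\phi$: in the ring $R := \mathrm{GF}(2)[x]/(P)$ the class $\bar{x}$ satisfies $\bar{x}^{2^n-1} = 1$ with no smaller positive exponent, so $\bar{x}$ is a unit of multiplicative order exactly $2^n - 1$. Since $|R| = 2^n$, this cyclic subgroup of units already exhausts the non-zero elements of $R$, so every non-zero element is a unit, $R$ is a field, $P$ is irreducible, and $\bar{x}$ is a generator of $R^{\ast}$. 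Hence $P$ is primitive.

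The main subtlety I anticipate is essentially book-keeping rather than analytic: one has to ensure that the maximal-length hypothesis delivers simultaneously that $\bar{x}$ has the full order $2^n - 1$ and that this order matches $|R|-1$, which is the single observation that packages irreducibility and primitivity together. A secondary point worth checking is that one does not actually need the tridiagonal structure of $T$ inherited from the 3-neighborhood null boundary setup---the proof works for any linear map on $\mathrm{GF}(2)^n$---so the theorem becomes a clean statement about linear dynamical systems over $\mathrm{GF}(2)$ rather than one tied to the specific matrix shape displayed earlier.
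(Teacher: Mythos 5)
Your proof is correct, but it takes a genuinely different route from the paper. The paper does not actually write out a proof of Theorem~\ref{theorem:pp-mlca}: it asserts that the result ``can be derived'' from the discussions in Sections~\ref{section:analysis} and~\ref{section:synthesis}, i.e.\ from the characteristic-polynomial recurrence (Lemma~\ref{lemma:recurrence}), the quadratic-congruence machinery of Theorem~\ref{theorem:congruence}, and the Cattell--Muzio synthesis algorithm that realizes every primitive polynomial as a 90--150 CA. That route is constructive and explains \emph{which} rule vectors arise, but it is heavy and only indirectly yields the iff. You instead give the classical cyclic-module argument: identify $\mathrm{GF}(2)^n$ with $\mathrm{GF}(2)[x]/(P)$ via a cyclic vector, so that $T$ becomes multiplication by $\bar{x}$, and then both directions reduce to the single observation that $\bar{x}$ having multiplicative order $2^n-1$ in a ring of size $2^n$ is equivalent to the ring being a field with $\bar{x}$ a generator of its unit group, i.e.\ to primitivity of $P$ in the sense of Definition~\ref{definition:primtive}. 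This is shorter, self-contained, and --- as you rightly note --- independent of the tridiagonal shape of $T$, so it proves the statement for arbitrary linear maps over $\mathrm{GF}(2)$; what it does not give you is the synthesis content (existence and count of CA realizations of a given primitive polynomial) that the paper's machinery provides. Two small points worth making explicit: in the converse direction you should remark that $0^n$ is always a fixed point of a linear CA, so the one excluded configuration in Definition~\ref{definition:MLCA} is necessarily $0^n$ and the long cycle consists of exactly the $2^n-1$ non-zero configurations; and in the forward direction the conclusion that the orbit of $v$ is a genuine cycle follows directly from $T^{2^n-1}v=v$ together with the distinctness of $v,Tv,\ldots,T^{2^n-2}v$, with no separate appeal to invertibility of $T$ needed.
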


\begin{example}
\label{example:pp}
For the CA of $(150, 150, 90, 150)$ under null boundary condition, we get the characteristic polynomial as $x^4+x^3+1$ which is primitive over GF(2). The CA is also maximal length CA (see Figure~\ref{figure:st-150-150-90-150}). Now, for the CA $(150, 90, 90, 90)$, we get the characteristic polynomial $x^4+x^3+x^2+1$, which is not primitive but irreducible. The CA is also not maximal length (see Figure~\ref{figure:st-150-90-90-90}).
\end{example}

Furthermore, the CAs that involve only rules 90 and 150 are generally called {\bf 90-150 CAs}. A cell with rule 90 depends on left and right neighbors, whereas a cell with rule 150 depends on left, self and right neighbors. So, the upper and lower diagonals of a matrix representing a CA with rules 90 and 150 are all 1. Further, the main diagonal is 0 if the rule of cell is 90, and 1 if the rule of cell is 150. 

%Furthermore, we get the maximal length CAs under only in null boundary condition using of rule 90 and rule 150. Therefore, we have only rule 90 and rule 150 which generate maximal length CAs. A cell with rule 90 depends on left and right neighbors, whereas a cell with rule 150 depends on left, self and right neighbors. So, the upper and lower diagonals of a matrix representing a CA with rules 90 and 150 are all 1. Further, the main diagonal is 0 if the rule of cell is 90, and 1 if the rule of cell is 150. 

\[
d_{i}=\left\{ 
\begin{array}{cl}
0, & \mbox{if cell $i$ uses rule 90}\\
1, & \mbox{if cell $i$ uses rule 150}\\
\end{array} 
\right.
\] 

%\[
%A=
%    \begin{bmatrix}
%        d_0 & 1 & 0 & \cdots & 0 & 0 \\
%        1 & d_1 & 1 & \ddots &  & 0 \\
%        0 & 1 & d_2 & \ddots &  & \vdots \\
%        \vdots & \ddots & \ddots & \ddots & \ddots & \vdots \\
%        0 &  &  & \ddots & d_{n-2} & 1 \\
%        0 & 0 & \cdots & \cdots & 1 & d_{n-1} \\
%
%        
%    \end{bmatrix}_{n \times n}
%\]

The characteristic polynomial of 90-150 CAs can be calculated using the following recurrence relation which is simplified from previous recurrence relation (see Lemma~\ref{lemma:recurrence}).

%The matrix representation of an $n$-cell CA using only rules 90 and 150 are described as $A$ under null boundary condition. This matrix is extract from the general matrix $T$ which is define in previous. We get the characteristic polynomial of this matrix is $P(x)=\det(xI-A)\pmod 2$. The characteristic polynomial can be calculated using the following recurrence relation which is simplified from previous recurrence relation (see Lemma~\ref{lemma:recurrence}).

 \begin{multline}
 \label{equation:polynomialgf(2)}
 \begin{array}{l} 
 {\Delta}_{-1} = 1, \\
 {\Delta}_0 = (x+d_0) \\
 {\Delta}_1 = (x+d_1){\Delta}_0 + {\Delta}_{-1}, \\ 
 \vdots \\
 {\Delta}_k = (x+d_{k}){\Delta}_{k-1} + {\Delta}_{k-2}, \\
 \vdots \\
 {\Delta} = {\Delta}_{n-1} = (x+d_{n-1}){\Delta}_{n-2} + {\Delta}_{n-3} \\
 \end{array}
 \end{multline}

However, there is no maximal length CA under periodic boundary condition (this will be discussed in next section). 

\section{Analysis of Maximality}
\label{section:analysis}

%From 1980s, this problem has been considered by several researchers. In particular, Cattell et al. have exploited in this area~\cite{CattellTh,Cattell2,Cattell1,Cattell500,CattellM96,Cattel99}. They were done by developing some algorithms to search and find maximal length CAs. In this section, we discuss about the results of maximal length CAs. Also we describe the strategy behind the results. But, some of researchers generate maximal length CAs only for applications which we will discuss in Section~\ref{section:applications}. 
%However, in the main framework, there are possible two options for generating maximal length CAs. 

Although the first maximal length cellular automaton was identified through an experimental approach \cite{Pries86}, a polynomial theory based approach has been taken later to analyse maximality of a cellular automaton. In fact, these are the only two approaches available to decide maximality of a CA.

\begin{enumerate}
\item[1.] [Exhaustive Search] Test reversibility of a given 90-150 CA. If the CA is reversible, then test if the cycle length is maximal or not.
\item[2.] [Primitivity Based] Find characteristic polynomial of a given 90-150 CA. If the polynomial is primitive, the CA is of maximal length.
\end{enumerate}

%\begin{enumerate}
%\item First one is, we can get the cycle structure of a CA by evolving the configurations from $0^{n-1}1$. And check the cycle length is maximal or not.
%\item In second one, get a primitive polynomial and then extract the CA from this polynomial.
%\end{enumerate}

%Both the approaches needs exponential time. But, few researchers produce some elegant results to generate maximal length CAs. Before showing the results, take the general procedure for checking a given CA is maximal length or not. We take $0^{n-1}1$ as initial configuration for starting the evolution of configurations. Following is the algorithm.

Testing maximality and testing primitivity - both the approaches take exponential time by the best-known algorithms. However, if primitive polynomials are known a priori, the second approach becomes efficient. Some researchers have used known primitive polynomials to analyze as well as to synthesize maximal length CAs. We shall present those results. Before that, we present a generic procedure to test maximality of a 90-150 CA.

\begin{algorithm}
\caption{Decide Maximality}
\label{algorithm:decidemaximality}
\small
%\hspace*{\algorithmicindent} \textbf{Input} $n$-cell CA $\mathcal{R} = (\mathcal{R}_0, \mathcal{R}_1, \cdots, \mathcal{R}_{n-1})$\\
\hspace*{\algorithmicindent} \textbf{Input} $n$-cell 90-150 CA \\
%\hspace*{\algorithmicindent} \textbf{Output} $\mathcal{R}$ is Maximal length or not (\textbf{True/False})
\hspace*{\algorithmicindent} \textbf{Output} Maximal length CA or not (\textbf{True/False})

\begin{algorithmic}[1]
	
\IF{the CA is not reversible} %according to Theorem~\ref{theorem:Reversibility90-150}}
\STATE \textbf{return} False (Not Maximal)
\ENDIF
\STATE $x\leftarrow 0^{n-1}1$, $y\leftarrow G(x)$ and $count\leftarrow 1$
\WHILE{$x\neq y$}
\STATE $y\leftarrow G(y)$
\STATE $count\leftarrow count+1$
\ENDWHILE
\IF{$count\neq 2^n-1$}
\STATE \textbf{return} False (Not Maximal)
\ELSE
\STATE \textbf{return} True (that is, the CA is a maximal length CA)
\ENDIF
\end{algorithmic}
\end{algorithm}

For example, the CA of Figure~\ref{figure:st-150-150-90-150} with $\mathcal{R}$ = (150, 150, 90, 150) is a maximal length CA. If we start counting from $x=0001$, we get $count=15$ after covering all configurations.

%\begin{example}
%Consider the CA of Figure~\ref{figure:st-150-150-90-150} where $\mathcal{R}$ = (150, 150, 90, 150). Here, $\mathcal{R}$ is reversible. Further, we choose initial configuration as $x=0001$. Next, we update the configuration until reach to $x$ with increase $count$. Finally, we get $count=15$ and the CA is maximal length. 
%\end{example}

Hortensius et al.~\cite{Horte89a,Horte89c} added some extra points in maximal length CAs that are used to generate pseudo-random number. Using computer simulation, they proposed a table which gives the non-uniform constructions necessary to achieve maximality for size 4 to 50. The table can be find in Ref. \cite{Horte89a}. These maximal length CAs eventually lead to sequences which closely resemble a random sequence from a regular starting pattern. %These maximal length CAs are also somewhat autoplectic since a regular starting pattern eventually leads to sequences which closely resemble a random sequence. 
Furthermore, we can generate maximal length CAs from primitive polynomials. Few research works have been done by using primitive polynomials~\cite{Barde87,Bardell1992,Hansen1992}.

%Hortensius et al.~\cite{Horte89a,Horte89c} added some extra points in maximal length CAs that are used to generate pseudo-random number. Using computer simulation, they proposed a table which gives the non-uniform constructions necessary to achieve maximality. Table~\ref{Table:mlcaupto53} (see Appendix, page~\pageref{Table:mlcaupto53}) shows maximal length CAs for size 4 to 50. These maximal length CAs eventually lead to sequences which closely resemble a random sequence from a regular starting pattern. %These maximal length CAs are also somewhat autoplectic since a regular starting pattern eventually leads to sequences which closely resemble a random sequence. 
%Furthermore, we can generate maximal length CAs from primitive polynomials. Few research works have been done by using primitive polynomials~\cite{Barde87,Bardell1992,Hansen1992}.

The maximal length CAs that we have seen till now, use null boundary condition. Researchers took effort to find maximal length CAs with other boundary conditions.

\subsection{Boundary Condition Dependence}
\label{subsection:boundarycondition}

Based on the results of extensive simulations, Bardell~\cite{Barde90,Bardell} proposed a conjecture that there is no non-uniform CA with periodic boundary condition that has characteristic polynomial as primitive. The conjecture has later been proved by Nandi et al.~\cite{Nandi96}. They have shown that, for an $n$-cell periodic boundary CA with rules 90 and 150, either $x$ or $(x+1)$ is a factor of the characteristic polynomial, or the characteristic polynomial is a square. Hence, its characteristic polynomial is always reducible. Outline of the proof can be found in Theorem \ref{theorem:synthesize-2}.    

\begin{theorem}
\label{theorem:periodicboundaryMLCA}
There is no 90-150 CA under periodic boundary condition that has the characteristic polynomials as primitive.
\end{theorem}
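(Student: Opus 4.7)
The goal is to show $\Delta^{PB}$ is never primitive over GF(2) for any $n$-cell 90-150 CA. My plan is to prove the stronger trichotomy mentioned in the discussion: $\Delta^{PB}$ is always divisible by $x$, or divisible by $(x+1)$, or a perfect square in GF(2)$[x]$. Any of these immediately contradicts primitivity (the first two make it reducible with a linear factor, while squares of nonconstant polynomials are reducible by definition). In what follows I use the equivalence from Theorem~\ref{theorem:pp-mlca} implicitly to connect maximality with primitivity.

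First I would set up the matrix $T_{PB}$: it is the cyclic tridiagonal matrix over GF(2) with all sub- and super-diagonal entries equal to $1$, including the two corner entries $T_{PB}[0,n-1]=T_{PB}[n-1,0]=1$, and diagonal $(d_0,\dots,d_{n-1})$ with $d_i\in\{0,1\}$ indicating rule 90 or 150. The cheap observations come from evaluating $\Delta^{PB}$ at $x=0$ and $x=1$. Since all off-diagonal entries are $1$, the all-ones vector $\mathbf{1}$ satisfies $(T_{PB}\mathbf{1})_i = 1+d_i+1 = d_i$ in GF(2). Hence if every cell is rule 90 then $T_{PB}\mathbf{1}=0$, forcing $\det(T_{PB})=0$ and so $x\mid\Delta^{PB}$; while if every cell is rule 150 then $T_{PB}\mathbf{1}=\mathbf{1}$, making $1$ an eigenvalue and $(x+1)\mid\Delta^{PB}$. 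These are the homogeneous cases.

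The real work is the mixed case, where $d$ is neither $0$ nor $\mathbf{1}$. Here I would derive a closed-form expression for $\Delta^{PB}$ in terms of null-boundary subpolynomials from Lemma~\ref{lemma:recurrence}. Expanding $\det(xI+T_{PB})$ along the first row (or equivalently via the matrix-determinant lemma applied to the rank-2 corner perturbation $T_{PB}-T_{NB}$) should yield, modulo 2, an identity of the shape
\[
\Delta^{PB} \;=\; \Delta_{0,n-1}^{NB} \;+\; \Delta_{1,n-2}^{NB} \;+\; C^{2},
\]
where $C$ is a product of off-diagonal entries along a wraparound path (so $C=1$ here), and the cross terms cancel because each appears twice in the Leibniz expansion through the two symmetric corner 1s. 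The hard part will be verifying this identity rigorously and then showing that in the mixed regime the residual $\Delta_{0,n-1}^{NB}+\Delta_{1,n-2}^{NB}$ either vanishes at $x=0$, vanishes at $x=1$, or is itself a polynomial in $x^{2}$; any of these three outcomes finishes the argument because in GF(2)$[x]$, a polynomial in $x^{2}$ equals the square of the polynomial obtained by halving its exponents.

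The main obstacle is the last claim: in the mixed case, showing that the residual $\Delta_{0,n-1}^{NB}+\Delta_{1,n-2}^{NB}+1$ has only even-degree monomials. My plan is to prove this by induction on $n$ using the recurrence of Lemma~\ref{lemma:recurrence}, tracking the parity of degrees separately for the cases $d_0=d_{n-1}$ and $d_0\ne d_{n-1}$, and exploiting the left-right symmetry that swapping the rule vector and reindexing preserves $\Delta^{PB}$. A parity-chasing inductive argument of this style is precisely what Nandi et al.~\cite{Nandi96} carry out, and the outline pointed to via Theorem~\ref{theorem:synthesize-2} should fit into this framework. Once the trichotomy is established, reducibility of $\Delta^{PB}$ is immediate and the theorem follows.
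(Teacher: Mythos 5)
Your overall strategy --- rewrite the periodic-boundary characteristic polynomial in terms of null-boundary subpolynomials and then show it is always a perfect square times (possibly) linear factors --- is exactly the route the paper takes via Lemma~\ref{lemma:synthesis-1} and Theorem~\ref{theorem:synthesize-2}, so the plan is the right one. But the execution has concrete problems. First, your determinant identity is off: in the Leibniz expansion of $\det(xI+T_{PB})$ the two full wrap-around cycles each contribute the product of all off-diagonal entries, so over GF(2) they contribute $1+1=0$; they do not survive as a term $C^2=1$. The correct identity is $\Phi_{0,n-1}=\Delta_{0,n-1}+\Delta_{1,n-2}$ with no extra constant. A sanity check: for $n=3$ with all cells rule 90, $\Phi_{0,2}=x^3+x=\Delta_{0,2}+\Delta_{1,1}$, whereas your version gives $x^3+x+1$, which is primitive and would sink the whole argument. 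Second, the trichotomy is not organized by ``uniform versus mixed'': which of the four cases $q^2$, $xq^2$, $(x+1)q^2$, $x(x+1)q^2$ occurs is governed by the parity of $n$ together with the parity of the number of 150 rules (the weight $w$), and mixed rule vectors occur in all four cases. Your evaluation at $x=0$ and $x=1$ via the all-ones vector only certifies a linear factor for the two uniform rule vectors, a negligible subfamily, and gives no guidance on the case split that actually matters.

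Third, and this is the substantive gap, the inductive step you sketch is not workable as stated. The induction in Theorem~\ref{theorem:synthesize-2} rests on the identity $\Phi_{0,n-1}=(x+d_0)\Phi_{1,n-1}+\Phi_{1,n-2}$, which is valid only when $d_0=d_{n-1}$; one reaches that position by exploiting the \emph{rotational} symmetry of a periodic boundary CA (rotate the cell labels until two adjacent cells carry the same rule), and the one rule vector for which no such rotation exists, the alternating $(90,150,90,150,\cdots)$, is handled by the separate identity $\Phi_{0,n-1}=x(x+1)\Phi_{2,n-1}+\Phi_{4,n-1}$. The left-right (reversal) symmetry you invoke does not provide this reduction. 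Since you then explicitly defer the parity-chasing induction to Nandi et al.\ and to Theorem~\ref{theorem:synthesize-2}, the core of the theorem remains unproved in your write-up: what you have established independently is only that a square or a linear factor would contradict primitivity, and that the two uniform CAs fail.
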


Nandi et al.~\cite{Nandi96} have also shown that there exist maximal length CAs against each primitive polynomial under intermediate boundary condition. Following theorem establishes the fact that there exists at least one CA in intermediate boundary condition, corresponding to a CA under null boundary condition. Both of the CAs are having same characteristic polynomial.

%Furthermore, Nandi et al.~\cite{Nandi96} show that there exist CAs for all primitive polynomials of different degrees under intermediate boundary condition. That means, we can generate maximal length CAs in intermediate boundary condition. Following theorem establishes the fact that there exists at least one CA in intermediate boundary corresponding to a CA under null boundary condition, both have the same characteristic polynomial. 

\begin{theorem}
\label{theorem:intermediateboundarymlca}
For every 90-150 CA in null boundary, there exists at least one CA in intermediate boundary having the same characteristic polynomial. 
\end{theorem}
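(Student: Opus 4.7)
My plan is to realise $T_{IB}$ as a rank-two perturbation of $T_{NB}$, expand the resulting determinant in closed form, and then invert the relation in order to exhibit an intermediate-boundary rule vector whose characteristic polynomial coincides with the given $\Delta$.

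For any rule vector $(d'_0,\dots,d'_{n-1})$, the matrix $T_{IB}(d')$ differs from $T_{NB}(d')$ only at the corner entries $(0,2)$ and $(n-1,n-3)$, where $T_{IB}$ carries an additional~$1$ encoding the replacement of the missing boundary neighbours by the next-to-adjacent interior cells. Applying multilinearity of the determinant in the two perturbed rows yields an identity of the form
\[
\Delta_{IB}(d')\;=\;\Delta_{NB}(d')\;+\;\Delta_{3,\,n-1}(d')\;+\;\Delta_{0,\,n-4}(d')\;+\;\Theta(d'),
\]
in which the two middle terms simplify---via a block-triangular factorisation of the relevant minors of the tridiagonal $T_{NB}$---to subpolynomials already defined in Lemma~\ref{lemma:recurrence}, and $\Theta(d')$ is a low-order residual from the joint perturbation (a subpolynomial of the form $\Delta_{3,\,n-4}(d')$ in the bulk regime, and degenerating to a constant or to $0$ for small~$n$).

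Given the target null-boundary rule vector $d$, the natural first candidate is $d'=d$: if the correction $\Delta_{3,n-1}(d)+\Delta_{0,n-4}(d)+\Theta(d)$ vanishes over $\mathrm{GF}(2)$, the same rule vector already works in intermediate boundary. When it does not, I would flip one or both of the boundary rules $d'_0$ and $d'_{n-1}$, use the recurrence of Lemma~\ref{lemma:recurrence} to track each flip's simultaneous effect on $\Delta_{NB}(d')$ and on the three correction terms, and argue via a short case analysis (organised around the parities of $d_0+d_{n-1}$ and, for larger~$n$, $d_1+d_{n-2}$, exploiting the reversal symmetry $T_{NB}=T_{NB}^{\top}$ of the null-boundary matrix) that some boundary assignment always absorbs the discrepancy. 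The main obstacle is precisely this inversion: a single boundary flip propagates through the recurrence and modifies $\Delta_{NB}(d')$ at every degree, not merely the low-order terms to which the corrections contribute, so certifying a uniform choice in every case demands careful bookkeeping; should the direct construction prove too delicate in residual cases, a cleaner existence-only fallback is a counting argument showing that the intermediate-boundary map from rule vectors to monic degree-$n$ polynomials surjects onto the corresponding null-boundary image.
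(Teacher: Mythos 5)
Your opening decomposition is sound: the two extra corner entries $(0,2)$ and $(n-1,n-3)$ do contribute, via block-triangular minors, correction terms of the form $\Delta_{3,n-1}$, $\Delta_{0,n-4}$ and a joint residual, so the identity $\Delta_{IB}(d')=\Delta_{NB}(d')+\Delta_{3,n-1}(d')+\Delta_{0,n-4}(d')+\Theta(d')$ is correct. The genuine gap is the inversion step, which you yourself flag as the main obstacle but never close. You restrict the candidate intermediate-boundary CAs to those obtained from $d$ by flipping only the end rules $d'_0$ and $d'_{n-1}$, and there is no argument that the correction can always be absorbed within that family; in fact the paper's construction indicates it generally cannot be. The matching intermediate-boundary matrix produced there has its diagonal complemented at the \emph{four} cells $0,1,n-2,n-1$ and has off-diagonal entries $d_0+d_1$ at $(0,1)$ and $d_{n-2}+d_{n-1}$ at $(n-1,n-2)$ --- i.e.\ it is typically not a 90-150 CA at all and is not reachable by toggling just the two boundary rules. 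Your fallback ``counting argument showing the intermediate-boundary map surjects onto the null-boundary image'' is not an argument either: that surjectivity \emph{is} the theorem, so invoking it is circular.

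The paper's route sidesteps all of this bookkeeping. It applies determinant-preserving elementary operations to $xI+T_{NB}$ --- namely $Row(1)\leftarrow Row(1)+Row(2)$, $Row(n)\leftarrow Row(n)+Row(n-1)$, then $Col(2)\leftarrow Col(1)+Col(2)$ and $Col(n-1)\leftarrow Col(n)+Col(n-1)$ --- and observes that the resulting matrix is again of the form $xI+T'$ (the $x$'s introduced into off-diagonal positions by the row additions cancel under the subsequent column additions), where $T'$ has $1$'s at $(0,2)$ and $(n-1,n-3)$, hence is a legitimate intermediate-boundary linear CA. Since row and column additions preserve the determinant, $\det(xI+T_{NB})=\det(xI+T')$ and the theorem follows with no case analysis. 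If you want to salvage your approach, you would need to enlarge your search space to allow modifying the rules of cells $1$ and $n-2$ and the dependencies of the end cells on their immediate neighbours, at which point you would essentially be rediscovering the matrix $T'$ that the row/column operations hand you for free.
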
  

\begin{proof}
The characteristic matrix of a 90-150 CA under null boundary condition is represented as

\[
T_{NB}=
    \begin{bmatrix}
        d_0 & 1 & 0 & 0 & \cdots & 0 & 0 & 0 \\
        1 & d_1 & 1 & 0 & \cdots & 0 & 0 & 0 \\
        0 & 1 & d_2 & 1 & \cdots & 0 & 0 & 0 \\
        0 & 0 & 1 & d_3 & \cdots & 0 & 0 & 0 \\
        \vdots & \vdots & \vdots & \vdots & & \vdots & \vdots & \vdots \\
        0 & 0 & 0 & 0 & \cdots & d_{n-3} & 1 & 0\\
        0 & 0 & 0 & 0 & \cdots & 1 & d_{n-2} & 1\\
        0 & 0 & 0 & 0 & \cdots & 0 & 1 & d_{n-1}\\
                
    \end{bmatrix}_{n \times n}
\]
where $d_i=0$ (resp. 1), if the rule of $i^{th}$ cell is 90 (resp. 150). The characteristic polynomial of this is  $\det(xI+T_{NB})$
{\small
\[
T_{NB}+Ix=
    \begin{bmatrix}
        d_0+x & 1 & 0 & 0 & \cdots & 0 & 0 & 0 \\
        1 & d_1+x & 1 & 0 & \cdots & 0 & 0 & 0 \\
        0 & 1 & d_2+x & 1 & \cdots & 0 & 0 & 0 \\
        0 & 0 & 1 & d_3+x & \cdots & 0 & 0 & 0 \\
        \vdots & \vdots & \vdots & \vdots & & \vdots & \vdots & \vdots \\
        0 & 0 & 0 & 0 & \cdots & d_{n-3}+x & 1 & 0\\
        0 & 0 & 0 & 0 & \cdots & 1 & d_{n-2}+x & 1\\
        0 & 0 & 0 & 0 & \cdots & 0 & 1 & d_{n-1}+x\\
                
    \end{bmatrix}_{n \times n}
\]}
Let us number the rows and columns of the above matrix as $Row(1)$, $Row(2)$, $\cdots$, $Row(n)$, and $Col(1)$, $Col(2)$, $\cdots$, $Col(n)$ respectively. We now perform row and column operations on the above matrix, and get the following matrices.  
{\small
\[
\Rightarrow
    \begin{bmatrix}
        1+d_0+x & 1+d_1+x & 1 & 0 & \cdots & 0 & 0 & 0 \\
        1 & d_1+x & 1 & 0 & \cdots & 0 & 0 & 0 \\
        0 & 1 & d_2+x & 1 & \cdots & 0 & 0 & 0 \\
        0 & 0 & 1 & d_3+x & \cdots & 0 & 0 & 0 \\
        \vdots & \vdots & \vdots & \vdots & & \vdots & \vdots & \vdots \\
        0 & 0 & 0 & 0 & \cdots & d_{n-3}+x & 1 & 0\\
        0 & 0 & 0 & 0 & \cdots & 1 & d_{n-2}+x & 1\\
        0 & 0 & 0 & 0 & \cdots & 1 & 1+d_{n-2}+x & 1+d_{n-1}+x\\
                
    \end{bmatrix}_{n \times n}
\]}
\begin{center}
$Row(1)\leftarrow Row(1)+Row(2)$\\
$Row(n)\leftarrow Row(n)+Row(n-1)$
\end{center}
{\small
\[
\Rightarrow
    \begin{bmatrix}
        1+d_0+x & d_0+d_1 & 1 & 0 & \cdots & 0 & 0 & 0 \\
        1 & 1+d_1+x & 1 & 0 & \cdots & 0 & 0 & 0 \\
        0 & 1 & d_2+x & 1 & \cdots & 0 & 0 & 0 \\
        0 & 0 & 1 & d_3+x & \cdots & 0 & 0 & 0 \\
        \vdots & \vdots & \vdots & \vdots & & \vdots & \vdots & \vdots \\
        0 & 0 & 0 & 0 & \cdots & d_{n-3}+x & 1 & 0\\
        0 & 0 & 0 & 0 & \cdots & 1 & 1+d_{n-2}+x & 1\\
        0 & 0 & 0 & 0 & \cdots & 1 & d_{n-2}+d_{n-1} & 1+d_{n-1}+x\\
                
    \end{bmatrix}_{n \times n}
\]}

\begin{center}
$Col.(2)\leftarrow Col.(1)+Col.(2)$\\
$Col.(n-1)\leftarrow Col.(n)+Col.(n-1)$
\end{center}

{\small
\[
\Rightarrow
    \begin{bmatrix}
        1+d_0 & d_0+d_1 & \fbox{1} & 0 & \cdots & 0 & 0 & 0 \\
        1 & 1+d_1 & 1 & 0 & \cdots & 0 & 0 & 0 \\
        0 & 1 & d_2 & 1 & \cdots & 0 & 0 & 0 \\
        0 & 0 & 1 & d_3 & \cdots & 0 & 0 & 0 \\
        \vdots & \vdots & \vdots & \vdots & & \vdots & \vdots & \vdots \\
        0 & 0 & 0 & 0 & \cdots & d_{n-3} & 1 & 0\\
        0 & 0 & 0 & 0 & \cdots & 1 & 1+d_{n-2} & 1\\
        0 & 0 & 0 & 0 & \cdots & \fbox{1} & d_{n-2}+d_{n-1} & 1+d_{n-1}\\
                
    \end{bmatrix}_{n \times n}
\]}

\begin{center}
$\Rightarrow T_{IB}+Ix$
\end{center}
where $T_{IB}$ is the matrix representation of CA under intermediate boundary condition. Hence the proof.
\end{proof}

We shall see in detail that for each primitive polynomial, there exists at least one CA under null boundary condition. Hence, for every primitive polynomial, there exist at least one CA in intermediate boundary condition. From the exhaustive simulation study, they have proposed following conjecture for intermediate boundary condition.

%Thus, from the above theorem it is clear that every primitive polynomial can be realized by at least one CA in intermediate boundary. From the exhaustive simulation study they proposed the following conjecture under intermediate boundary.

\begin{conjecture}
\label{conjecture:IBCA18}
There exist exactly 18 $n$-cell CAs ($n\ge 4$) with intermediate boundary conditions for every primitive polynomial of degree $n$~\cite{Nandi96}.
\end{conjecture}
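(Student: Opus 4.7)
The plan is to reduce the conjectured count of $18$ to the product $2 \times 9$. The factor of $2$ would come from the correspondence, established in Section~\ref{section:synthesis} of this paper, between each primitive polynomial of degree $n$ and a pair of reverse-related null boundary 90--150 CAs. The factor of $9$ must be obtained by classifying, for each null boundary CA, the intermediate boundary 90--150 CAs that share its characteristic polynomial. The relationship furnished by Theorem~\ref{theorem:intermediateboundarymlca} supplies one such lift explicitly via the corner row/column operations on $xI + T_{NB}$, so the task is to generalise those operations into a full enumeration.

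The first concrete step is to derive an explicit formula for $\Delta_{IB}$. Expanding $\det(xI + T_{IB})$ along the first row using the two nonzero entries at positions $(0,1)$ and $(0,2)$, then iterating Lemma~\ref{lemma:recurrence} on the interior minor, and doing the symmetric expansion along the last row, $\Delta_{IB}$ should decompose as
\begin{equation*}
\Delta_{IB} \;=\; \Delta_{NB} \;+\; L(d_0,d_1;x)\,\Delta_{2,n-3} \;+\; R(d_{n-2},d_{n-1};x)\,\Delta_{1,n-4} \;+\; (\text{cross terms}),
\end{equation*}
where $\Delta_{NB}$ is the characteristic polynomial of the corresponding null boundary rule vector and $L, R$ are low-degree corner corrections. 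Fixing a target primitive $P(x)$ and an interior sub-rule vector $\mathcal{R}_{2,n-3}$ chosen so that $\Delta_{NB} = P(x)$, the central claim to prove is that the equation $\Delta_{IB} = P(x)$ admits exactly nine solutions in the four boundary bits $(d_0, d_1, d_{n-2}, d_{n-1}) \in \{0,1\}^4$, interpreted via the subpolynomial identities from equation~\eqref{equation:polynomialgf(2)}. The reversal involution on rule vectors then pairs these nine with the companion null boundary CA, yielding the count of $18$.

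The main obstacle is the upper bound. The corner correction terms couple the left and right boundary choices through the interior subpolynomials $\Delta_{2,n-3}$ and $\Delta_{1,n-4}$, so the admissible boundary quadruples do not factor as an independent product of left and right local counts. A case split based on whether $x$ or $(x+1)$ divides these interior subpolynomials at the corner stages of the recurrence seems unavoidable, and it is precisely the possibility of irregular behaviour in one of these cases that makes the statement a conjecture rather than a theorem. My approach would be to first verify the count exhaustively by computer for $n \le 10$, confirming both the total and the $2 \times 9$ decomposition, and then attempt a uniform algebraic proof by induction on $n$ using the adapted recurrence. Absent such a uniform argument, a weaker result — for instance, that there exist at least $18$ intermediate boundary CAs per primitive polynomial — should follow directly from enumerating the explicit symmetries of $xI + T_{NB}$ exposed by the proof of Theorem~\ref{theorem:intermediateboundarymlca}.
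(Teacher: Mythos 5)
First, note that the statement you are asked to prove appears in the paper only as a conjecture, attributed to an exhaustive simulation study by Nandi et al.; the paper offers no proof, so there is no argument of the paper's to compare yours against. Your proposal is candid about this and reads as a research plan rather than a proof; judged as a proof, it has a genuine gap, namely the central claim that exactly nine boundary quadruples $(d_0,d_1,d_{n-2},d_{n-1})$ solve $\Delta_{IB}=P(x)$ for each of the two null boundary realizations. That count is asserted, not derived, and it is essentially the entire content of the conjecture.

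Second, two of your intermediate steps are shakier than you present them. (i) The decomposition $18=2\times 9$ presupposes a well-defined partition of the intermediate boundary CAs having characteristic polynomial $P$ into two classes indexed by the two null boundary realizations of $P$ (Corollary~\ref{corollary:exactlytwoCA}). But an intermediate boundary rule vector is not canonically attached to a null boundary one: your attachment fixes the interior sub-rule vector $\mathcal{R}_{2,n-3}$, yet the two null boundary realizations of $P$ need not share interiors, and nothing rules out an IB rule vector whose underlying null boundary CA has a different (even non-primitive) characteristic polynomial while its IB polynomial is still $P$. Moreover the reversal involution you invoke to pair the two putative classes of nine could in principle have fixed points (palindromic IB rule vectors); Theorem~\ref{theorem:palindromicCA} excludes palindromes only in the null boundary setting. (ii) The fallback claim that ``at least 18'' follows directly from the proof of Theorem~\ref{theorem:intermediateboundarymlca} overreaches: that proof exhibits exactly one IB companion per null boundary CA via one specific sequence of row and column operations, so it yields at most two IB CAs per primitive polynomial, not eighteen; to get nine per side you would need nine inequivalent determinant-preserving corner transformations, which neither the paper nor your proposal supplies. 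The computational check for small $n$ is the only immediately actionable part of the plan, and it would reproduce the empirical evidence on which the conjecture already rests.
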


\subsection{Palindromic and Uniform CA}
\label{subsection:palindromic}

This section explores some special patterns of rule vectors, for which the CA is always non-maximal. The first of these is palindromic, where the rule vector is identical to its own reversal. That means, it has a symmetric rule vector: ($\mathcal{R}_0$, $\mathcal{R}_1$, $\cdots$, $\mathcal{R}_{n-1}$) = ($\mathcal{R}_{n-1}$, $\mathcal{R}_{n-2}$, $\cdots$, $\mathcal{R}_{0}$). As an example, the CA with rule vector $(150, 90, 150, 90, 150)$ is a palindromic CA. In~\cite{CattellTh}, it has been shown that the characteristic polynomials of an even length palindromic CA is a perfect square, and of an odd length palindromic CA is the product of a monomial and a perfect square. Hence, the characteristic polynomial of a palindromic CA is always reducible. To prove this statement, at first we have to prove the concatenation relation which gives characteristic polynomial of a CA formed by concatenating two CAs. The result is stated in terms of breaking an $n$-cell CA into components 0, 1, $\cdots$, $k-1$ and $k$, $k+1$, $\cdots$, $n-1$ (see Figure~\ref{figure:concatenation}). like before, we use ${\Delta_{i,j}}$ to denote characteristic polynomial of $\mathcal{R}_{i,j}$, and abbreviate ${\Delta}_{0,j}$ as ${\Delta}_{j}$. Note that a characteristic polynomial can be denoted as ${\Delta}_{0,n-1}$, ${\Delta}_{n-1}$ or ${\Delta}$. Later material requires that the characteristic polynomial of an $n$-cell CA for $n=0$ and $n=-1$ be defined: ${\Delta_{0,-1}}$=${\Delta_{-1}}$=1 and ${\Delta_{0,-2}}$=${\Delta_{-2}}$=0.

\begin{figure}
	\centering
	\includegraphics[width= 4.6in, height = 0.8in]{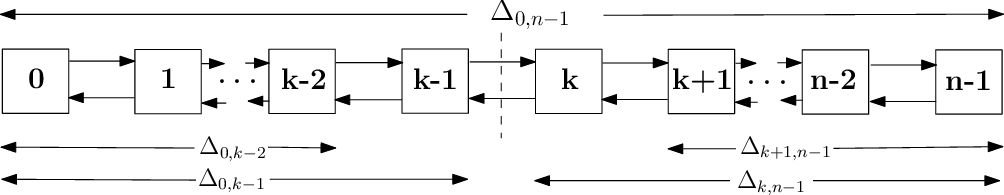}		
	\caption{Sub-rule vector related by the concatenation relation}        		
\label{figure:concatenation}        		
\end{figure}

\begin{theorem}
\label{theorem:concatenation}
For any $k$ with $0\leq k\leq n-1$,\\
\hspace*{1.5in} ${\Delta}_{0,n-1}={\Delta}_{0,k-1}{\Delta}_{k,n-1} + {\Delta}_{0,k-2}{\Delta}_{k+1,n-1}$
\end{theorem}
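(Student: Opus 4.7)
The plan is induction on the split position $k$, carried out with the three-term recurrence of Lemma~\ref{lemma:recurrence} for the right-extended family $\Delta_{0,j}$ together with its mirror image for the left-extended family $\Delta_{j,n-1}$.

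First I would establish the auxiliary \emph{left-extension} recurrence
\[
\Delta_{j,n-1} \;=\; (x+d_j)\Delta_{j+1,n-1} + \Delta_{j+2,n-1},
\]
together with the natural empty-polynomial conventions $\Delta_{n,n-1}=1$ and $\Delta_{n+1,n-1}=0$. The derivation is a verbatim repeat of the proof of Lemma~\ref{lemma:recurrence}, but with the cofactor expansion taken along the \emph{first} row (and first column) of the submatrix $xI+T_{j,n-1}$ rather than the last; tridiagonal symmetry under the 90--150 assumption $a_i=b_i=1$ makes the two derivations mirror images of one another.

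With this in hand, induction on $k$ proceeds cleanly. The base case $k=0$ collapses the right-hand side of the claimed identity to $1\cdot\Delta_{0,n-1}+0\cdot\Delta_{1,n-1}=\Delta_{0,n-1}$ via the stated conventions $\Delta_{0,-1}=1$ and $\Delta_{0,-2}=0$. For the step from $k$ to $k+1$, I would substitute $\Delta_{0,k}=(x+d_k)\Delta_{0,k-1}+\Delta_{0,k-2}$ from Lemma~\ref{lemma:recurrence} into the proposed identity at $k+1$, regroup to isolate the combination $(x+d_k)\Delta_{k+1,n-1}+\Delta_{k+2,n-1}$, and collapse it to $\Delta_{k,n-1}$ using the left-extension recurrence. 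What remains is precisely the right-hand side of the identity at $k$, which equals $\Delta_{0,n-1}$ by the inductive hypothesis.

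The main obstacle is purely notational—maintaining two parallel recurrences and the two empty-polynomial conventions simultaneously; no new algebraic input is required beyond Lemma~\ref{lemma:recurrence}. Should the bookkeeping become unwieldy, I would switch to a direct block-matrix argument: partition $xI+T_n$ as a $2\times 2$ block matrix with the split between rows/columns $k-1$ and $k$, observe that each off-diagonal block contains exactly one nonzero entry (the single $1$ coupling cells $k-1$ and $k$), and compute $\det(xI+T_n)$ by Laplace expansion along row $k$. The resulting minors decompose along the partition into products of characteristic determinants of the two halves, producing directly the two terms $\Delta_{0,k-1}\Delta_{k,n-1}$ and $\Delta_{0,k-2}\Delta_{k+1,n-1}$ of the theorem.
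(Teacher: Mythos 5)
Your proposal is correct and takes essentially the same route as the paper: induction on $k$, the same base case via the conventions $\Delta_{0,-1}=1$, $\Delta_{0,-2}=0$, and the same regrouping of the inductive step using the right-extension recurrence of Lemma~\ref{lemma:recurrence} together with the left-extension recurrence $\Delta_{j,n-1}=(x+d_j)\Delta_{j+1,n-1}+\Delta_{j+2,n-1}$. The only difference is that you justify that left-extension recurrence directly, by mirroring the cofactor expansion of Lemma~\ref{lemma:recurrence} along the first row instead of the last, whereas the paper invokes the left--right symmetry of the characteristic polynomial via a forward reference to Corollary~\ref{corollary:exactlytwoCA}; your derivation is the more self-contained of the two.
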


\begin{proof}
We prove this theorem by induction. The base case of the inductive proof is for $k=0$. By definitions of ${\Delta_{i,j}}$ and from Lemma~\ref{lemma:recurrence} we get that ${\Delta_{0,-1}}$=${\Delta_{-1}}$=1, and ${\Delta_{0,-2}}$=${\Delta_{-2}}$=0. Then the theorem claims that\\
%Using the definitions of ${\Delta}_{i,i-2}$ and ${\Delta}_{i,i-1}$ as 0 and 1 respectively, the theorem claims that\\
\hspace*{1.5in} ${\Delta}_{0,n-1}={\Delta}_{0,-1}{\Delta}_{0,n-1} + {\Delta}_{0,-2}{\Delta}_{1,n-1}$\\
\hspace*{2.0in} $=1\cdot {\Delta}_{0,n-1} + 0\cdot {\Delta}_{1,n-1}$\\
\hspace*{2.0in} $={\Delta}_{0,n-1}$\\
and hence is satisfied trivially.

Assuming that the theorem holds for $k$, it is to show that it holds for $k+1$. By the inductive hypothesis,\\
\hspace*{1.5in} ${\Delta}_{0,n-1}={\Delta}_{0,k-1}{\Delta}_{k,n-1} + {\Delta}_{0,k-2}{\Delta}_{k+1,n-1}$ \\

%A CA has left-right symmetry, in that the polynomial is essentially unchanged if the cell labelings are reversed which are discussed and proved in Section~\ref{subsection:quadratic} (see Corollary \ref{corollary:exactlytwoCA}). Hence, recurrence relation \ref{equation:polynomialgf(2)} can be stated equally well in terms of right-side sub-rule vector. We get\\
%Applying recurrence relation \ref{equation:polynomialgf(2)}, we get\\
A CA has left-right symmetry with respect to its characteristics polynomial. It implies that the characteristic polynomial remains unchanged if the cell labelings are reversed. This result is formally proved in Section~\ref{subsection:quadratic} (see Corollary \ref{corollary:exactlytwoCA}). Hence, the recurrence relation \ref{equation:polynomialgf(2)} can be rewritten considering right-most rule as the first rule. Then we get\\
\hspace*{1.5in} ${\Delta}_{k,n-1}= (x+d_k){\Delta}_{k+1,n-1} + {\Delta}_{k+2,n-1}$ \\
Hence, \\
\hspace*{1.2in} ${\Delta}_{0,n-1}={\Delta}_{0,k-1}{\Delta}_{k,n-1} + {\Delta}_{0,k-2}{\Delta}_{k+1,n-1}$ \\
\hspace*{1.7in} $= {\Delta}_{0,k-1}((x+d_k){\Delta}_{k+1,n-1} + {\Delta}_{k+2,n-1}) + {\Delta}_{0,k-2}{\Delta}_{k+1,n-1}$\\
\hspace*{1.7in} $= {\Delta}_{0,k-1}{\Delta}_{k+2,n-1} + (x+d_k){\Delta}_{0,k-1}{\Delta}_{k+1,n-1} +  {\Delta}_{0,k-2}{\Delta}_{k+1,n-1}$\\
%Factoring out ${\Delta}_{k+1,n-1}$,\\
%\hspace*{1.2in} ${\Delta}_{0,n-1}= ((x+d_k){\Delta}_{0,k-1}+{\Delta}_{0,k-2}){\Delta}_{k+1,n-1} + {\Delta}_{0,k-1}{\Delta}_{k+2,n-1}$ \\
\hspace*{1.7in} $= {\Delta}_{0,k-1}{\Delta}_{k+2,n-1} + ((x+d_k){\Delta}_{0,k-1}+{\Delta}_{0,k-2}){\Delta}_{k+1,n-1} $ \\
Applying (\ref{equation:polynomialgf(2)}) where ${\Delta}_{0,k}={\Delta}_{k}=(x+d_k){\Delta}_{0,k-1}+{\Delta}_{0,k-2}$\\
\hspace*{1.5in} ${\Delta}_{0,n-1}= {\Delta}_{0,k}{\Delta}_{k+1,n-1} + {\Delta}_{0,k-1}{\Delta}_{k+2,n-1}$ \\
Hence the formula holds for $k+1$, given that it holds for $k$. Hence the proof.
\end{proof}

\begin{example}
\label{example:concatenation}
Consider the $4$-cell CA in Example~\ref{Matrix_example-4} with the rule vector $(90, 150, 90, 150)$. With $k=2$\\
\hspace*{1.5in}${\Delta}_{0,3}={\Delta}_{0,1}{\Delta}_{2,3} + {\Delta}_{0,0}{\Delta}_{3,3}$\\
\hspace*{1.75in}$=(x^2+x+1)(x^2+x+1) + x(x+1)$\\
\hspace*{1.75in}$=x^4+x+1$

Hence, the characteristic polynomial of the given CA is $x^4+x+1$ which is same as Example~\ref{Matrix_example-4}.
\end{example}

\begin{theorem}
\label{theorem:palindromicCA}
The characteristic polynomial $\Delta$ (${\Delta}_{0,n-1}$) of a palindromic CA is given by
\[
{\Delta}_{0,n-1} =\left\{ 
\begin{array}{cl}
({\Delta}_{0,\frac{n}{2}-1} + {\Delta}_{0,\frac{n}{2}-2})^2, & \mbox{if $n$ is even}\\
(x+d_{\frac{n-1}{2}})({\Delta}_{0,\frac{n-1}{2}-1})^2, & \mbox{if $n$ is odd}\\
\end{array} 
\right.
\] 
\end{theorem}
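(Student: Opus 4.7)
The plan is to apply the concatenation identity of Theorem~\ref{theorem:concatenation} at the midpoint of the palindromic rule vector, and then collapse the resulting pieces using both the palindromic symmetry together with left--right invariance of the characteristic polynomial (Corollary~\ref{corollary:exactlytwoCA}), and the characteristic-of-two identity $a^2+b^2=(a+b)^2$.

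Consider first the even case, $n=2m$. Split at $k=m$ in Theorem~\ref{theorem:concatenation} to obtain
\[
{\Delta}_{0,n-1}={\Delta}_{0,m-1}{\Delta}_{m,n-1} + {\Delta}_{0,m-2}{\Delta}_{m+1,n-1}.
\]
The palindromic condition $d_i=d_{n-1-i}$ implies that the sub-rule vector $(d_m,d_{m+1},\ldots,d_{n-1})$ is precisely the reverse of $(d_0,\ldots,d_{m-1})$, and similarly $(d_{m+1},\ldots,d_{n-1})$ is the reverse of $(d_0,\ldots,d_{m-2})$. Because the characteristic polynomial is invariant under reversal of the cell labelling (Corollary~\ref{corollary:exactlytwoCA}), this yields ${\Delta}_{m,n-1}={\Delta}_{0,m-1}$ and ${\Delta}_{m+1,n-1}={\Delta}_{0,m-2}$. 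Substituting and using that we are working over GF(2), where $(a+b)^2=a^2+b^2$, gives
\[
{\Delta}_{0,n-1}=({\Delta}_{0,m-1})^2+({\Delta}_{0,m-2})^2=({\Delta}_{0,m-1}+{\Delta}_{0,m-2})^2,
\]
which is exactly the claimed formula with $m=n/2$.

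For the odd case, write $n=2m+1$ so that $m=(n-1)/2$ indexes the unique middle cell, and again apply Theorem~\ref{theorem:concatenation} at $k=m$. By palindromicity, $(d_{m+1},\ldots,d_{n-1})$ reverses $(d_0,\ldots,d_{m-1})$ and $(d_m,d_{m+1},\ldots,d_{n-1})$ reverses $(d_0,\ldots,d_m)$, so reversal invariance gives ${\Delta}_{m+1,n-1}={\Delta}_{0,m-1}$ and ${\Delta}_{m,n-1}={\Delta}_{0,m}$. Factoring out ${\Delta}_{0,m-1}$ and then using the recurrence~(\ref{equation:polynomialgf(2)}) in the form ${\Delta}_{0,m}+{\Delta}_{0,m-2}=(x+d_m){\Delta}_{0,m-1}$ (valid over GF(2)) yields
\[
{\Delta}_{0,n-1}={\Delta}_{0,m-1}\bigl({\Delta}_{0,m}+{\Delta}_{0,m-2}\bigr)=(x+d_m)({\Delta}_{0,m-1})^2,
\]
which is precisely the stated identity with $d_{(n-1)/2}=d_m$.

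The only non-routine ingredient is the identification ${\Delta}_{m,n-1}={\Delta}_{0,m-1}$ (and its odd-case analogue ${\Delta}_{m,n-1}={\Delta}_{0,m}$), which depends on knowing that the characteristic polynomial is invariant under reversal of the rule vector. That fact is stated as Corollary~\ref{corollary:exactlytwoCA} later in the paper and so can be used here; everything else is a direct application of the concatenation theorem and the Frobenius identity in characteristic~$2$. The main conceptual obstacle is therefore merely to organize the index bookkeeping correctly when matching reversed sub-rule vectors on the right half of the CA to their mirror images on the left.
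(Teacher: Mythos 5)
Your proof is correct and follows essentially the same route as the paper's: split the rule vector at the midpoint with Theorem~\ref{theorem:concatenation}, use left--right reversal invariance of the characteristic polynomial to identify the right-half subpolynomials with their left-half mirrors, and finish with $a^2+b^2=(a+b)^2$ over GF(2). The only (cosmetic) difference is in the odd case, where you factor out ${\Delta}_{0,m-1}$ and apply the forward recurrence~(\ref{equation:polynomialgf(2)}) on the left half, while the paper instead expands the middle factor ${\Delta}_{k+1,n-1}$ via the reversed recurrence on the right half; both steps rest on the same reversal-invariance fact and yield identical computations.
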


\begin{proof}
Suppose that $n$ is even, and let $k=(n/2)-1$. The symmetry of the rule vector $\mathcal{R}$ implies that - ($\mathcal{R}_0$, $\mathcal{R}_1$, $\cdots$, $\mathcal{R}_{n-1}$) = ($\mathcal{R}_0$, $\cdots$, $\mathcal{R}_k$, $\mathcal{R}_k$, $\cdots$, $\mathcal{R}_0$). 

Due to left-right symmetry of CA with respect to its characteristic polynomial, the polynomial remains unchanged if the cell labelings are reversed. Applying it into the palindromic CA, we can get ${\Delta}_{0,k}$ = ${\Delta}_{k+1,n-1}$ and ${\Delta}_{0,k-1}$ = ${\Delta}_{k+2,n-1}$. By substituting these into Theorem~\ref{theorem:concatenation}\\
\hspace*{8em} ${\Delta}_{0,n-1}$ = ${\Delta}_{0,k}{\Delta}_{k+1,n-1} + {\Delta}_{0,k-1}{\Delta}_{k+2,n-1}$\\
\hspace*{11.3em} = ${\Delta}_{0,k}{\Delta}_{0,k} + {\Delta}_{0,k-1}{\Delta}_{0,k-1}$\\
\hspace*{11.3em} = $({\Delta}_{0,k})^2 + ({\Delta}_{0,k-1})^2$\\
\hspace*{11.3em} = $({\Delta}_{0,k} + {\Delta}_{0,k-1})^2$\\
\hspace*{11.3em} = $({\Delta}_{0,\frac{n}{2}-1} + {\Delta}_{0,\frac{n}{2}-2})^2$\\

Now, Suppose $n$ is odd, and let $k=((n-1)/2)-1$. Then the rule vector $\mathcal{R}$ has the form ($\mathcal{R}_0$, $\mathcal{R}_1$, $\cdots$, $\mathcal{R}_{n-1}$) = ($\mathcal{R}_0$, $\cdots$, $\mathcal{R}_{k-1}$, $\mathcal{R}_k$, $\mathcal{R}_{k+1}$, $\mathcal{R}_k$, $\mathcal{R}_{k-1}$, $\cdots$, $\mathcal{R}_0$). Hence, ${\Delta}_{0,k}$ = ${\Delta}_{k+2,n-1}$ and ${\Delta}_{0,k-1}$ = ${\Delta}_{k+3,n-1}$. Now, \\
\hspace*{6em} ${\Delta}_{0,n-1}$ = ${\Delta}_{0,k}{\Delta}_{k+1,n-1} + {\Delta}_{0,k-1}{\Delta}_{k+2,n-1}$\\
\hspace*{9.3em} = ${\Delta}_{0,k}((x+d_{k+1}){\Delta}_{k+2,n-1} + {\Delta}_{k+3,n-1}) + {\Delta}_{0,k-1}{\Delta}_{0,k}$\\
\hspace*{9.3em} = ${\Delta}_{0,k}((x+d_{k+1}){\Delta}_{0,k} + {\Delta}_{0,k-1}) + {\Delta}_{0,k-1}{\Delta}_{0,k}$\\
\hspace*{9.3em} = $(x+d_{k+1}){\Delta}_{0,k}{\Delta}_{0,k} + {\Delta}_{0,k}{\Delta}_{0,k-1} + {\Delta}_{0,k-1}{\Delta}_{0,k}$\\
\hspace*{9.3em} = $(x+d_{k+1})({\Delta}_{0,k})^2$ \\
\hspace*{9.3em} = $(x+d_{\frac{n-1}{2}})({\Delta}_{0,\frac{n-1}{2}-1})^2$  
\end{proof}

From the above theorem, we get that the characteristic polynomial of any palindromic CA is always reducible. Therefore, it can not produce a maximal length cycle. A {\em uniform} CA is trivially palindromic, and so has a reducible characteristic polynomial. %According to this, we develop a corollary for uniform CAs (see Corollary~\ref{corollary:uniformCA}). 
So, uniform CAs cannot have maximal length cycles. Hence, we need non-uniform CAs. 

\begin{corollary}
\label{corollary:uniformCA}
The characteristic polynomial of an $n$-cell uniform CA is always reducible.
\end{corollary}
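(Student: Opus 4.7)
The plan is a simple case analysis on which of the eight linear rules (see Table~\ref{table:linearrules}) is applied uniformly, reducing each case to a reducibility result already established in the excerpt.

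First I would observe from Table~\ref{table:linearrules} that only rules $90$ and $150$ have both $a\ne 0$ and $b\ne 0$; each of the remaining six linear rules ($0$, $60$, $102$, $170$, $204$, $240$) has $a=0$ or $b=0$. For a uniform CA using one of those six rules, every subdiagonal entry $a_i$ of the matrix $T$ is zero or every superdiagonal entry $b_j$ of $T$ is zero. Theorem~\ref{theorem:reduciblepolynomial} therefore applies directly and $\Delta$ is reducible.

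For the complementary case, when the uniform rule is $90$ or $150$, the rule vector $(\mathcal{R}_0,\mathcal{R}_1,\ldots,\mathcal{R}_{n-1})$ trivially coincides with its own reversal, so the CA is a palindromic $90$-$150$ CA. Theorem~\ref{theorem:palindromicCA} then expresses $\Delta$ as
$$\Delta = \bigl(\Delta_{0,\frac{n}{2}-1}+\Delta_{0,\frac{n}{2}-2}\bigr)^{2} \quad\text{or}\quad \Delta = \bigl(x+d_{\frac{n-1}{2}}\bigr)\bigl(\Delta_{0,\frac{n-1}{2}-1}\bigr)^{2},$$
according to the parity of $n$. In either form $\Delta$ is a product of lower-degree polynomials for $n\ge 2$, hence reducible.

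There is no real obstacle here — the corollary is essentially a two-line consequence of Theorem~\ref{theorem:reduciblepolynomial} together with Theorem~\ref{theorem:palindromicCA}. The only matter requiring care is to ensure that the case split genuinely exhausts the eight linear rules, which is immediate from the explicit enumeration in Table~\ref{table:linearrules}.
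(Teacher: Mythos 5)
Your proof is correct, but it is organized differently from the paper's. The paper disposes of the corollary in one line: a uniform CA is trivially palindromic, so Theorem~\ref{theorem:palindromicCA} applies and $\Delta$ is a perfect square (times a possible monomial), hence reducible. You instead split on the eight linear rules, invoking Theorem~\ref{theorem:reduciblepolynomial} for the six rules with $a=0$ or $b=0$ and reserving the palindromic argument for uniform rule~90 and rule~150 CAs. Your case split is slightly longer but arguably more careful: Theorem~\ref{theorem:palindromicCA} and the concatenation relation it rests on are stated and proved in the paper using the simplified recurrence~(\ref{equation:polynomialgf(2)}), which assumes $b_{k-2}a_{k-1}=1$, i.e.\ a 90-150 CA; applying the palindromic theorem verbatim to a uniform rule-60 or rule-170 CA would require re-justifying the left-right symmetry step for asymmetric neighbourhood dependence, whereas your Theorem~\ref{theorem:reduciblepolynomial} route covers those rules with no such gap. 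Both arguments implicitly need $n\ge 2$ (for $n=1$ the polynomial $x+d_0$ has degree one and is not reducible), which you correctly flag and the paper does not.
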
   

\begin{example}
Let us consider a $5$-cell CA with rule vector $(150, 90, 150, 90, 150)$. This is a palindromic CA. The characteristic polynomial of this CA is $x^5+x^4+x^3+x^2+x+1$ = $(x+1)(x^2+x+1)^2$. Hence, it is a reducible polynomial. Let us now take an $8$-cell uniform CA of rule 90. Its characteristic polynomial is $x^8+x^6+x^4+1$, which is reducible as $(x+1)^2(x^3+x+1)^2$.
\end{example}

\subsection{Minimal Cost Maximal Length CA}
\label{subsection:minimalcost}

It is difficult to measure the cost of a CA, as it involves assigning relative costs to addition, multiplication by constants, and storage of elements from the field. However, using the formula - ($\sum_{i=1}^{n-1}{a_i} + \sum_{i=0}^{n-1}{d_i} + \sum_{i=0}^{n-2}{b_i}$), we can calculate the cost of a CA. When it is minimal, then we call the CA as minimal cost CA. As an example, costs of two CAs (150, 90, 150, 150, 90) and (150, 90, 90, 90, 90) are 11 and 9 respectively. Which means the cost of the second CA is minimum here.

%However, for binary CAs, minimal cost maximal length CA means a maximal length CA with minimum number of 150 rule. In~\cite{Zhang91}, Zhang et al. present an algorithm for determining if a given CA has maximal length cycle. This algorithm is used to identify one such CAs for each length upto 150 cells. Following are the steps of the algorithm for $n$-cell.

In case of binary CAs, the above formula implies that minimal cost maximal length CA is a maximal length CA with minimum number of rule 150. In~\cite{Zhang91}, Zhang et al. has presented an algorithm for determining minimal cost maximal length CA. This algorithm is used to identify a such CAs for each length upto 150 cells. Further, the results have been extended by Cattell et al. \cite{Cattell500} upto degree 500. They have reported an algorithm of deciding whether a given $n$-cell CA has a maximal length cycle by checking if the corresponding characteristic polynomial is primitive; If so, the CA has the maximal length cycle.

%Following are the steps of the algorithm for an $n$-cell CA.

%\noindent\rule{8cm}{0.6pt}
%\begin{algorithmic}[1]
%
%\STATE To process factors (performed once for a given $n$)
%\STATE Set $F$ to be the set of prime cofactors of $2^n-1$
%\STATE Remove all $f$ from $F$ where $f|(2^i-1)$, $1<i<n$
%\STATE Add 1 to each remaining element of $F$
%\STATE For each candidate CA, construct the matrix $M$
%\STATE If $M$ is singular, or its main diagonal is a palindrome, then reject the CA
%\STATE For $i=1, 2, \cdots, n-1$, compute and retain $M^{2^i}$ (note $M^{2^i} = (M^{2^{i-1}})^2$);
%\STATE If $M^{2^i}= M$ for any $i$, then the CA is rejected
%\STATE If $M^{2^n}\neq M$, then reject the CA
%\STATE For each $f\in F$, $f = \sum_{i=0}^{n-1} f_i2^i$, if $\pi_{f_i\neq 0} M^{2^i}=M$, then the CA is rejected
%\STATE If the CA is not rejected in one of steps above, it has a maximal length cycle.
%
%\end{algorithmic}
%%\vspace{-0.75em}
%\noindent\rule{8cm}{0.6pt}

In this procedure, an $n$-cell maximal length CA with a single 150 is searched. If this is not successful, then a maximal length CA with two 150 is searched. 
The search was stopped at the first CA with maximal length cycle. This search has never failed, meaning that for each degree upto 500, there is a CA with the maximal length cycle that has either one or two rule 150~\cite{Zhang91,Cattell500}. The results can be find in Ref. \cite{Cattell500}. From the experimental results, following result has been conjectured.

\begin{conjecture}
\label{conjecture:two150rule}
For every $n$, there exist a maximal length CA using at most two 150 rules \cite{Cattell500}.
\end{conjecture}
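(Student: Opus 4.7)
The plan is to reduce the conjecture to an explicit statement about a concretely parametrised family of polynomials over GF(2), and then to argue that this family must intersect the set of primitive polynomials of degree $n$ for every $n$. First I would compute the characteristic polynomials of pure rule-90 sub-chains in closed form. Setting every $d_i=0$ in the recurrence~(\ref{equation:polynomialgf(2)}) yields the Fibonacci-type relation $U_k = xU_{k-1} + U_{k-2}$ with $U_{-1}=1$, $U_0=x$; the sequence $U_k$ is the GF(2) analogue of the Chebyshev polynomials of the second kind, and its factorisation pattern in terms of cyclotomic factors is classical.

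Next I would apply the concatenation formula of Theorem~\ref{theorem:concatenation} at the cell(s) carrying rule 150. For a one-150 CA with the non-zero diagonal at cell $k$ this expresses the characteristic polynomial as an explicit combination of the $U_k$'s and the "shifted'' auxiliary polynomials $V_k = U_k + U_{k-1}$ (which themselves satisfy the same recurrence with initial conditions $V_0=1$, $V_1=x+1$). Applying the same formula twice handles the two-150 case, so that the conjecture reduces to the following clean statement: the family $\mathcal{F}_n$ of at most $n+\binom{n}{2}$ polynomials obtained as the positions of the one or two 150-cells vary must intersect the set $\mathcal{P}_n$ of primitive polynomials of degree $n$ over GF(2).

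Having this reformulation, I would attempt two complementary lines of attack. A density-style argument would proceed by first showing that the members of $\mathcal{F}_n$ are genuinely distinct modulo the reversal symmetry of Corollary~\ref{corollary:exactlytwoCA}, yielding a lower bound on $|\mathcal{F}_n|$ of order $n^2$, and then comparing against $|\mathcal{P}_n|=\phi(2^n-1)/n$; the inequality needed is not automatic and would require some uniform-distribution input on how the family spreads over the space of degree-$n$ polynomials. A constructive line would specialise to 150's placed near the centre of the CA, where the extra reversal symmetry collapses the $U_k$-combination into a perfect-square-plus-correction form amenable to cyclotomic or resultant analysis, hopefully producing an explicit primitive element of $\mathcal{F}_n$ for each $n$.

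The main obstacle, and the reason the statement remains a conjecture, is precisely this final existence step. The result is at least as delicate as the classical and long-standing open problems on the existence of primitive polynomials of small weight: primitive trinomials are not known to exist for every degree, and no Dirichlet-style density theorem is available over GF(2) that would immediately yield $\mathcal{F}_n\cap\mathcal{P}_n\neq\emptyset$. I therefore expect a complete proof to require either a new algebraic identity that exhibits a guaranteed primitive member of $\mathcal{F}_n$, or deep information on the multiplicative structure of $2^n-1$; until then, only the numerical verification of Cattell et al. up to $n=500$ supports the conjecture, and the statement should continue to be regarded as genuinely open.
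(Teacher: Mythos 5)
The statement you are asked about is a conjecture, and the paper offers no proof of it: its only support is the exhaustive search of Zhang et al.\ up to $150$ cells and of Cattell et al.\ up to degree $500$, in which a maximal length CA with one or two rule-$150$ cells was found for every size tried. Your proposal correctly recognises this and does not claim to close the argument, which is the right call. The reduction you sketch is sound and is essentially the natural algebraic framing of the problem: with all $d_i=0$ the recurrence~(\ref{equation:polynomialgf(2)}) degenerates to $U_k=xU_{k-1}+U_{k-2}$, and Theorem~\ref{theorem:concatenation} applied at the one or two cells carrying rule $150$ expresses the characteristic polynomial of every candidate CA in terms of these $U_k$'s, so the conjecture is equivalent to the family $\mathcal{F}_n$ of $O(n^2)$ explicit polynomials meeting the set of $\phi(2^n-1)/n$ primitive polynomials of degree $n$. (A minor bookkeeping slip: with $U_{-1}=1$, $U_0=x$ one gets $V_0=U_0+U_{-1}=x+1$ rather than $V_0=1$, but nothing downstream depends on this.) Your diagnosis of the obstruction is also accurate: the heuristic count $n^2\cdot\phi(2^n-1)/(n\,2^n)\sim n/\log n$ suggests the family should contain primitive members, but no equidistribution result over GF(2) is available to convert this into a proof, and the problem is at least as hard as the long-open question of whether primitive trinomials exist in every degree. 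In short, your proposal is an honest programme rather than a proof, it is consistent with the paper's treatment of the statement as an open conjecture, and the one thing it adds over the paper --- the explicit polynomial reformulation --- is a reasonable starting point for anyone attempting to settle it.
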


\subsection{Phase Shift Operation}
\label{subsection:phaseshift}

Let us now turn our attention to an interesting property of a maximal length CA. Consider Table~\ref{table:phaseshift} which notes down consecutive configurations of a maximal length CA (150, 150, 90, 150) (see Figure~\ref{figure:st-150-150-90-150}) against time. Column II shows the bit sequence generated by cell 0. Observe that the same sequence is repeated in column III from $4^{th}$ time step. Start of the sequence in both columns are marked by bold face. The same sequence is also repeated in column IV and column V (start of the sequence is marked by bold face). Let us consider that $(S^0_i)_{1\leq i\leq 15}$ is the bit sequence generated by cell $0$, and $\sigma$ is a left shift operator. Then we get, 

\hspace*{0.8in} $(S^0_i)_{1\leq i\leq 15}$ $=\sigma^{3}((S^1_i)_{1\leq i\leq 15})$ $=\sigma^{13}((S^2_i)_{1\leq i\leq 15})$ $=\sigma^{10}((S^3_i)_{1\leq i\leq 15})$ 

These shifts in sequences with respect to cell $0$ are named as {\em phase shifts}. Hence, the phase shift of 1st, 2nd and 3rd cell with respect to cell 0 are 3, 13 and 10 respectively. Following is the formal definition of phase shift.
\begin{defnn}
\label{definition:phaseshift}
For an $n$-cell maximal length CA, $(S^a_i)_{1\leq i\leq 2^n-1}$ $=\sigma^{k}((S^b_i)_{1\leq i\leq 2^n-1})$ where $a$ and $b$ are two cells of the CA ($0\leq a,b\leq n-1$) and $k$ is the number of shift ($1\leq k\leq 2^n-1$). Such a shift in the bit sequence is called phase shift.
\end{defnn}

%Actually, we get the sequence as \\
%\hspace*{1in} $(S^0_i)_{1\leq i\leq 15}$ $\xrightarrow{3}$ $(S^1_i)_{1\leq i\leq 15}$ $\xrightarrow{10}$ $(S^2_i)_{1\leq i\leq 15}$ $\xrightarrow{12}$ $(S^3_i)_{1\leq i\leq 15}$ \\
%which can be written for an $n$-cell maximal length CA as following -\\
%$(S^0_i)_{1\leq i\leq 2^n-1}$ $\xrightarrow{k_1}$ $(S^1_i)_{1\leq i\leq 2^n-1}$ $\xrightarrow{k_2}$ $\cdots$ $(S^j_i)_{1\leq i\leq 2^n-1}$ $\xrightarrow{k_{j+1}}$ $\cdots$  $\xrightarrow{k_{n-1}}$ $(S^{n-1}_i)_{1\leq i\leq 2^n-1}$

\begin{table}[!t]
	\begin{center}	
		\caption{Phase shift with respect to cell $0$ of the CA (150, 150, 90, 150)}	
		\label{table:phaseshift}
		\begin{tabular}{c||cccc}\hline
		Step & \multicolumn{4}{c}{Cell} \\ \cline{2-5}
		 & 0 & 1 & 2 & 3 \\ \hline\hline
		1 & {\bf \boxit{0.12in} 1} & 0 & 0 & 0 \\\hline
		2 & 1 & 1 & 0 & 0 \\\hline
		3 & 0 & 0 & 1 & 0 \\\hline
		4 & 0 & {\bf \boxit{0.12in} 1} & 0 & 1 \\\hline
		5 & 1 & 1 & 0 & 1 \\\hline
		6 & 0 & 0 & 0 & 1 \\\hline
		7 & 0 & 0 & 1 & 1 \\\hline
		8 & 0 & 1 & 1 & 0 \\\hline
		9 & 1 & 0 & 1 & 1 \\\hline
		10 & 1 & 0 & 1 & 0 \\\hline
		11 & 1 & 0 & 0 & {\bf \boxit{0.12in} 1} \\\hline
		12 & 1 & 1 & 1 & 1 \\\hline
		13 & 0 & 1 & 0 & 0 \\\hline
		14 & 1 & 1 & {\bf \boxit{0.12in} 1} & 0 \\\hline
		15 & 0 & 1 & 1 & 1 \\\hline
\end{tabular}
	\end{center}
\end{table}

Finding of phase shifts of a maximal length CA was investigated by Bardell~\cite{Barde90}. He calculated the phase shifts between the output sequences generated by different stages of a maximal length CA by using discrete logarithms of a binary polynomial. Nandi and Chaudhuri~\cite{Nandi93d} proposed a method for the study of phase shift analysis based on matrix algebra. In~\cite{Sarkar2003}, Sarkar has given an algorithm to find phase shifts. This was achieved by developing a proper algebraic framework for the study of CA sequences. The algorithm is implemented following the algorithm by Tezuka and Fushimi~\cite{Tezuka} which is also based on a result of Mesirov and Sweet~\cite{Mesirov1987}.

%In~\cite{Sarkar2003}, Sarkar gave an algorithm to compute phase shifts. This was achieved by developing the proper algebraic framework for the study of CA sequences. The algorithm is implemented based on the algorithm by Tezuka and Fushimi~\cite{Tezuka} which is also based on a results by Mesirov and Sweet~\cite{Mesirov1987}. In the earlier work, Bardell had provided an example of computing shift in CA sequences of a $6$-cell CA - (150, 90, 90, 90, 90, 90). The characteristic polynomial is $f(x) = x^6+x^5+x^4+x+1$ which is primitive. The shifts were computed to be 0, 39, 35, 47, 33, 32. On the other hand, for this same CA, the computed shifts are 0, 24, 28, 16, 30, 31 by using the Sarkar's algorithm. Basically, Bardell's shifts calculation and Sarkar's shifts calculation are actually obtained in opposite directions. However, the outputs are same whether the shifted technique methods are different. Further, an improved method to compute phase shift has been described by Cho et al.~\cite{Cho2004}. This method is slightly different from the other methods whether we get the same output as previous. Next, we describe the method of phase shift and how is it work.

In an earlier work, Bardell had provided an example of computing shift in CA sequences of a $6$-cell CA - (150, 90, 90, 90, 90, 90). The characteristic polynomial is $P(x) = x^6+x^5+x^4+x+1$ which is primitive. The shifts were computed to be 0, 39, 35, 47, 33, 32. On the other hand, for the same CA, the computed shifts are 0, 24, 28, 16, 30, 31 by using the Sarkar's algorithm. Basically, Bardell's shift calculation and Sarkar's shifts calculation are obtained in opposite directions. So, the phase shift operation of a CA with respect to a given cell position is not uniquely determined by the charateristic polynomial of the CA. The reason is, there exist at least two CAs with respect to a charateristic polynomial and for those two CAs we can obsereved different phase shift properties.

In the above background, Nandi and Chaudhuri~\cite{Nandi93d} proposed a simple method based on matrix algebra to analysis the phase shift operation. Following theorem gives a compact formulation of the method to calculate the phase shift values where $T$ is the correspondence matrix of a given CA.

\begin{theorem}
\label{theorem:phaseshift}
If $T^k$ has a 1 at the $j^{th}$ column of the $i^{th}$ row (where $i\neq j$), with all other elements of the row are 0, then there exist a phase shift of $k$ bit for the bit sequence generated by the $i^{th}$ cell position with respect to that of the $j^{th}$ cell position of the CA having $T$ as its characteristic matrix.  
\end{theorem}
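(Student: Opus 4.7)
The plan is to translate the algebraic condition on $T^k$ directly into an identity between the bit-sequences emitted by cells $i$ and $j$, using the linear-algebraic description of CA evolution established in equation~(\ref{equation:nextstatelinear}). First I would denote by $e_\ell$ the $\ell$-th standard basis (column) vector over GF(2), so that for any configuration $c_t$ of the CA the bit produced by cell $\ell$ at time step $t$ is $S^\ell_t = e_\ell^\top c_t$. Combining this with $c_{t+k}=T^k c_t$ yields the key expression $S^i_{t+k}=e_i^\top T^k c_t$, valid for every starting configuration and every $t$.

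Second, I would rephrase the hypothesis as a matrix identity. The statement that the $i$-th row of $T^k$ contains a $1$ at column $j$ and zeros elsewhere is exactly $e_i^\top T^k = e_j^\top$. Substituting this into the expression derived in the first step gives $S^i_{t+k}=e_j^\top c_t = S^j_t$, holding uniformly in $t$. This identity says that the stream generated by cell $i$, read starting at time $t+k$, coincides bit-for-bit with the stream generated by cell $j$ read from time $t$, i.e.\ the two sequences differ by a constant time offset of $k$. By Definition~\ref{definition:phaseshift}, with indices taken modulo $2^n-1$ around the cycle, this is exactly a phase shift of $k$ bits between the two cells, which is the conclusion of the theorem.

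The argument is essentially immediate linear algebra and I do not expect any combinatorial or field-theoretic obstacle. The one place that needs mild care is matching the direction convention of the shift operator $\sigma$ against the paper's earlier worked example, so that the phrase ``phase shift of cell $i$ with respect to cell $j$'' is oriented correctly; this is a bookkeeping check rather than a mathematical difficulty, since the identity $S^i_{t+k}=S^j_t$ unambiguously determines which cell leads and which lags, and it agrees with the shift pattern exhibited in Table~\ref{table:phaseshift}.
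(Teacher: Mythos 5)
Your proposal is correct and follows essentially the same route as the paper: both arguments observe that the row condition on $T^k$ means the $i$-th component of $T^k c_t$ equals the $j$-th component of $c_t$, i.e.\ $S^i_{t+k}=S^j_t$ for all $t$, which is the claimed phase shift. Your phrasing via $e_i^\top T^k = e_j^\top$ is just a cleaner notation for the same one-line matrix computation the paper performs.
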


\begin{proof}
Let the state vector at a particular instant of time be $S$ and after $k$ time its $S^k$. Then, assume that the $i^{th}$ row has a single 1 at $j^{th}$ column position. Hence, we can write from matrix algebra tools \cite{das1993vector},
 \begin{center}
 $S^k = T^k \times S$ $\hspace{3em} i.e.$ 
 \end{center}
\[
    \begin{bmatrix}
        s^k_0    \\
        s^k_1    \\
        \vdots   \\
        s^k_i   \\
        \vdots   \\
        s^k_{n-1} \\
    \end{bmatrix}
    \hspace{1em}
   =
   \hspace{1em}
     \begin{bmatrix}
        \vdots & &  \vdots & \vdots & \vdots & & \vdots \\
        0 & \cdots & 0 & 1 & 0 & \cdots & 0 \\
        \vdots & &  \vdots & \vdots & \vdots & & \vdots \\
    \end{bmatrix}
     \begin{bmatrix}
      s_0    \\
      s_1    \\
      \vdots   \\
      s_j   \\
      \vdots   \\
      s_{n-1} \\
    \end{bmatrix}
\]
Hence we have, $s^k_i=s_j$.\\
This proves that the sequence appearing at the $j^{th}$ cell position appears at the $i^{th}$ cell position after $k$ time. Hence the proof. 
\end{proof}

Let's take an example of the above theorem where the CA is same as of Table \ref{table:phaseshift}. As per the above theorem, the position of matrix $T^k$ have been marked to identify shifting characteristic.
\[
T =
    \begin{bmatrix}
        1   & 1   & 0   & 0   \\
        1   & 1   & 1   & 0   \\
        0   & 1   & 0   & 1   \\
        0   & 0   & 1   & 1   \\
    \end{bmatrix}
    \hspace{1.5em}
    T^{2} =
     \begin{bmatrix}
        0   & 0   & \fbox{1}   & 0   \\
        0   & 1   & 1   & 1   \\
        1   & 1   & 0   & 1   \\
        0   & 1   & 1   & 0   \\
    \end{bmatrix}
     \hspace{1.5em}
    T^{3} =
     \begin{bmatrix}
        0   & 1   & 0   & 1   \\
        \fbox{1}   & 0   & 0   & 0   \\
        0   & 0   & 0   & 1   \\
        1   & 0   & 1   & 1   \\
    \end{bmatrix}
    \hspace{1.5em}
    T^{5} =
     \begin{bmatrix}
        0   & 0   & 0   & \fbox{1}   \\
        0   & 0   & 1   & 1   \\
        0   & 1   & 1   & 0   \\
        1   & 0   & 0   & 1   \\
    \end{bmatrix}
\] 

Thus, we can calculate the shifting amount of sequences for different cell positions which means value of $k$. Amount of shift: $k=2$ for shift of cell position 0 with respect to 2, $k=3$ for shift of cell position 1 with respect to 0 and $k=5$ for shift of cell positioncell 0 with respect to 3. As this CA is a $4$-cell maximal length CA having a cycle length of 15; so phase shift with respect to $0^{th}$ cell are 3, 13 and 10 for $1^{st}$, $2^{nd}$ and $3^{rd}$ cell respectively. Based on the above characterization, we outline a method of finding the phase shift in a given $n$-cell maximal length CA (see Algorithm \ref{algorithm:phaseshiftoperation}) with respect to a {\bf predefined  cell position (pivot cell)}. For details of this algorithm and example, see~\cite{Nandi93d}.  

%Based on the above characterization, further, a slightly different method of phase shifts has been given by Cho et al.~\cite{Cho2004}. Next, we outline a method of finding the phase shift in a given $n$-cell maximal length CA (see Algorithm \ref{algorithm:phaseshiftoperation}). For details of this algorithm and example, see~\cite{Cho2004}. 

%Based on the above characterization, we outline a method of finding the phase shift in a given $n$-cell maximal length CA (see Algorithm \ref{algorithm:phaseshiftoperation}). For details of this algorithm and example, see~\cite{Cho2004}. 

%Now, we describe the algorithm to find the phase shifts in a given $n$-cell maximal length CA. 

\begin{algorithm}
\caption{Finding Phase Shifts}
\label{algorithm:phaseshiftoperation}
\small
\hspace*{\algorithmicindent} \textbf{Input} The $n\times n$ matrix $T$ for an $n$-cell maximal length CA\\ 
\hspace{4.2em} pivot cell $p$\\
\hspace*{\algorithmicindent} \textbf{Output} phaseshift[n] /* stores shift of cell positions with respect to pivot cell */

\begin{algorithmic}[1]
	
\STATE $M$ $\leftarrow$ $T$, mark$_{1\times n}$ $\leftarrow$0; mark[p]$\leftarrow$1; power$\leftarrow$1; phaseshift[p]$\leftarrow$ $2^n-1$;
\STATE while (all mark $\neq$ 1) do step 3 to step 6.
\STATE count the number of 1's in the $p^{th}$ row of $M$.\\
if $p^{th}$ row contains single 1, then $j$ $\leftarrow$ column position with 1; otherwise $j$ $\leftarrow$ (-1). \\ 
\STATE if ($j$ $\neq$ 1), then \\
mark[j] $\leftarrow$ 1 \\
phaseshift[j] $\leftarrow$ (($2^n-1$) - power);\\
goto step 6.
\STATE for $i=0$ to $n-1$, do\\
if (mark[i]$\neq$1 and $M[i,p]=1$ and $i\neq p$), then \\
count number of 1's in the $i^{th}$ row of $M$ \\
if $i^{th}$ row contains single 1, then $j$ $\leftarrow$ column position with 1; otherwise $j$ $\leftarrow$ (-1). \\ 
if ($j$ $\neq$ 1), then \\
mark[i] $\leftarrow$ 1 \\
phaseshift[i] $\leftarrow$  power;\\
\STATE $M$ $\leftarrow$ $T\times M$\\
power $\leftarrow$ power+1

\end{algorithmic}
\end{algorithm}

\subsection{Searching for Pattern}
\label{subsection:searchingpattern}

According to the previous discussion, only rules 90 and 150 can take part in the rule vectors of maximal length CAs. As there is no efficient algorithm to decide a maximal length CA, they search for a pattern, if exists, in the rule vectors of maximal length CAs. For a given $n$, there are $2^n$ rule vectors, some of which correspond to maximal length CAs. At first, they efficiently exclude the rule vectors (that is, the CAs) which correspond to some reducible polynomials. In the set of remaining CAs, which obviously correspond to the irreducible polynomials, they search for a pattern in the rule vectors which may indicate the maximal length CAs. %If such a pattern exists, then we can decide efficiently if the given rule vector can (or cannot) be maximal length CA.

In~\cite{JCA19}, they proposed three experimental approaches to observe any pattern in the CAs. The approaches are primarily experimental. In the first approach, the authors have undertaken the machine learning approach. A standard open-source software tool {\sl Weka} is used for that purpose~\cite{Weka-book}. In this approach, they search for a pattern to differentiate maximal length CAs from the other CAs whose characteristic polynomials are irreducible (but not primitive). There are many classifiers, implemented in Weka, from which three well-known algorithms are choosen. They are {\em ZeroR}, {\em LibSVM} and {\em J48} (see \cite{Weka-book} for details of these algorithms). One can find the experimental results in \cite{JCA19}, which have indicated that there is no pattern by observing which one can decide a CA as a maximal length CA.

In the second approach, the concept of ratio of rules were used. They experiment to see if the ratio of 90 and 150 matters in differentiating the maximal length CAs from the CAs with irreducible polynomials. To start with, they first see the distribution of CAs having characteristic polynomial as irreducible (including primitive) against the number of 150s. In this experiment, they observe that for any size $n$, the number of CAs having $k$ 150s is equal to the number of CAs having $(n-k)$ 150s, where $1\le k\le n-1$. For example, for the size 15, there are 388 CAs using five 150s. And, the number of CAs with ten 150s is also 388. To explore this interesting property, the term {\bf conjugate} is introduce.

\begin{defnn}
\label{definition:conjugate}
Two CAs with rule vectors $\mathcal{R}=( \mathcal{R}_0,\mathcal{R}_1,\cdots,\mathcal{R}_{n-1})$ and $\mathcal{R}'=(\mathcal{R}'_0,\mathcal{R}'_1,\cdots,\mathcal{R}'_{n-1})$ are said to be {\bf conjugate} to each other if the following condition is satisfied:\\
$\mathcal{R}_i$ is 90 (resp. 150) iff $\mathcal{R}'_i$ is 150 (resp. 90), for each $i\in\{0,1,\cdots,n-1\}$.
\end{defnn}
As an example, consider a CA $\mathcal{R}=(90, 90, 90, 150, 150)$, then its conjugate CA is $\mathcal{R}'= (150, 150,$ $150, 90, 90)$. We next experiment with the conjugate CAs. And, as a result of this experiment, we get the following property.

\begin{theorem}
\label{theorem:conjugate}
The characteristic polynomial of a CA is irreducible iff the characteristic polynomial of its conjugate CA is irreducible.
\end{theorem}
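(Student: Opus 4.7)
The plan is to translate the conjugation operation, defined on rule vectors, into a clean algebraic operation on the $T$-matrices, and from there into a substitution on characteristic polynomials. Since rules 90 and 150 both have the same neighborhood dependencies on the left and right neighbors ($a_i=b_i=1$ in Lemma~\ref{lemma:recurrence}'s notation), the conjugate rule vector only alters the main diagonal of the matrix: each $d_i$ is replaced by $1-d_i$, which over $GF(2)$ is $d_i+1$. Hence if $T$ is the matrix of the original CA and $T'$ that of the conjugate, we have the matrix identity $T' = T + I \pmod 2$.

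From this I would immediately extract the polynomial identity. Writing $\Delta(x) = \det(xI + T)$ and $\Delta'(x) = \det(xI + T')$ and substituting,
\[
\Delta'(x) \;=\; \det(xI + T + I) \;=\; \det\bigl((x+1)I + T\bigr) \;=\; \Delta(x+1).
\]
So conjugation of the rule vector corresponds precisely to the substitution $x \mapsto x+1$ on characteristic polynomials over $GF(2)$.

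Now it remains to observe that this substitution is an involution of the polynomial ring $GF(2)[x]$. Define $\phi : GF(2)[x] \to GF(2)[x]$ by $\phi(f)(x) = f(x+1)$; it is a ring automorphism (sum and product preserving) that fixes degrees, and since $(x+1)+1 = x$ over $GF(2)$, we have $\phi \circ \phi = \mathrm{id}$. Any ring automorphism takes factorizations into non-constant factors to factorizations into non-constant factors of the same degrees, so $f$ is irreducible iff $\phi(f)$ is irreducible. Applying this to $\phi(\Delta) = \Delta'$ gives the required biconditional.

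I do not anticipate a serious obstacle here; the only step requiring care is justifying the matrix identity $T' = T + I$, which rests on the fact that the only linear rules contributing to the rule vector of a CA whose characteristic polynomial could possibly be irreducible are $90$ and $150$ (by Theorem~\ref{theorem:reduciblepolynomial} one needs $a_i = b_i = 1$ throughout, which forces the rule at each cell to be $90$ or $150$), so conjugation is a well-defined involution on the class of CAs under discussion and it acts on the $T$-matrix exactly by adding the identity.
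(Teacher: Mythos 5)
Your proposal is correct and follows essentially the same route as the paper's own proof: both identify that conjugation replaces $T$ by $T+I$, so the conjugate's characteristic polynomial is $p(x+1)$, and both conclude via the fact that the substitution $x\mapsto x+1$ carries nontrivial factorizations to nontrivial factorizations in each direction. Your write-up merely makes explicit two points the paper leaves implicit, namely why only the diagonal of $T$ changes and that $x\mapsto x+1$ is an involutive ring automorphism of $GF(2)[x]$.
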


\begin{proof}
If $p(x)=det(T+xI)$ is the characteristic polynomial of CA $\mathcal{R}$ then the characteristic polynomial of its conjugate $\mathcal{R}'$ is $det(T+I+xI) = det(T+(x+1)I) = p(x+1)$. If $p(x)$ is not irreducible, so that $p(x)=a(x)b(x)$ then also $p(x+1)$ is not irreducible as $p(x+1)=a(x+1)b(x+1)$. Symmetrically, if $p(x+1)$ is not irreducible then its conjugate $p(x)$ is not irreducible either.
\end{proof}

Let us take the following pair of conjugate CAs: (90, 90, 90, 150, 150) and (150, 150, 150, 90, 90). The characteristic polynomial of the first CA is $x^5+x^3+x^2+x+1$, which is irreducible. The polynomial for the second CA is $x^5+x^4+x^3+x+1$, which is also irreducible.

Although it appears to us that conjugate CAs may follow some pattern. The authors have adopted many techniques on conjugate CAs and as well as ratio of rules 90 and 150. But, it has failed to produce any convincing result. However, the idea of conjugate CAs and their corresponding polynomials would be an area of work in future.

In the third approach, they concatenate the small sized CAs and check the final CA is maximal length or not. Let $\mathcal{R}'$ and $\mathcal{R}''$ be rule vectors of two CAs, and $\mathcal{R}=(\mathcal{R}'\mathcal{R}'')$ be another rule vector obtained by concatenating $\mathcal{R}'$ and $\mathcal{R}''$. Let us name the rule vectors $\mathcal{R}'$ and $\mathcal{R}''$ as {\em component} CAs of $\mathcal{R}$. %What is the chance that $\mathcal{R}$ is a maximal length CA? In the experiment, they put some constraints on $\mathcal{R}'$ and $\mathcal{R}''$. However, the $\mathcal{R}$ may be an irreversible CA, or a reversible CA with characteristic polynomial as reducible. In our study, we exclude these cases, because these issues can be resolved in polynomial time. 

In the first experiment, they concatenate two or more CAs which are maximal length and get the final CA. Finally, check the chance of final CA to be maximal length and also comparing with actual percentage of maximal length CAs. In another experiment, they concatenate two or more CAs which are not maximal length but their characteristic polynomials are irreducible and do the same. Observe that, when some maximal length CAs are concatenated, then the chance of the final CA to be maximal length becomes high. The experimental results can be find in Ref. \cite{JCA19}. From the results, we can conclude that the chances improves when the component CAs are maximal length CAs. But, it is little disappointing that finally ther are no clear pattern in the rule vectors of maximal length CAs.

%Table~\ref{table:concatenate} (see Appendix, page~\pageref{table:concatenate}) shows the result of this experiment. The second column is the actual percentage of maximal length CAs and the third column shows the number of times they have synthesized CAs using the scheme. Whereas next two columns note that how many times the CAs with irreducible polynomial, and out of these CAs how many are maximal length CAs respectively. The last column of this table shows the percentage of maximal length CAs, which get by dividing the fifth column by the fourth column.

%It is encouraging to note that the percentage of the last column is always greater than that of column 2. By concatenating two maximal length CAs, and if the characteristic polynomial of the CA is irreducible, then the chance of the CA to be of maximal length is more than the actual ratio (second column of Table~\ref{table:concatenate}). Hence, the chance improves when the component CAs are maximal length CAs. But, it is little disappointing that finally ther are no clear pattern in the rule vectors of maximal length CAs.

We finish this section by posing an open problem about the complexity issue of maximal length cellular automata:

\begin{oproblem}
\label{oproblem:analysis}
Decide a given CA as maximal length in polynomial time.
%Develop an algorithm which can find a given CA as maximal length in polynomial time.
\end{oproblem}

\section{Synthesis of Maximal Length CA from Primitive Polynomial}
\label{section:synthesis}

Let us now synthesize a maximal length CA from a given primitive polynomial. By ``synthesis'' we mean to find out a rule vector of a maximal length CA from a given primitive polynomial. The method that we are going to present next synthesizes CAs from an irreducible polynomial in GF(2). When the polynomial is primitive, then synthesized CA is of maximal length. It is also shown here that there are exactly two CAs against each irreducible polynomial. One can find the details of the algorithm in~\cite{CattellM96}.

%Let us now synthesize a maximal length CA from a given primitive polynomials. Now its time to reverse, that means, calculation of a CA from a given polynomial. Next, we present a method for the calculation of a one-dimensional linear non-uniform CA from a given irreducible or primitive polynomial. Further, it is shown that there are exactly two CA corresponding to any irreducible polynomial. In here we present the algorithm pseudocode with example. The complete technical theory of the algorithm can be found in~\cite{CattellM96}.

The pair of polynomials $\Delta_{n-1}$ and $\Delta_{n-2}$ (defined in Section~\ref{subsection:matrix}) uniquely determine the CA using Euclid's GCD algorithm. Conversely, a CA uniquely determines $\Delta_{n-1}$ and $\Delta_{n-2}$. 
%This is why it is desirable to distinguish reversals. $\Delta_{0,n-2}$ and $\Delta_{1,n-1}$ play an important role, and except for the degenerated symmetric case, are different polynomials. This role is so important that it motivates the following definition.
If we divided $\Delta_{n-1}$ by $\Delta_{n-2}$, we get ($x+d_{n-1}$) as quotient and $\Delta_{n-3}$ as remainder as per recurrence relation \ref{equation:polynomialgf(2)}. So, if we proceed to find GCD of $\Delta_{n-1}$ and $\Delta_{n-2}$ by Euclid's algorithm, we observe a series of quotients $x+d_{n-1}$, $x+d_{n-2}$, $\cdots$, $x+d_{0}$ that come out as part of the GCD computation. This gives us the CA as ($d_0$,$d_1$,$\cdots$,$d_{n-1}$). This result is stated in Lemma~\ref{lemma:subpolynomial}. 
%To proceed further, let us define the following.

\begin{defnn}
\label{definition:subpolynomial-pair}
For an $n$-cell CA, $\Delta_{0,n-2}$ is called a CA {\em subpolynomial}, and the pair of polynomials $\Delta_{0,n-1}$, $\Delta_{0,n-2}$ are called a CA polynomial-subpolynomial pair. Note that a CA subpolynomial is itself a CA polynomial.
\end{defnn}

%CA polynomial means, the polynomial which is calculated from a CA. A degree $n$ polynomial $p$ might not be a CA polynomial. But supposing that it is, how can determine if $p$ and a given polynomial $q$ form a CA polynomial-subpolynomial pair? First of all, $q$ must be degree $n-1$, since it is the characteristic polynomial of a length $n-1$ CA. Second, the correspondence between Euclid's GCD algorithm (see Ref~\cite{CattellM96}, for details of the GCD algorithm) and the computation of a CA characteristic polynomial provides a simple method of determination, described in the following lemma.

\begin{lemma}
\label{lemma:subpolynomial}
Let $p$ and $q$ be two polynomials, with degree $n$ and $n-1$ respectively. Then, there exists a CA with characteristic polynomial $p$ and characteristic subpolynomial $q$ ($\Delta_{n-1}=p$ and $\Delta_{n-2}=q$) if and only if applying Euclid's GCD algorithm to $p$ and $q$, $n$ quotients with degree 1 are received in sequence.
\end{lemma}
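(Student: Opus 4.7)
The plan is to exploit the direct correspondence between the CA recurrence $\Delta_k = (x+d_k)\Delta_{k-1} + \Delta_{k-2}$ and the step-by-step remainders produced by Euclid's algorithm. I would set up the proof as two implications, treating the forward direction essentially as a restatement of the recurrence and putting the real work into the reverse direction.

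For the forward implication, I would start by observing that, by induction on the recurrence and the base cases $\Delta_{-1}=1$, $\Delta_0 = x+d_0$, each $\Delta_k$ has degree exactly $k+1$. Then I would simply read the recurrence $\Delta_k = (x+d_k)\Delta_{k-1} + \Delta_{k-2}$ as the statement that dividing $\Delta_k$ by $\Delta_{k-1}$ yields quotient $x+d_k$ (degree $1$) and remainder $\Delta_{k-2}$ (degree $< \deg \Delta_{k-1}$). Applying Euclid's algorithm to $p = \Delta_{n-1}$ and $q = \Delta_{n-2}$ therefore unrolls exactly as the sequence of divisions $\Delta_{n-1} \div \Delta_{n-2}$, $\Delta_{n-2} \div \Delta_{n-3}$, \ldots, $\Delta_0 \div \Delta_{-1}$, producing exactly $n$ quotients $x+d_{n-1}, x+d_{n-2}, \ldots, x+d_0$, each of degree $1$, with final remainder $0$.

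For the reverse implication, I would start from $p, q$ with Euclid's sequence $R_0 = p$, $R_1 = q$, $R_{i-1} = Q_i R_i + R_{i+1}$, where by hypothesis $Q_1, \ldots, Q_n$ all have degree $1$. Since $\deg R_0 = n$ and each division drops the degree by $\deg Q_i = 1$, I get $\deg R_i = n - i$, so $R_n$ is a nonzero constant and $R_{n+1} = 0$; over $\mathrm{GF}(2)$ the only nonzero constant is $1$, so $R_n = 1 = \Delta_{-1}$ and $R_{n+1} = 0 = \Delta_{-2}$. Writing each $Q_i = x + c_i$ and defining $d_{n-i} := c_i$ (equivalently, reading the quotients in reverse order as the cells $0, 1, \ldots, n-1$), I would then prove by downward induction on $i$ that the CA $(d_0, d_1, \ldots, d_{n-1})$ satisfies $\Delta_{n-1-i} = R_i$. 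The base cases $i = n$ and $i = n-1$ are the identifications $\Delta_{-1} = R_n = 1$ and $\Delta_{-2} = R_{n+1} = 0$. For the inductive step, rewriting Euclid's division as $R_{i-1} = (x + d_{n-i})R_i + R_{i+1}$ and comparing with the recurrence $\Delta_{n-i} = (x + d_{n-i})\Delta_{n-i-1} + \Delta_{n-i-2}$ gives $\Delta_{n-i} = R_{i-1}$. Taking $i = 1$ yields $\Delta_{n-1} = p$ and $i = 2$ yields $\Delta_{n-2} = q$, finishing the construction.

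The main obstacle I anticipate is purely notational rather than mathematical: keeping straight the two opposite index conventions (Euclid's quotients indexed $1, \ldots, n$ top-down versus the CA cells indexed $0, \ldots, n-1$ bottom-up) so that the identification $d_k = c_{n-k}$ is unambiguous and the induction is clearly aligned with the recurrence. I would also explicitly flag the use of $\mathrm{GF}(2)$ to conclude $R_n = 1$ without a separate coprimality hypothesis, since the same statement over a general field would require an extra normalization.
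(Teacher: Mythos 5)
Your proposal is correct and follows essentially the same route as the paper's proof: both directions rest on identifying the CA recurrence $\Delta_k = (x+d_k)\Delta_{k-1} + \Delta_{k-2}$ with the successive division steps of Euclid's algorithm, reading the degree-one quotients as the rules $d_i$ in reverse order. Your write-up merely makes explicit some details the paper leaves implicit (the degree count $\deg\Delta_k = k+1$, the downward induction in the converse, and the GF(2) normalization of the final nonzero constant remainder), so it is a more careful rendering of the same argument rather than a different one.
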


\begin{proof}
Suppose there exists a CA with $\Delta_{n-1}=p$ and $\Delta_{n-2}=q$. Lemma \ref{lemma:recurrence} (relation \ref{equation:polynomialgf(2)}) ensures that the sequence of remainder polynomials obtained from Euclid's GCD algorithm is $\Delta_{n-3}$, $\Delta_{n-4}$, $\cdots$, $\Delta_{1}$, $\Delta_{0}$. As these polynomials have degrees differing by one, Euclid's algorithm produces $n$ quotients, each of which is of degree 1. For the converse, suppose that Euclid's algorithm is applied to a degree $n$ polynomial $p$ and degree $n-1$ polynomial $q$, and that $n$ quotients $x+d_{n-1}$, $x+d_{n-2}$, $\cdots$, $x+d_{0}$ are obtained in sequence. The last remainder is always zero, and the fact that there are $n$ quotients means that the last divisor (second last remainder) is one. 
Now if we apply recurrence relation \ref{equation:polynomialgf(2)} and use $d_0$, $d_1$, $\cdots$, $d_{n-1}$, we get the polynomials $\Delta_{n-1}=p$ and $\Delta_{n-2}=q$. Hence, there exists a CA with the desired properties.

%Hence the algorithm terminates with the base cases of the CA recurrence. Clearly, if the CA recurrence is applied to $[d_0,d_1,\cdots,d_{n-1}]$, it calculates the polynomials $\Delta_{n-1}=p$ and $\Delta_{n-2}=q$. Hence, there exists a CA with the desired properties.
\end{proof}

\begin{example}
\label{example:polynomialtoCA}
Let us consider the CA (90, 150, 150, 90, 90) where $\Delta_{n-1}=x^5+x^3+1$ and $\Delta_{n-2}=x^4+1$. %The constant terms of the n degree one quotients specify the CA.
When Euclid's GCD algorithm is applied on ${\Delta_{n-1}}$ and ${\Delta_{n-2}}$, following results come out in successive divisions. 

{\bf dividend} \hspace{4em} {\bf divisor} \hspace{4em} {\bf quotient} \hspace{4em} {\bf remainder} \hspace{4em} {\bf $d_i$ (Rule)}

$x^5+x^3+1$ \hspace{3.5em} $x^4+1$ \hspace{5em} $x$ \hspace{7em} $x^3+x+1$ \hspace{4.5em} $0$ $(90)$

$x^4+1$ \hspace{5em} $x^3+x+1$ \hspace{3.9em} $x$ \hspace{7em} $x^2+x+1$ \hspace{4.5em} $0$ $(90)$

$x^3+x+1$ \hspace{3.5em} $x^2+x+1$ \hspace{3.7em} $x+1$ \hspace{5.5em} $x$ \hspace{8.1em} $1$ $(150)$

$x^2+x+1$ \hspace{3.5em} $x$ \hspace{7.3em} $x+1$ \hspace{5.5em} $1$ \hspace{8.1em} $1$ $(150)$

$x$ \hspace{7em} $1$ \hspace{7.5em} $x$ \hspace{7em} $0$ \hspace{8.1em} $0$ $(90)$

\end{example}

Note that we have extracted the rules from the series of 5 quotients.

\subsection{Solutions to the Quadratic Congruence}
\label{subsection:quadratic}

It is clear now that finding a $q$ satisfying Lemma~\ref{lemma:subpolynomial} would solve the CA synthesis problem, if it could be done efficiently. The degree $n-1$ polynomial $\Delta_{n-2}$ satisfies an expression called the CA {\em quadratic congruence} \cite{Cattell91} which is a key result and central to the synthesis approach. The usefulness of the result is discussed below and its application to the synthesis algorithm is described later. We now ready to present the key result.

\begin{theorem}
\label{theorem:congruence}
Suppose we have a CA with characteristic polynomial ${\Delta}_{0,n-1}$ and characteristic subpolynomials ${\Delta}_{0,n-2}$ and ${\Delta}_{1,n-1}$. The both $y={\Delta}_{0,n-2}$ and $y={\Delta}_{1,n-1}$ satisfy the congruence \\
\hspace*{9.3em} $y^2 + (x^2+x){\Delta}'_{0,n-1}y + 1\equiv 0\pmod{{\Delta}_{0,n-1}}$\\
where ${\Delta}'_{0,n-1}$ is formal derivative of ${\Delta}_{0,n-1}$.
\end{theorem}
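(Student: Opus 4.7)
My plan is to reduce the quadratic congruence to two identities and assemble the proof. In characteristic $2$, the monic quadratic $Y^{2}+aY+b$ has roots $y_{1},y_{2}$ precisely when $a=y_{1}+y_{2}$ and $b=y_{1}y_{2}$, so the theorem is equivalent to the product and sum identities
\[
\text{(i)}\ \ \Delta_{0,n-2}\,\Delta_{1,n-1}\equiv 1, \qquad \text{(ii)}\ \ \Delta_{0,n-2}+\Delta_{1,n-1}\equiv(x^{2}+x)\,\Delta'_{0,n-1} \pmod{\Delta_{0,n-1}}.
\]
Substituting $y=\Delta_{0,n-2}$ into $y^{2}+(x^{2}+x)\Delta'_{0,n-1}y+1$ and multiplying through by $\Delta_{1,n-1}$ collapses the expression, via (i), to $\Delta_{0,n-2}+(x^{2}+x)\Delta'_{0,n-1}+\Delta_{1,n-1}$, which vanishes modulo $\Delta_{0,n-1}$ by (ii); the case $y=\Delta_{1,n-1}$ is symmetric.

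For (i), I would prove the Euler-type continuant identity $\Delta_{0,n-1}\Delta_{1,n-2}+\Delta_{0,n-2}\Delta_{1,n-1}=1$ in $\mathrm{GF}(2)[x]$ by induction on $n$, using the recurrence $\Delta_{0,n-1}=(x+d_{n-1})\Delta_{0,n-2}+\Delta_{0,n-3}$ and its analogue for $\Delta_{1,n-1}$; the base cases are immediate from $\Delta_{0,-1}=1$ and $\Delta_{0,-2}=0$. Reducing modulo $\Delta_{0,n-1}$ gives (i), which in particular shows $\Delta_{0,n-2}$ is a unit in $\mathrm{GF}(2)[x]/(\Delta_{0,n-1})$. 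Applying Jacobi's formula to the tridiagonal matrix $xI+T$ next yields the cofactor-trace identity $\Delta'_{0,n-1}=\sum_{i=0}^{n-1}\Delta_{0,i-1}\Delta_{i+1,n-1}$, since the $(i,i)$-cofactor of $xI+T$ equals $\Delta_{0,i-1}\Delta_{i+1,n-1}$ (both off-diagonals are $1$). I would also establish the generalized continuant $\Delta_{0,n-2}\Delta_{k+1,n-1}\equiv\Delta_{0,k-1}\pmod{\Delta_{0,n-1}}$ for $0\le k\le n-1$, which follows from Sylvester--Dodgson applied to $xI+T$ (alternatively, by multiplying (i) by the identity $\Delta_{0,k-1}\Delta_{1,n-1}\equiv\Delta_{k+1,n-1}\pmod{\Delta_{0,n-1}}$ coming from the concatenation relation).

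The main obstacle is identity (ii), which I would obtain from two auxiliary congruences:
\[
\text{(*)}\ \ \Delta_{0,n-2}\,\Delta'_{0,n-1}\equiv s^{2}, \qquad \text{(**)}\ \ (\Delta_{0,n-2}+1)^{2}\equiv(x^{2}+x)\,s^{2} \pmod{\Delta_{0,n-1}},
\]
where $s:=\sum_{i=0}^{n-1}\Delta_{0,i-1}$. Identity (*) drops out of the cofactor-trace formula by distributing $\Delta_{0,n-2}$ through the sum and applying the generalized continuant: $\Delta_{0,n-2}\sum_{i}\Delta_{0,i-1}\Delta_{i+1,n-1}\equiv\sum_{i}\Delta_{0,i-1}^{2}=s^{2}$, the last equality by Frobenius in characteristic $2$. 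Identity (**) is obtained by squaring the recurrence to get $(x^{2}+d_{i})\Delta_{0,i-1}^{2}=\Delta_{0,i}^{2}+\Delta_{0,i-2}^{2}$, summing over $i$ so that the successive squares telescope down to the boundary contributions $\Delta_{0,n-2}^{2}+1$ modulo $\Delta_{0,n-1}$, and then showing $\sum_{i}d_{i}\Delta_{0,i-1}^{2}\equiv x\,s^{2}\pmod{\Delta_{0,n-1}}$ by a short independent telescoping based on $d_{i}\Delta_{0,i-1}=\Delta_{0,i}+x\Delta_{0,i-1}+\Delta_{0,i-2}$. Combining (*) and (**) with (i) gives $\Delta_{0,n-2}(\Delta_{0,n-2}+\Delta_{1,n-1})\equiv\Delta_{0,n-2}(x^{2}+x)\Delta'_{0,n-1}\pmod{\Delta_{0,n-1}}$, and cancelling the unit $\Delta_{0,n-2}$ delivers (ii). The bookkeeping in (**) is the most delicate step, but since the squared recurrence telescopes cleanly and the stray boundary terms either lie in $\Delta_{0,n-1}$ or vanish by the conventions $\Delta_{0,-2}=0$ and $\Delta_{0,-1}=1$, the computation is routine.
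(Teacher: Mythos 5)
Your proposal is correct, but it reaches the central ``sum of subpolynomials'' identity by a genuinely different route than the paper. Both arguments share the same skeleton: reduce the quadratic congruence to the product relation $\Delta_{0,n-2}\Delta_{1,n-1}\equiv 1$ and the sum relation $\Delta_{0,n-2}+\Delta_{1,n-1}\equiv(x^2+x)\Delta'_{0,n-1}\pmod{\Delta_{0,n-1}}$, and then assemble. The product relation is the same in both (the paper cites it as a property of Euclid's algorithm; you prove the underlying continuant identity by induction). The divergence is in the sum relation. The paper (following Cattell--Muzio) passes through periodic-boundary CAs: it shows $\Phi_{0,n-1}=\Delta_{0,n-1}+\Delta_{1,n-2}$, proves by induction (with ten cases on the parity of $n$ and the weight $w$) that $\Phi_{0,n-1}$ is a perfect square up to factors of $x$ and $x+1$, and then compares the four CAs obtained by complementing the end cells, differentiating perfect squares to zero to extract a differential relation; this yields exact identities in $\mathrm{GF}(2)[x]$ (Theorem~\ref{theorem:synthesize-3}), again in four cases. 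You instead stay entirely inside the null-boundary picture: Jacobi's cofactor-trace formula gives $\Delta'_{0,n-1}=\sum_i\Delta_{0,i-1}\Delta_{i+1,n-1}$ for the tridiagonal matrix, the generalized continuant congruence converts this to $\Delta_{0,n-2}\Delta'_{0,n-1}\equiv s^2$ with $s=\sum_i\Delta_{0,i-1}$, and two telescoping sums obtained by squaring the three-term recurrence (and multiplying its rearrangement by $\Delta_{0,i-1}$ so the cross terms $\Delta_{0,i-1}\Delta_{0,i}$ and $\Delta_{0,i-1}\Delta_{0,i-2}$ cancel down to $\Delta_{0,n-2}\Delta_{0,n-1}\equiv 0$) give $(\Delta_{0,n-2}+1)^2\equiv(x^2+x)s^2$; cancelling the unit $\Delta_{0,n-2}$ delivers the sum relation. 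I verified your auxiliary identities on the CA $(90,150,90,150)$ and they hold, and the one delicate step --- showing $\sum_i d_i\Delta_{0,i-1}^2\equiv xs^2$ --- does go through non-circularly via the cross-term telescoping just described. What the trade-off buys: your route avoids all case analysis on $n$ and $w$ and needs no periodic-boundary machinery, but it produces only the congruence modulo $\Delta_{0,n-1}$; the paper's route is heavier but yields the exact polynomial identities and, as a by-product, the square structure of $\Phi_{0,n-1}$ that underlies Theorem~\ref{theorem:periodicboundaryMLCA} on the nonexistence of maximal-length periodic-boundary CAs.
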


%The proof of the above theorem consists of several steps, with some of the steps requiring extensive case analysis. To simplify the presentation, we have omit the proof here (see \cite{CattellM96} for details about proof). 

\begin{proof}
The proof consists of several steps, with some of the steps requiring extensive case analysis. One case of each type is shown for illustrative purpose, to simplify the presentation.

%The proof starts by defining an {\em linear finite state machine} (LFSM) known as a {\em cyclic} CA. Such machines are also known in the literature as {\em periodic boundary} CA. 

%The proof starts by defining a {\em cyclic} CA which is also known {\em periodic boundary} CA. The {\em Cyclic CA Lemma} shows that the characteristic polynomial of a cyclic CA is related to the characteristic polynomial of its underlying null boundary CA. This relation leads to the Cyclic CA Theorem. This is in turn combined with some lemma concerning rule-changing in CA to obtain the {\em Sum of Subpolynomials Theorem}. The CA quadratic congruence follows from this and the {\em product of subpolynomials} relation (the latter being a standard property of Euclid's GCD algorithm).

The characteristic polynomial of a CA under periodic boundary condition is related to the characteristic polynomial of its underlying null boundary CA. This relation leads to Theorem~\ref{theorem:synthesize-2}. This is in turn combined with Lemma \ref{lemma:synthesis-1} to obtain Theorem~\ref{theorem:synthesize-3}. The CA quadratic congruence follows from this and the {\em product of subpolynomials} relation (the latter being a standard property of Euclid's GCD algorithm).

A periodic boundary CA is formed from its underlying null-boundary CA by the addition of interconnections between cells 0 and $n-1$. Hence the characteristic matrix of a periodic boundary CA contains a 1 in each of the corner entries $(0, n-1)$ and $(n-1, 0)$. The characteristic polynomial of a periodic boundary CA, say $\Phi_{0,n-1}$ can be obtained from that of null boundary CA. 
%Also, $\Phi_{i,j}$ denotes the characteristic polynomial of the periodic boundary CA obtained by (1) taking the sub-rule vector $\mathcal{R}_{i,j}$ (see Section~\ref{subsection:matrix}) of the underlying null boundary CA, and (2) adding connections between cells $i$ and $j$. By carefully chosen minor expansions of the periodic boundary CA transition matrix, it is shown that the characteristic polynomial of a periodic boundary CA is related to its null boundary CA.

\begin{lemma}
\label{lemma:synthesis-1}
The characteristic polynomial of a periodic boundary CA ($\Phi_{0,n-1}$) is related to the characteristic polynomial of its underlying null boundary CA as follows\\
\hspace*{12em} $\Phi_{0,n-1} = \Delta_{0,n-1} + \Delta_{1,n-2}$\\
\end{lemma}

\begin{proof}
Let A be the characteristic matrix of a periodic boundary CA.%Consider the characteristic polynomial of a periodic boundary CA
\[
A=
    \begin{bmatrix}
        a_0 & 1 & 0 & \cdots & \cdots & 0 & 1 \\
        1 & a_1 & 1 & \ddots & &  & 0 \\
        0 & 1 & a_2 & \ddots & \ddots &  & \vdots \\
        \vdots & \ddots & \ddots & \ddots & \ddots & \ddots & \vdots \\
        \vdots &  & \ddots & \ddots & a_{n-3} & 1 & 0 \\
        0 &  &  & \ddots & 1 & a_{n-2} & 1 \\
        1 & 0 & \cdots & \cdots & 0 & 1 & a_{n-1} \\        
    \end{bmatrix}_{n \times n}
\]
where $a_i$ denotes $x+d_i$. By expanding the determinant along the first row, we get\\
\hspace*{8em} $\Phi_{0,n-1} = a_0 \Delta_{1,n-1} + \vert B\vert + \vert D\vert$, \\
where
%\[
%B=
%    \begin{bmatrix}
%        1 & 1 & 0 & \cdots & 0 \\
%        0 & & & &  \\
%        \vdots & & L_{2,n-1} & & \\
%        0 & & & &  \\
%        1 & & & &  \\
%    \end{bmatrix} 
%\hspace*{3em} and \hspace*{3em}  
%D=
%    \begin{bmatrix}
%        1 & & & &  \\
%        0 & & & L_{1,n-2} &  \\
%        \vdots & & & &  \\
%        0 & & & &  \\
%        1 & 0 & \cdots & 0 & 1  \\ 
%    \end{bmatrix}
%\]

\[
B=
   \begin{bmatrix}
        1 & 1 & 0 & \cdots & \cdots & 0 & 0 \\
        0 & a_2 & 1 & \ddots & &  & 0 \\
        0 & 1 & a_3 & \ddots & \ddots &  & \vdots \\
        \vdots & \ddots & \ddots & \ddots & \ddots & \ddots & \vdots \\
        \vdots &  & \ddots & \ddots & a_{n-3} & 1 & 0 \\
        0 &  &  & \ddots & 1 & a_{n-2} & 1 \\
        1 & 0 & \cdots & \cdots & 0 & 1 & a_{n-1} \\        
    \end{bmatrix}
\hspace*{1em} and \hspace*{1em}  
D=
   \begin{bmatrix}
        1 & a_1 & 1 & 0 & \cdots & \cdots & 0 \\
        0 & 1 & a_2 & 1 & \ddots & & \vdots \\
        \vdots & \ddots & 1 & \ddots & \ddots & \ddots & \vdots \\
        \vdots &  & \ddots & \ddots & \ddots & 1 & 0 \\
        \vdots &  &  & \ddots & \ddots & a_{n-3} & 1 \\
        0 & & &  & \ddots & 1 & a_{n-2} \\
        1 & 0 & \cdots & \cdots & \cdots & 0 & 1 \\        
    \end{bmatrix}
\]

%($L_{i,j}$ is the characteristic matrix of the sub-rule vector $\mathcal{R}_{i,j}$ of the underlying null boundary CA). Now, by expanding $B$ along the first column\\
Now, by expanding $B$ along the first column and $D$ along the last row,\\
\hspace*{12em} $\vert B\vert = \Delta_{2,n-1} + \vert E\vert$, and\\ 
\hspace*{12em} $\vert D\vert = \Delta_{1,n-2}+\vert F\vert$\\
where
\[
E=
   \begin{bmatrix}
        1 & 0 & \cdots & \cdots & \cdots & \cdots & 0 \\
        a_2 & 1 & \ddots & & & & 0 \\
        1 & a_3 & \ddots & \ddots &  & & \vdots \\
        0 & 1 & \ddots & \ddots & \ddots &  & \vdots \\
        \vdots & \ddots & \ddots & a_{n-4} & 1 & \ddots & \vdots \\
        \vdots &  & \ddots & 1 & a_{n-3} & 1 & 0\\
        0 & \cdots & \cdots & 0 & 1 & a_{n-2} & 1 \\        
    \end{bmatrix}
\hspace*{1em} and \hspace*{1em}  
F=
   \begin{bmatrix}
        1 & 1 & 1 & 0 & \cdots & \cdots & 0 \\
        0 & 1 & a_2 & 1 & \ddots & & \vdots \\
        \vdots & \ddots & 1 & \ddots & \ddots & \ddots & \vdots \\
        \vdots &  & \ddots & \ddots & \ddots & 1 & 0 \\
        \vdots &  &  & \ddots & \ddots & a_{n-4} & 1 \\
        \vdots & & &  & \ddots & 1 & a_{n-3} \\
        0 & \cdots & \cdots & \cdots & \cdots & 0 & 1 \\        
    \end{bmatrix}
\]

Since $E$ and $F$ are lower and upper triangular, respectively, with all 1s on the diagonal, $\vert E\vert = \vert F\vert = 1$. Substituting these into the expressions for $B$ and $D$,\\
\hspace*{16em} $\vert B\vert = \Delta_{2,n-1} + 1$\\ 
\hspace*{16em} $\vert D\vert = \Delta_{1,n-2} + 1$\\
%where $E$ is unit lower triangular. By expanding $D$ along the first column, $\vert D\vert = \Delta_{1,n-2}+\vert F\vert$, where $F$ is unit upper triangular. Since $E$ and $F$ are lower and upper triangular respectively, with all ones on the diagonal, $\vert E\vert = \vert F\vert = 1$. Hence,\\
and hence, \\
\hspace*{6em} $\Phi_{0,n-1} = a_0\Delta_{1,n-1} + \vert B\vert + \vert D\vert$\\
\hspace*{9em} $=(x+d_0)\Delta_{1,n-1} + \Delta_{2,n-1} + 1 + \Delta_{1,n-2} +1$\\
\hspace*{9em} $=\Delta_{0,n-1} + \Delta_{1,n-2}$\\
where $\Delta_{0,n-1}=(x+d_0)\Delta_{1,n-1} + \Delta_{2,n-1}$ 

\end{proof}

A periodic boundary CA has {\em rotational symmetry}, which implies that the characteristic polynomial remains unchanged if the cell labels are rotated. %For example, the two periodic boundary CA in Figure~\ref{figure:cyclicCA} have the same characteristic polynomial. 
A crucial consequence of this is that the $n$ (typically different) null boundary CAs associated with the $n$ label rotations of a periodic boundary CAs, all have the same value for $\Delta_{0,n-1} + \Delta_{1,n-2}$. Let $w$ denote the weight of the rule vector depending on even and odd numbers of rules 90 and 150. If the rule vector contains an even number of rule 150, then $w=0$ and $1$ otherwise.

%\begin{figure}
%	\centering
%	\includegraphics[width= 4.6in, height = 0.45in]{cyclicCA.eps}		
%	\caption{Cell label rotation in a periodic boundary CA.}        		
%\label{figure:cyclicCA}        		
%\end{figure}

\begin{theorem}
\label{theorem:synthesize-2}
Consider a periodic boundary CA with characteristic polynomial $\Phi_{0,n-1}$, and let $w$ denote the weight of its rule vector. Then,
\begin{equation}
\label{equation:cyclicCA}
\Phi_{0,n-1}=\left\{ 
\begin{array}{rl}
q^2 & \mbox{if $n$ is even and $w=0$}\\
(x+1)q^2 & \mbox{if $n$ is odd and $w=1$}\\
(x)q^2 & \mbox{if $n$ is odd and $w=0$}\\
(x)(x+1)q^2 & \mbox{if $n$ is even and $w=1$}\\ 
\end{array} 
\right.
\end{equation}
for some polynomial $q$.
\end{theorem}

\begin{proof}
The proof is by induction on $n$. The base cases can be checked by calculating the characteristic polynomials of all periodic boundary CAs for $n\leq 4$. The inductive step consists of ten cases.

The first eight cases apply when the periodic boundary CA has a label rotation so that $d_0=d_{n-1}$ (i.e., the CA contains adjacent cells that use the same rule). The eight cases come about from the possible values of (1) $n$ is even or odd, (2) $w$ is zero or one, and (3) $d_0(=d_{n-1})$ is zero or one. It is straightforward to show that
\begin{equation}
\label{equation:cyclicCA1}
\begin{aligned}
\Phi_{0,n-1} & = \Delta_{0,n-1} + \Delta_{1,n-2} \\
& = (x+d_0)\Delta_{1,n-1} + \Delta_{2,n-1} + \Delta_{1,n-2} \\
& = (x+d_0)\Delta_{1,n-1} + (x+d_{n-1})\Delta_{2,n-2} + \Delta_{2,n-3} + \Delta_{1,n-2} \\
& = (x+d_0)(\Delta_{1,n-1} + \Delta_{2,n-2}) + (\Delta_{2,n-3} + \Delta_{1,n-2}) \hspace*{2em} here, \hspace*{0.5em} (d_0 = d_{n-1}) \\
& = (x+d_0)\Phi_{1,n-1} + \Phi_{1,n-2}
\end{aligned}
\end{equation}
Suppose that $n$ is even, $w=0$, and $d_0=0$. Then $[d_1,d_2,\cdots,d_{n-1}]$ has odd length and weight zero, and so by induction $\Phi_{1,n-1}=xq^2_{y}$ for some $q_y\in GF(2)[x]$. Similarly, $[d_1,d_2,\cdots,d_{n-2}]$ has even length and weight zero (recall that $d_0=d_{n-1}$), and so $\Phi_{1,n-2}=q^2_z$ for some $q_z$. Applying Equation~\ref{equation:cyclicCA1} \\
\hspace*{6em} $\Phi_{0,n-1} = (x+d_0)\Phi_{1,n-1} + \Phi_{1,n-2}$\\
\hspace*{9em} $=x(xq^2_y) + q^2_z$\\
\hspace*{9em} $=(xq_y + q_z)^2$\\
\hspace*{9em} $=q^2$\\
where $q=xq_y+q_z$. Thus the inductive hypothesis is satisfied. The other seven cases are proven along the same lines.

The last two cases come about when no two adjacent cells use the same rule. For these cases, $n$ is necessarily even, but $w$ could be zero or one. However, such a periodic boundary CA has in the form of $(90,150,90,150,\cdots,90,150)$. This special structure gives the equation\\
\hspace*{8em} $\Phi_{0,n-1} = x(x+1)\Phi_{2,n-1}+\Phi_{4,n-1}$\\
By an argument similar to that for the eight cases above, the inductive hypothesis can be shown to hold. This completes the proof of the theorem.
\end{proof}

We now return our attention to null boundary CA. We need to consider the CA along with three closely related CAs, obtained by changing the rules used in one or both of the end cells. The rule vectors of the CAs and their characteristic polynomials (stated in terms of subpolynomials of the original CA) are
\begin{center}
\begin{tabular}{ll}
\underline{case} & \underline{characteristic polynomial}\\
original $[d_0,d_1,\cdots,d_{n-2},d_{n-1}]$ & $\Delta_{0,n-1}$ \\
complement $d_{n-1}$ $[d_0,d_1,\cdots,d_{n-2},d_{n-1}+1]$ & $\Delta_{0,n-1} + \Delta_{0,n-2}$ \\
complement $d_{0}$ $[d_0+1,d_1,\cdots,d_{n-2},d_{n-1}]$ & $\Delta_{0,n-1} + \Delta_{1,n-1}$ \\
complement both $[d_0+1,d_1,\cdots,d_{n-2},d_{n-1}+1]$ & $\Delta_{0,n-1} + \Delta_{0,n-2} + \Delta_{1,n-1} + \Delta_{1,n-2}$ \\
\end{tabular}
\end{center}

where $d_i=0$ when cell $i$ rule is 90 and $d_i=1$ when rule is 150. And, complement of rules means replace the rule 90 (150) by rule 150 (90). 
\begin{theorem}
\label{theorem:synthesize-3}
The sum of subpolynomials states that\\
\[
\Delta_{0,n-2}+\Delta_{1,n-1}=\left\{ 
\begin{array}{rl}
(x^2+x)\Delta_{0,n-1}' & \mbox{if $n$ is even, $w=0$}\\
\Delta_{0,n-1}+(x^2+x)\Delta_{0,n-1}' & \mbox{if $n$ is even, $w=1$}\\
(x+1)\Delta_{0,n-1}+(x^2+x)\Delta_{0,n-1}' & \mbox{if $n$ is odd, $w=0$}\\
x\Delta_{0,n-1}+(x^2+x)\Delta_{0,n-1}' & \mbox{if $n$ is odd, $w=1$}\\
\end{array} 
\right.
\]
\end{theorem}

\begin{proof}
The proof requires four cases, the first of which is as follows. Suppose that $n$ is even and $w$ is zero. Equation~\ref{equation:cyclicCA} states that there exists a polynomial $q_1$ such that\\
\hspace*{8em} $\Phi_{0,n-1}=\Delta_{0,n-1}+\Delta_{1,n-2} = q_1^2$\\
If $d_{n-1}$ is complemented, $[d_1,\cdots,d_{n-2}]$ remains unchanged and the length is still even, but the weight has become one. Hence, again by Equation~\ref{equation:cyclicCA}, there exists $q_2$ such that
\begin{equation}
\label{equation:cyclicCA2}
(\Delta_{0,n-1}+\Delta_{0,n-2})+\Delta_{1,n-2}=(x^2+x)q_2^2
\end{equation}
Similarly, if $d_0$ is complemented, there is a $q_3$ with
\begin{equation}
\label{equation:cyclicCA3}
(\Delta_{0,n-1}+\Delta_{1,n-1})+\Delta_{1,n-2}=(x^2+x)q_3^2
\end{equation}
If both $d_0$ and $d_{n-1}$ are complemented, the resulting CA has even length and weight zero, and so there exists $q_4$ such that
\begin{equation}
\label{equation:cyclicCA4}
\begin{aligned}
(\Delta_{0,n-1}+\Delta_{0,n-2}+\Delta_{1,n-1}+\Delta_{1,n-2})+\Delta_{1,n-2}=q_4^2 \\
\Rightarrow \Delta_{0,n-1}+\Delta_{0,n-2}+\Delta_{1,n-1} = q_4^2
\end{aligned}
\end{equation}
By adding Equations~\ref{equation:cyclicCA2} and~\ref{equation:cyclicCA3}, and differentiating both sides, we respectively get\\
\hspace*{8em} $\Delta_{0,n-2}+\Delta_{1,n-1} = (x^2+x)(q_2+q_3)^2$\\
\hspace*{10em} $\Rightarrow (\Delta_{0,n-2}+\Delta_{1,n-1})' = (q_2+q_3)^2$\\
Hence, we have the differential equation
\begin{equation}
\label{equation:cyclicCA5}
(\Delta_{0,n-2}+\Delta_{1,n-1}) = (x^2+x)(\Delta_{0,n-2}+\Delta_{1,n-1})'
\end{equation}
Equation~\ref{equation:cyclicCA4} says that $\Delta_{0,n-1}+\Delta_{0,n-2}+\Delta_{1,n-1}$ is a perfect square, and so 
\begin{equation}
\label{equation:cyclicCA6}
\Delta_{0,n-1}' = (\Delta_{0,n-2}+\Delta_{1,n-1})'
\end{equation}
Combining Equations~\ref{equation:cyclicCA5} and~\ref{equation:cyclicCA6} \\
\hspace*{8em} $\Delta_{0,n-2} + \Delta_{1,n-1} = (x^2+x)\Delta_{0,n-1}'$\\
which completes the proof for this case. The other cases are different, but the same argument structure holds (see, Ref. \cite{CattellTh} for details).
\end{proof}

\noindent{\em CA Quadratic Congruence:} Theorem~\ref{theorem:synthesize-3} immediately provides the congruence
\begin{equation}
\label{equation:cyclicCA7}
\Delta_{0,n-2}+\Delta_{1,n-1} \equiv (x^2+x)\Delta_{0,n-1}' \pmod{\Delta_{0,n-1}}
\end{equation}
The product of subpolynomials relation comes from Euclid's GCD algorithm (one can find the details in \cite{CattellTh}). It states that 
\begin{equation}
\label{equation:cyclicCA8}
\Delta_{0,n-2}\Delta_{1,n-1} \equiv 1 \pmod{\Delta_{0,n-1}}
\end{equation}
Solving Equation~\ref{equation:cyclicCA7} for $\Delta_{1,n-1}$ and substituting into Equation~\ref{equation:cyclicCA8}\\
\hspace*{4em} $(\Delta_{0,n-2})^2 + (x^2+x)\Delta_{0,n-1}'\Delta_{0,n-2} + 1 \equiv 0 \pmod{\Delta_{0,n-1}}$\\
Equivalently,

\hspace*{4em} $(\Delta_{1,n-1})^2 + (x^2+x)\Delta_{0,n-1}'\Delta_{1,n-1} + 1 \equiv 0 \pmod{\Delta_{0,n-1}}$

This completes the proof of Theorem \ref{theorem:congruence}.
\end{proof}

We demonstrate the theorem on the CA (90, 90, 150, 150, 90), which has characteristic polynomial $x^5+x^3+1$. For this CA, $\Delta_{0,3}=x^4+x$ and $\Delta_{1,4}=x^4+1$.  Substituting $y=x^4+1$ into the left-hand side of Theorem \ref{theorem:congruence}.
 
\hspace*{5.3em} $y^2 + (x^2+x){\Delta}'_{0,n-1}y + 1$\\
\hspace*{6.3em} $=(x^4+1)^2+(x^2+x)(x^4+x^2)(x^4+1)+1$\\
\hspace*{6.3em} $=(x^8+1)+(x^{10}+x^9+x^8+x^7+x^6+x^5+x^4+x^3)+1$\\
\hspace*{6.3em} $=0 \pmod{x^5+x^3+1}$

Similarly, substituting $y=x^4+x$ into the left-hand side results in zero. By combining Lemma~\ref{lemma:subpolynomial} and Theorem~\ref{theorem:congruence}, we have a characterization of CA polynomials.

\begin{corollary}
\label{corollary:congruence}
Let $p$ be a degree $n$ polynomial. Then $p$ is a CA polynomial if and only if there exists some solution $q$ for $y$ of the congruence\\
\hspace*{9.3em} $y^2 + (x^2+x)p'y + 1\equiv 0\pmod{p}$\\
where $p'$ is the formal derivative of $p$.
\end{corollary}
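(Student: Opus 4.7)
The plan is to treat the two directions of the equivalence separately. The forward direction is immediate from Theorem \ref{theorem:congruence}: if $p$ is a CA polynomial, then by definition there is a CA whose characteristic polynomial is $p$, and the associated subpolynomial $\Delta_{0,n-2}$ (of degree $n-1$) is, by that theorem, a solution of the quadratic congruence. So all the real work lies in the converse.

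For the converse, suppose $q$ solves the congruence. First reduce $q$ modulo $p$ to arrange $\deg q < n$. Rewriting the congruence as $q \cdot (q + (x^2+x)p') \equiv 1 \pmod{p}$ immediately yields $\gcd(p, q) = 1$, and in particular $q \not\equiv 0 \pmod p$. My plan is then to invoke Lemma \ref{lemma:subpolynomial}: it suffices to show that Euclid's GCD algorithm applied to the pair $(p, q)$ produces exactly $n$ quotients, each of degree one. The sum of the degrees of those quotients equals $\deg p - \deg \gcd(p, q) = n$, and each quotient has degree at least one, so at most $n$ quotients appear; the whole question reduces to whether some quotient can have degree $\geq 2$.

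The main obstacle, and the heart of the proof, is ruling out any quotient of degree $\geq 2$. I would do this by induction on $n$. Write the first step of Euclid as $p = a q + r$ with $\deg r < \deg q$, and use the formal derivative (in characteristic $2$, where signs vanish) to expand $p' = a' q + a q' + r'$. Substituting this into the congruence and reducing should, after manipulations parallel in spirit to the proof of Theorem \ref{theorem:synthesize-3}, force $\deg q = n - 1$ and $\deg a = 1$ (so $a = x + d$ for some $d \in \{0,1\}$), and moreover yield a congruence of precisely the same form for the reduced pair $(q, r)$ relative to $q$. The inductive hypothesis then produces an $(n-1)$-cell CA with characteristic polynomial $q$ and subpolynomial $r$, and prepending a cell with rule determined by $d$ extends this, via the recurrence in \ref{equation:polynomialgf(2)}, to an $n$-cell CA with characteristic polynomial $p$.

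The delicate bookkeeping will be verifying the exact form of the inherited congruence and tracking the weight/parity cases that appear in Theorem \ref{theorem:synthesize-3}, which I expect to be where most of the case analysis lives; in particular one needs to exclude the degenerate possibility $\deg q < n-1$, and this is where the specific coefficient $(x^2+x)p'$ in the congruence (as opposed to an arbitrary polynomial) carries the weight of the argument.
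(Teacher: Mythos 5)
Your forward direction is correct and is all the paper itself really proves: Theorem~\ref{theorem:congruence} hands you $\Delta_{0,n-2}$ as a solution. The gap is in the converse, and it is not merely ``delicate bookkeeping.'' The whole content of that direction is your claim that a solution $q$ of the congruence is forced to have degree $n-1$ and to produce $n$ degree-one quotients under Euclid's algorithm, and that claim is false for an arbitrary solution. Take $p=x^2$, the characteristic polynomial of the CA $(150,150)$: here $p'=0$, the congruence reads $y^2+1\equiv 0\pmod{x^2}$, and $y=1$ is a solution of degree $0\neq n-1$ for which Euclid applied to $(x^2,1)$ returns a single quotient of degree $2$; only the other solution $y=x+1$ yields the CA. Worse, take $p=x^2+x$: then $p'=1$ and $y=1$ solves $y^2+(x^2+x)y+1\equiv 0\pmod{x^2+x}$ since $1+(x^2+x)+1\equiv 0$, yet no $2$-cell 90--150 CA has characteristic polynomial $x^2+x$ (the recurrence~\ref{equation:polynomialgf(2)} only produces $x^2+1$, $x^2+x+1$ and $x^2$). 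So your inductive step --- that the congruence forces $\deg q=n-1$ and a linear first quotient, and that the reduced pair $(q,r)$ inherits the same congruence --- cannot be established as stated; no argument that starts from an arbitrary solution can succeed, and one must at minimum restrict to solutions of degree exactly $n-1$ and then still prove the linear-quotient property for those.

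For comparison, the paper's own ``proof'' of this corollary is the single sentence that it follows by combining Lemma~\ref{lemma:subpolynomial} with Theorem~\ref{theorem:congruence}; this likewise covers only the easy direction, and the converse is inherited from Cattell and Muzio, whose argument does not run your induction but instead invokes the theorem of Mesirov and Sweet connecting solutions of exactly this quadratic to continued-fraction expansions of $p/q$ with all partial quotients of degree one. That this is the load-bearing external ingredient (and not a manipulation in the style of Theorem~\ref{theorem:synthesize-3}) is acknowledged by the paper itself in Section~\ref{section:MLCAGF(q)}, where the failure of the whole approach over GF($q$) is attributed to the unavailability of the Mesirov--Sweet result there. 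To complete your proof you would have to either import that theorem explicitly or prove its GF(2) case, which is where the real work lies.
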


Theorem~~\ref{theorem:congruence} is very useful for irreducible polynomials. As discussed above, when $\Delta_{0,n-1}$ is irreducible, solutions to quadratic congruence (Theorem~\ref{theorem:congruence}) can be found elegantly which extracts the following corollary.

\begin{corollary}
\label{corollary:exactlytwoCA}
If $p$ is a degree $n$ irreducible polynomial, then $p$ has exactly two CA realizations.
\end{corollary}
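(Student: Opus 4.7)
The plan is to leverage Corollary~\ref{corollary:congruence} and exploit the fact that when $p$ is irreducible of degree $n$ over $GF(2)$, the quotient ring $R := GF(2)[x]/(p)$ is the field $GF(2^n)$. By that corollary, CA realizations of $p$ are in bijection with residues $y$ of degree $\le n-1$ that satisfy
\[
y^{2} + (x^{2}+x)\,p'\,y + 1 \equiv 0 \pmod{p}.
\]
Since this is a quadratic equation over the field $R$, it has at most two solutions. This immediately gives the upper bound of two CA realizations.

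Next I would show that any root forces a second, distinct root. By Vieta's formulas, which hold over any field, the two roots $y_{1}, y_{2}$ (counted with multiplicity) satisfy $y_{1}+y_{2} = (x^{2}+x)\,p'$ and $y_{1}y_{2}=1$ in $R$, and they coincide iff $(x^{2}+x)\,p' \equiv 0 \pmod{p}$. I would rule this out as follows. First, $p' \not\equiv 0$ as a polynomial: otherwise $p$ would be a polynomial in $x^{2}$, hence a perfect square in characteristic $2$, contradicting irreducibility. Since $\deg p' < n$, also $p' \not\equiv 0 \pmod{p}$. Second, for $n \ge 2$, neither $x$ nor $x+1$ is divisible by $p$, so $x(x+1) = x^{2}+x \not\equiv 0 \pmod{p}$. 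Both factors being nonzero in the field $R$, the coefficient $(x^{2}+x)\,p'$ is nonzero there, and the two roots are distinct. The degenerate case $n=1$, where the only irreducible polynomials are $x$ and $x+1$, can be checked by hand.

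The main obstacle is establishing existence of at least one root. My preferred route is to invoke the identities developed during the proof of Theorem~\ref{theorem:congruence}: the sum-of-subpolynomials relation of Theorem~\ref{theorem:synthesize-3} together with the product relation $\Delta_{0,n-2}\Delta_{1,n-1}\equiv 1 \pmod{\Delta_{0,n-1}}$ (Equation~\ref{equation:cyclicCA8}) show that $y=\Delta_{0,n-2}$ is always a solution whenever $p$ is realized as the characteristic polynomial of some 90-150 CA. So it suffices to show that every irreducible $p$ arises as such a characteristic polynomial. A self-contained alternative is the Artin--Schreier criterion in characteristic $2$: $y^{2}+by+c=0$ has a root in $GF(2^{n})$ iff $\operatorname{Tr}(c/b^{2})=0$, which here reduces to verifying $\operatorname{Tr}\bigl(1/((x^{2}+x)p')^{2}\bigr)=0$ in $GF(2^{n})$; using $\operatorname{Tr}(z^{2})=\operatorname{Tr}(z)$, this further simplifies to a trace condition that can be checked intrinsically from the structure of $p$.

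Combining the three pieces, one root exists and yields a second distinct root, and both give, via Lemma~\ref{lemma:subpolynomial} and Euclid's GCD algorithm applied to $(p, y_{i})$, two distinct CAs realizing $p$. Since no more than two solutions can exist in a field, $p$ admits exactly two CA realizations, completing the proof.
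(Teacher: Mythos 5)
Your overall strategy is the same as the paper's: identify CA realizations of $p$ with solutions of the quadratic congruence $y^2+(x^2+x)p'y+1\equiv 0\pmod{p}$ (Theorem~\ref{theorem:congruence}, Corollary~\ref{corollary:congruence}) and count solutions in the field $GF(2)[x]/(p)\cong GF(2^n)$. Where you go beyond the paper is the distinctness step: the paper simply asserts ``we get two solutions,'' whereas you observe that a quadratic $y^2+by+c$ over a field of characteristic $2$ has a repeated root exactly when $b=0$, and you rule out $b=(x^2+x)p'\equiv 0\pmod p$ by noting that $p'\ne 0$ (else $p$ would be a perfect square in characteristic $2$) with $\deg p'<n$, and that $p\nmid x(x+1)$ for $n\ge 2$. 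That part is correct and is a genuine improvement on the paper's one-line argument.

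The gap is in existence. Your Route 1 is circular: Theorem~\ref{theorem:synthesize-3} and Equation~\ref{equation:cyclicCA8} produce the solution $y=\Delta_{0,n-2}$ only for a polynomial that is already the characteristic polynomial of some 90-150 CA, and ``every irreducible $p$ arises as such a characteristic polynomial'' is exactly the existence claim you need. Route 2 correctly reduces existence to the Artin--Schreier condition $\mathrm{Tr}\bigl(1/((x^2+x)p')\bigr)=0$ in $GF(2^n)$, but you never verify it; that identity is precisely the nontrivial content of the Mesirov--Sweet theorem, which the paper cites (see Section~\ref{section:MLCAGF(q)} and \cite{Mesirov1987}) but does not prove, and it is not something that ``can be checked intrinsically'' in a line or two. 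To be fair, the paper's own proof has the same lacuna, since Theorem~\ref{theorem:congruence} presupposes a CA with characteristic polynomial $p$; but your proposal as written does not close it either. Two secondary points: the claimed bijection between congruence solutions and CA realizations also needs the converse direction --- that every solution $y$ makes Euclid's algorithm on $(p,y)$ produce $n$ degree-one quotients as required by Lemma~\ref{lemma:subpolynomial} --- which again rests on the same deep input; and for $n=1$ the statement is literally false (the congruence modulo $x$ or $x+1$ has a unique solution and there is only one $1$-cell realization), so the ``check by hand'' would in fact reveal that the hypothesis $n\ge 2$ is needed.
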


\begin{proof}
\label{corollary:twoCArealization}
As per Theorem~\ref{theorem:congruence}, we get two solutions for a characteristic polynomial. This implies, two CAs have the same characteristic polynomial. Hence, for each irreducible polynomial, there are exactly two CAs.
\end{proof}

As an example, suppose we have an irreducible polynomial which is $x^5+x^2+1$. Then, we get two CAs where both of which characteristic polynomials are $x^5+x^2+1$. For this polynomial the CAs are  - $(150, 150, 150, 150, 90)$ and $(90, 150, 150, 150, 150)$ where the CAs are reversal of each other.

%\subsection{Calculation of CA from a Polynomial}
\subsection{The Formula for the Solutions}
\label{subsection:CAfrompolynomial}

Let $p$ be a degree $n$ polynomial. Then according to previous theorem, we can write the congruence according to $p$. To simplify the statement of the formula, we use the following denotations.\\
\hspace*{9.3em} $y^2 + (x^2+x)p'y + 1\equiv 0\pmod{p}$\\
Let us take,  $f(x) = (x^2+x)p'$ and $g(x)= \Big(\frac{1}{f(x)}\Big)^2$.
Some definitions are necessary to describe the algorithm. At first, we define the {\em trace}~\cite{McEliece87} of a polynomial $a(x)$ with respect to the irreducible polynomial $p$ as 
\vspace*{-1.0em}
\begin{align*}
{Tr}(a) & = (a + a^2 + a^4 + \cdots + a^{2^{n-1}}) \pmod p \\
& =\bigg(\sum_{i=0}^{n-1} a^{2^i}\bigg)\pmod p
\end{align*}

At this point, we require a polynomial $\theta(x)$ with trace one. If $n$ is odd then we may use $\theta(x)=1$, but if $n$ is even we have to find such a polynomial. Interestingly, any such $\theta(x)$ suffices. Now, we define the polynomial $\beta(x)$ by 
\vspace*{-0.5em}
\begin{align*}
\beta(x) & = g{{\theta}^2} + (g+g^2){\theta}^2 + \cdots + (g+g^2+\cdots+g^{2^{n-2}}){\theta}^{2^{n-1}}\\
& =\sum_{i=1}^{n-1}\bigg(\sum_{j=0}^{i-1} g^{2^j}\bigg) {\theta}^{2^i}
\end{align*}

If $n$ is odd, the use of $\theta(x)=1$ simplifies the expression for $\beta(x)$ to\\
% \begin{align*}
%     \beta(x) & = g^2 + g^8 + \cdots + g^{2^{n-2}} \\
%     & = \sum_{i=1}^{(n-1)/2} g^{2^{2i-1}}
% \end{align*}

\hspace*{12.8em} $\beta(x) = g^2 + g^8 + \cdots + g^{2^{n-2}}$\\
\hspace*{12.0em} $$=\sum_{i=1}^{(n-1)/2} g^{2^{2i-1}}$$

We now show how the solutions of the quadratic equations are used to obtain the CA. Because $p$ is irreducible, we know that the quadratic has two distinct solution (according to Theorem~\ref{theorem:congruence}), and we denote them $q_1(x)$ and $q_2(x)$. In fact, the CA obtained by using $q_2$ is the reversal of the CA obtained by using $q_1$. In practice, we need only one solution. We apply Euclid's GCD algorithm~\cite{Knuth81} to get the solutions. The pseudocode listed below summarizes the process of computing a CA for a given irreducible polynomial $p$.

\noindent\rule{8cm}{0.6pt}
\begin{algorithmic}[1]
\STATE \textbf{procedure} {$CalcCAFromPolys$}
\STATE let $f = (x^2+x)p'(x)$
\STATE let $g = (1/f)^2$
\STATE let $q=\beta f$

\IF{$n$ is even}
\STATE find $\theta$ with trace one
\STATE calculate $\beta$
\ELSE
\STATE calculate $\beta$
\ENDIF

\STATE calculate $q$
\STATE calculate gcd(p,q)
\STATE \textbf{return} the quotients form the CA from the constant terms of the quotients.

\end{algorithmic}
\vspace{-0.75em}
\noindent\rule{8cm}{0.6pt}

We present a complete example of computing the CA for a degree 5 irreducible polynomial. It shows the determination of the solution of the CA quadratic congruence, and the subsequent application of Euclid's GCD algorithm. The following steps mirror the algorithm description above.

Let $p$ be an irreducible (in this case, $p$ is primitive) polynomial ($n=5$)
\begin{center}
%\vspace{-1em}
$p = x^5+x^2+1$
%\vspace{-1em}
\end{center}

Now, we calculate the formal derivative of $p$, reducing the coefficients modulo $2$ 
\begin{center}
%\vspace{-1em}
$p' = 5x^4+2x = x^4$
%\vspace{-1em}
\end{center}

Compute $f$, recalling that all polynomials are reduced in modulo $p$
\begin{center}
%\vspace{-1em}
$f = (x^2+x)p' = x^6+x^5 = x^3+x^2+x+1$
%\vspace{-1em}
\end{center}

Applying the extended Euclidean GCD algorithm \cite{CattellM96} to compute the inverse of $f$, we get 
\begin{center}
%\vspace{-1em}
$1/f = x^3+x^2+1$
%\vspace{-1em}
\end{center}

To get $g$, we calculate
\begin{center}
%\vspace{-1em}
$g = (1/f)^2 = x^6+x^4+1 = x^4+x^3+x+1$
%\vspace{-1em}
\end{center}

The trace of $g$ can be verified to be zero, as expected. Since $n$ is odd, we use the formula to getting $\beta$ where $\theta=1$ 
% \begin{center}
% %\vspace{-2em}
% $$\beta=\sum_{i=1}^{2} g^{2^{2i-1}}$$
% $= g^2 + g^8$
% %\vspace{-1em}
% \end{center}

\vspace*{-1.5em}
\begin{align*}
\beta & =\sum_{i=1}^{2} g^{2^{2i-1}} \\
& = g^2 + g^8
\end{align*}

Further, we have to compute the powers of $g$.
% \begin{center}
% %\vspace{-1em}
% $g = x^4+x^3+x+1$\\
% $g^2 = (x^4+x^3+x+1)^2 = x^8+x^6+x^2+1 = x$\\
% $g^4 = (g^2)^2 = x^2$\\
% $g^8 = (g^4)^2 = x^4$
% %\vspace{-1em}
% \end{center}

\vspace*{-1.5em}
\begin{align*}
g & = x^4+x^3+x+1\\
g^2 & = (x^4+x^3+x+1)^2 = x^8+x^6+x^2+1 = x\\
g^4 & = (g^2)^2 = x^2\\
g^8 & = (g^4)^2 = x^4    
\end{align*}

Therefore, we get
\vspace*{-1.0em}
\begin{align*}
\beta & = g^2 + g^8\\
& = x+x^4
\end{align*}

Finally, we compute a solution $q$ to the quadratic
\vspace*{-0.2em}
\begin{align*}
q & = (x^2+x)p'\beta\\
& = (x^2+x)(x^4)(x+x^4)\\
& = x^4+x^2+1
\end{align*}

Euclid's algorithm is now applied to determine the CA. At this point, we know that $q$ is ${\Delta}_{n-2}$
\vspace*{-0.2em}
\begin{align*}
x^5+x^2+1 & = {\color{red}(x)}(x^4+x^2+1)+x^3+x^2+x+1\\
x^4+x^2+1 & = {\color{red}(x+1)}(x^3+x^2+x+1)+x^2\\
x^3+x^2+x+1 & = {\color{red}(x+1)}(x^2)+x+1\\
x^2 & = {\color{red}(x+1)}(x+1)+1\\
x+1 & = {\color{red}(x+1)}(1)+0
\end{align*}

The algorithm results in the five degree one quotients $[x, x+1, x+1, x+1, x+1]$, and so the CA is $(150, 150, 150, 150, 90)$. %Further, there is an another different method from Cattell and Muzio method. In~\cite{Cho2007}, Cho et al. proposed a new method for synthesis of 1-dimensional CA for a given irreducible polynomial. 
%The algorithm is extracted from the algorithm of Serra and Slater~\cite{serra1990lanczos}. By the proposed algorithm, they obtained the 9689 cell CA for the primitive polynomial $x^{9689}+x^{4187}+x^{2444}+x^{1836}+x^{471}+x^{84}+1$~\cite{Zierler1969} using about 156 min.

\section{Synthesis of Primitive Polynomial using CA}
\label{section:synthesis-PP}

In this section, we use CA as tool of generating primitive polynomials over GF(2). Till date, there is no efficient algorithm known to decide primitivity of a polynomial in polynomial time. Also, there is no proof claiming that it is an NP-complete or NP-hard problem. Till date the complexity for deciding a primitive polynomial is exponential \cite{Nirmal04}. In \cite{Makato98, Adak-TCS-2021}, the authors consider some special type of CAs to generate maximal length CAs. These CAs can be used as generators of primitive polynomials of a set of degrees. However, they do not always get maximal length CAs by those special CAs. Hence, these generators are imperfect.

Matsumoto~\cite{Makato98} has first worked with such a special type of CAs. He has chosen a simple and explicit construction of CAs which consist of only one rule 150 (for the first cell) and rest of the rules are rule 90. This CA may be used for generating maximal length CAs. Further, this work has been extended in \cite{Adak-TCS-2021}.

%The drawback of the this method is a strong limitation on the size. He has given a list of the sizes $\leq$ 300 that attain the maximality. In our recent works, we extended the list of the sizes from 300 to 1200 that attain maximality \cite{Adak-TCS-2021,Sumit-EAIT}.

\subsection{Imperfect Strategies}
\label{subsection:strategies}

In \cite{Adak-TCS-2021}, the authors develope three strategies for finding maximal length CAs, hence primitive polynomials over GF(2) of large degree. The proposed strategies greedily synthesize (linear) CAs of different sizes, which are almost always of maximal length. Since the characteristic polynomial of a maximal length cellular automaton is primitive, characteristic polynomials of the synthesized CAs are claimed as primitive. The main drawback is the proposed strategies do not generate CAs of every size, though it can generate CAs of arbitrary large size. Follwoing are the definition of two simple special type of CAs which they are used to generate maximal length CAs.

\begin{defnn}
\label{definition:specialCAs}
An $n$-cell CA with rule vector $\mathcal{R}=(\mathcal R_0, \mathcal{R}_1, \cdots, \mathcal{R}_{n-1})$
\begin{itemize}
\item is called $CA(90')$ with $n$-cell if $\mathcal{R}_0$ = 150 and $\mathcal{R}_i$ = 90 where $1\leq i\leq n-1$.
\item is called $CA(150')$ with $n$-cell if $\mathcal{R}_0$ = 90 and $\mathcal{R}_i$ = 150 where $1\leq i\leq n-1$.
\end{itemize}
\end{defnn}

%To fulfil our need, we carefully choose two special types of CAs, named

Note that $CA(90')$ and $CA(150')$ are {\em conjugate} (discussed in Section~\ref{subsection:searchingpattern}) to each other. Next to identify some sizes for which the CAs do not show maximality. These greedy procedures are backed by the dynamics of the CAs and run in polynomial time. The strategies for $CA(90')$ and $CA(150')$ greedily work on the configurations that consist of all but one non-zero element. Decimal equivalents of these configurations are always power-of-2 numbers, namely {\em p-configuration}.

%The procedure for the strategy takes the CA size $n$ as input. Based on some logic, the procedure returns false to report that the CA is non-maximal, and true to indicate that the CA may be of maximal length.

\begin{defnn}
\label{definition:pconfig}
A configuration is a $p$-configuration if only one cell is in state 1 and others are in state 0. 
\end{defnn}

The configuration 0001, for example, is a $p$-configuration. In an $n$-cell CA, there are $n$ $p$-configurations, which are named as $p^0$, $p^1$, $p^2$, $\cdots$, $p^{n-1}$. Here, $p^i$ indicates that cell $n-i$ is at state 1 and the rest are in state 0. For example, the configuration 0010 is $p^1$ and the configuration 1000 is $p^3$. It is, however, obvious that if a CA with $n$ cells is a maximal length CA, then all the $p$-configurations lie in a single cycle.

\subsubsection{Properties of $CA(90')$ and $CA(150')$}

%Space-time diagram is a well-known tool which has been used since long to understand dynamical behaviour a CA \cite{Wolfr83}. Here we note down some properties of $CA(90')$ and $CA(150')$ by observing its space-time diagram. Needless to say that maximality of a CA can also be observed in its space-time diagram. Following is an important property of $CA(90')$ which notes down the behaviour of $p$-configurations in a cycle and we take $p^0$ as initial configuration. To state the property, we use a function $length$ defined as following: $length (x,y)=k$ for two configurations $x,y$ if $y=G^k(x)$, where $G$ is the global transition function of the CA.

Here we note down some properties of $CA(90')$ and $CA(150')$ with respect to maximality. Following is an important property of $CA(90')$ which notes down the behaviour of $p$-configurations in a cycle and we take $p^0$ as initial configuration. To state the property, a function $length$ is defined as following: $length (x,y)=k$ for two configurations $x,y$ if $y=G^k(x)$, where $G$ is the global transition function of the CA.

\begin{theorem}
\label{property:CA90'strategy}
In $CA(90')$ with $n\ge 2$ cells, following relation holds:\\
%\begin{enumerate}
\indent\indent\indent\indent\indent\indent $length(p^{m_0},p^{m_1})=2^0$,\\
\indent\indent\indent\indent\indent\indent $length(p^{m_1},p^{m_2})=2^1$,\\
\indent\indent\indent\indent\indent\indent\indent\indent\indent\indent $\vdots$\\
\indent\indent\indent\indent\indent\indent $length(p^{m_i},p^{m_{i+1}})=2^i$,\\
\indent\indent\indent\indent\indent\indent\indent\indent\indent\indent $\vdots$\\
\indent\indent\indent\indent\indent\indent $length(p^{m_k},p^{m_0})=2^{k}$\\
%\end{enumerate}
where $m_0=0$ and the sequence $(m_i)_{0\le i\le k}$ with some $k\le n-1$ maintains the following relation:
\begin{center}
$m_{i+1}=
\left\{ 
\begin{tabular}{cl}
$2m_i+1$ & ~~\mbox{if}~~ $2m_i+1<n$\\
$2(n-1-m_i)$ & ~~\mbox{otherwise}
\end{tabular}
\right. $ \cite{Adak-TCS-2021}
\end{center}
\end{theorem}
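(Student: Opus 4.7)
The plan is to proceed by induction on $i$ and prove $T^{2^i}\,p^{m_i}=p^{m_{i+1}}$ at each step. The key device is to embed $CA(90')$ into a uniform rule-$90$ automaton on a bigger ring, where the evolution of individual spikes becomes transparent through Lucas' theorem.

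First I would establish the embedding. The null-boundary $CA(90')$ on $n$ cells is equivalent to the periodic rule-$90$ cellular automaton on $2n+1$ cells restricted to the invariant subspace of palindromic configurations $x_i=x_{2n-i}$ that additionally satisfy $x_n=0$. Both constraints are preserved under rule $90$: the palindrome by the left--right symmetry of the rule, and $x_n=0$ because $y_n=x_{n-1}+x_{n+1}=0$ once the palindrome is applied. Under this correspondence the configuration $p^j$ lifts to the palindromic pair of active cells at positions $n-1-j$ and $n+1+j$ of the $(2n+1)$-cell CA.

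The base case $i=0$ is verified by direct matrix inspection: row $n-1$ of $T$ has a single $1$ in column $n-2$, so $T\,p^0=p^1$ and $m_1=2\cdot 0+1=1$, as claimed. For the inductive step, Lucas' theorem gives $\binom{2^i}{k}\equiv 0\pmod 2$ for $0<k<2^i$, hence in the periodic rule-$90$ CA a single spike at position $a$ evolves in $2^i$ steps into exactly the pair of spikes at $a\pm 2^i\pmod{2n+1}$. Starting from the pair representing $p^{m_i}$ and writing $s=2^i$, after $s$ steps the state is supported on the four positions
\[A=n-1-m_i-s,\;\; B=n-1-m_i+s,\;\; C=n+1+m_i-s,\;\; D=n+1+m_i+s\pmod{2n+1}.\]
Since $A+D\equiv B+C\equiv -1\pmod{2n+1}$, both $\{A,D\}$ and $\{B,C\}$ are palindromic pairs, so each identifies a single cell of $CA(90')$.

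Next I would observe that the resulting state is a single $p$-configuration exactly when one of the two pairs collapses to $\{n,n\}$ and therefore cancels under the invariant $x_n=0$. A short computation gives the equivalences
\[B=C=n \iff m_i\equiv 2^i-1\pmod{2n+1},\qquad A=D=n \iff m_i\equiv 2n-2^i\pmod{2n+1}.\]
The central inductive invariant I would maintain is that $m_i$ always lies in one of these two residue classes. Assuming the invariant, a four-way sub-case analysis---on which class currently holds and on whether $2m_i+1<n$---shows that (i) the surviving pair identifies exactly the cell $p^{m_{i+1}}$ prescribed by the piecewise recurrence; (ii) the class of $m_{i+1}$ stays the same when $2m_i+1<n$ and flips when $2m_i+1\ge n$, precisely matching the dichotomy between the two branches of the formula; and (iii) the final equation $length(p^{m_k},p^{m_0})=2^k$ is the special case in which the Case-$2$ branch yields $m_{k+1}=0=m_0$, equivalently $2^{k+1}\equiv\pm 1\pmod{2n+1}$.

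The hard part will be executing the modular arithmetic in the four-way case analysis when $2^i$ exceeds $2n+1$ and wraps around the period several times: the correspondence between the naive positions $A,B,C,D$ and their reductions modulo $2n+1$ must be tracked carefully so that both the class-preservation/switching and the identification of the surviving pair with the correct $m_{i+1}$ come out correctly across all sub-cases. Once this bookkeeping is in place, each sub-case reduces to an elementary congruence verification.
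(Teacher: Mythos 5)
The paper does not actually prove this theorem---it explicitly defers the ``quite lengthy'' formal proof to the cited reference---so there is no in-paper argument to compare yours against; I can only assess your proposal on its own merits, and it is sound. The unfolding of the null-boundary $CA(90')$ into the palindromic, $x_n=0$ subspace of the $(2n+1)$-cell periodic rule-90 automaton is correct: the reflection $i\mapsto -1-i\pmod{2n+1}$ has the single fixed cell $n$ (which stays at $0$ and supplies the null right boundary for cell $n-1$) and the fixed edge $\{2n,0\}$, which forces $x_{-1}=x_0$ and hence turns rule 90 at cell $0$ into $x_0+x_1$, i.e.\ rule 150 under the null left boundary. The Lucas/doubling step, the four positions $A,B,C,D$, the degeneration criteria, and the invariant that $m_i$ lies in one of the two residue classes are all correct, and I have checked that the case analysis you defer does close. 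Two remarks. First, a small slip in the base case: it is \emph{column} $n-1$ of $T$ (the column indexed by the active cell), not row $n-1$, that determines $Tp^0$; the conclusion $Tp^0=p^1$ is nevertheless right. Second, the ``hard part'' you flag---tracking wrap-around when $2^i>2n+1$---largely evaporates if you phrase the invariant as $m_i+1\equiv\epsilon_i 2^i\pmod{2n+1}$ with $\epsilon_i=\pm1$: the pair at $c\pm s$ degenerates according to the sign of $\epsilon_i$, the surviving spike sits at a ring position $\equiv n-\epsilon_i 2^{i+1}$, so the new index must satisfy $m_{i+1}+1\equiv\pm2^{i+1}\pmod{2n+1}$; both candidates $2m_i+1$ and $2(n-1-m_i)$ satisfy this congruence (with opposite signs), they sum to $2n-1$, hence exactly one of them lies in $\{0,\dots,n-1\}$, and it is the one selected by the dichotomy $2m_i+1<n$. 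This replaces the four-way analysis by a two-line argument, and it also delivers the closure condition $2^{k+1}\equiv\pm1\pmod{2n+1}$ for the last relation $length(p^{m_k},p^{m_0})=2^k$ together with the bound $k\le n-1$, since the $m_0,\dots,m_k$ are then pairwise distinct elements of $\{0,\dots,n-1\}$.
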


%Now, we show efficacy of the theorem by giving two examples followed by Figure \ref{figure:spacetime-max}. Figure \ref{figure:spacetime-max} shows the space-time diagram of $CA(90')$ and $CA(150')$. Space-time diagram is a well-known tool which has been used since long to understand the transitions and dynamical behaviour of a CA \cite{Wolfr83}. Let us first consider that $n=5$ (see, Figure~\ref{figure:spacetime-max}(b)). Here, sequence of $p$-configurations is $(p^0,p^1,p^3,p^2,p^4)$, and $length(p^0,p^1)=2^0$, $length(p^1,p^3)=2^1$, $length(p^3,p^2)=2^2$, $length(p^2,p^4)=2^3$ and $length(p^4,p^0)=2^4$. All $p$-configurations are covered in one cycle, and this CA is a maximal length CA. Let us now take another example of $n=4$ (see, Figure~\ref{figure:spacetime-max}(a)). The $p$-configurations covered in this diagram are $p^0$, $p^1$ and $p^3$, and $length(p^0,p^1)=2^0$, $length(p^1,p^3)=2^1$ and $length(p^3,p^0)=2^2$. However, this CA is not a maximal length CA.  

Lets take an example of the above theorem. The authors used the space-time diagram instead of transition diagram in here. Space-time diagram is a well-known tool which has been used since long to understand the transitions and dynamical behaviour of a CA \cite{Wolfr83}. Figure \ref{figure:spacetime-max} shows the space-time diagram of $CA(90')$ and $CA(150')$.

%Space-time diagram is a well-known tool which has been used since long to understand the transitions and dynamical behaviour of a CA \cite{Wolfr83}. The authors used the space-time diagram instead of transition diagram in here. Figure \ref{figure:spacetime-max} shows the space-time diagram of $CA(90')$ and $CA(150')$. Lets take an example of the above theorem.

Let us first consider that $n=5$ (see, Figure~\ref{figure:spacetime-max}(b)). Here, sequence of $p$-configurations is $(p^0,p^1,p^3,p^2,p^4)$, and $length(p^0,p^1)=2^0$, $length(p^1,p^3)=2^1$, $length(p^3,p^2)=2^2$, $length(p^2,\\p^4)=2^3$ and $length(p^4,p^0)=2^4$. All $p$-configurations are covered in one cycle, and this CA is a maximal length CA. Let us now take another example of $n=4$ (see, Figure~\ref{figure:spacetime-max}(a)). The $p$-configurations covered in this diagram are $p^0$, $p^1$ and $p^3$, and $length(p^0,p^1)=2^0$, $length(p^1,p^3)=2^1$ and $length(p^3,p^0)=2^2$. However, this CA is not a maximal length CA.

\begin{figure}
				\subfloat[$n$=4]{\includegraphics[width= 0.75in, height = 1.6in]{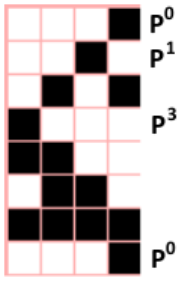}}\hspace*{3em}
				\subfloat[$n$=5]{\includegraphics[width= 0.9in, height = 4.5in]{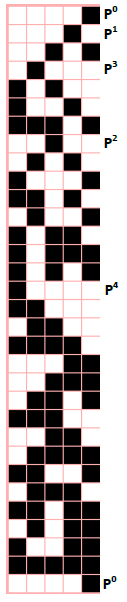}}\hspace*{3em}
				\subfloat[$n$=5]{\includegraphics[width= 0.9in, height = 4.5in]{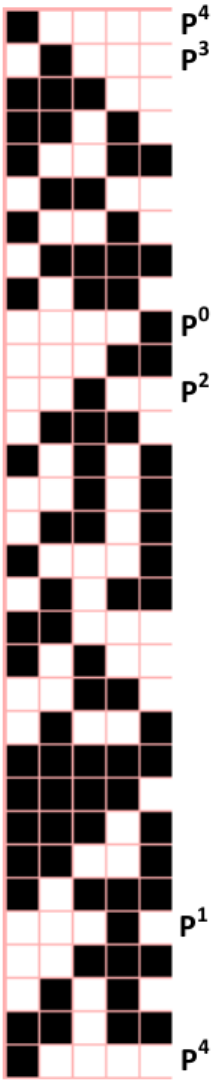}} \hspace*{3em}
				\subfloat[$n$=6]{\includegraphics[width= 0.9in, height = 3.1in]{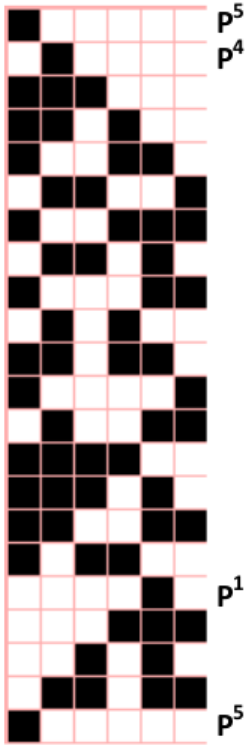}}		
\caption{Space-time diagram of $CA(90')$ and $CA(150')$. Figures (a) and (b) for $CA(90')$ where the evolutions have started from configuration $p^0(0^{n-1}1)$ and figures (c) and (d) for $CA(150')$ where evolutions have started from $p^{n-1}(10^{n-1})$. Here, white is for state $0$ and black is for state $1$.}        		
\label{figure:spacetime-max}  
%\vspace{-1.0em}      		
\end{figure}

\begin{corollary}
\label{property:totallength90'}
For an $n$-cell $CA(90')$, if all the $p$-configurations lie in a single cycle and two consecutive $p$-configurations $p^x$ and $p^y$ maintain the relation: $y=2x+1$ if $(2x+1)<n$; otherwise, $y=2(n-1-x)$, then the CA is a maximal length CA.  
\end{corollary}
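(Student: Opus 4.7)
The plan is to derive this corollary as a direct consequence of Theorem~\ref{property:CA90'strategy}, which already supplies the detailed structural description of how $p$-configurations are traversed in a $CA(90')$. First, I would invoke Theorem~\ref{property:CA90'strategy} to obtain the sequence $(m_i)_{0 \le i \le k}$ starting with $m_0 = 0$, with $length(p^{m_i}, p^{m_{i+1}}) = 2^i$ for $0 \le i < k$ and $length(p^{m_k}, p^{m_0}) = 2^k$. The recurrence defining $(m_i)$ in the theorem is identical to the hypothesis $y=2x+1$ or $y=2(n-1-x)$ stated in the corollary, so no separate verification of the transition rule is required.

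Next, I would use the hypothesis that all $p$-configurations lie in a single cycle. Since there are exactly $n$ distinct $p$-configurations $p^0, p^1, \ldots, p^{n-1}$ and the sequence $(p^{m_i})_{0 \le i \le k}$ traverses all of them before returning to $p^{m_0} = p^0$, we must have $k = n - 1$. Summing the step-lengths around the cycle then gives the total cycle length
\[
\sum_{i=0}^{n-1} 2^i \;=\; 2^n - 1.
\]

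Finally, I would argue that a cycle of length $2^n - 1$ passing through $p^0$ must consist of $2^n - 1$ pairwise distinct configurations, and none of these can be the all-zero configuration $0^n$ (otherwise the image of $0^n$ would again be $0^n$, contradicting the presence of $p^0 \neq 0^n$ in the same cycle). Since there are exactly $2^n - 1$ non-zero configurations in total, the cycle contains every non-zero configuration, and $0^n$ forms its own fixed-point cycle. By Definition~\ref{definition:MLCA} this is exactly what it means to be a maximal length CA.

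The main obstacle is essentially bookkeeping: one must carefully justify that ``all $p$-configurations in a single cycle'' forces $k = n-1$ rather than a shorter recurrence that happens to revisit only the $n$ indices $\{0,1,\ldots,n-1\}$ with some repetition. This is resolved by noting that the recurrence of Theorem~\ref{property:CA90'strategy} produces distinct indices until it closes, so the cycle through the $p$-configurations has exactly $k+1$ distinct elements; the hypothesis that all $n$ such configurations appear then fixes $k+1 = n$. Once this identification is made, the remainder of the argument is a one-line summation plus the observation about the zero fixed point.
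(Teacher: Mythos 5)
Your proposal is correct and follows essentially the same route as the paper: invoke Theorem~\ref{property:CA90'strategy} to get step-lengths $2^0, 2^1, \ldots, 2^{n-1}$ between consecutive $p$-configurations and sum them to obtain a cycle of length $2^n-1$. You simply make explicit two points the paper leaves implicit (that covering all $n$ $p$-configurations forces $k=n-1$, and that a cycle of length $2^n-1$ must consist of exactly the non-zero configurations since $0^n$ is a fixed point), which is a welcome but not essentially different elaboration.
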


\begin{proof}
%According to Definition~\ref{definition:pconfig}, there are $n$ number of $p$-configurations are present in an $n$-cell $CA(90')$. 
Let us assume that the $n$ $p$-configurations of an $n$-cell $CA(90')$ are in a single cycle and two consecutive $p$-configurations $p^x$ and $p^y$ of the cycle maintain the following relation: $y=2x+1$ if $(2x+1)<n$; otherwise, $y=2(n-1-x)$. Hence, we get from Theorem~\ref{property:CA90'strategy} that lengths between two consecutive $p$-configurations in the cycle, starting from $p^0$, are - $2^0$, $2^1$, $\cdots$, $2^{n-1}$. Therefore, length of the cycle is $length(p^0,p^0)=2^0+2^1+\cdots +2^{n-1} = 2^n-1$. Hence, the CA is a maximal length CA. 
\end{proof}

One may observe that in many occasions the sequence of $p$-configurations ($p^0,p^{m_0},p^{m_1},\cdots,$ $p^{m_{k-2}},p^{n-1},p^0$) maintaining the stated condition of two consecutive $p$-configurations in Theorem~\ref{property:CA90'strategy} does not include all the $p$-configurations. These CAs are not maximal length CAs. On the other hand, in most of the cases when all the $p$-configurations are covered maintaining the above stated relation between two consecutive $p$-configurations, then it is a maximal length CA.

In the case of $CA(150')$, similar behaviour is observed. Unlike previous, $p^{n-1}$ is considered as initial configuration. The relation between the $p$-configurations are noted below.

\begin{theorem}
\label{property:CA150'strategy}

In $CA(150')$ with $n\ge 2$ cells, following relation holds:\\
\indent\indent\indent\indent\indent\indent $length(p^{m_0},p^{o_0})=2^{0}$,\\
\indent\indent\indent\indent\indent\indent $length(p^{m_1},p^{o_1})=2^{1}$,\\
\indent\indent\indent\indent\indent\indent\indent\indent\indent\indent $\vdots$\\
\indent\indent\indent\indent\indent\indent $length(p^{m_i},p^{o_i})=2^{i}$,\\
\indent\indent\indent\indent\indent\indent\indent\indent\indent\indent $\vdots$\\
\indent\indent\indent\indent\indent\indent $length(p^{m_k},p^{o_k})=2^{k}$\\
where $m_0=n-1$, $o_0=n-2$ and the sequences $(m_i)_{0\le i\le k}$ and $(o_i)_{0\le i\le k}$ with some $k\le n-1$ obey the following relation:
\begin{center}
$m_{i+1}=
\left\{ 
\begin{tabular}{cl}
$2m_i+1$ & ~~\mbox{if}$~~$ $2m_i+1<n$\\
$2(n-1-m_i)$ & $~~$ \mbox{otherwise}
\end{tabular}
\right. $
\vspace*{1in}
$o_{i+1}=
\left\{ 
\begin{tabular}{cl}
$2o_i+1$ & ~~\mbox{if}$~~$ $2o_i+1<n$\\
$2(n-1-o_i)$ & $~~$\mbox{otherwise}
\end{tabular}
\right. $ \cite{Adak-TCS-2021}
\end{center}
\vspace{-8em}
\end{theorem}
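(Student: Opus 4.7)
The plan is to prove the theorem by induction on $i$, in parallel with the proof of Theorem~\ref{property:CA90'strategy} for the conjugate $CA(90')$ case. First I would reformulate in terms of cell positions: setting $c_i = n-1-m_i$ and $c'_i = n-1-o_i$, the $p$-configurations $p^{m_i}$ and $p^{o_i}$ become the unit column vectors $e_{c_i}$ and $e_{c'_i}$, and both the $m$- and $o$-recurrences translate into the same position recurrence
\[
c_{i+1} = \begin{cases} 2c_i - n, & \text{if } c_i > (n-1)/2, \\ n - 1 - 2c_i, & \text{if } c_i \le (n-1)/2, \end{cases}
\]
with $c_0 = 0$ and $c'_0 = 1$. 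The length assertion $length(p^{m_i},p^{o_i}) = 2^i$ is then equivalent to the matrix identity $T^{2^i} e_{c_i} = e_{c'_i}$, where $T$ is the characterization matrix of $CA(150')$ as described in Section~\ref{subsection:matrix}.

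The base case $i=0$ is immediate from the rule definitions. Since $\mathcal{R}_0 = 90$ with null boundary forces $y_0 = x_1$ and $\mathcal{R}_1 = 150$ forces $y_1 = x_0 + x_1 + x_2$, one application of $T$ to $e_0$ produces $e_1$, giving $length(p^{n-1},p^{n-2}) = 1 = 2^0$, which matches $c_0=0$ and $c'_0=1$.

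For the inductive step, I would exploit the Frobenius-type identity $T^{2^{i+1}} = (T^{2^i})^2$ valid over $GF(2)$. In the bi-infinite all-rule-150 setting, $T$ acts on Laurent polynomials as multiplication by $x^{-1} + 1 + x$, so $T^{2^i}$ acts as multiplication by $x^{-2^i} + 1 + x^{2^i}$ (cross terms vanish in characteristic $2$), yielding the clean formula $T^{2^i} e_j = e_{j-2^i} + e_j + e_{j+2^i}$. For the finite $CA(150')$, I would decompose $T = T_0 + E_{00}$, where $T_0$ is the all-rule-150 null-boundary matrix and $E_{00}$ is the rank-one correction at position $(0,0)$ that converts rule $150$ into rule $90$ at cell $0$. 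The three spreading terms of the unbounded formula are then folded back into $\{0, 1, \ldots, n-1\}$ by two mechanisms: the right null boundary at cell $n-1$ gives the standard symmetric reflection, while the rule-$90$ left boundary at cell $0$ gives a shifted reflection (the missing $x_0$ contribution in rule $90$ absorbs one of the three spreading terms). A case split on whether $c_i > (n-1)/2$ or $c_i \le (n-1)/2$ then shows that exactly two of the three spreading terms cancel against their reflected copies, leaving a single unit vector at position $c_{i+1}$ as predicted by the recurrence.

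The main technical obstacle will be formalizing the reflection calculus at the rule-$90$ left boundary. Unlike the pure rule-$150$ null boundary, where the standard image method produces a clean antisymmetric extension, the rule-$90$ boundary is non-standard and requires identifying the correct symmetry — an extension about the half-integer point $-1/2$, shifted by one cell relative to the rule-$150$ reflection — under which $CA(150')$ appears as the restriction of a bi-infinite rule-$150$ CA. Once this auxiliary reflection lemma is in place, the inductive step reduces to routine verification that the two branches of the $c$-recurrence correspond precisely to the left- and right-boundary reflection scenarios, after which the identities $T^{2^i} e_{c_i} = e_{c'_i}$ follow simultaneously for all $i$ from the shared recurrence.
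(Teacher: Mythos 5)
The paper does not actually prove this theorem: it states the result and remarks that ``Formal proofs of Theorem~\ref{property:CA90'strategy} and Theorem~\ref{property:CA150'strategy} are quite lengthy. One can find the details of the proofs in Ref.\cite{Adak-TCS-2021}.'' So there is no in-paper argument to compare against, and your proposal must stand on its own. On its own it is essentially sound, and your central insight is correct: rule $90$ at cell $0$ under null boundary is exactly what a bi-infinite rule-$150$ CA induces on configurations symmetric about the half-integer point $-1/2$ (since $x_{-1}=x_0$ kills the $x_0$ term in characteristic $2$), while the rule-$150$ null boundary at the right end corresponds to reflection about $n$. The two reflections compose to a translation by $2n+1$, so $CA(150')$ embeds into a rule-$150$ CA on $\mathbb{Z}/(2n+1)\mathbb{Z}$ folded by $k\mapsto -1-k$, each position $j\in\{0,\dots,n-1\}$ corresponding to the size-two orbit $\{j,\,-1-j\}$. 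With the Frobenius identity $(x^{-1}+1+x)^{2^i}=x^{-2^i}+1+x^{2^i}$ the folding computation you describe then closes cleanly; a tidy way to finish it is the substitution $\gamma=2c+1 \pmod{2n+1}$, under which the orbit of $c$ becomes $\{\pm\gamma\}$, both branches of the position recurrence collapse to $\gamma_{i+1}\equiv\pm 2\gamma_i$, hence $\gamma_i\equiv\pm 2^i$ and $\gamma'_i\equiv\pm 3\cdot 2^i$, and the six-term spread of the orbit of $c_i$ has $\gamma$-support $\{\pm 2^i\pm 2^{i+1}\}\cup\{\pm 2^i\}=\{\pm 3\cdot 2^i\}$ after the mod-$2$ cancellation — exactly the orbit of $c'_i$ (using $3\not\equiv 0 \pmod{2n+1}$, i.e.\ $n\ge 2$). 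Two small criticisms: the decomposition $T=T_0+E_{00}$ you mention buys you nothing, since the rank-one correction does not commute with $T_0$ and so does not interact with the power $T^{2^i}$; and once the image method is set up, no induction on $i$ is needed at all — the spreading formula holds for every $i$ directly, and the only remaining work is the orbit bookkeeping above, which also confirms that the surviving position is $c'_i$ rather than merely some single position.
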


Let us take examples to illustrate the theorem. At first, consider $n=5$. For two sequences of $p$-configurations $(p^4,p^0,p^1,p^3,p^2,p^4)$ and $(p^3,p^2,p^4,p^0,p^1,p^3)$, we observe that $length(p^4,p^3)=2^0$, $length(p^0,p^2)=2^1$, $length(p^1,p^4)=2^2$, $length(p^3,p^0)=2^3$ and $length(p^2,\-p^1)=2^4$. From these two sequences, we can extract a sequence of $p$-configurations $p^4$, $p^3$, $p^0$, $p^2$, $p^1$, $p^4$. This sequence is observed in a single cycle with length between two consecutive $p$-configurations are $2^0$, $2^3$, $2^1$, $2^4$ and $2^2$ respectively. Hence, it is a maximal length CA (see, Figure~\ref{figure:spacetime-max}(c)). In the next example, consider $n=6$. Two sequences of $p$-configurations are - ($p^5,p^0,p^1,p^3,p^4,p^2$) and ($p^4,p^2,p^5,p^0,p^1,p^3$). From where, we can get a sequence $p$-configurations as $p^5, p^4, p^1, p^5$ where $length(p^5,p^4)=2^0$, $length(p^4,p^1)=2^4$, $length(p^1,p^5)=2^2$. Since all the $p$-configurations are not in a single cycle and the CA is not maximal length (see, Figure~\ref{figure:spacetime-max}(d)).

Formal proofs of Theorem \ref{property:CA90'strategy} and Theorem \ref{property:CA150'strategy} are quite lengthy. One can find the details of the proofs in Ref.\cite{Adak-TCS-2021}.

Let us now state the greedy strategy for $CA(90')$ and $CA(150')$. Based on some logic, the procedure returns false to report that the CA is non-maximal, and true to indicate that the CA may be of maximal length. The logic is developed based on Theorem \ref{property:CA90'strategy} for $CA(90')$ and Theorem \ref{property:CA150'strategy} for $CA(150')$. The details of the procedures - {\bf procedure $CA(90')$} and {\bf procedure $CA(150')$} can be find in Ref. \cite{Adak-TCS-2021}.

\subsubsection{Generators}

Whenever {\bf procedure $CA(90')$} and {\bf procedure $CA(150')$} returns ``True'', we understand from its dynamical behaviour that there is a good chance for the CA to be a maximal length CA. Table~\ref{table:resultCA90'} and Table~\ref{table:resultCA150'} shows the effectiveness of the procedurs. The numbers in the table denote the size of $CA(90')$ and $CA(150')$ for which {\bf procedure $CA(90')$} and {\bf procedure $CA(150')$} returns ``True''. 
%In our experiment, we first find out characteristic polynomials of $CA(90')$ with mentioned sizes, and then use the available data of Ref.\cite{connor} to check if these characteristic polynomials are primitive over GF(2). If yes, then corresponding CA is a maximal length CA, otherwise not. 
In Table~\ref{table:resultCA90'} and Table~\ref{table:resultCA150'} some numbers are marked by `*'. These numbers are the sizes for which $CA(90')$ and $CA(150')$ are not maximal length CAs. That is, for these sizes, {\bf procedure $CA(90')$} and {\bf procedure $CA(150')$} returns ``True'', but the CA is not maximal length CA. Hence, these numbers are {\em false positive} if we use the procedure as maximal length CAs generator.

\begin{table}[h]
	\setlength{\tabcolsep}{5.5pt}
	\begin{center}
		%\vspace{-0.9em}	
		\caption{List of sizes when procedure of $CA(90')$ returns ``True''. *-marked sizes are false positive}
		%\vspace{-1.0em}	
		\label{table:resultCA90'}
		\resizebox{0.80\textwidth}{!}{
		\begin{tabular}{|c|c|c|c|c|c|c|c|c|c|c|}\hline
2 & 3 & 5 & 6 & 9 & 11 & 14 & $18^*$ & 23 & 26 & 29  \\\hline
30 & 33 & 35 & 39 & 41 & $50^*$ & 51 & 53 & 65 & 69 & 74  \\\hline
81 & 83 & 86 & 89 & 90 & 95 & $98^*$ & $99^*$ & 105 & 113 & 119  \\\hline
131 & $134^*$ & 135 & 146 & 155 & 158 & 173 & $174^*$ & 179 & 183 & $186^*$  \\\hline
189 & 191 & $194^*$ & 209 & 210 & 221 & 230 & 231 & 233 & 239 & 243  \\\hline
245 & 251 & 254 & 261 & $270^*$ & 273 & $278^*$ & 281 & 293 & 299 & 303  \\\hline
306 & 309 & 323 & 326 & 329 & 330 & $338^*$ & $350^*$ & $354^*$ & 359 & 371  \\\hline
375 & $378^*$ & 386 & $393^*$ & 398 & $410^*$ & 411 & 413 & $414^*$ & 419 & 426  \\\hline
429 & 431 & $438^*$ & 441 & 443 & 453 & 470 & 473 & 483 & 491 & 495  \\\hline
509 & 515 & 519 & 530 & 531 & 543 & 545 & 554 & 558 & 561 & 575  \\\hline
585 & 593 & $606^*$ & 611 & $614^*$ & 615 & 618 & 629 & 638 & 639 & 641 \\\hline
$645^*$ & $650^*$ & 651  & 653 & 659 & 683 & $686^*$ & $690^*$ & 713 & 719 & 723 \\\hline
725 & 726 & $741^*$ & 743 & 746 & 749 & 755 & 761 & 765 & 771 & 774 \\\hline
779 & $783^*$ & 785 & 791 & $803^*$ & 809 & 810 & 818 & 831 & 833 & 834 \\\hline
846 & 866 & 870 & 873 & 879 & 891 & 893 & 911 & 923 & $930^*$ & 933 \\\hline
935 & 938 & 939 & $950^*$ & 953 & 965 & $974^*$ & 975 & $986^*$ & 989 & 993 \\\hline
998 & 1013 & $1014^*$ & 1019 & $1026^*$ & 1031 & $1034^*$ & 1041 & 1043 & 1049 & 1055 \\\hline
1065 & 1070 & 1103 & $1106^*$ & $1110^*$ & 1118 & 1119 & 1121 & 1133 & $1134^*$ & 1146 \\\hline
1154 & 1155 & 1166 & 1169 & $1178^*$ & 1185 & 1194 & & & & \\\hline
\end{tabular}
		}
	\end{center}
	%\vspace{-1.5em}
\end{table}

The procedures are heavily dependent on the theorems (Theroem \ref{property:CA90'strategy} and \ref{property:CA150'strategy}). However, there are some cases when the procedures cannot detect their non-maximality, but are actually non-maximal. The strategies target to eliminate these cases by adding some filtering methods. Based on filtering, three strategies have been proposed. First two strategies directly use $CA(90')$ and $CA(150')$ respectively, whereas the last strategy uses both of them jointly. Empirical studies show that most of the time, generated CAs are maximal length CAs. Since, for few cases, generated CAs are not of maximal length, the strategies are called as imperfect. %From the above understanding, we report our strategies to generate maximal length CAs (hence primitive polynomials).

Further, to generate primitive polynomial using the {\bf procedure $CA(90')$}, then the success rate of this scheme is above 80\%. Further to improve this success rate, the authors exclude the even $n$. Observe that in Table~\ref{table:resultCA90'}, occurrence of error is higher for the cases of even $n$ than odd $n$. In fact, percentage of primitive polynomials for odd $n$ is higher than that of even $n$.
This fact can also be understood from the theory of polynomials. Note that number of primitive polynomials against a degree ($n$) is dependent on {\em Totient function} $\phi (2^n-1)$ \cite{handbookmath}, which is related to prime factors of $2^n-1$. The number of prime factors of $2^n-1$ is less when $n$ is odd, and this fact actually implies that the number of primitive polynomial against odd $n$ is higher than that of even $n$. In the case of $CA(90')$, we can improve the success rate upto 95.41\%. From the above understanding, the authors reported the first strategy. In similar fashion, the authors developed the second startegy followed by $CA(150')$.\\

%Table~\ref{table:strategy1} extracts only the odd sizes from Table~\ref{table:resultCA90'}. Here, we get 125 maximal length CAs from a set of 131 CAs. That is, only for six $n$, the CAs are not of maximal length (for $n=$ 99, 393, 645, 741, 783 and 803). If we claim all such CAs ($n$ is odd and {\bf procedure $CA(90')$} returns ``True'') as maximal length CAs, then the claim is correct in 95.41\% cases. From the above understanding, we report our first strategy to generate maximal length CAs. In similar fashion with $CA(150')$, we develop our second startegy. And the last startegy uses both ($CA(90')$ and $CA(150')$) of them jointly.\\

\noindent {\bf STRATEGY 1}: Declare $CA(90')$ of size $n$ as maximal length CA if {\bf procedure $CA(90')$} returns ``True'' for $n$ and odd $n$.\\

\noindent {\bf STRATEGY 2}: Declare $CA(150')$ of size $n$ as maximal length CA if {\bf procedure $CA(150')$} returns ``True'' for $n$ and odd $n$.\\

The authors explored the relationship between $CA(90')$ and $CA(150')$ to develop the next strategy. These two CAs of size $n$ are {\em conjugate} to each other and in Ref. \cite{JCA19} the authors published an interesting results about conjugate CAs. That is, the characteristic polynomial of $CA(90')$ with size $n$ is irreducible iff that of $CA(150')$ of same size is irreducible. These facts indicate that there are some cases where both procedures may help to detect non-maximality (one can compare the Table~\ref{table:resultCA90'} and Table~\ref{table:resultCA150'} to get an overview). 
For example, {\bf procedure $CA(90')$} and {\bf procedure $CA(150')$} both return ``True'' when $n=11$, and the characteristic polynomials of both CAs are primitive. On the other hand, for $n=15$, {\bf procedure $CA(150')$} returns ``True'' but {\bf procedure $CA(90')$} returns ``False'', and the characteristic polynomials of these CAs are reducible. Which means, there are some cases where false positive of $CA(90')$ in guessing maximality can be filtered by {\bf procedure $CA(150')$}. \\

\noindent {\bf STRATEGY 3}: Declare $CA(90')$ of size $n$ as maximal length CAs if {\bf procedure $CA(90')$} and {\bf procedure $CA(150')$} both return ``True'' for $n$.\footnote{It is interesting to note that if one interchanges the role of $CA(90')$ and $CA(150')$ in this strategy, then she can get similar result.} \\

\begin{table}[h]
	\setlength{\tabcolsep}{5.5pt}
	\begin{center}	
		\caption{List of sizes when procedure of $CA(150')$ returns ``True''. *-marked sizes are false positive} %for non-maximal length CAs	
		\label{table:resultCA150'}
		\resizebox{0.80\textwidth}{!}{
		\begin{tabular}{|c|c|c|c|c|c|c|c|c|c|c|}\hline
2 & 3 & 5 & 9 & 11 & 14 & 23 & 26 & 29 & 35 & 39 \\\hline
41 & $50^*$ & 53 & 65 & 69 & 74 & 81 & 83 & 86 & 89 & 95  \\\hline
$98^*$ & $99^*$ & 105 & 113 & 119 & 131 & $134^*$ & 146 & 155 & 158 & 173\\\hline
179 & 189 & 191 & $194^*$ & 209 & 221 & 230 & 231 & 233 & 239 & 243 \\\hline
251 & 254 & $278^*$ & 281 & 293 & 299 & 303 & 323 & 326 & 329 & $338^*$  \\\hline
$350^*$ & 359 & 371 & 375 & 386 & 398 & $410^*$ & 411 & 413 & 419 & 429  \\\hline
431 & 443 & 453 & 470 & 473 & 491 & 509  & 515 & 519 & 530 & 531  \\\hline
543 & 554 & 561 & 575 & 593 & $614^*$ & 615 & 629 & 638 & 639 & 641\\\hline
$645^*$ & $650^*$ & 653 & 659 & 683 & $686^*$ & 713 & 719 & 723 & 725 & $741^*$\\\hline
743 & 746 & 749 & 761 & 779 & $783^*$ & 785 & $803^*$ & 809 & 818 & 831\\\hline
833 & 866 & 873 & 893 & 911 & 923 & $950^*$ & 953 & 965 & $974^*$ & 975\\\hline
$986^*$ & 989 & 993 & 998 & 1013 & 1019 & 1031 & $1034^*$ & 1041 & 1043 & 1049\\\hline
1070 & 1103 & $1106^*$ & 1118 & 1119 & 1121 & 1133 & 1154 & 1155 & 1166 & 1169 \\\hline
$1178^*$ & 1185 & & & & & & & & & \\\hline		

\end{tabular}
		}
	\end{center}
\end{table}

%As like the result of {\bf procedure $CA(90')$}, the experimental results of these three strategies are shown in Table~\ref{table:strategy1}, Table~\ref{table:strategy2} and Table~\ref{table:mergeresult} (see Appendix, page~\pageref{table:strategy1}). 

The results of above strategies can be find in Ref. \cite{Adak-TCS-2021}. It has been conjectured that for Sophie Germain Primes, the strategies always generate maximal length CAs.

\begin{conjecture}
\label{conjecture-1}
Characteristic polynomials of $CA(90')$ and $CA(150')$ of size $n$ are primitive over GF(2) if $n$ and $2n+1$ both are primes (i.e. $n$ is a \textbf{Sophie Germain prime}) \cite{Adak-TCS-2021}.
\end{conjecture}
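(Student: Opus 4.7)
The plan is to reduce the conjecture, via Theorem~\ref{theorem:pp-mlca}, to showing that the two CAs in question have cycle length $2^n-1$, and to establish this by a combinatorial orbit analysis that turns into an elementary number-theoretic statement about the order of $2$ modulo $2n+1$.

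First I would exploit Corollary~\ref{property:totallength90'} and the parallel dynamical description of $CA(150')$ in Theorem~\ref{property:CA150'strategy}. Both CAs are controlled by the same map $f:\{0,\ldots,n-1\}\to\{0,\ldots,n-1\}$ with $f(m)=2m+1$ if $2m+1<n$ and $f(m)=2(n-1-m)$ otherwise. For $CA(90')$, the orbit of $0$ under $f$ must cover $\{0,\ldots,n-1\}$ to force a cycle of length $\sum_{i=0}^{n-1}2^i=2^n-1$. For $CA(150')$, where Theorem~\ref{property:CA150'strategy} provides paired sequences $(m_i,o_i)=(f^i(n-1),f^i(n-2))$ with $length(p^{m_i},p^{o_i})=2^i$, fullness of the $f$-orbit puts $n-2=f^j(n-1)$ for some $j\in\{1,\ldots,n-1\}$; the visit order in the resulting CA path is $m_0\to m_j\to m_{2j}\to\cdots$ modulo $n$, which is a single Hamiltonian walk as long as $\gcd(j,n)=1$, automatic when $n$ is prime. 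So it suffices to show that $f$ is a single $n$-cycle on $\{0,\ldots,n-1\}$ whenever $n$ is a Sophie Germain prime.

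The crux is to identify $f$ with multiplication by $2$ in the quotient group $G=(\mathbb{Z}/(2n+1)\mathbb{Z})^*/\{\pm 1\}$. Under the bijection $m\mapsto a:=m+1$, the set $\{1,\ldots,n\}$ is a canonical system of representatives for the $\{\pm 1\}$-cosets. A direct two-case check gives $f(m)+1=2a$ when $2a\leq n$ and $f(m)+1=2n+1-2a\equiv -2a\pmod{2n+1}$ when $2a>n$, so in both cases $f(m)+1\equiv \pm 2(m+1)\pmod{2n+1}$ and $f$ corresponds to the action of $\bar 2$ on $G$. With $q=2n+1$ prime, $|G|=n$, so the order of $\bar 2$ in $G$ divides the prime $n$ and is either $1$ or $n$. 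Order~$1$ would force $2\equiv\pm 1\pmod q$; the $+$ case is impossible for $q\geq 5$, and $2\equiv -1\pmod q$ would force $q\mid 3$ and hence $n=1$, contradicting the primality of $n$. Hence $\bar 2$ has order exactly $n$, $f$ is a single $n$-cycle, both $CA(90')$ and $CA(150')$ achieve period $2^n-1$, and Theorem~\ref{theorem:pp-mlca} delivers primitivity of their characteristic polynomials over $GF(2)$.

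The main obstacle is the careful bookkeeping in the dictionary between $f$ and the group action: matching the two branches of $f$ with the canonical-representative cases $2a\leq n$ and $2a>n$, and verifying for $CA(150')$ that full $f$-coverage together with $\gcd(j,n)=1$ actually produces one CA cycle of length $2^n-1$ rather than several shorter cycles whose $2^i$-lengths happen to sum correctly. Both points are implicit in Theorems~\ref{property:CA90'strategy} and~\ref{property:CA150'strategy} but would deserve explicit checking; once this dictionary is set up, the number-theoretic core is a one-line argument, and the Sophie Germain hypothesis is used minimally but essentially---primality of $n$ to collapse the divisor lattice, and primality of $2n+1$ to pin down $|G|=n$.
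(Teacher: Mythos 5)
This statement appears in the paper only as Conjecture~\ref{conjecture-1}; the paper offers no proof, so your attempt must stand on its own. Its group-theoretic core is correct and is essentially the known reason the procedures of Section~\ref{subsection:strategies} return ``True'' for Sophie Germain primes: under $m\mapsto m+1$ the map $f$ is conjugate to multiplication by $\bar{2}$ on $(\mathbb{Z}/(2n+1)\mathbb{Z})^{*}/\{\pm 1\}$, a group of order $n$ when $2n+1$ is prime, and primality of $n$ then forces $\bar{2}$ to have order $n$, so the $f$-orbit of $0$ covers $\{0,\dots,n-1\}$ and all $n$ of the $p$-configurations lie on the trajectory of $p^{0}$ with successive gaps $2^{0},2^{1},\dots,2^{n-1}$.

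The fatal gap is the very next step. That the gaps sum to $2^{n}-1$ only shows $G^{2^{n}-1}(p^{0})=p^{0}$, i.e.\ that the period of $p^{0}$ \emph{divides} $2^{n}-1$; it does not show the period equals $2^{n}-1$. (Corollary~\ref{property:totallength90'}, on which you lean, commits the same conflation of ``a return time'' with ``the minimal return time.'') The paper's own data refutes the implication you need: the starred entries of Table~\ref{table:resultCA90'} ($n=18,50,98,99,\dots$) are precisely sizes for which the $f$-orbit is full --- for $n=18$ one checks $2^{18}\equiv -1\pmod{37}$, so $\bar{2}$ has order $18$ in the quotient group --- and yet $CA(90')$ is not of maximal length, the cycle through $p^{0}$ having length a proper divisor of $2^{18}-1=3^{3}\cdot 7\cdot 19\cdot 73$. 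Since your use of the Sophie Germain hypothesis is confined entirely to establishing the full orbit, nothing in the argument excludes the analogous collapse to a proper divisor of $2^{n}-1$ when $n$ is a Sophie Germain prime; note that $2^{n}-1$ is generally composite even for prime $n$ (e.g.\ $2^{11}-1=23\cdot 89$), so irreducibility or full coverage does not yield primitivity. This missing step is exactly the content of the open conjecture, and the same objection applies verbatim to the $CA(150')$ half of your argument.
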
 

%The proof of Conjecture \ref{conjecture-1} is still unknown. So, it is an interesting open question to determine the proof of this conjecture.
%
%\begin{oproblem}
%\label{oproblem:conjecture}
%The proof of Conjecure \ref{conjecture-1}.
%\end{oproblem}

\section{Complemented Maximal Length CA}
\label{section:complemented}

%Previous defined, if at least one rule is complemented and rest are linear, then the CA is complemented. As like linear CA, complemented CA is another special case. Here, we report the complete characterization of complemented maximal length CA. An analytical framework is developed to characterize the cyclic vector subspaces of a complemented CA that can be derived from careful analysis of the vector subspaces covered by the linear CA. In \cite{NiloyPhD,NiloyCycle}, a schemed is proposed to explore the complemented CA structures having different configuration transition behaviour than that of its linear CA counterpart.

%Linear maximal length CAs are studied so far. Advantage of linear CAs is, they can be characterized by matrix algebra. As like linear CAs, we can characterize complemented CAs by matrix algebra (see Section~\ref{subsection:matrix}). The difference between linear and complemented CA to represent by matrix algebra is mainly in inversion vector ($F$). The inversion vector $F$ of a linear/complemented CA is defined as - 

Linear maximal length CAs have been studied so far. Advantage of linear CAs is, they can be characterized by matrix algebra. Matrix algebra, however, can be extended to characterize complemented CAs by introducing an inversion vector ($F$) \cite{Datta,kolintc,KolinPhd}. The inversion vector $F$ of a complemented CA is defined as -
\[
F_i=\left\{ 
\begin{array}{cl}
1, & \mbox{if the next state of the $i^{th}$ cell is resulted from inversion (complement)}\\
0, & \mbox{otherwise (linear).}\\
\end{array} 
\right.
\]
If $c_t$ represents the configuration of the CA at $t^{th}$ instant of time, then the next configuration - that is, the configuration at $(t+1)^{th}$ time instant, is given by 
\begin{equation}
\label{equation:nextstate}
c_{(t+1)} = T.c_t + F \Rightarrow c_{(t+p)} = T^p.c_t + (I+T+T^2+\cdots + T^{p-1})F,
\end{equation}
where $c_{(t+p)}$ is the configuration of CA at $(t+p)^{th}$ instant of time. For an $n$-cell CA, $F$ is the $n$ bit inversion vector with its $i^{th}$ ($0\leq i\leq n-1$) bit as 1, if complemented rule is applied on the $i^{th}$ cell; 0 if linear rule is applied. Linear CA is the special case where the inversion vector $F$ is an all 0s vector, the next state function for the linear CA gets simplified to 
\begin{equation}
\label{equation:nextstate1}
c_{(t+1)} = T.c_t \Rightarrow c_{(t+p)} = T^p.c_t
\end{equation}

\subsection{Vector Space Theoretic Analysis of Complemented CA}
\label{subsection:ACA}

A complemented CA is represented by the characteristic matrix $T$ and the non-zero inversion vector $F$. The complemented CA exhibits more varieties in cyclic behaviour than that of linear CA. %In this section, we also highlight the variation of complemented CA cycle structure with that of its linear counterpart. 
A complemented CA is reversible if and only if its linear counterpart is a reversible CA \cite{ppc1}. It signifies that the {\em cycle structure} \cite{Adak-JCA16,ACRI14,Supreeti-JCA19} of a complemented CA can be figured out from the analysis of configuration transition behaviour of the corresponding linear CA. Cycle structure of a CA represents number of cycles present in the CA along with their lengths. To illustrate the cycle structure of a CA, let us take an example. Suppose, a CA has 4 cycles of length 1, 2 cycles of length 2, 4 cycles of length 3 and 2 cycles of length 6. Then, the cycle structure of the CA is denoted as $CS=[4(1),2(2),4(3),2(6)]$. Moreover, the concept of null space and its relationship with cycle length of a linear CA can be employed for further analysis of cycle structure of a complemented CA. These concepts are reproduced below from \cite{Datta,kolintc}.

\begin{defnn}
\label{definition:nullspace}
The null space of a matrix ($T$) consists of all such vectors that can be transformed to the all-zero vector when pre-multiplied by the $T$.
\end{defnn}

\begin{theorem}
\label{theorem:submultiple}
If a linear CA, represented by $T$, has a cycle of length $k$, then the cardinality of null space of $(T^k + I)$ denotes the number of configurations forming cycles of length $k$ or sub-multiple of $k$ \cite{KolinPhd}.
\end{theorem}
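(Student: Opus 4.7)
The plan is to reinterpret membership in the null space of $T^k+I$ as a fixed-point condition for $T^k$, and then to identify this fixed-point condition with lying on a cycle whose length divides $k$. Everything takes place over GF(2), so I would first note the key algebraic observation: $(T^k+I)c=0$ is equivalent to $T^k c = c$, because in GF(2) the operations $+$ and $-$ coincide. Thus the null space of $T^k+I$ is precisely $\{c\in\{0,1\}^n : T^k c = c\}$, i.e.\ the set of fixed points of the map $G^k$.

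Next I would establish the two directions of the cycle correspondence. For the reverse direction: if $c$ lies on a cycle of length $\ell$ with $\ell\mid k$, say $k=m\ell$, then by (\ref{equation:nextstatelinear}) we have $T^k c=(T^\ell)^m c=c$, so $c$ is in the null space. For the forward direction: suppose $(T^k+I)c=0$, so $T^k c=c$. Then the orbit $\{c,Tc,T^2 c,\ldots\}$ revisits $c$ within at most $k$ steps, so $c$ is cyclic (per Definition~\ref{definition:cyclelength}) and lies on a cycle of some length $\ell$. Writing $k=q\ell+r$ with $0\le r<\ell$, the identity $T^k c=c$ gives $T^r c=c$, and minimality of $\ell$ forces $r=0$, i.e.\ $\ell\mid k$. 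Hence the null space is exactly the set of configurations sitting on cycles whose length is a divisor (``sub-multiple'') of $k$.

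Finally, I would package this into the counting statement. The number of configurations forming a cycle of length $\ell$ is exactly $\ell$ times the number of such cycles; summing over all divisors $\ell\mid k$ gives the total population of configurations living on cycles whose length divides $k$. By the equivalence above, this total equals $|\ker(T^k+I)|$. The hypothesis that $k$ is itself an actual cycle length of the CA only ensures that the count is non-trivial (it includes at least the $k$ configurations of that specific cycle); the divisibility argument itself does not require it.

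The argument is essentially routine linear algebra once the GF(2) identification $(T^k+I)c=0 \Leftrightarrow T^kc=c$ is in place; the only subtle step is the forward direction, where one must invoke reversibility (or, more precisely, the fact that any $c$ with $T^k c = c$ is automatically cyclic) to guarantee that $c$ actually lies on a cycle rather than on a transient. For complemented CAs this would need the analogue of (\ref{equation:nextstate}), but for the linear case stated here no such complication arises.
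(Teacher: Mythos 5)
Your proof is correct. Note that the paper itself does not prove this theorem --- it is stated as an imported result with only a citation --- so there is no in-paper argument to compare against; your argument (the GF(2) identification $(T^k+I)c=0\Leftrightarrow T^kc=c$, the two-way correspondence between fixed points of $T^k$ and configurations on cycles whose length divides $k$ via the division algorithm on the minimal period, and the resulting count) is the standard one and is sound, including the correct observation that the hypothesis that $k$ actually occurs as a cycle length only serves to make the count non-trivial.
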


\begin{theorem}
\label{theorem:commonfactor}
If a linear CA is represented by $T$ and for any configuration $\chi\neq 0$, $g(T)\cdot\chi = 0$, then $g(x)$ and the characteristic polynomial $f(x)$ of $T$ have a common factor $h(x)$ \cite{Datta}.
\end{theorem}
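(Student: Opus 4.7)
The plan is to establish the contrapositive: assume $\gcd(g(x), f(x)) = 1$ in $\mathrm{GF}(2)[x]$, and deduce that the hypothesis $g(T)\chi = 0$ forces $\chi = 0$, contradicting $\chi \neq 0$. This turns the result into a short argument built from two standard facts valid over $\mathrm{GF}(2)[x]$: the Cayley--Hamilton theorem and Bezout's identity.

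First, I would invoke the Cayley--Hamilton theorem applied to $T$ (a matrix over the field $\mathrm{GF}(2)$), yielding $f(T) = 0$, where $f(x) = \det(xI + T) \pmod 2$ is the characteristic polynomial introduced in Section~\ref{subsection:matrix}. Next, under the coprimality assumption $\gcd(g(x), f(x)) = 1$, Bezout's identity in the Euclidean domain $\mathrm{GF}(2)[x]$ produces polynomials $a(x), b(x) \in \mathrm{GF}(2)[x]$ with $a(x)g(x) + b(x)f(x) = 1$. Substituting the matrix $T$ for $x$ and using $f(T) = 0$, this polynomial identity collapses to the matrix identity $a(T)\,g(T) = I$.

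Finally, applying both sides to the configuration $\chi$ gives $\chi = a(T)\bigl(g(T)\chi\bigr) = a(T) \cdot 0 = 0$, contradicting $\chi \neq 0$. Therefore $\gcd(g(x), f(x))$ must be a non-constant polynomial $h(x)$, giving the claimed common factor.

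There is no genuinely hard step here: the proof is essentially linear-algebraic, and the only mild subtlety is noting that Cayley--Hamilton and Bezout both apply over $\mathrm{GF}(2)$, which is immediate because $\mathrm{GF}(2)$ is a field and hence $\mathrm{GF}(2)[x]$ is a principal ideal domain. An equally clean alternative is to consider the annihilator ideal $\{p(x) \in \mathrm{GF}(2)[x] : p(T)\chi = 0\}$; being a nonzero ideal (it contains $f$ by Cayley--Hamilton), it is generated by a unique monic polynomial $m_\chi(x)$, which then divides both $g(x)$ (by hypothesis) and $f(x)$ (by Cayley--Hamilton), and satisfies $m_\chi(x) \neq 1$ since $\chi \neq 0$. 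Either route exhibits the common factor $h(x)$ explicitly as $\gcd(g, f)$ or as $m_\chi$.
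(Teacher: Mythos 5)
Your proof is correct. Note that the paper does not actually prove this statement---it is quoted as a known result from the cited reference (Datta) and used as a black box in the proof of Theorem~\ref{theorem:identicalcycle}---so there is no in-paper argument to compare against. Both of your routes are sound and standard: the contrapositive via Bezout's identity $a(x)g(x)+b(x)f(x)=1$ combined with Cayley--Hamilton ($f(T)=0$) shows $g(T)$ is invertible when $\gcd(g,f)=1$, forcing $\chi=0$; and the annihilator-ideal argument exhibits the common factor explicitly as the relative minimal polynomial $m_\chi$, which is non-constant precisely because $\chi\neq 0$. Either one is a complete, self-contained proof of the theorem as stated.
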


The analysis of a complemented CA's configuration transition behaviour, reported in \cite{NiloyCycle}, targets solutions to the following issues:
\begin{itemize}
\item[I.] To check whether a particular cycle of length $k$ is present in the cycle structure of a complemented CA.
\item[II.] To find the special class of complemented CA $(C')$ with the cycle structure as that of its linear counterpart $C$, irrespective of its inversion vector $F$.
\item[III.] To identify the class of $C'$ whose cycle structure differs from that of $C$ as well as the properties of $F$ vectors which impart this difference.
\item[IV.] To compute the cycle structure of $C'$.
\end{itemize}

%Now, we have to identify of a cycle of length $k$ in complemented CA cycle structure. 
The following theorem enables us to determine whether a cycle of length $k$ exists in the cycle structure of a complemented CA \cite{NiloyCycle}.

\begin{theorem}
\label{theorem:lengthk}
In a complemented CA with characteristic matrix $T$ and inversion vector $F$, a cycle of length $k$ exists if \\
$rank([T^k+I]) = rank([T^k+I, \mathcal{F}])$, where $\mathcal{F}=[I+T+T^2+\cdots +T^{k-1}]F$
\end{theorem}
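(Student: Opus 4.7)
The plan is to translate the cycle-existence question into a linear-algebra solvability question via the complemented CA's closed-form evolution formula, and then apply the standard Rouché--Capelli consistency criterion.

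First I would invoke Equation~\eqref{equation:nextstate}, which gives $c_{t+k} = T^k c_t + (I + T + \cdots + T^{k-1})F = T^k c_t + \mathcal{F}$. A configuration $c$ lies on a cycle whose length divides $k$ precisely when $G^k(c) = c$, i.e.\ when
\begin{equation*}
c = T^k c + \mathcal{F}.
\end{equation*}
Because we are working over $\mathrm{GF}(2)$, addition and subtraction coincide, so this rearranges to the linear system
\begin{equation*}
(T^k + I)\,c = \mathcal{F}.
\end{equation*}
Thus the existence of such a $c$ is equivalent to the consistency of this inhomogeneous linear system over $\mathrm{GF}(2)$.

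Next I would apply the Rouché--Capelli theorem (valid over any field, including $\mathrm{GF}(2)$): the system $(T^k + I)c = \mathcal{F}$ admits a solution if and only if $\mathcal{F}$ lies in the column space of $T^k + I$, which in turn is equivalent to
\begin{equation*}
\mathrm{rank}\bigl([T^k + I]\bigr) \;=\; \mathrm{rank}\bigl([T^k + I,\, \mathcal{F}]\bigr).
\end{equation*}
Under this rank condition we obtain a configuration $c$ with $G^k(c) = c$, and so $c$ lies on a cycle whose length divides $k$; in particular, a cycle of length $k$ (or a sub-multiple of $k$, in the spirit of Theorem~\ref{theorem:submultiple}) exists.

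The main subtlety, and where I would take care, is the gap between ``$G^k(c) = c$'' and ``$c$ lies on a cycle of \emph{exactly} length $k$.'' The rank condition as stated guarantees only that the orbit length divides $k$, matching the hypothesis pattern already used in Theorem~\ref{theorem:submultiple} for the linear case. I would therefore phrase the conclusion as the existence of a cycle whose length divides $k$, noting that to isolate cycles of length exactly $k$ one would additionally subtract off the solution sets associated with each proper divisor of $k$ (via inclusion--exclusion on the null spaces of $T^d + I$ for $d \mid k$, $d < k$), analogous to how Theorem~\ref{theorem:submultiple} is refined in the linear setting. Aside from this interpretive point, the proof reduces entirely to the equivalence between solvability of a linear system and equality of the two ranks, so no further computation is needed.
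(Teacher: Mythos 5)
Your proof is correct and takes essentially the same route as the paper's: both reduce the question to the consistency of the linear system $(T^k+I)\,c=\mathcal{F}$ obtained from Equation~\eqref{equation:nextstate} and then invoke the rank criterion for solvability. Your added caveat that the rank condition only guarantees a configuration with $G^k(c)=c$, hence a cycle whose length \emph{divides} $k$, is a legitimate refinement that the paper's own proof glosses over.
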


\begin{proof}
Let $\chi$ be a configuration that falls on a cycle of length $k$ in a complemented CA $(C')$. Hence, as per Equation~\ref{equation:nextstate},\\
\hspace*{5em} $\chi = [I+T+T^2+\cdots +T^{k-1}]F + T^k\cdot \chi$\\
It can be written as
\begin{equation}
\label{equation:eq1}
[T^k+I]\cdot \chi = \mathcal{F},\quad where\quad \mathcal{F}=[I+T+T^2+\cdots +T^{k-1}]F
\end{equation}
If a cycle of length $k$ is to exist in $C'$, Equation~\ref{equation:eq1} should be consistent. The condition for consistency is  
\begin{equation}
\label{equation:eq2}
rank([T^k+I]) = rank([T^k+I,\mathcal{F}])
\end{equation}
Hence the proof.
\end{proof}

\subsection{Complemented CA with cycle structure identical to its linear counterpart}
\label{subsection:identical}

The complemented maximal length CAs can be derived from linear maximal length CAs. Against an $n$-cell linear maximal length CA, there are $2^n-1$ complemented maximal length CAs. Using linear maximal length CAs, so, we can get complemented maximal length CAs. To establish the fact theoretically, we have to identify the complemented CA that has {\em cycle structure} identical to that of a linear CA irrespective of its inversion vector $F$. %Cycle structure of a CA represents number of cycles present in the CA along with their lengths. To illustrate the cycle structure of a CA, let us take an example. Suppose, a CA has 4 cycles of length 1, 2 cycles of length 2, 4 cycles of length 3 and 2 cycles of length 6. Then, the complete cycle structure is denoted as $CS=[4(1),2(2),4(3),2(6)]$. %Following theorem establishes the fact of identical cycle structure.

Theorem~\ref{theorem:lengthk} guides identification of the complemented CA $(C')$ that has cycle structure identical to that of a linear CA $(C)$ irrespective of its inversion vector $F$. The following theorem defines the class of $C'$.

\begin{theorem}
\label{theorem:identicalcycle}
The cycle structures of complemented CA $(C')$ and linear CA $(C)$ are identical if the characteristic polynomial $f(x)$ of $T$ matrix of the complemented CA does not have a factor $(x+1)$ \cite{NiloyCycle}.
\end{theorem}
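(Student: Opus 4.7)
The plan is to construct an affine bijection on the configuration space that conjugates the complemented dynamics $c\mapsto Tc+F$ to the purely linear dynamics $c\mapsto Tc$. If such a bijection exists, it maps cycles to cycles of the same length, so the cycle structures of $C'$ and $C$ must coincide. Specifically, I would look for a translation $\phi(c)=c+v$ satisfying $\phi(Tc)=T\phi(c)+F$ for every configuration $c$. Expanding both sides gives $Tc+v=Tc+Tv+F$, which reduces to the single linear condition
\begin{equation*}
(T+I)\,v \;=\; F
\end{equation*}
over GF(2).

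The next step is to show that the hypothesis ``$(x+1)$ is not a factor of $f(x)$'' is exactly the condition that makes $(T+I)$ invertible. Since the paper defines $f(x)=\det(xI+T)\pmod 2$, substituting $x=1$ yields $f(1)=\det(I+T)\pmod 2$. Saying $(x+1)\nmid f(x)$ over GF(2) is equivalent to $f(1)\neq 0$, which is equivalent to $\det(I+T)\neq 0$, i.e. $(I+T)$ is non-singular. Hence the equation $(T+I)v=F$ has a unique solution $v=(I+T)^{-1}F$, and the translation $\phi(c)=c+v$ is a well-defined bijection on $\{0,1\}^n$.

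Having produced $\phi$, I would verify by a short induction on $p$, using Equation~\ref{equation:nextstate}, that $\phi(T^p c)=T^p\phi(c)+(I+T+\cdots+T^{p-1})F$, so $\phi$ intertwines the $p$-step maps of $C$ and $C'$ for every $p\ge 0$. Consequently $c$ lies on a cycle of length $k$ under the linear CA iff $\phi(c)$ lies on a cycle of length $k$ under the complemented CA, and the cycle structures of $C$ and $C'$ are identical.

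The main (and essentially only) obstacle is the algebraic identification $\det(I+T)=f(1)$ together with the observation that over GF(2) the condition $(x+1)\nmid f(x)$ is the same as $f(1)\ne 0$; everything else is a routine verification that an affine conjugacy preserves cycle lengths. One could alternatively phrase the argument via Theorem~\ref{theorem:lengthk}: when $(I+T)$ is invertible, $(T^k+I)$ and $(T^k+I,\mathcal{F})$ have the same rank for every $k$ because $\mathcal{F}=(I+T+\cdots+T^{k-1})F$ always lies in the column space of $T^k+I=(T+I)(I+T+\cdots+T^{k-1})$, so the number of configurations on cycles dividing $k$ is the same in $C$ and $C'$, and Möbius inversion on cycle lengths yields the identity of cycle structures; but the conjugation argument above is cleaner and also exhibits the explicit bijection $\phi$.
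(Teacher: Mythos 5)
Your proposal is correct, and it takes a genuinely different route from the paper. The paper argues through Theorem~\ref{theorem:lengthk}: for each cycle length $k$ it factors $x^k+1=(x+1)(1+x+\cdots+x^{k-1})$, uses the hypothesis $(x+1)\nmid f(x)$ together with Theorem~\ref{theorem:commonfactor} to show that the null spaces of $T^k+I$ and $I+T+\cdots+T^{k-1}$ have the same cardinality, deduces that $\operatorname{rank}(T^k+I)=\operatorname{rank}(T^k+I,\mathcal{F})$ for every $F$, and then concludes that each cycle length occurs with the same multiplicity in $C$ and $C'$. Your main argument instead builds the explicit affine conjugacy $\phi(c)=c+v$ with $v=(I+T)^{-1}F$, which exists precisely because $(x+1)\nmid f(x)$ is equivalent to $f(1)=\det(I+T)\neq 0$ over GF(2); the identity $\phi\circ G=G'\circ\phi$ then makes the two transition diagrams isomorphic as graphs, which is strictly stronger than equality of cycle structures and avoids the null-space counting and the final Möbius-type step entirely. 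Your closing remark is accurate: the rank-based variant you sketch (using $T^k+I=(T+I)(I+T+\cdots+T^{k-1})$ and the fact that $\mathcal{F}$ lies in the column space of $T^k+I$ when $T+I$ is invertible) is essentially the paper's proof in compressed form, so you have in effect given both arguments; the conjugation one is the cleaner contribution, and it also exhibits the explicit bijection between the two state spaces, which the paper's proof does not.
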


\begin{proof}
Let $k$ be the length of a cycle of the linear CA $(C)$ with characteristic matrix $T$, characteristic polynomial $f(x)$ of which does not have a factor $(x+1)$. To have a cycle of length $k$ in the corresponding complemented CA $(C')$, Equation~\ref{equation:eq2} has to be satisfied.

The number of vectors, forming a cycle of length $k$ or sub-multiple of $k$, in the complemented CA or linear CA are derived from the computation of null space of $\alpha_1 = (T^k + I)$ \cite{kolintc}.\\
Let $(x^k+1) = g(x)\cdot \phi_c(x)$; where $\phi_c(x)$ is the largest factor of the characteristic polynomial $f(x)$. Therefore, $g(x)$ and $f(x)$ don't have any common factor. Hence, for each configuration $\chi$ with $(T^k+I)\cdot\chi = 0$, there is a corresponding unique configuration $\overline{\chi}$, where $\phi_c(T)\cdot\overline{\chi} = 0$ (from Theorem~\ref{theorem:commonfactor}). Hence, the cardinality of null space of $\alpha_2=\phi_c(T)$ is same as $\alpha_1$.

Since $f(x)$ does not have a factor $(x+1)$ and $x^k+1=(x+1)\cdot (1+x+x^2+\cdots +x^{k-1})$, therefore, the cardinality of the null space of $\alpha_3=[I+T+T^2\cdots T^{k-1}]$ is same as that of $\alpha_1$ and $\alpha_2$. Hence,\\
\hspace*{6em} $rank(T^k+I) = rank(T^k+I, I+T+T^2\cdots T^{k-1})$\\
that is,\\
\hspace*{8em} $rank(T^k+I) = rank(T^k+I, \mathcal{F})$\\
directly follows for any $F$.

Therefore, all the cycle lengths of $C$ also exist in $C'$. Since the number of vectors forming each cycle length is same for both of $C$ and $C'$ (directly derived from the cardinality of null space), the cycle structures for both the CAs are identical. Hence the proof.
\end{proof}

\begin{example}
This example illustrates the result of Theorem~\ref{theorem:identicalcycle}. Let us consider a $5$-cell complemented CA $(C')$ with characteristic matrix $T$ and the inversion vector $F\neq 0$, where
\[
T=
    \begin{bmatrix}
        1  & 1   & 0  & 0   & 0 \\
        1  & 0   & 1  & 0   & 0 \\
        0  & 0   & 1  & 1   & 0 \\
        0  & 0   & 1  & 1   & 1 \\
        0  & 0   & 0  & 1   & 0 \\
    \end{bmatrix}
\]

The characteristic polynomial of $T$ is $(x^3+x+1)(x^2+x+1)$. The cycle structure of the corresponding linear CA $(C)$ is $[1(1),1(3),1(7),1(21)]$. The complemented CA as per Theorem~\ref{theorem:identicalcycle}, has the identical cycle structure as that of the linear CA, irrespective of $F$.\\
Let us consider a particular cycle of length 7. As per the theorem, we compute $\alpha_1=T^7+I$, $\alpha_2=T^3+T+1$ and $\alpha_3=T^6+T^5+T^4+T^3+T^2+T+I$: 
\[
T^7+I=
    \begin{bmatrix}
        0  & 1   & 0  & 0   & 0 \\
        1  & 1   & 0  & 1   & 0 \\
        0  & 0   & 0  & 0   & 0 \\
        0  & 0   & 0  & 0   & 0 \\
        0  & 0   & 0  & 0   & 0 \\
    \end{bmatrix}
\]

\[
T^3+T+1=
    \begin{bmatrix}
        1  & 1   & 0  & 1   & 0 \\
        1  & 0   & 0  & 0   & 1 \\
        0  & 0   & 0  & 0   & 0 \\
        0  & 0   & 0  & 0   & 0 \\
        0  & 0   & 0  & 0   & 0 \\
    \end{bmatrix}
\]

\[
T^6+T^5+T^4+T^3+T^2+T+I=
    \begin{bmatrix}
        1  & 0   & 0  & 0   & 1 \\
        1  & 1   & 0  & 1   & 1 \\
        0  & 0   & 0  & 0   & 0 \\
        0  & 0   & 0  & 0   & 0 \\
        0  & 0   & 0  & 0   & 0 \\
    \end{bmatrix}
\]

The cardinality of null space for all the three matrices (with $3^{rd}$, $4^{th}$ and $5^{th}$ rows as all zeros) is 8. Therefore, $rank(\alpha_1)=rank(\alpha_2,\alpha_3)$ and $rank(\alpha_1)=rank(\alpha_2,\mathcal{F})$; $\mathcal{F}=(I+T+T^2+T^3+T^4+T^5+T^6)\cdot F$. That is, both the $C'$ and $C$ have cycles of length 7. The number of configurations in the cycles of length 7 or sub-multiple of 7 (here it is 1) is 8. Therefore, the number of cycles of length 7 is 1 (as one configuration forms a single cycle). The complete cycle structure $CS'=[1(1),1(3),1(7),1(21)]$ is same as that of $C$.
\end{example}

The characteristic polynomials of maximal length CAs are primitive; that means polynomials are irreducible. Irreducible polynomials does not have any factor. Therefore, we can apply Theorem~\ref{theorem:identicalcycle} on maximal length CA. Hence complemented maximal length CAs can be obtained by exploiting the linear maximal length CAs. Further, we can extract the following corollary from Theorem~\ref{theorem:identicalcycle}.

\begin{corollary}
\label{corollary:complemented}
For every linear maximal length CA, we can get complemented maximal length CAs.
\end{corollary}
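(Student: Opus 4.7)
The plan is to derive this corollary as a direct consequence of Theorem~\ref{theorem:identicalcycle} together with the relationship between maximal length CAs and primitive polynomials established in Theorem~\ref{theorem:pp-mlca}. The goal is to exhibit, for any given linear maximal length CA, a family of complemented CAs whose cycle structures coincide with that of the linear one, and hence are themselves maximal length.

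First I would fix an $n$-cell linear maximal length CA $C$ with characteristic matrix $T$ and characteristic polynomial $f(x)$. By Theorem~\ref{theorem:pp-mlca}, $f(x)$ is primitive over GF(2), hence in particular irreducible of degree $n$. For $n \ge 2$ this forces $f(x) \ne x+1$, and since $f$ is irreducible the only monic factors of $f$ are $1$ and $f$ itself, so $(x+1)$ is not a factor of $f(x)$. This is exactly the hypothesis required to invoke Theorem~\ref{theorem:identicalcycle}.

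Next I would construct the complemented CAs explicitly. Pick any non-zero inversion vector $F \in \{0,1\}^n$, and define the complemented CA $C'$ by the pair $(T,F)$, with next-state function $c_{t+1} = T\cdot c_t + F$ as in Equation~(\ref{equation:nextstate}). By Theorem~\ref{theorem:identicalcycle}, the cycle structure of $C'$ is identical to that of $C$, regardless of the choice of $F$. Since $C$ is a maximal length CA, its cycle structure consists of a single fixed point ($0^n$) together with one cycle of length $2^n - 1$, so the same holds for $C'$. Thus $C'$ is itself a maximal length CA (with, in general, a non-zero marginal configuration, consistent with the discussion in Section~\ref{section:MLCA}). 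Since there are $2^n - 1$ choices of non-zero $F$, we in fact obtain $2^n - 1$ complemented maximal length CAs from each linear maximal length CA.

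There is no real obstacle here since all ingredients are already in place; the only subtle point worth flagging explicitly is ruling out $f(x) = x+1$, which cannot occur for maximal length CAs of size $n \ge 2$ because a degree-$n$ primitive polynomial has degree $n$. The argument therefore reduces to a clean one-line application of Theorem~\ref{theorem:identicalcycle} once the primitivity-implies-no-$(x+1)$-factor observation is recorded.
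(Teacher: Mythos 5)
Your proposal is correct and follows essentially the same route as the paper, which derives the corollary from Theorem~\ref{theorem:identicalcycle} via the observation that a primitive (hence irreducible) characteristic polynomial cannot have $(x+1)$ as a factor, so every non-zero inversion vector $F$ yields a complemented CA with the same cycle structure. Your explicit remark ruling out $f(x)=x+1$ for $n\ge 2$ is a small tidying-up of the paper's looser claim that ``irreducible polynomials does not have any factor,'' but the argument is otherwise identical.
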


To get complemented maximal length CAs, we have to use the complemented rules of rule 90 and rule 150. The respective complemented rules are 165 and 105. Only using these two rules, we can design complemented maximal length CAs from a linear maximal length CA \cite{NiloyPhD,NiloyCycle}. According to the above discussion, if any 90 or 150 in the rule vector of a linear maximal length CA is replaced by its complement (that is, 165 or 105), the new CA remains as a maximal length CA. For example, the CA (90, 150, 90, 150) is a linear maximal length CA, and so (165, 105, 90, 150) (Figure~\ref{figure:st-165-105-90-150}), (90, 150, 165, 150), etc. are also maximal length CAs. Any number of rules in a rule vector of a linear maximal length CA can be replaced by their complements to get a new complemented maximal length CA. Therefore, against an $n$-cell linear maximal length CAs, there are $2^n-1$ number of complemented CAs which are maximal length.

\begin{figure}
	\centering
	\includegraphics[width= 4.3in, height = 1.7in]{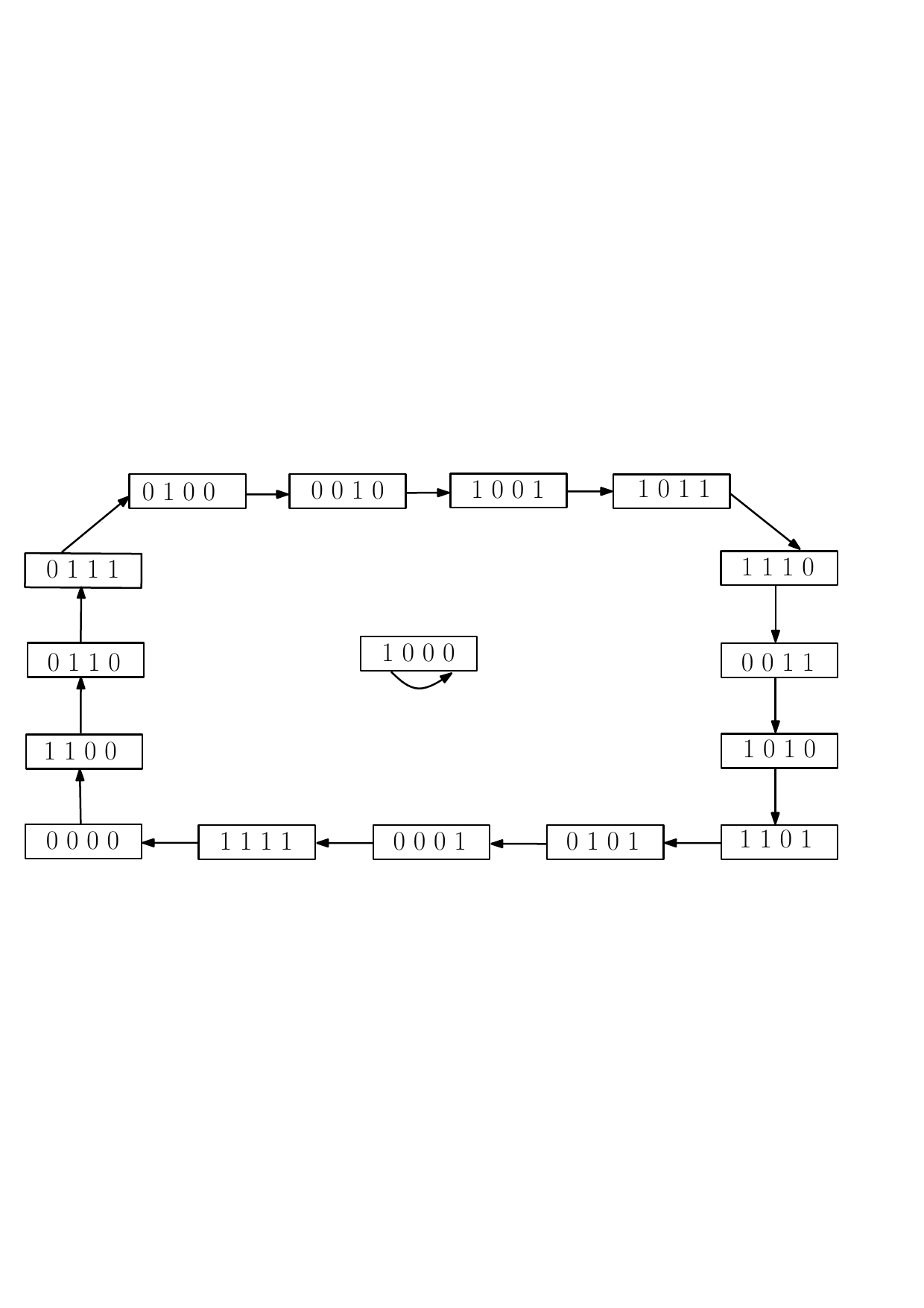}
	\caption{Transition diagram of CA $(165, 105, 90, 150)$}
	%\vspace{-0.5em}  
	\label{figure:st-165-105-90-150}
\end{figure}

\section{Non-linearity in Maximal Length CAs}
\label{section:nonlinear}

%Provisionally, maximal length CAs are linear and complemented and have been used in different applications. Further, linear maximal length CAs suffer from some drawbacks. Firstly, the availability of $n$-degree primitive polynomials is limited. Besides, linear maximal length sequences are not secure. So, there is necessity of a construction that can provide both non-linearity and as well as maximal length sequence. Basically, the question is can we add some non-linearity in maximal length CA? And the answer is yes, that means we can get maximal length CA where the CA is non-linear. This section is based on 1-dimensional 2-state 3-neighborhood CA.

The maximal length CAs, studied so far, are linear and complemented, where all the rules in rule vectors are either linear or complemented rules. Out of 256 rules, there are eight linear rules and eight complemented rules. There are 240 rules which are non-linear. If any one rule of a CA is non-linear, then the CA is non-linear.

There have been some efforts to introduce non-linearity in maximal length CAs~\cite{ACRIDRC10,jcaDasR11,ACRIDRC14,ACRI18-Adak}. The technique referred in~\cite{ACRIDRC10} manipulates the number of clock cycles, based on inputs, in a maximal length CA. This method becomes unsynchronized for different inputs. An efficient technique is devised for generating non-linear maximal length CA from linear maximal length CA by injecting non-linearity in different cell positions \cite{ACRIDRC14}. The effect of the non-linearity can be propagated among multiple cells by shifting the non-linear function. However, such a non-linear CA increases neighborhood dependencies of some cells; for example the 3-neighborhood dependency, which is traditionally used by the maximal length CAs, may increase to 5-neighborhood dependency. %Hence the question remains: do there exist non-linear maximal length CAs within 3-neighborhood dependency? Still there is no algorithm which can generate non-linear maximal length CA within 3-neighborhood dependency. 
%Further, by ``non-linear CA'' we shall mean ``non-linear non-uniform CA''.

%There have been some researchers to introduce non-linearity in maximal length CAs~\cite{ACRIDRC10,jcaDasR11,ACRIDRC14,ACRI18-Adak}. The technique referred in~\cite{ACRIDRC10} manipulates the number of clock cycles, based on inputs, in a maximal length CA. This method becomes unsynchronized for different inputs. An efficient technique~\cite{ACRIDRC14} is devised for generating non-linear maximal length CA from linear maximal length CA by injecting non-linearity in different cell positions. The effect of the non-linearity can be propagated among multiple cells by shifting the non-linear function. However, it incurs increasing neighborhood dependency. For optimal design, the construction of non-linear maximal length CA limits upto 5 neighborhood. This motivates us to figure out if there exists a non-linear maximal length CA without exceeding the neighborhood dependency. Consequently, in~\cite{ACRI18-Adak} we classifies the rules based on some parameter values, which determine the degree of dependence of a cell on its neighbors. Then to construct the CA, we select rules randomly by assigning more priority to those rules which have higher values for those parameters. Our experiments show that non-linear reversible CAs constructed in this manner, can have cycles of maximal length. Further, by ``non-linear CA'' we shall mean ``non-linear non-uniform CA''.

%\subsection{Synthesizing Non-linear Maximal Length CA}
%\label{subsection:snthesize-non-linear}

In~\cite{ACRIDRC14}, the authors have provided an algorithm to construct non-linear maximal length CA from a given linear maximal length CA. But, there is no proof behind the correctness of the algorithm. The algorithm uses a {\em neighbor set} $\mathbf{N}(i)$, for each cell $i$, which is the set of cells on which the $i^{th}$ cell is dependent on each iteration. For a three neighborhood CA, $\mathbf{N}(i) = \{i-1, i, i+1\}$ for any $i$. Let $f_i$ be the feedback function of the $i^{th}$ cell, then define an operation called shifting as followed. 

\begin{defnn}
\label{defintition:shifting}
The one cell shifting operation, denoted by $f_i\xrightarrow[]{\text{P}}f_{i\pm 1}$ moves a set of algebraic normal form monomials $P$ from $i^{th}$ cell of a non-linear CA to all the cells from $(i-1)$ to $(i+1)^{th}$ cell, according to the dependency of the affected cells upon the $i^{th}$ cell. Each variable in $P$ is changed by their previous state. Similarly, a $k$ cell shifting is obtained by applying the one cell shifting operation for $k$ times upon the initial non-linear CA and symbolized as $f_i\xrightarrow[]{\text{P}}f_{i\pm k}$.
\end{defnn}

The details of the scheme have been presented in~\cite{ACRIDRC14}. Before injecting non-linearity to the required position of a given linear maximal length CA, a shifting operation is applied to the corresponding non-linear function at the required position to retain the maximal length. The procedure to synthesize a non-linear maximal length CA from a linear maximal length CA is depicted in following algorithm. % followed by an example.
In Ref. \cite{ACRIDRC14}, one can find the details of the algorithm and example.

\begin{algorithm}
\caption{Synthesize Non-linear Maximal Length CA}
\label{algorithm:decidenonlinearmaximality}
\small
\hspace*{\algorithmicindent} \textbf{Input} A $n$-cell linear maximal length CA ($\mathcal{R}$), position $j$ to inject non-linearity\\
\hspace*{\algorithmicindent} \textbf{Output} A non-linear maximal length CA ($\mathcal{R}'$)

\begin{algorithmic}[1]
	
\STATE $\mathcal{R}'\leftarrow\mathcal{R}$
\STATE Let $\mathcal{R}' = \{f_{n-1}, \cdots, f_0\}$
\STATE $\mathcal{X} \subset F : \forall x\in\mathcal{X}, x\notin \mathbf{N}(j)$ \hfill$\triangleright$ {select a subset from $F$}
\STATE $P\leftarrow \mathbf{f}_{\mathbf{N}}(\mathcal{X})$ \hfill$\triangleright$ {$\mathbf{f}_{\mathbf{N}}$ is a non-linear function}
\STATE $f_j\leftarrow f_j \oplus P$
\STATE ($f_j\xrightarrow[]{\text{P}}f_{j+1}$) \hfill$\triangleright$ {applying shifting operation}
\STATE $f_j\leftarrow f_j\oplus P$
\STATE {\bf return} $\mathcal{R}'$

\end{algorithmic}
\end{algorithm}

%\begin{example}
%Let us consider a linear maximal length CA synthesized from the primitive polynomial $x^5+x^2+1$ to obtain the rule set $\mathcal{R}' = \{f_{0}, \cdots, f_{4}\}$ such that:\\
%\hspace*{10em} $f_0 = x_1\oplus x_0$ \\
%\hspace*{10em} $f_1 = x_2\oplus x_1\oplus x_0$ \\
%\hspace*{10em} $f_2 = x_3\oplus x_2\oplus x_1$ \\
%\hspace*{10em} $f_3 = x_4\oplus x_3\oplus x_2$ \\
%\hspace*{10em} $f_4 = x_3$ \\
%We inject the non-linearity into the bit $j=2$. Let, $\mathcal{X} = \{0,4\}$ and $\mathbf{f}_{\mathbf{N}}(a,b) = a \& b$ where $a$ and $b$ are two boolean one bit variables. So, according to the algorithm, the function $f_j$ would be updated like $f_j = f_j\oplus \mathbf{f}_{\mathbf{N}}(x_0\&x_4)$ $= f_j\oplus (x_0\&x_4)$. Now, applying the shifting operation $f_j\xrightarrow[]{\text{$x_{0}\& x_{4}$}}f_{j+1}$, the whole rule set is : \\
%\hspace*{8em} $f_0 = x_1\oplus x_0$ \\
%\hspace*{8em} $f_1 = x_2\oplus x_1\oplus x_0 \oplus ((x_0\oplus x_1)\&x_3)$ \\
%\hspace*{8em} $f_2 = x_3\oplus x_2\oplus x_1 \oplus (x_0\&x_4) \oplus ((x_0\oplus x_1)\&x_3)$ \\
%\hspace*{8em} $f_3 = x_4\oplus x_3\oplus x_2 \oplus ((x_0\oplus x_1)\&x_3)$ \\
%\hspace*{8em} $f_4 = x_3$ \\
%This rule set produces maximal length CA which is non-linear.
%\end{example}

However, the question remains - without exceeding the neighborhood dependency can we get non-linear maximal length CAs. In \cite{ACRI18-Adak}, the authors generate non-linear maximal length CAs from linear maximal length CAs in the 3-neighborhood dependency. However, no formal procedure was presented in Ref.\cite{ACRI18-Adak} to study the generation of non-linear maximal length CA. This is mainly due to the challenge in applying mathematical tools  like linear algebra, primitive polynomial etc. to analyze the properties of non-linear CAs. In recent one \cite{Ascat2023-Sumit}, the authors address this issue by introduced few concepts that help us to identify non-linear CAs that can potentially generate maximal length CAs.

%However, the question remains - without exceeding the neighborhood dependency can we get non-linear maximal length CAs. In a preliminary study \cite{ACRI18-Adak}, the authors answered this question in an affirmative way. We had hinted that it is also possible to generate a non-linear maximal length CA in 3-neighborhood dependency. However, no formal procedure was presented in Ref.\cite{ACRI18-Adak} to study the generation of non-linear maximal length CA. This is mainly due to the challenge in applying mathematical tools  like linear algebra, primitive polynomial etc. to analyze the properties of non-linear CAs. In our recent research \cite{Adak-Thesis}, we address this issue by introduced few concepts that help us to identify non-linear CAs that can potentially generate maximal length CAs.

At first, they have investigated some essential properties of non-linear maximal length CAs. For example, if a (non-uniform) CA  has a {\em blocking word}, it cannot be a maximal length CA. Further, a maximal length CA should have only one single length cycle. Further, the number of non-linear rules cannot be more than $n/2$ for an $n$-cell non-linear maximal length CAs. These properties help in straightway identifying some of the non-linear CAs that won't generate maximal length cycles, without even computing their cycle lengths.

The main approach is to systematically explore if mathematical tools of matrix and linear algebra can be applied with minor modifications for the case of non-linear maximal length CAs. To this effect, they have studied if a non-linear maximum length CA can be generated by starting from a linear CA and then changing a very few rules of it to non-linear. Introduced the notion of {\em isomorphism}, which acts as a  similarity measure between the cycle structures of two CAs. In particular, if a non-linear CA is isomorphic (see Definition~\ref{isomorphic}) to a linear maximal length CA, then the former is also a maximal length CA. They developed theories to determine the conditions under which a non-linear CA can be isomorphic to linear maximal length CA. Finally, an algorithm has been developed for the same (details of the algorithm in Ref.\cite{Ascat2023-Sumit}).

\begin{defnn}
\label{isomorphic}
Let $G_1$ and $G_2$ be two CAs of length $n$. We say that $G_2$ is isomorphic to $G_1$ iff - 
\begin{enumerate}
\item transition diagram of $G_2$ is isomorphic~\footnote{Two graphs which contain the same number of vertices connected in the same way are said to be isomorphic.} to that of $G_1$.
\item there exist a non-empty set of configuration $C$ such that $\forall x\in C$, $G_1(x) = G_2(x)$. 
\end{enumerate}
\end{defnn}

Table~\ref{table:nonlinearresults} notes few non-linear maximal length CAs which we get using the above method. The second column of Table~\ref{table:nonlinearresults} shows the linear maximal length CAs, whereas the third and fourth columns show at which position(s) of the linear CAs the non-linear rule(s) is (are) put, and the the non-linear rule(s) respectively. The linear rule which is replaced are marked in bold face. For example, positions of bold faced rules for size = 24 are 5 and 21. In the corresponding non-linear maximal length CA ($G_2$), rules 30 and 86 are placed in there positions respectively. 

{\small 
\begin{table}
\caption{Linear maximal length CAs to non-linear maximal length CAs. Rule with bold face is replaced by non-linear one}
\label{table:nonlinearresults}
\resizebox{\textwidth}{!}{\begin{tabular}{|c|c|c|c|} \hline

   &  & \multicolumn{2}{|c|}{$G_2$} \\ \cline{3-4}
  \textbf{Size} & $G_1$ & \textbf{Positions} & \textbf{Non-linear} \\
  & & & \textbf{rules} \\\hline
  
  4 & (6,90,\textbf{150},80) & 2 & 89 \\\hline
  5 & (6,150,\textbf{150},150,80) & 2 & 75 \\\hline
  6 & (6,90,\textbf{90},150,90,20) & 2 & 86\\\hline
  7 & (10,90,150,\textbf{90},150,90,20) & 3 & 169 \\\hline
  8 & (6,150,150,90,150,\textbf{150},150,20) & 5 & 154 \\\hline
  9 & (10,\textbf{150},150,150,90,\textbf{90},90,90,20) & 1,5 & 30,58  \\ \hline
  10 & (10,150,150,\textbf{90},90,90,90,150,150,20) & 3 & 101 \\\hline
  11 & (6,90,150,150,\textbf{150},90,150,90,90,150,20) & 4 & 86  \\\hline
  12 & (10,90,\textbf{150},150,150,150,90,\textbf{90},90,150,150,20) & 2,7  & 86,149  \\\hline
  13 & (6,90,150,\textbf{90},\textbf{90},90,90,150,90,150,90,150,20) & 3,4 & 210,101  \\\hline
  14 & (10,150,90,\textbf{150},\textbf{90},150,90,90,90,90,90,90,90,20) & 3,4 & 45,169 \\\hline
  15 & (10,150,\textbf{90},90,90,150,150,90,150,90,150,90,150,150,20) & 2 & 53 \\\hline
  16 & (6,150,90,90,90,90,90,90,90,\textbf{150},90,90,150,90,90,20) & 9 & 154 \\\hline
  17 & (6,90,90,90,\textbf{90},150,90,150,90,150,90,150,90,150,90,90,20) & 4 & 101 \\\hline
  18 & (6,150,150,90,\textbf{150},90,150,90,150,90,90,150,90,90,150,90,150,80) & 4 & 225 \\\hline
  19 & (10,90,150,150,\textbf{90},\textbf{90},150,90,90,90,150,150,150,150,90,150,90,90,20) & 4,5 & 30,154  \\\hline
  20 & (10,90,150,150,150,150,150,90,90,90,90,150,150,\textbf{150},90,90,90,150,150,20) & 13 & 86\\\hline
\end{tabular}}
\end{table}}

It is also an interesting open question to determine a function which can generate non-linear maximal length CAs from a linear maximal length CA in polynomial time.
\begin{oproblem}
\label{oproblem:analysis}
Generation of non-linear maximal length CAs from a linear maximal length CA in polynomial time.
%Develop an algorithm which can find a given CA as maximal length in polynomial time.
\end{oproblem}

\section{Maximality over GF($q$)}
\label{section:MLCAGF(q)}

The topics discussed till now are related to the binary CAs. In this section, we extend the ideas to the CAs defined over the finite filed GF($q$), $q$ is a prime. %Here, $q$ is the number of states of a CA cell. 
The relevant background of linear machines over finite fields can be found in~\cite{McEliece87,Lidl1986}. The cells can be in any of the states of $\mathcal{S} = \{0,1,\cdots, q-1\}$. A configuration of the CA is a mapping ${c}$: $\mathcal{L}$ $\rightarrow$ $\mathcal{S}$. Let $\mathcal{C}$ be the set of all possible configurations of an $n$-cell CA. Then, $|\mathcal{C}|$=$q^n$. Now, we can define the linear rules in $q$-state CA which is similar to Definition~\ref{definition:linearrules}.

%Before that, all the topic which are discussed are belongs to the binary CA. But, in here, we studied the CAs which are particular linear machines over the finite filed GF($q$), $q$ is a prime. The relevant background of linear machines over finite fields can be found in~\cite{McEliece87,Lidl1986}. That means, $q$ is the number of states of a CA cell. The cells can be in any of the states $\mathcal{S} = \{0,1,\cdots, q-1\}$. Each cell $i$ of the CA updates its state depending on the present state of itself, and its left and right neighbors following a local update rule $R_i:{\mathcal{S}}^3 \rightarrow {\mathcal{S}}$. A configuration of the CA is a mapping ${c}$: $\mathcal{L}$ $\rightarrow$ $\mathcal{S}$ where $\mathcal{L}$ is a 1-dimensional lattice. Let $\mathcal{C}$ be the set of all possible configurations of an $n$-cell CA (that is $|\mathcal{C}|$=$q^n$). Then, a CA is a function $G_n$: $\mathcal{C}$ $\rightarrow$ $\mathcal{C}$, which satisfies the following conditions: $y=G_n(x)$, $x,y \in {\mathcal{C}}$, where $x=(x_i)_{0\leq i\leq {n-1}}$, $y=(y_i)_{0\leq i\leq {n-1}}$ and $y_i=R_i(x_{i-1},x_i,x_{i+1})$. The CAs have \emph{null boundary condition} where left and right neighbors of cell 0 and cell $n-1$ respectively are always in state 0; that is, $y_0=R_0(0,x_1,x_2)$ and $y_{n-1}=R_{n-1}(x_{n-2},x_{n-1},0)$. Now, we can define the linear rules in $q$-state which is similar as Definition~\ref{definition:linearrules}.

\begin{defnn}
\label{definition:linearrule}
A rule ${R}:\mathcal{S}^3\rightarrow \mathcal{S}$ is called \textbf{linear} if $R$ can be expressed as $R(x,y,z)= ax + dy + bz\pmod{q}$, for some $a,b,d\in \mathcal{S}$. 
\end{defnn}

To state a rule, we can use the triplet: ($a,d,b$). Let us represent a rule for cell $i$ as $(a_i,d_i,b_i)$. For a CA in GF($q$), there are in total $q^3$ linear rules. For example, number of linear rules in GF(2), GF(3) and GF(5) are $8$, $27$ and $125$ respectively. In this section, by ``CA'' we shall mean $1$-dimensional $q$-state $3$-neighborhood non-uniform CAs having null boundary condition. The rule vector of an $n$-cell CA is ${\mathcal R}=({\mathcal R_0},~ {\mathcal R_1}, \cdots, {\mathcal R_i}, \cdots, {\mathcal R_{n-1}})$ where cell $i$ follows ${\mathcal R_i}\equiv (a_i,d_i,b_i)$.

%Further, a CA $G_n$ is linear, if every rule $R_i$ of it is linear. Also we need a rule vector ${\mathcal R}=( {\mathcal R_0},~ {\mathcal R_1}, \cdots, {\mathcal R_i}, \cdots, {\mathcal R_{n-1}})$ to define an $n$-cell non-uniform CA, where cell $i$ follows ${\mathcal R_i}$. But, in here, the representation of rules are different from previous. We call these constants $a_i, d_i, b_i$ as triplet and represent the linear rule in terms of this triplet $( a_i, d_i, b_i)$. For a CA in GF(q), there are total $q^3$ number of possible linear rules. For example, number of linear rules for GF(2), $GF(3)$ and $GF(5)$ are $8$, $27$ and $125$ respectively. Further, in this section, by ``CA'' we shall mean $1$-dimensional $q$-state $3$-neighborhood non-uniform CAs having null boundary condition, where each rule of the rule vector is linear and represented by its triplet form.

%As previous, an $n$-cell $3$-neighborhood linear CA can also be expressed by a square matrix ($T$) of dimension $n$ which shows the dependency of a cell on other cells. This matrix is tridiagonal, where the lower, upper and main diagonals depend on the rules (triplet). It can be represented as previous which discussed in Section~\ref{subsection:relation}. However, we get the characteristic polynomial of this matrix is $P(x)=det(xI-T)$ $\pmod q$. The characteristic polynomial in $GF(q)$ can be calculated using the previous recurrence relation (see, Lemma~\ref{lemma:recurrence}), where $\Delta = det(xI-T) \pmod q$. Lets take an example.

As per previous discussion (see Section~\ref{subsection:matrix}), an $n$-cell 3-neighborhood linear CA can also be expressed by a square matrix ($T$) of dimension $n$. The matrix shows the dependency of a cell on other cells. %The matrix is tridiagonal, where the lower, upper and main diagonals depend on the rules. 
Under null boundary condition, it can be represented as the following: 
\[
T=
    \begin{bmatrix}
        d_0 & b_0 & 0 & \cdots & 0 & 0 \\
        a_1 & d_1 & b_1 & \ddots &  & 0 \\
        0 & a_2 & d_2 & \ddots &  & \vdots \\
        \vdots & \ddots & \ddots & \ddots & \ddots & \vdots \\
        0 &  &  & \ddots & d_{n-2} & b_{n-2} \\
        0 & 0 & \cdots & \cdots & a_{n-1} & d_{n-1} \\
    \end{bmatrix}_{n \times n}
\]

The characteristic polynomial of this matrix is $P(x)=det(xI+T)$ $\pmod q$. The characteristic polynomial in GF($q$) can be calculated using the previous recurrence relation (see Lemma~\ref{lemma:recurrence}), where $\Delta = det(xI+T) \pmod q$. 
\begin{equation}
\label{equation:polynomial}
{\Delta} = {\Delta}_n = (x+d_{n-1}){\Delta}_{n-1} + b_{n-2} a_{n-1} {\Delta}_{n-2}
\end{equation}

%Let us take an example.

\begin{example}
\label{Matrix_example}
Let us consider a 3-state 3-cell CA with rule vector ((1,2,1), (2,0,1), (2,0,2)). The characteristic polynomial of this matrix is $det(xI+T) \pmod 3$ = $x^3+x^2+2x+1$.
\[
T=
    \begin{bmatrix}
        2   & 1   & 0   \\
        2   & 0   & 1   \\
        0   & 2   & 0   \\
    \end{bmatrix}_{3 \times 3}
\]
%Similarly, for the 2-state $4$-cell CA of $((1,1,1), (1,0,1), (1,1,1), (1,1,1))$, the matrix is \[
%T''=
%    \begin{bmatrix}
%        1   & 1   & 0   & 0   \\
%        1   & 0   & 1   & 0   \\
%        0   & 1   & 1   & 1    \\
%        0   & 0   & 1   & 1   \\
%    \end{bmatrix}_{4 \times 4}
%\]
%and the characteristic polynomial is $x^4+x^3+1\pmod2$ .
\end{example}

Figure~\ref{figure:transition2} shows a transition diagram for the 3-state 3-cell CA ((1,2,1), (2,0,1), (2,0,2)). Here, the next configuration $c'$ of any configuration $c$ can be calculated by multiplying $c$ with $T$, where the configuration $c$ is represented by an $n\times1$ column vector. The CA of Figure~\ref{figure:transition2} is reversible; in fact it is a maximal length CA.

%However, if a CA has only two cycles in its configuration space where all configurations except $0^n$ (for linear CAs) are part of the same cycle, it is called a \emph{maximal length} CA. Further, we can recall the definition of maximal length CA - a CA is maximal length if all but one configuration of it are in a single cycle.  

\begin{figure}
	\centering
	\includegraphics[width= 4.8in, height = 1.4in]{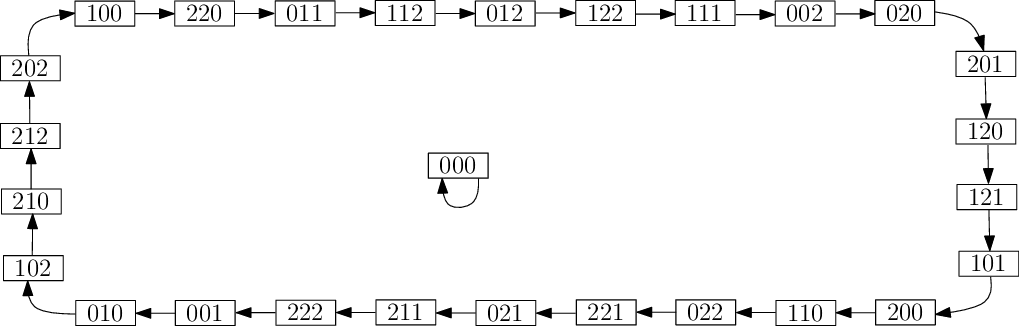}
	\caption{Transition diagram of 3-state CA ((1,2,1), (2,0,1), (2,0,2))}
	%\vspace{-1.0em}  
	\label{figure:transition2}
\end{figure}

%\begin{defnn}
%\label{definition:MLCA}
%An $n$-cell (linear) CA is maximal length if, for a configuration $x\in \mathcal{C}\setminus \{0^n\}$, $x={G_n}^{q^n-1}(x)$, but $x\neq {G_n}^t(x)$ where $1\leq t < q^n-1$.
%\end{defnn}

For a linear maximal length CA over GF($q$), the length of the cycle is $q^n-1$. Theorem~\ref{theorem:mlcagf(q)} relates the maximal length CA and primitive polynomial over GF($q$). The characteristic polynomial of the CA of Figure~\ref{figure:transition2} is $x^3+x^2+2x+1\pmod3$, which is primitive over GF(3).  

%In Theorem~\ref{theorem:mlcagf(q)}, we relate the maximal length CA and primitive polynomial over $GF(q)$ after summarize the works of Cattell et. al.~\cite{Cattell2,CattellTh}. As an example, the characteristic polynomial of Figure~\ref{figure:transition2} is $x^4+x^3+1\pmod2$ which is primitive over $GF(2)$.

\begin{theorem}
\label{theorem:mlcagf(q)}
A $q$-state linear CA is maximal length if and only if the characteristic polynomial of this CA is primitive over GF($q$) \cite{Cattell2}.
\end{theorem}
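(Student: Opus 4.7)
The plan is to lift the GF(2) argument of Theorem~\ref{theorem:pp-mlca} to GF($q$) by identifying the dynamics of $T$ on $V:=\text{GF}(q)^{n}$ with multiplication by a root of $P(x)$ in the algebra $\text{GF}(q)[x]/(P(x))$. Since the linear evolution satisfies $c_{t+k}=T^{k}c_{t}$, the CA is maximal length precisely when the cyclic group $\langle T\rangle$ acts transitively on $V\setminus\{0\}$ with common orbit length $q^{n}-1$. The whole proof is the translation of this transitivity condition into the statement that $P(x)$ is primitive, via the $\text{GF}(q)[x]$-module structure on $V$ in which $x$ acts as $T$.

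For the ``if'' direction, suppose $P(x)$ is primitive of degree $n$. Then $P$ is irreducible, so it equals the minimal polynomial of $T$; hence $V$ is a cyclic $\text{GF}(q)[x]$-module and there is an isomorphism $V\cong \text{GF}(q)[x]/(P(x))\cong \text{GF}(q^{n})$ under which $T$ becomes multiplication by a root $\alpha$ of $P$. Primitivity says $\alpha$ generates $\text{GF}(q^{n})^{*}$, a cyclic group of order $q^{n}-1$, so for any nonzero $v$ the sequence $v,\alpha v,\alpha^{2}v,\dots,\alpha^{q^{n}-2}v$ exhausts $\text{GF}(q^{n})^{*}$. Transporting back to $V$, the $q^{n}-1$ nonzero configurations lie on a single $T$-cycle, and the CA is maximal length.

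For the ``only if'' direction, assume the CA is maximal length. For any nonzero $v\in V$ the $T$-orbit equals $V\setminus\{0\}$, so the cyclic submodule $\text{GF}(q)[x]\cdot v$ is all of $V$; thus $V$ is cyclic and the minimal polynomial coincides with $P$. If $P$ were reducible — whether factoring as $f_{1}^{e_{1}}\cdots f_{r}^{e_{r}}$ with $r\geq 2$, or as $f^{e}$ with $e\geq 2$ — the primary decomposition of $V$ (or the submodule annihilated by $f^{e-1}$, respectively) would give a proper nontrivial $T$-stable subspace, contradicting the transitive action on $V\setminus\{0\}$. Hence $P$ is irreducible, $V\cong \text{GF}(q^{n})$, and $T$ acts as multiplication by some root $\alpha$. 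Transitivity on $q^{n}-1$ elements forces $\alpha$ to have multiplicative order $q^{n}-1$, so $\alpha$ generates $\text{GF}(q^{n})^{*}$ and $P$ is primitive by definition.

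The main technical obstacle is the structural claim ``cyclic module with transitive $T$-action implies $P$ irreducible,'' which I would handle cleanly through the primary decomposition of $V$ as a $\text{GF}(q)[x]$-module (a PID). Everything else is the finite-field analogue of facts already tacitly used in Theorem~\ref{theorem:pp-mlca}: the order of a primitive polynomial of degree $n$ over $\text{GF}(q)$ is $q^{n}-1$, and a root of a primitive polynomial generates the multiplicative group of the corresponding extension field. These are standard (see \cite{Lidl1986,McEliece87}), so the proof is primarily a matter of carefully setting up the module identification and then quoting these facts in both directions.
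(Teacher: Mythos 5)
The paper does not actually prove this theorem: it is quoted from Cattell and Muzio \cite{Cattell2} with no argument supplied (the same is true of its GF(2) specialization, Theorem~\ref{theorem:pp-mlca}, for which the paper only points back to the surrounding discussion). So there is no in-paper proof to compare against, and your proposal must stand on its own — and it does: it is correct. The identification of $(V,T)$ with $\bigl(\text{GF}(q)[x]/(P(x)),\ \cdot\,\alpha\bigr)$ is legitimate in both directions: irreducibility of the characteristic polynomial forces it to equal the minimal polynomial, which makes $V$ cyclic; conversely, transitivity of $\langle T\rangle$ on $V\setminus\{0\}$ makes every nonzero vector a module generator and rules out any proper nonzero $T$-invariant subspace, which disposes of both a factorization $f_1^{e_1}\cdots f_r^{e_r}$ with $r\ge 2$ and the prime-power case $f^{e}$, $e\ge 2$ (via $\ker f(T)$). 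The final step — the orbit of $1$ has size $q^{n}-1$ iff $\alpha$ has multiplicative order $q^{n}-1$ iff $P$ is primitive — matches Definition~\ref{definition:primtive}. Two points you leave implicit and should state: for a linear CA the zero configuration is always a fixed point, so ``all but one configuration in a single cycle'' means exactly that the $q^{n}-1$ nonzero configurations form one cycle; and $T$ is invertible in both directions of the argument (from maximality, or from $P(0)\neq 0$ for an irreducible $P$ of positive degree). Compared with the machinery the paper develops for GF(2) — the characteristic-polynomial recurrence, Euclid's algorithm and the quadratic congruence, which are really aimed at the synthesis problem — your module-theoretic argument is shorter, self-contained, and works uniformly over any GF($q$); what it does not provide is the constructive content (recovering a CA from a given primitive polynomial), which is what that heavier machinery is for.
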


For some conditions, the characteristic polynomials of CAs over GF($q$) are always reducible. Following theorem is generalization of Theorem~\ref{theorem:reduciblepolynomial} and Theorem~\ref{theorem:palindromicCA}.
 
%which are already proved in previous (see, Theorem~\ref{theorem:reduciblepolynomial} and Theorem~\ref{theorem:palindromicCA}). These constraints are also applicable in $GF(q)$. We described the constraints in the following theorem.

\begin{theorem}
\label{theorem:reducible}
The characteristic polynomial of an $n$-cell CA is always reducible over GF($q$), if one of the following conditions is satisfied -
\begin{enumerate}
\item $a_i=0$ for any $i\in \{1, 2, \cdots, n-1\}$ or $b_j=0$ for any $j\in \{0, 1, \cdots, n-2\}$ of any rule.
\item the CA is palindromic.
\end{enumerate}
\end{theorem}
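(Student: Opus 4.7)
The proof plan is to mimic the GF(2) arguments (Theorems~\ref{theorem:reduciblepolynomial} and~\ref{theorem:palindromicCA}) but work entirely through the recurrence in equation~(\ref{equation:polynomial}), which already holds over GF($q$). The guiding observation is that the recurrence has the same ``three-term'' form as in the binary case, with the coupling coefficient $b_{k-2}a_{k-1}$ now living in GF($q$). Therefore, the same divisibility bookkeeping still works, and the palindromic argument still funnels through a concatenation identity.

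For item~(1), I would argue as in Theorem~\ref{theorem:reduciblepolynomial}. Suppose $a_k=0$ for some $k\in\{1,\ldots,n-1\}$. Plugging into~(\ref{equation:polynomial}) gives ${\Delta}_k = (x+d_k){\Delta}_{k-1}$, so ${\Delta}_{k-1}\mid{\Delta}_k$. Then ${\Delta}_{k+1}=(x+d_{k+1}){\Delta}_k+b_{k-1}a_k{\Delta}_{k-1}$ is also divisible by ${\Delta}_{k-1}$ (the second term is a scalar multiple of ${\Delta}_{k-1}$, independent of whether $a_k$ or $b_{k-1}$ vanish). An obvious induction propagates this all the way up to ${\Delta}={\Delta}_{n-1}$. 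Since $\deg{\Delta}_{k-1}=k\ge 1$ and $\deg{\Delta}=n>k$, the polynomial ${\Delta}$ is reducible. The case $b_j=0$ is symmetric: one first shows ${\Delta}_{j+1}=(x+d_{j+1}){\Delta}_j$ (because $b_j a_{j+1}=0$), and then propagates upward in the same way.

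For item~(2), the plan is first to establish the GF($q$) analogue of the concatenation relation (Theorem~\ref{theorem:concatenation}) in the form
\[
{\Delta}_{0,n-1} \;=\; {\Delta}_{0,k-1}\,{\Delta}_{k,n-1} \;+\; b_{k-1}a_k\,{\Delta}_{0,k-2}\,{\Delta}_{k+1,n-1},
\]
which follows from induction on $k$ using~(\ref{equation:polynomial}) (the extra factor $b_{k-1}a_k$ replaces the implicit $1\cdot 1$ coupling present in GF(2)). For a palindromic CA, $(a_i,d_i,b_i)=(a_{n-1-i},d_{n-1-i},b_{n-1-i})$ for every $i$, and so the ``right half'' of the rule vector is the mirror of the left half. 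Combined with the fact that the characteristic polynomial of a tridiagonal matrix equals that of its transpose (so left-right reversal preserves ${\Delta}$), this yields ${\Delta}_{0,k}={\Delta}_{n-1-k,n-1}$ and ${\Delta}_{0,k-1}={\Delta}_{n-k,n-1}$. Plugging these identifications into the GF($q$) concatenation identity with $k=\lfloor n/2\rfloor$ and simplifying, exactly as in Theorem~\ref{theorem:palindromicCA}, produces
\[
{\Delta}_{0,n-1}=\begin{cases} \alpha\bigl({\Delta}_{0,\frac{n}{2}-1}+{\Delta}_{0,\frac{n}{2}-2}\bigr)^{2}, & n\text{ even},\\[2pt] \beta\,(x+d_{(n-1)/2})\bigl({\Delta}_{0,\frac{n-3}{2}}\bigr)^{2}, & n\text{ odd},\end{cases}
\]
for nonzero scalars $\alpha,\beta\in\mathrm{GF}(q)^{\ast}$ arising from the products of the relevant $a_i,b_i$. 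In both cases ${\Delta}$ has a nontrivial square factor (or a linear factor times a square), hence is reducible.

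The main obstacle is the palindromic case: in GF(2) the recurrence collapses because all $a_i=b_j=1$ for 90--150 CAs, so the left-right mirror of a palindromic rule vector is automatic. In GF($q$), the characteristic polynomial actually depends on the products $b_i a_{i+1}$, and palindromicity of the rule triples $(a_i,d_i,b_i)$ does not by itself guarantee ${\Delta}_{0,k}={\Delta}_{n-1-k,n-1}$ unless we invoke the transpose identity $\det(xI+T)=\det(xI+T^{\!\top})$, which swaps the roles of $a_i$ and $b_i$ along the way. Making this identification clean and checking that the resulting scalar coefficients in front of the square factor are nonzero in GF($q$) (so that the factorization is genuine and not trivially zero) is the subtle step where the GF(2) shortcut does not immediately transfer.
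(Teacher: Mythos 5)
The paper itself offers no proof of Theorem~\ref{theorem:reducible}; it is only asserted as a ``generalization'' of Theorems~\ref{theorem:reduciblepolynomial} and~\ref{theorem:palindromicCA}, so your attempt can only be measured against those GF(2) arguments. Your treatment of item~(1) is correct and is the straightforward transcription of the proof of Theorem~\ref{theorem:reduciblepolynomial}: the coupling term in the three-term recurrence is $b_{j-1}a_j\Delta_{j-2}$, so if either factor vanishes the recurrence collapses to $\Delta_j=(x+d_j)\Delta_{j-1}$ and divisibility by $\Delta_{j-1}$ propagates upward (your index $b_{k-1}a_k$ in the $\Delta_{k+1}$ step should read $b_k a_{k+1}$, but this does not affect the argument).

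Item~(2) contains a genuine gap, and you have correctly located it but proposed a fix that does not work. The characteristic polynomial of a tridiagonal matrix depends only on the diagonal entries $d_i$ and on the \emph{products} $b_i a_{i+1}$; transposition swaps which factor sits above and which below the diagonal but leaves every product unchanged, so invoking $\det(xI+T)=\det(xI+T^{\top})$ cannot repair the identification $\Delta_{0,k}=\Delta_{n-1-k,n-1}$. Under the paper's definition of palindromic ($(a_i,d_i,b_i)=(a_{n-1-i},d_{n-1-i},b_{n-1-i})$), the mirrored segment carries the products $b_{j+1}a_j$ where the left segment carries $b_j a_{j+1}$, and these differ in general over GF($q$). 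Moreover your final formula for even $n$ silently uses $A^2+cB^2=\alpha(A+B)^2$, which is a characteristic-2 identity; over odd $q$ the concatenation yields $A^2-cB^2$ with $c=b_{k}a_{k+1}$, which factors only when $c$ is a square in GF($q$). In fact the statement as written is \emph{false} for odd $q$: over GF(3) the $2$-cell uniform (hence palindromic) CA with both rules equal to $(1,0,2)$ has characteristic matrix $T=\left(\begin{smallmatrix}0&2\\1&0\end{smallmatrix}\right)$ and characteristic polynomial $x^2+1$, which is irreducible over GF(3), and neither $a_1$ nor $b_0$ vanishes. The result does hold, and your outline then goes through, if ``palindromic'' is strengthened to mean that the characteristic matrix is invariant under index reversal, i.e.\ $d_i=d_{n-1-i}$ \emph{and} $b_i=a_{n-1-i}$ for all $i$: then the mirrored products agree, the central coupling coefficient becomes $b_{n/2-1}^2$, and the even case factors as a genuine difference of squares
\[
\Delta_{0,n-1}=\bigl(\Delta_{0,\frac{n}{2}-1}-b_{\frac{n}{2}-1}\Delta_{0,\frac{n}{2}-2}\bigr)\bigl(\Delta_{0,\frac{n}{2}-1}+b_{\frac{n}{2}-1}\Delta_{0,\frac{n}{2}-2}\bigr),
\]
while the odd case exhibits $\Delta_{0,\frac{n-1}{2}-1}$ as an explicit factor. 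You should either add this hypothesis or exhibit the counterexample; as it stands the ``nonzero scalars $\alpha,\beta$'' in your claimed factorization do not exist.
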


%\begin{enumerate}
%\item If $a_i=0$ for any $0<i\leq n-1$ or $b_j=0$ for any $0\leq j<n-1$ of any rule (see Theorem~\ref{theorem:reduciblepolynomial}).
%\item If the CA is palindromic (see Theorem~\ref{theorem:palindromicCA}).
%\item If the CA is uniform (see Corollary~\ref{corollary:uniformCA}).
%\end{enumerate} 

%Therefore, for an $n$-cell CA, if any rule $(a_i,d_i,b_i)$ has $a_i=0$ ($0< i \le n-1$) or $b_i=0$ ($0\le i < n-1$), then the characteristic polynomial of this CA is reducible. 

%From Theorem~\ref{theorem:reducible}, we can exclude $2q^2-q$ (two positions of the triplet to hold any of the $q$ number of states considering left-most or right-most position as $0$ and excluding the common $q$ triplets that repeat in the process) number of rules out of the total $q^3$ rules, which can not be part of any rule vector for generating maximal length CAs over $GF(q)$. Hence, $q(q-1)^2$ number of linear rules are participate to generate maximal length CAs over $GF(q)$. 

In \cite{Cattell2}, Cattell and Muzio have provided one minimal cost maximal length CA for each degree upto size 40, for the fields of order 2, 3, 5, 7 and 11. Only rules (1,0,1) and (1,1,1) are used to generate maximal length CAs where minimal of rule (1,1,1) are used.

%Table~\ref{table:minimalcostgf(q)} (see Appendix, page~\pageref{table:minimalcostgf(q)}) contains the results. The table lists those cells that have rule (1,1,1) and rest of the rules are rule (1,0,1). For example, the entry for $n=6$ for GF(5) is 0, 1, 4, which represents the CA -  ((1,1,1), (1,1,1), (1,0,1), (1,0,1), (1,1,1), (1,0,1)).

%In \cite{Cattell2}, Cattell and Muzio have provided one minimal cost maximal length CA for each degree upto size 40, for the fields of order 2, 3, 5, 7 and 11. Only rules (1,0,1) and (1,1,1) are used to generate maximal length CAs. Table~\ref{table:minimalcostgf(q)} (see Appendix, page~\pageref{table:minimalcostgf(q)}) contains the results. The table lists those cells that have rule (1,1,1) and rest of the rules are rule (1,0,1). For example, the entry for $n=6$ for GF(5) is 0, 1, 4, which represents the CA -  ((1,1,1), (1,1,1), (1,0,1), (1,0,1), (1,1,1), (1,0,1)).

Given a polynomial $p(x)$, does there exist a CA with characteristic polynomial $p(x)$? If this question does not hold for all $p(x)$, then does it hold for a subclass of polynomials? Cattell and Muzio have addressed this question for CA over GF(2) \cite{CattellM96} by applying the results of Mesirov and Sweet~\cite{Mesirov1987}. Unfortunately, the approach can not be extended to the CA over GF($q$). But, a large body of empirical evidence suggests the following conjecture for CA over GF($q$). The evidence is obtained by exhaustive enumeration of all CA for various fields and various $n$, finding the characteristic polynomials, and checking if all irreducible polynomials appear.

Further, we can decide it to the new cases where we find the details of our work related to the EDA and MMAS with degree 2 terminology.

%Further, there is a basic question that given a polynomial $p(x)$, does there exist a CA with characteristic polynomial $p(x)$. If this question does not hold for all $p(x)$, does it hold for a natural subclass of polynomials (e.g., irreducible polynomials). Furthermore, also in~\cite{CattellM96}, Cattell and Muzio solved this question for CA over $GF(2)$ by applying the results of Mesirov and Sweet~\cite{Mesirov1987}. Unfortunately, neither approach has a theoretical basis that extends to the general case of $GF(q)$. But, a large body of empirical evidence suggest the existence conjecture for $GF(q)$ (Conjecture~\ref{conjecture:irrpolynomial}). The evidence is obtained by exhaustive enumeration of all CA for various fields and various $n$, calculating the characteristic polynomials, and checking that all irreducible polynomials appear. 

\begin{conjecture}
\label{conjecture:irrpolynomial}
If $p(x)$ is an irreducible polynomial over GF($q$), then there exists a CA with characteristic polynomial $p(x)$~\cite{Cattell2}.
\end{conjecture}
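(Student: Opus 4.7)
The plan is to adapt the Cattell--Muzio synthesis method (Section~\ref{subsection:CAfrompolynomial}) from GF$(2)$ to GF$(q)$. By the analog of Lemma~\ref{lemma:subpolynomial} over GF$(q)$, constructing a CA with characteristic polynomial $p$ of degree $n$ amounts to exhibiting a monic polynomial $y(x)$ of degree $n-1$ such that iterated Euclidean division starting from $(p,y)$ produces exactly $n$ quotients, all of degree $1$, with each successive remainder of maximal possible degree (so that each off-diagonal product $c_k = a_k b_{k-1} \in \mathrm{GF}(q)^{*}$ is nonzero). The rules $(a_i, d_i, b_i)$ are then read off from the constant terms of the quotients and the products $c_k$.

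First I would generalize the three-term recurrence (Lemma~\ref{lemma:recurrence}) and the concatenation identity (Theorem~\ref{theorem:concatenation}) to GF$(q)$, carefully tracking the extra $c_k$ factors. Next I would attempt to derive an analog of the CA quadratic congruence (Theorem~\ref{theorem:congruence}) characterizing admissible subpolynomials $y \equiv \Delta_{n-2} \pmod{p}$. Because $p$ is irreducible, the quotient ring $\mathrm{GF}(q)[x]/(p(x))$ is the field $\mathrm{GF}(q^n)$, so the congruence can be analyzed using Galois-theoretic tools: the Frobenius $y \mapsto y^{q}$, the trace and norm maps to $\mathrm{GF}(q)$, and Hilbert~90, which would take over the role played in the GF$(2)$ case by the trace-one polynomial $\theta$ in Section~\ref{subsection:CAfrompolynomial}. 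Any solution $y(x)$ in $\mathrm{GF}(q^n)$ would then need to be verified to produce the full sequence of $n$ linear quotients with nonzero $c_k$.

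The main obstacle is that the GF$(2)$ derivation of Theorem~\ref{theorem:congruence} depends heavily on characteristic-$2$ identities --- squaring is additive and $2 = 0$ --- so the cancellations that reduce the governing equation to a quadratic congruence no longer occur for $q > 2$; the natural $q$-Frobenius replacement seems to yield a polynomial congruence of degree $q$, whose solvability is considerably harder to control. If a direct attack proves intractable, a fallback would be a parameter-count/algebraic-geometry argument: the admissible CA parameter space has dimension $2n-1$ (the $n$ diagonal entries $d_i$ and the $n-1$ nonzero products $c_i$), while the target space of monic degree-$n$ polynomials has dimension $n$. Establishing dominance of the synthesis map --- for instance by exhibiting a single CA at which the Jacobian has full rank --- would show its image is a constructible set of full dimension. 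Combining a fiber-size bound (an analog of Corollary~\ref{corollary:exactlytwoCA} over GF$(q)$) with Gauss's count $\tfrac{1}{n}\sum_{d \mid n} \mu(d)\, q^{n/d}$ of irreducibles of degree $n$ would then force every irreducible polynomial to lie in the image.
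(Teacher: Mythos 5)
There is nothing in the paper to compare your attempt against: the statement is Conjecture~\ref{conjecture:irrpolynomial}, and the paper explicitly presents it as \emph{unproven}, supported only by exhaustive enumeration for small $n$ and small fields. Indeed the surrounding text states that the Cattell--Muzio approach ``can not be extended to the CA over GF($q$),'' and Open Problems~\ref{oproblem:synthesisCAgfq}--\ref{oproblem:existCAgfq} record that synthesizing a CA from a polynomial over GF($q$), counting the realizations, and even the bare existence question are all open. So your proposal is not being measured against a proof but against an acknowledged gap in the literature.

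As a proof attempt, the proposal does not close that gap. Your first route founders exactly where you say it does: the derivation of the quadratic congruence (Theorem~\ref{theorem:congruence}) rests on characteristic-$2$ facts (perfect squares, $\Phi_{0,n-1}=\Delta_{0,n-1}+\Delta_{1,n-2}$ with signs suppressed, the trace-one element $\theta$), and replacing squaring by the $q$-Frobenius yields a degree-$q$ condition whose solvability you do not control --- acknowledging an obstacle is not the same as overcoming it. The fallback is also not a proof. Dominance of the synthesis map (full-rank Jacobian at one point) only shows the image is Zariski-dense over the algebraic closure; it says nothing about which GF($q$)-rational points --- i.e.\ which specific monic polynomials over GF($q$) --- are attained, and ``constructible of full dimension'' can still miss a positive fraction of rational points. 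The counting step fails too: with $q^n$ choices of diagonal entries and $(q-1)^{n-1}$ choices of the off-diagonal products, the parameter space has about $q^n(q-1)^{n-1}$ points mapping into only $q^n$ monic polynomials, so an \emph{upper} bound on fiber sizes (the analog of Corollary~\ref{corollary:exactlytwoCA}, itself unproven over GF($q$) per Open Problem~\ref{oproblem:noofCAgfq}) combined with Gauss's count cannot force every irreducible polynomial into the image; you would need an exact fiber count over the irreducibles, which is precisely what is missing. The statement should be treated as a conjecture, not as something your outline establishes.
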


We have synthesized CA from a given irreducible polynomial over GF(2) (see Section~\ref{subsection:CAfrompolynomial}). However, there is no efficient method of synthesizing CAs over GF($q$). So, to find maximal length CAs in GF($q$), the known way is following \cite{Adak-IJMPC2019}.

%In GF($q$), the actual difficulty lies in the reverse process, that means, calculation of CA from a given primitive polynomial. Therefore, to generate maximal length CAs in GF($q$), the only way to deal is following~\cite{Adak-IJMPC2019}.   

\begin{enumerate}
\item For a given size, randomly generate a candidate CA.
\item Find characteristic polynomial of the CA using recurrence relation.
\item Test whether the polynomial is primitive using prime factorization method \cite{Rabin-irreducible}.
\end{enumerate}

In Ref. \cite{Adak-IJMPC2019}, the authors experimentally generated maximal length CAs in GF($q$) using the above approach. Table~\ref{table:primitiveGF} records a demo of this process to find one maximal length CA of size 10, 15 and 20 over GF(2) to GF(19) where the last column shows the total number of attempts after which we can get a maximal length CA. We conclude this section with some open problems related to maximal length CAs over GF($q$). Some of them are quite old and difficult. These problems are solved in GF(2), but remain unsolved over GF($q$). Through experiments, Problem~\ref{oproblem:existCAgfq} has been answered affirmatively. But, the proof is still not known.

%We randomly choose the rules from only the valid rules, then finding a maximal length CA of desired size and correspondingly its primitive polynomial. 
%Table~\ref{table:primitiveGF} (see Appendix, page~\pageref{table:primitiveGF}) records a demo of this process to find one maximal length CA of size 10, 15 and 20 over GF(2) to GF(19) in one random sample run. The $3^{rd}$ column records the total execution time in seconds and the last column shows the total number of attempts after which we can get a maximal length CA.

\begin{oproblem}
\label{oproblem:synthesisCAgfq}
Synthesize a CA from a given polynomial over GF($q$).
\end{oproblem}

\begin{oproblem}
\label{oproblem:noofCAgfq}
How many CAs are there for an irreducible polynomial over GF($q$)? 
\end{oproblem}

%\begin{oproblem}
%\label{oproblem:nlinearCAgfq}
%Existence of non-linear maximal length CA over GF($q$).
%\end{oproblem}

\begin{oproblem}
\label{oproblem:existCAgfq}
Does there exist a CA for every irreducible polynomial in GF($q$)?
\end{oproblem}

\begin{table}[t]
\begin{center}
	\caption{Finding one maximal length CA of a particular degree in GF($q$) in a sample run}		
		{\begin{tabular}{c||c|c|c}\hline
		Field & CA Size & Time (sec) & Actual number\\ 
		& & & of execution\\ \hline
		
		 & 10 & 3.697707 & 5 \\
		GF(2) & 15 & 6.421378 & 11 \\
		 & 20 & 21.839431  & 20 \\ \hline
		
		& 10 & 2.441540 & 3 \\
		GF(3) & 15 & 7.400775 & 6 \\
		 & 20 & 6.092859 & 5 \\ \hline
		 
		 & 10 & 2.461736  & 2 \\
		GF(5) & 15 & 3.680341 & 3 \\
		 & 20 & 18.409685 & 16 \\ \hline
		 
		 & 10 & 32.476975 & 31 \\
		GF(7) & 15 & 18.126331 & 15  \\
		 & 20 & 56.014876  & 55 \\ \hline
		
		 & 10 & 52.554778 & 47 \\
		GF(11) & 15 & 12.102284  & 9 \\
		 & 20 & 12.099299 &  9 \\ \hline
		 
		 & 10 & 3.577851 & 3 \\
		GF(13) & 15 & 11.738276  & 10 \\
		 & 20 & 16.560623 & 14 \\ \hline
		 
		 & 10 & 9.406888 & 7 \\
		GF(17) & 15 & 6.121513 & 5 \\
		 & 20 & 102.453698 & 92  \\ \hline
		 
		 & 10 & 3.397518 & 2 \\
		GF(19) & 15 & 84.848327 & 71 \\
		 & 20 & 34.926098  & 26 \\ \hline
\end{tabular}
		
		\label{table:primitiveGF}}
		\end{center}
		\end{table}

\section{Applications: PRNGs and Others}
\label{section:applications}

One advantage of the binary CAs is that they can easily be implemented in hardware. Figure~\ref{Figure:Hardware-ECA-Null} shows a schematic diagram of hardware implementation of an $n$-cell non-uniform CA under null boundary condition. Here, each cell consists of a flip-flop (FF) to store the state of a cell and a combinational logic circuit to find the next state of the cell. Due to similarity of this structure with finite state machine, many researchers, who have been working with finite non-uniform CAs, consider these CAs as finite state machine (FSM) and call the configurations of CAs as states.

\begin{figure}
\begin{center}
     \includegraphics[scale=0.55]{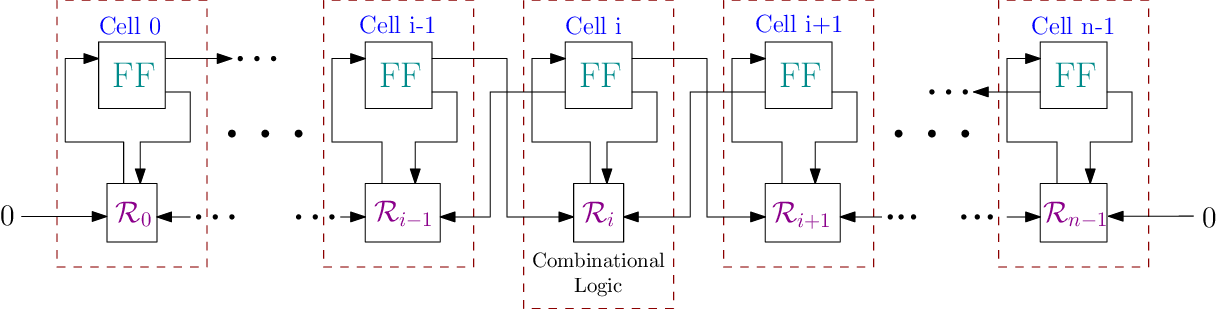}
     \caption{Implementation of an $n$-cell non-uniform CA under null boundary condition}
     \label{Figure:Hardware-ECA-Null}
\end{center}
\end{figure}

The CAs are useful in many application areas due to their simplicity, modularity and regularity \cite{comer2012random, formenti2014advances, leporati2014cryptographic, tomassini2000generation, sipper1996generating, wang2008generating, ganguly2002evolutionary, ats03, ats04d}. In the last several years, a group of researchers started characterizing maximal length CA based models in diverse fields including pseudo-random number generation \cite{Horte89a,Horte89b,Wolfram84b,Horte89c,Barde90,COMPAGNER1987391,Tezuka}, cryptography~\cite{nandi1994theory,CHANG97,ACRIDRC10,jcaDasR11}, signature analysis~\cite{Horte90b,Das90e}, error correcting codes~\cite{tsalides1990cellular,Chowd94a}, test-pattern generator~\cite{das1993vector} and etc. Next, we discuss some of the applications briefly.

\subsection{Pseudo-random Number Generators (PRNGs)}

In $1980$s, cellular automata (CAs) were introduced as an alternative source of randomness \cite{Wolfr85a}. CAs are simple, regular, locally interconnected, and modular. Further, using of maximal length CAs as the source of randomness offers some inherent advantages. Due to these properties, many researchers have used these CAs as their pseudo-random number generators \cite{Horte89a,Horte89c,Barde90,COMPAGNER1987391,Tezuka}.

%The description of 1-dimensional CA is very simple and capable of very wide range of global behaviour. Wolfram has classified four basic classes of behaviour in 1-dimensional CA \cite{Wolfram84b}. Class 1 automata evolve to homogeneous final global states, class 2 to periodic structure, class 3 exhibit chaotic behaviour, and class 4 yield complicated localized and propagating structures. The first two classification readily predictable and show no properties of interests for PRNG. The third class yields much more complex behaviour in that the detailed patterns can no longer be predicted (it may still be possible to make statements about global behaviour) and often seem random in nature. Wolfram consider class 3 CAs to be an abstract model of randomness in nature and so are suitable for pseudo-random generation \cite{Wolfr85a}. This is because the cumulative effect of many iterations of many class 3 CAs is equivalent to performing very complicated transformations on the initial starting value. This evolution often becomes so complicated that its outcome can be found only by observation or simulation. Further, rules 90 and 150 are belongs to class 3 which can compute maximal length CAs. 

We must first consider a suitable definition of randomness in order to properly evaluate any PRNG. The randomness tests of Knuth \cite{Knuth81} were used as the metric against which a PRNG can be measured for randomness. This metric describes the result of the average performance of sequences produced by a PRNG on all tests. The weighting allows some tests to have an increased importance. For example, if it were crucial that a pseudo-random sequence have equidistribution, then the weighting for the equidistribution test could be made large. Here we assign equal weight to all tests.

Hortensius et al.~\cite{Horte89a,Horte89c} added some extra points in maximal length CAs to generate pseudo-random number. They used the technique of ``site spacing'' between output sites, as in Figure~\ref{figure:sitespacing} with the maximal length CA.  They have defined a site spacing parameter, $\gamma$ , where the value of $\gamma$ is the number of sites between outputs in the CA. For example, in Figure~\ref{figure:sitespacing}, $\gamma=3$. If $\gamma$ is increased, the authors have expected that the cross correction between adjacent bits in the pseudo-random numbers is to be reduced. Also, one can find the experimental results of the pseudo-random number tests for various values of site spacing in Ref. \cite{Horte89c}. Maximal length CAs with $\gamma=0$ posses ``poor'' pseudo-random properties because of the distribution problems, but when single spacing site is introduced the maximal length CA sequences pass as many tests as the sequences produced by the algorithmic PRNG. Therefore, a site spacing of $\gamma=1$ will be considered adequate for pseudo-random sequence generation. These are very encouraging results as the application of site spacing may prove to be of academic interest only. At this point, the maximal length CA based PRNG is performing like a standard software based PRNG.

\begin{figure}
	\centering
	\includegraphics[width= 3.9in, height = 0.4in]{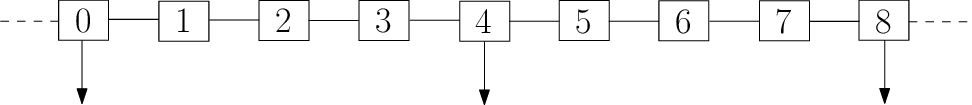}		
	\caption{Definition of site spacing : here $\gamma=3$}        		
\label{figure:sitespacing}        		
\end{figure}

%In Table~\ref{table:PRNGMLCA} (see Appendix, page~\pageref{table:PRNGMLCA}), the results of the pseudo-random number tests for various values of site spacing, $\gamma$ are provided. The 30-bit maximal length CA is taken from Table~\ref{Table:mlcaupto53} (see page~\pageref{Table:mlcaupto53}) for this experiment. Maximal length CAs with $\gamma=0$ posses ``poor'' pseudo-random properties because of the distribution problems, but when single spacing site is introduced the maximal length CA sequences pass as many tests as the sequences produced by the algorithmic PRNG of Table~\ref{table:keytests}. This behaviour extends over all the site-spaced maximal length CA PRNGs presented in Table~\ref{table:PRNGMLCA}. In addition, the other test metrics are well satisfied. Therefore, a site spacing of $\gamma=1$ will be considered adequate for pseudo-random sequence generation. These are very encouraging results as the application of site spacing may prove to be of academic interest only. At this point, the maximal length CA based PRNG is performing like a standard software based PRNG.

%\begin{figure}
%				\subfloat[$n$=24]{\includegraphics[width= 0.8in, height = 2.5in]{GF(2)-24_0_1.png}}\hspace*{4em}
%				\subfloat[$n$=15]{\includegraphics[width= 0.65in, height = 2.5in]{Gf(3)-15_0.png}} 		
%\caption{Space-time diagram of maximal length CAs. For each of the figures, the evolutions have started from configuration $0^{n-1}1$. Left one in GF(2) and right one in GF(3).}        		
%\label{figure:space-time}  
%%\vspace{-1.0em}      		
%\end{figure}

Usually, latest PRNG technique claims to be better from previous one. This claim is based on the PRNG's performance in some statistical tests, like Diehard~\cite{diehard}, TestU01~\cite{L'Ecuyer2007}, NIST~\cite{Soto99} etc. battery of tests, which empirically detect non-randomness in the generated numbers. In a PRNG, based on CA, the pseudo-random numbers are generated out of the configurations of a finite CA, so, to be unpredictable, the CA has to \emph{destroy} the information of the initial configuration, step by step, during its evolution \cite{Supreeti_2018_chaos,eisele1991long, Mitchell1993chaos, cattaneo1999dynamical, cervelle2009chaotic, acerbi2009conservation,Sukanya-IJMPC-2021}. In the space-time diagram of a CA, one can find this type of unpredictability. This unpredictability, if satisfied in a finite CA, has a correspondence to the independence property, which is essential for a good PRNG. Maximal length CAs have those properties including information flow on both sides (right and left). %It has been established that the patterns generated by maximal length CAs meet all the criteria and the quality of randomness of the patterns generated by CAs is good. %Also, we will discuss the topic in Chapter~\ref{chapter:PRNG}. 

It has been established that the pattern generated by maximal length CAs meet all the criteria and the quality of randomness of the patterns generated by CAs is good. In \cite{Adak-IJMPC2019}, the authors explores the randomness quality of maximal length CAs in GF($q$) and verified the randomness quality of these maximal length CAs as PRNGs by using Diehard battery of tests. They claim that, if we increase $q$ from $2$ to $11$, the randomness quality of the CAs for some specific size improve (see Figure~\ref{figure:comparison-chart}). %However, after GF(11), further increment in $q$ does not improve the randomness quality of the CAs. So, we have concluded that, GF(11) is the saturation point of randomness quality for these maximal length CAs (see Figure~\ref{figure:comparison-chart}).

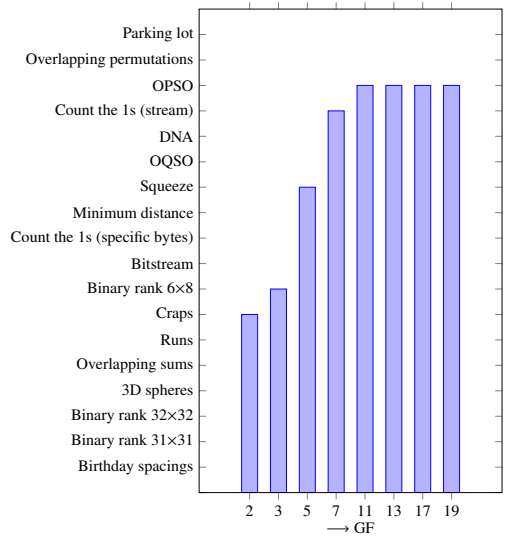
\begin{figure}[t]
\centering
\begin{tikzpicture}[scale=0.6]
\begin{axis}[
    every axis plot post/.style={/pgf/number format/fixed},
	xlabel={$\longrightarrow$ GF},
    ybar=100pt,
    bar width=10pt,
    x=18pt,
    y=16pt,
    ymin=0,
    axis on top,
    ymax=19,
    xtick=data,
    ytick={0,...,18},
    yticklabels={,Birthday spacings,Binary rank 31$\times$31,Binary rank 32$\times$32,3D spheres,Overlapping sums,Runs,Craps,Binary rank 6$\times$8,Bitstream,Count the 1s (specific bytes),Minimum distance,Squeeze,OQSO,DNA,Count the 1s (stream),OPSO,Overlapping permutations,Parking lot},
    enlarge x limits=0.25,
    symbolic x coords={2,3,5,7,11,13,17,19},
]
\addplot coordinates {(2,7) (3,8) (5,12) (7,15) (11,16) (13,16) (17,16) (19,16)};
\end{axis}
\end{tikzpicture}
\caption{Comparison chart of the maximal length CAs by using Diehard battery of tests}
\label{figure:comparison-chart}
\end{figure}

\subsection{Security Purpose}

{\em \textbf{Signature Analysis} :} Signature analysis is the most widely used data compression technique for test response evaluation. Built-in self test (BIST) design methodologies attempt to deal with the inherent complexity of testing digital VLSI circuits by incorporating both the test pattern generator and data compactor on-chip. Advantages of this approach over other VLSI test techniques include high speed, low pin overhead, and relatively low cost. In the traditional signature analysis method, they are well understood and are thoroughly explained in the algebraic coding theory literature as {\em syndrome detection}~\cite{Peterson1972,Lin1983} and in the digital testing literature as {\em signature analysis}~\cite{Fujiw85,Tsui1987}. But in~\cite{Horte90b}, Hortensius et al. discussed different signature analysis technique based on CAs. In this paper, they have examined in detail some of the signature analysis properties of CA. By experimentally, they proved that, the nearest neighbor communication properties is required for implementing elementary one-dimensional CA to allow the consideration of several different techniques of signature analysis. They have examined four signature analysis techniques for CAs. It was shown that, signature analysis methods using maximal length CA provide equivalent aliasing performance compared to others.

{\em \textbf{Cryptography} :} With this ever increasing growth of data communication, the need for security and privacy have become a basic necessity. Cryptography is an essential requirement for communication privacy or concealment of data. Encryption may be achieved by constructing two different types of ciphers - block ciphers and stream ciphers~\cite{Rueppel1986}. A block cipher is the one in which a massage is broken into successive blocks and they are encrypted using single key or multiple keys. On the other hand, in a stream cipher the message is broken into successive bits or characters and then the string of characters is encrypted using a key stream. The stream ciphers play an important role in cryptographic practices that protect communications in very high frequency domain. The central problem in stream cipher cryptography, however, is the difficulty of generating a long unpredictable sequence of binary signals from a short and random key. In~\cite{nandi1994theory}, they produce three basic requirements for cryptographically secure key stream generators which are as follows.

\begin{enumerate}
\item The period of the key stream must be large enough to accommodate the length of the transmitted message.
\item The output bits must be easy to generate.
\item The output bits must be hard to predict. That is, given a portion of the output sequence, the cryptanalyst should not be able to generate other bits forward or backward.
\end{enumerate}

Nandi et al.~\cite{nandi1994theory} have presented a CA based scheme for a class of block ciphers and stream ciphers. The scheme which has several attractive features use of maximal length CA. Chang et al.~\cite{CHANG97} have used, maximal length CAs for stream ciphers. Linear maximal length CAs are not secure, therefore, we need non-linear maximal length CAs. Das et al.~\cite{ACRIDRC10,jcaDasR11} put non-linearity in maximal length CAs which are cryptographically suitable. It shows that the bit streams generated in this manner are highly non-linear and pass all the statistical tests for randomness. These maximal length CAs can be used as a non-linear primitive in cryptographic applications.

\section{Discussion}
\label{section:discussion}

Maximal length CA has gone through a long journey from the early days to the modern trends of research. In this survey, we have presented known results over a variety of research using different characterization tools. Various milestone of development regarding maximal length CAs are briefly depicted. Although we have omitted some results regarding maximal length CAs, we have targeted to provide an overview of results that we consider important in the field of maximal length CA theory. We have provided several open problems which are important for maximal length CAs and primitive polynomials. Till date the complexity for deciding a maximal length CA is exponential. Further, there is no proof claimimg that it is an NP-complete or NP-hard problem. So, it is also an important open problem related to maximal length CAs.

\begin{oproblem}
\label{oproblem:P-NP}
Is the problem of deciding maximal length CA NP-complete or NP-hard?
\end{oproblem}

%We conclude this chapter with some open problems related to maximal length CAs. Some of them are quite old and difficult. All the problems are also solved in GF(2), but still unsolved over GF($q$). Through experiments, Problem~\ref{oproblem:existCAgfq} has been answered affirmatively. But, the proof is still not known. 

%\begin{oproblem}
%\label{oproblem:synthesisCAgfq}
%Synthesize a CA from a given polynomial over GF($q$).
%\end{oproblem}
%
%\begin{oproblem}
%\label{oproblem:noofCAgfq}
%How many CAs are there for an irreducible polynomial over GF($q$)? 
%\end{oproblem}
%
%\begin{oproblem}
%\label{oproblem:nlinearCAgfq}
%Existence of non-linear maximal length CA over GF($q$).
%\end{oproblem}
%
%
%\begin{oproblem}
%\label{oproblem:existCAgfq}
%Does there exist a CA for every irreducible polynomial in GF($q$)?
%\end{oproblem}

%\begin{oproblem}
%\label{oproblem:synthesisCAgfq}
%The relation between the characteristic polynomials of conjugate CAs.
%\end{oproblem}

%% The Appendices part is started with the command \appendix;
%% appendix sections are then done as normal sections
% \appendix
% \section{Example Appendix Section}
% \label{app1}

%% If you have bib database file and want bibtex to generate the
%% bibitems, please use
%%
\bibliographystyle{elsarticle-harv} 
\bibliography{References}

%%% -*-BibTeX-*-
%%% Do NOT edit. File created by BibTeX with style
%%% ACM-Reference-Format-Journals [18-Jan-2012].

\begin{thebibliography}{85}

%%% ====================================================================
%%% NOTE TO THE USER: you can override these defaults by providing
%%% customized versions of any of these macros before the \bibliography
%%% command.  Each of them MUST provide its own final punctuation,
%%% except for \shownote{}, \showDOI{}, and \showURL{}.  The latter two
%%% do not use final punctuation, in order to avoid confusing it with
%%% the Web address.
%%%
%%% To suppress output of a particular field, define its macro to expand
%%% to an empty string, or better, \unskip, like this:
%%%
%%% \newcommand{\showDOI}[1]{\unskip}   % LaTeX syntax
%%%
%%% \def \showDOI #1{\unskip}           % plain TeX syntax
%%%
%%% ====================================================================

\ifx \showCODEN    \undefined \def \showCODEN     #1{\unskip}     \fi
\ifx \showDOI      \undefined \def \showDOI       #1{#1}\fi
\ifx \showISBNx    \undefined \def \showISBNx     #1{\unskip}     \fi
\ifx \showISBNxiii \undefined \def \showISBNxiii  #1{\unskip}     \fi
\ifx \showISSN     \undefined \def \showISSN      #1{\unskip}     \fi
\ifx \showLCCN     \undefined \def \showLCCN      #1{\unskip}     \fi
\ifx \shownote     \undefined \def \shownote      #1{#1}          \fi
\ifx \showarticletitle \undefined \def \showarticletitle #1{#1}   \fi
\ifx \showURL      \undefined \def \showURL       {\relax}        \fi
% The following commands are used for tagged output and should be
% invisible to TeX
\providecommand\bibfield[2]{#2}
\providecommand\bibinfo[2]{#2}
\providecommand\natexlab[1]{#1}
\providecommand\showeprint[2][]{arXiv:#2}

\bibitem[Abramowitz and Stegun(1965)]%
        {handbookmath}
\bibfield{author}{\bibinfo{person}{Milton Abramowitz} {and}
  \bibinfo{person}{Irene~A. Stegun}.} \bibinfo{year}{1965}\natexlab{}.
\newblock \bibinfo{booktitle}{\emph{Handbook of Mathematical Functions with
  Formulas, Graphs and Mathematical Tables}}.
\newblock \bibinfo{publisher}{Dover Publications, Inc.}, \bibinfo{address}{New
  York}.
\newblock


\bibitem[Acerbi et~al\mbox{.}(2009)]%
        {acerbi2009conservation}
\bibfield{author}{\bibinfo{person}{Luigi Acerbi}, \bibinfo{person}{Alberto
  Dennunzio}, {and} \bibinfo{person}{Enrico Formenti}.}
  \bibinfo{year}{2009}\natexlab{}.
\newblock \showarticletitle{Conservation of some dynamical properties for
  operations on cellular automata}.
\newblock \bibinfo{journal}{\emph{Theoretical Computer Science}}
  \bibinfo{volume}{410}, \bibinfo{number}{38-40} (\bibinfo{year}{2009}),
  \bibinfo{pages}{3685--3693}.
\newblock


\bibitem[Adak et~al\mbox{.}(2020)]%
        {Adak-IJMPC2019}
\bibfield{author}{\bibinfo{person}{Sumit Adak}, \bibinfo{person}{Kamalika
  Bhattacharjee}, {and} \bibinfo{person}{Sukanta Das}.}
  \bibinfo{year}{2020}\natexlab{}.
\newblock \showarticletitle{Maximal length cellular automata in $GF$(q) and
  pseudo-random number generation}.
\newblock \bibinfo{journal}{\emph{International Journal of Modern Physics C}}
  \bibinfo{volume}{31}, \bibinfo{number}{03} (\bibinfo{year}{2020}),
  \bibinfo{pages}{2050037}.
\newblock


\bibitem[Adak and Das(2019)]%
        {JCA19}
\bibfield{author}{\bibinfo{person}{Sumit Adak} {and} \bibinfo{person}{Sukanta
  Das}.} \bibinfo{year}{2019}\natexlab{}.
\newblock \showarticletitle{A Study on Maximal Length Cellular Automata and
  Primitive Polynomials in {GF}(2)}.
\newblock \bibinfo{journal}{\emph{Journal of Cellular Automata}}
  \bibinfo{volume}{14}, \bibinfo{number}{3-4} (\bibinfo{year}{2019}),
  \bibinfo{pages}{289--310}.
\newblock


\bibitem[Adak and Das(2021)]%
        {Adak-TCS-2021}
\bibfield{author}{\bibinfo{person}{Sumit Adak} {and} \bibinfo{person}{Sukanta
  Das}.} \bibinfo{year}{2021}\natexlab{}.
\newblock \showarticletitle{(Imperfect) Strategies to Generate Primitive
  Polynomials over GF(2)}.
\newblock \bibinfo{journal}{\emph{Theoretical Computer Science}}
  \bibinfo{volume}{872} (\bibinfo{year}{2021}), \bibinfo{pages}{79 -- 96}.
\newblock


\bibitem[Adak and Mukherjee(2023)]%
        {Ascat2023-Sumit}
\bibfield{author}{\bibinfo{person}{Sumit Adak} {and} \bibinfo{person}{Sukanya
  Mukherjee}.} \bibinfo{year}{2023}\natexlab{}.
\newblock \showarticletitle{On Non-linear Maximal Length Cellular Automata}. In
  \bibinfo{booktitle}{\emph{Proceedings of Second Asian Symposium on Cellular
  Automata Technology}}. \bibinfo{publisher}{Springer Nature Singapore},
  \bibinfo{address}{Singapore}, \bibinfo{pages}{115--126}.
\newblock
\showISBNx{978-981-99-0688-8}


\bibitem[Adak et~al\mbox{.}(2018)]%
        {ACRI18-Adak}
\bibfield{author}{\bibinfo{person}{Sumit Adak}, \bibinfo{person}{Sukanya
  Mukherjee}, {and} \bibinfo{person}{Sukanta Das}.}
  \bibinfo{year}{2018}\natexlab{}.
\newblock \showarticletitle{Do There Exist Non-linear Maximal Length Cellular
  Automata? {A} Study}. In \bibinfo{booktitle}{\emph{Proceedings of $13^{th}$
  International Conference on Cellular Automata for Research and Industry,
  {ACRI} 2018, Como, Italy, September 17-21, 2018}}. \bibinfo{pages}{289--297}.
\newblock


\bibitem[Adak et~al\mbox{.}(2021)]%
        {Adak-IS-2021}
\bibfield{author}{\bibinfo{person}{Sumit Adak}, \bibinfo{person}{Sukanya
  Mukherjee}, {and} \bibinfo{person}{Sukanta Das}.}
  \bibinfo{year}{2021}\natexlab{}.
\newblock \showarticletitle{Reachability problem in non-uniform cellular
  automata}.
\newblock \bibinfo{journal}{\emph{Information Sciences}}  \bibinfo{volume}{543}
  (\bibinfo{year}{2021}), \bibinfo{pages}{72 -- 84}.
\newblock


\bibitem[Adak et~al\mbox{.}(2016)]%
        {Adak-JCA16}
\bibfield{author}{\bibinfo{person}{Sumit Adak}, \bibinfo{person}{Nazma Naskar},
  \bibinfo{person}{Pradipta Maji}, {and} \bibinfo{person}{Sukanta Das}.}
  \bibinfo{year}{2016}\natexlab{}.
\newblock \showarticletitle{{O}n {S}ynthesis of {N}on-uniform {C}ellular
  {A}utomata {H}aving {O}nly {P}oint {A}ttractors}.
\newblock \bibinfo{journal}{\emph{Journal of Cellular Automata}}
  \bibinfo{volume}{12}, \bibinfo{number}{1-2} (\bibinfo{year}{2016}),
  \bibinfo{pages}{81--100}.
\newblock


\bibitem[Bardell(1990)]%
        {Barde90}
\bibfield{author}{\bibinfo{person}{Paul~H Bardell}.}
  \bibinfo{year}{1990}\natexlab{}.
\newblock \showarticletitle{Analysis of {C}ellular {A}utomata {U}sed as
  {P}seudo-random {P}attern {G}enerators}. In
  \bibinfo{booktitle}{\emph{Proceedings of International Test Conference}}.
  \bibinfo{pages}{762--768}.
\newblock


\bibitem[Bardell(1992)]%
        {Bardell1992}
\bibfield{author}{\bibinfo{person}{Paul~H Bardell}.}
  \bibinfo{year}{1992}\natexlab{}.
\newblock \showarticletitle{Primitive polynomials of degree 301 through 500}.
\newblock \bibinfo{journal}{\emph{Journal of Electronic Testing}}
  \bibinfo{volume}{3}, \bibinfo{number}{2} (\bibinfo{date}{01 May}
  \bibinfo{year}{1992}), \bibinfo{pages}{175--176}.
\newblock


\bibitem[Bardell and McAnney(1986)]%
        {Bardell}
\bibfield{author}{\bibinfo{person}{Paul~H Bardell} {and} \bibinfo{person}{W.H.
  McAnney}.} \bibinfo{year}{1986}\natexlab{}.
\newblock \showarticletitle{Pseudorandom Arrays for Built-In Tests}.
\newblock \bibinfo{journal}{\emph{IEEE Trans. Comput.}} \bibinfo{volume}{C-35},
  \bibinfo{number}{7} (\bibinfo{date}{July} \bibinfo{year}{1986}),
  \bibinfo{pages}{653--658}.
\newblock


\bibitem[Bardell et~al\mbox{.}(1987)]%
        {Barde87}
\bibfield{author}{\bibinfo{person}{Paul~H Bardell}, \bibinfo{person}{W.~H.
  McAnney}, {and} \bibinfo{person}{J. Savir}.} \bibinfo{year}{1987}\natexlab{}.
\newblock \bibinfo{booktitle}{\emph{{B}uilt-in {T}est for {VLSI} :
  {P}seudo-{R}andom {T}echniques}}.
\newblock \bibinfo{publisher}{John Wiley \& Sons}.
\newblock


\bibitem[Bhattacharjee et~al\mbox{.}(2018)]%
        {kamalikasurvey}
\bibfield{author}{\bibinfo{person}{Kamalika Bhattacharjee},
  \bibinfo{person}{Nazma Naskar}, \bibinfo{person}{Souvik Roy}, {and}
  \bibinfo{person}{Sukanta Das}.} \bibinfo{year}{2018}\natexlab{}.
\newblock \showarticletitle{A survey of cellular automata: Types, dynamics,
  non-uniformity and applications}.
\newblock \bibinfo{journal}{\emph{Natural Computing}} (\bibinfo{date}{July}
  \bibinfo{year}{2018}), \bibinfo{pages}{1--29}.
\newblock


\bibitem[Cattaneo et~al\mbox{.}(1999)]%
        {cattaneo1999dynamical}
\bibfield{author}{\bibinfo{person}{Gianpiero Cattaneo}, \bibinfo{person}{Enrico
  Formenti}, \bibinfo{person}{Luciano Margara}, {and}
  \bibinfo{person}{Giancarlo Mauri}.} \bibinfo{year}{1999}\natexlab{}.
\newblock \showarticletitle{On the dynamical behavior of chaotic cellular
  automata}.
\newblock \bibinfo{journal}{\emph{Theoretical Computer Science}}
  \bibinfo{volume}{217}, \bibinfo{number}{1} (\bibinfo{year}{1999}),
  \bibinfo{pages}{31--51}.
\newblock


\bibitem[Cattell(1995)]%
        {CattellTh}
\bibfield{author}{\bibinfo{person}{Kevin Cattell}.}
  \bibinfo{year}{1995}\natexlab{}.
\newblock \emph{\bibinfo{title}{{C}haracteristic {P}olynomials of
  {O}ne-{D}imensional {L}inear {H}ybrid {C}ellular {A}utomata}}.
\newblock \bibinfo{thesistype}{Ph.\,D. Dissertation}.
  \bibinfo{school}{Department of Computer Science, University of Victoria,
  Canada}.
\newblock


\bibitem[Cattell and Muzio(1991)]%
        {Cattell91}
\bibfield{author}{\bibinfo{person}{Kevin Cattell} {and} \bibinfo{person}{Jon~C.
  Muzio}.} \bibinfo{year}{1991}\natexlab{}.
\newblock \showarticletitle{A Linear Cellular Automata Algorithm:Theory}.
\newblock \bibinfo{journal}{\emph{Dept. of Computer Science, University of
  Victoria, Canada}} (\bibinfo{year}{1991}).
\newblock
\newblock
\shownote{{T}echnical Report DCS-161-IR}.


\bibitem[Cattell and Muzio(1996a)]%
        {Cattell2}
\bibfield{author}{\bibinfo{person}{Kevin Cattell} {and} \bibinfo{person}{Jon~C.
  Muzio}.} \bibinfo{year}{1996}\natexlab{a}.
\newblock \showarticletitle{Analysis of one-dimensional linear hybrid cellular
  automata over {GF}(q)}.
\newblock \bibinfo{journal}{\emph{IEEE Trans. Comput.}} \bibinfo{volume}{45},
  \bibinfo{number}{7} (\bibinfo{date}{July} \bibinfo{year}{1996}),
  \bibinfo{pages}{782--792}.
\newblock
\showISSN{0018-9340}
\urldef\tempurl%
\url{https://doi.org/10.1109/12.508317}
\showDOI{\tempurl}


\bibitem[Cattell and Muzio(1996b)]%
        {CattellM96}
\bibfield{author}{\bibinfo{person}{Kevin Cattell} {and} \bibinfo{person}{Jon~C.
  Muzio}.} \bibinfo{year}{1996}\natexlab{b}.
\newblock \showarticletitle{Synthesis of one-dimensional linear hybrid cellular
  automata}.
\newblock \bibinfo{journal}{\emph{{IEEE} Transactions on Computer-Aided Design
  of Integrated Circuits and Systems}} \bibinfo{volume}{15},
  \bibinfo{number}{3} (\bibinfo{date}{March} \bibinfo{year}{1996}),
  \bibinfo{pages}{325--335}.
\newblock


\bibitem[Cattell and Serra(1990)]%
        {Cattell1}
\bibfield{author}{\bibinfo{person}{Kevin Cattell} {and}
  \bibinfo{person}{Micaela Serra}.} \bibinfo{year}{1990}\natexlab{}.
\newblock \showarticletitle{The analysis of one dimensional multiple-valued
  linear cellular automata}. In \bibinfo{booktitle}{\emph{Proceedings of the
  $20^{th}$ International Symposium on Multiple-Valued Logic}}.
  \bibinfo{pages}{402--409}.
\newblock
\urldef\tempurl%
\url{https://doi.org/10.1109/ISMVL.1990.122655}
\showDOI{\tempurl}


\bibitem[Cattell and Zhang(1995)]%
        {Cattell500}
\bibfield{author}{\bibinfo{person}{Kevin Cattell} {and}
  \bibinfo{person}{Shujian Zhang}.} \bibinfo{year}{1995}\natexlab{}.
\newblock \showarticletitle{Minimal Cost One-dimensional Linear Hybrid Cellular
  Automata of Degree Through 500}.
\newblock \bibinfo{journal}{\emph{Journal of Electronic Testing}}
  \bibinfo{volume}{6}, \bibinfo{number}{2} (\bibinfo{date}{April}
  \bibinfo{year}{1995}), \bibinfo{pages}{255--258}.
\newblock
\showISSN{0923-8174}


\bibitem[Cervelle et~al\mbox{.}(2009)]%
        {cervelle2009chaotic}
\bibfield{author}{\bibinfo{person}{Julien Cervelle}, \bibinfo{person}{Alberto
  Dennunzio}, {and} \bibinfo{person}{Enrico Formenti}.}
  \bibinfo{year}{2009}\natexlab{}.
\newblock \showarticletitle{Chaotic behavior of cellular automata}.
\newblock In \bibinfo{booktitle}{\emph{Encyclopedia of Complexity and Systems
  Science}}, \bibfield{editor}{\bibinfo{person}{B.~Meyers}} (Ed.).
  \bibinfo{publisher}{Springer-Verlag}, \bibinfo{pages}{978--989}.
\newblock


\bibitem[Chang et~al\mbox{.}(1997)]%
        {CHANG97}
\bibfield{author}{\bibinfo{person}{Taejoo Chang}, \bibinfo{person}{Iickho
  Song}, \bibinfo{person}{Jinsoo Bae}, {and} \bibinfo{person}{Kwang~Soon Kim}.}
  \bibinfo{year}{1997}\natexlab{}.
\newblock \showarticletitle{Maximum length cellular automaton sequences and its
  application}.
\newblock \bibinfo{journal}{\emph{Signal Processing}} \bibinfo{volume}{56},
  \bibinfo{number}{2} (\bibinfo{year}{1997}), \bibinfo{pages}{199 -- 203}.
\newblock


\bibitem[Comer et~al\mbox{.}(2012)]%
        {comer2012random}
\bibfield{author}{\bibinfo{person}{Jonathan~M Comer}, \bibinfo{person}{Juan~C
  Cerda}, \bibinfo{person}{Chris~D Martinez}, {and} \bibinfo{person}{David~HK
  Hoe}.} \bibinfo{year}{2012}\natexlab{}.
\newblock \showarticletitle{Random number generators using cellular automata
  implemented on {FPGAs}}. In \bibinfo{booktitle}{\emph{Proceedings of the
  $44^{th}$ Southeastern Symposium on System Theory}}. IEEE,
  \bibinfo{pages}{67--72}.
\newblock


\bibitem[Compagner and Hoogland(1987)]%
        {COMPAGNER1987391}
\bibfield{author}{\bibinfo{person}{A Compagner} {and} \bibinfo{person}{A
  Hoogland}.} \bibinfo{year}{1987}\natexlab{}.
\newblock \showarticletitle{Maximum-length sequences, cellular automata, and
  random numbers}.
\newblock \bibinfo{journal}{\emph{J. Comput. Phys.}} \bibinfo{volume}{71},
  \bibinfo{number}{2} (\bibinfo{year}{1987}), \bibinfo{pages}{391 -- 428}.
\newblock


\bibitem[Das and Pal~Chaudhuri(1993)]%
        {das1993vector}
\bibfield{author}{\bibinfo{person}{Aloke~K. Das} {and} \bibinfo{person}{Parimal
  Pal~Chaudhuri}.} \bibinfo{year}{1993}\natexlab{}.
\newblock \showarticletitle{Vector space theoretic analysis of additive
  cellular automata and its application for pseudoexhaustive test pattern
  generation}.
\newblock \bibinfo{journal}{\emph{IEEE Trans. Comput.}} \bibinfo{volume}{42},
  \bibinfo{number}{3} (\bibinfo{year}{1993}), \bibinfo{pages}{340--352}.
\newblock


\bibitem[Das et~al\mbox{.}(1990)]%
        {Das90e}
\bibfield{author}{\bibinfo{person}{Aloke~K Das}, \bibinfo{person}{Debanjan
  Saha}, \bibinfo{person}{A Roy~Chowdhury}, \bibinfo{person}{Susanta Misra},
  {and} \bibinfo{person}{Parimal Pal~Chaudhuri}.}
  \bibinfo{year}{1990}\natexlab{}.
\newblock \showarticletitle{Signature analysers based on additive cellular
  automata}. In \bibinfo{booktitle}{\emph{Proceedings of the $20^{th}$
  International Symposium on Fault-Tolerant Computing}}. IEEE,
  \bibinfo{pages}{265--272}.
\newblock


\bibitem[Das(2007)]%
        {SukantaTh}
\bibfield{author}{\bibinfo{person}{Sukanta Das}.}
  \bibinfo{year}{2007}\natexlab{}.
\newblock \emph{\bibinfo{title}{{T}heory and {A}pplications of {N}onlinear
  {C}ellular {A}utomata {I}n {VLSI} {D}esign}}.
\newblock \bibinfo{thesistype}{Ph.\,D. Dissertation}. \bibinfo{school}{{B}engal
  {E}ngineering and {S}cience {U}niversity, {S}hibpur, {I}ndia}.
\newblock


\bibitem[Das et~al\mbox{.}(2003)]%
        {ats03}
\bibfield{author}{\bibinfo{person}{Sukanta Das}, \bibinfo{person}{Anirban
  Kundu}, \bibinfo{person}{Subhayan Sen}, \bibinfo{person}{Biplab~K Sikdar},
  {and} \bibinfo{person}{P Pal~Chaudhuri}.} \bibinfo{year}{2003}\natexlab{}.
\newblock \showarticletitle{Non-Linear Celluar Automata Based {PRPG} Design
  (Without Prohibited Pattern Set) In Linear Time Complexity}. In
  \bibinfo{booktitle}{\emph{Proceedings of the $12^{th}$ Asian Test Symposium,
  ATS'03}}. IEEE, \bibinfo{pages}{78--83}.
\newblock


\bibitem[Das et~al\mbox{.}(2004)]%
        {ats04d}
\bibfield{author}{\bibinfo{person}{Sukanta Das}, \bibinfo{person}{Anirban
  Kundu}, {and} \bibinfo{person}{Biplab~K Sikdar}.}
  \bibinfo{year}{2004}\natexlab{}.
\newblock \showarticletitle{Nonlinear {CA} Based Design of Test Set Generator
  Targeting Pseudo-Random Pattern Resistant Faults}. In
  \bibinfo{booktitle}{\emph{Proceedings of the $13^{th}$ Asian Test Symposium,
  ATS'04}}. IEEE, \bibinfo{pages}{196--201}.
\newblock


\bibitem[Das and Roy~Chowdhury(2010)]%
        {ACRIDRC10}
\bibfield{author}{\bibinfo{person}{Sourav Das} {and} \bibinfo{person}{Dipanwita
  Roy~Chowdhury}.} \bibinfo{year}{2010}\natexlab{}.
\newblock \showarticletitle{{G}enerating {C}ryptographically {S}uitable
  {Non}-linear {M}aximum {L}ength {C}ellular {A}utomata}. In
  \bibinfo{booktitle}{\emph{{P}roceedings of $9^{th}$ {I}nternational
  {C}onference on {C}ellular {A}utomata for {R}esearch and {I}ndustry
  ({ACRI})}}. \bibinfo{pages}{241--250}.
\newblock


\bibitem[Das and Roy~Chowdhury(2011)]%
        {jcaDasR11}
\bibfield{author}{\bibinfo{person}{Sourav Das} {and} \bibinfo{person}{Dipanwita
  Roy~Chowdhury}.} \bibinfo{year}{2011}\natexlab{}.
\newblock \showarticletitle{Cryptographically Suitable Maximum Length Cellular
  Automata.}
\newblock \bibinfo{journal}{\emph{Journal of Cellular Automata}}
  \bibinfo{volume}{6}, \bibinfo{number}{6} (\bibinfo{year}{2011}),
  \bibinfo{pages}{439--459}.
\newblock


\bibitem[Datta(1993)]%
        {Datta}
\bibfield{author}{\bibinfo{person}{K.~B~. Datta}.}
  \bibinfo{year}{1993}\natexlab{}.
\newblock \bibinfo{booktitle}{\emph{{M}atrix and {L}inear {A}lgebra}}.
\newblock \bibinfo{publisher}{BPB Publications}.
\newblock


\bibitem[Eisele(1991)]%
        {eisele1991long}
\bibfield{author}{\bibinfo{person}{Michael Eisele}.}
  \bibinfo{year}{1991}\natexlab{}.
\newblock \showarticletitle{Long-range correlations in chaotic cellular
  automata}.
\newblock \bibinfo{journal}{\emph{Physica D: Nonlinear Phenomena}}
  \bibinfo{volume}{48}, \bibinfo{number}{2-3} (\bibinfo{year}{1991}),
  \bibinfo{pages}{295--310}.
\newblock


\bibitem[Formenti et~al\mbox{.}(2014)]%
        {formenti2014advances}
\bibfield{author}{\bibinfo{person}{Enrico Formenti}, \bibinfo{person}{Katsunobu
  Imai}, \bibinfo{person}{Bruno Martin}, {and} \bibinfo{person}{Jean-Baptiste
  Yun{\`e}s}.} \bibinfo{year}{2014}\natexlab{}.
\newblock \showarticletitle{Advances on random sequence generation by uniform
  cellular automata}.
\newblock In \bibinfo{booktitle}{\emph{Proceedings of Computing with New
  Resources}}. \bibinfo{publisher}{Springer}, \bibinfo{pages}{56--70}.
\newblock


\bibitem[Fujiwara(1985)]%
        {Fujiw85}
\bibfield{author}{\bibinfo{person}{H. Fujiwara}.}
  \bibinfo{year}{1985}\natexlab{}.
\newblock \bibinfo{booktitle}{\emph{Logic Design and Design for Testability}}.
\newblock \bibinfo{publisher}{The MIT Press, London}.
\newblock


\bibitem[Ganguly(2004)]%
        {NiloyPhD}
\bibfield{author}{\bibinfo{person}{Niloy Ganguly}.}
  \bibinfo{year}{2004}\natexlab{}.
\newblock \emph{\bibinfo{title}{{C}ellular {A}utomata {E}volution : {T}heory
  and {A}pplications in {P}attern {R}ecognition and {C}lassification}}.
\newblock \bibinfo{thesistype}{Ph.\,D. Dissertation}. \bibinfo{school}{{Bengal
  Engineering College (a Deemed University)}, {I}ndia}.
\newblock


\bibitem[Ganguly et~al\mbox{.}(2002)]%
        {ganguly2002evolutionary}
\bibfield{author}{\bibinfo{person}{Niloy Ganguly},
  \bibinfo{person}{Anindyasundar Nandi}, \bibinfo{person}{Sukanta Das},
  \bibinfo{person}{Biplab~K Sikdar}, {and} \bibinfo{person}{Parimal
  Pal~Chaudhuri}.} \bibinfo{year}{2002}\natexlab{}.
\newblock \showarticletitle{An evolutionary strategy to design an on-chip test
  pattern generator without prohibited pattern set ({PPS})}. In
  \bibinfo{booktitle}{\emph{Proceedings of the $11^{th}$ Asian Test Symposium,
  ATS'02}}. IEEE, \bibinfo{pages}{260--265}.
\newblock


\bibitem[Ganguly et~al\mbox{.}(2008)]%
        {NiloyCycle}
\bibfield{author}{\bibinfo{person}{Niloy Ganguly}, \bibinfo{person}{Biplab~K.
  Sikdar}, {and} \bibinfo{person}{Parimal Pal~Chaudhuri}.}
  \bibinfo{year}{2008}\natexlab{}.
\newblock \showarticletitle{Exploring Cycle Structures of Additive Cellular
  Automata}.
\newblock \bibinfo{journal}{\emph{Fundamenta Informaticae}}
  \bibinfo{volume}{2} (\bibinfo{year}{2008}), \bibinfo{pages}{137--154}.
\newblock


\bibitem[Ghosh et~al\mbox{.}(2014)]%
        {ACRIDRC14}
\bibfield{author}{\bibinfo{person}{Shamit Ghosh}, \bibinfo{person}{Abhrajit
  Sengupta}, \bibinfo{person}{Dhiman Saha}, {and}
  \bibinfo{person}{Dipanwita~Roy Chowdhury}.} \bibinfo{year}{2014}\natexlab{}.
\newblock \showarticletitle{{A} {S}calable {M}ethod for {C}onstructing
  {N}on-linear {C}ellular {A}utomata with {P}eriod ${2^n}-1$}. In
  \bibinfo{booktitle}{\emph{{P}roceedings of $11^{th}$ {I}nternational
  {C}onference on {C}ellular {A}utomata for {R}esearch and {I}ndustry
  ({ACRI})}}. \bibinfo{pages}{65--74}.
\newblock


\bibitem[Hansen and Mullen(1992)]%
        {Hansen1992}
\bibfield{author}{\bibinfo{person}{Tom Hansen} {and} \bibinfo{person}{Gary~L.
  Mullen}.} \bibinfo{year}{1992}\natexlab{}.
\newblock \showarticletitle{Primitive Polynomials Over Finite Fields}.
\newblock \bibinfo{journal}{\emph{Math. Comp.}} \bibinfo{volume}{59},
  \bibinfo{number}{200} (\bibinfo{year}{1992}), \bibinfo{pages}{639--643}.
\newblock


\bibitem[Hortensius et~al\mbox{.}(1989a)]%
        {Horte89b}
\bibfield{author}{\bibinfo{person}{Peter~D Hortensius},
  \bibinfo{person}{Howard~C Card}, \bibinfo{person}{Robert~D McLeod}, {and}
  \bibinfo{person}{Werner Pries}.} \bibinfo{year}{1989}\natexlab{a}.
\newblock \showarticletitle{Importance Sampling for {I}sing Computers Using
  One-Dimensional Cellular Automata}.
\newblock \bibinfo{journal}{\emph{IEEE Trans. Comput.}} \bibinfo{volume}{C-38},
  \bibinfo{number}{6} (\bibinfo{date}{June} \bibinfo{year}{1989}),
  \bibinfo{pages}{769--774}.
\newblock


\bibitem[Hortensius et~al\mbox{.}(1989b)]%
        {Horte89c}
\bibfield{author}{\bibinfo{person}{Peter~D. Hortensius},
  \bibinfo{person}{Robert~D. McLeod}, {and} \bibinfo{person}{Howard~C. Card}.}
  \bibinfo{year}{1989}\natexlab{b}.
\newblock \showarticletitle{Parallel random number generation for {VLSI}
  systems using cellular automata}.
\newblock \bibinfo{journal}{\emph{IEEE Trans. Comput.}} \bibinfo{volume}{38},
  \bibinfo{number}{10} (\bibinfo{year}{1989}), \bibinfo{pages}{1466--1473}.
\newblock


\bibitem[Hortensius et~al\mbox{.}(1990)]%
        {Horte90b}
\bibfield{author}{\bibinfo{person}{Peter~D. Hortensius},
  \bibinfo{person}{Robert~D. McLeod}, {and} \bibinfo{person}{Howard~C. Card}.}
  \bibinfo{year}{1990}\natexlab{}.
\newblock \showarticletitle{Cellular automata based signature analysis for
  built-in self-test}.
\newblock \bibinfo{journal}{\emph{IEEE Trans. Comput.}} \bibinfo{volume}{C-39},
  \bibinfo{number}{10} (\bibinfo{date}{October} \bibinfo{year}{1990}),
  \bibinfo{pages}{1273--1283}.
\newblock


\bibitem[Hortensius et~al\mbox{.}(1989c)]%
        {Horte89a}
\bibfield{author}{\bibinfo{person}{Peter~D Hortensius},
  \bibinfo{person}{Robert~D Mcleod}, \bibinfo{person}{Werner Pries},
  \bibinfo{person}{D~Michael Miller}, {and} \bibinfo{person}{Howard~C Card}.}
  \bibinfo{year}{1989}\natexlab{c}.
\newblock \showarticletitle{Cellular automata-based pseudorandom number
  generators for built-in self-test}.
\newblock \bibinfo{journal}{\emph{IEEE Transactions on Computer-Aided Design of
  Integrated Circuits and Systems}} \bibinfo{volume}{8}, \bibinfo{number}{8}
  (\bibinfo{year}{1989}), \bibinfo{pages}{842--859}.
\newblock


\bibitem[Kamilya et~al\mbox{.}(2019)]%
        {Supreeti-JCA19}
\bibfield{author}{\bibinfo{person}{Supreeti Kamilya}, \bibinfo{person}{Sumit
  Adak}, \bibinfo{person}{Biplab~K. Sikdar}, {and} \bibinfo{person}{Sukanta
  Das}.} \bibinfo{year}{2019}\natexlab{}.
\newblock \showarticletitle{SACAs: (Non-uniform) Cellular Automata that
  Converge to a Single Fixed Point}.
\newblock \bibinfo{journal}{\emph{Journal of Cellular Automata}}
  \bibinfo{volume}{14}, \bibinfo{number}{1-2} (\bibinfo{year}{2019}),
  \bibinfo{pages}{27--49}.
\newblock


\bibitem[Kamilya and Das(2018)]%
        {Supreeti_2018_chaos}
\bibfield{author}{\bibinfo{person}{Supreeti Kamilya} {and}
  \bibinfo{person}{Sukanta Das}.} \bibinfo{year}{2018}\natexlab{}.
\newblock \showarticletitle{A Study of Chaos in Cellular Automata}.
\newblock \bibinfo{journal}{\emph{International Journal of Bifurcation and
  Chaos}} \bibinfo{volume}{28}, \bibinfo{number}{03} (\bibinfo{year}{2018}),
  \bibinfo{pages}{1830008}.
\newblock


\bibitem[Knuth(1981)]%
        {Knuth81}
\bibfield{author}{\bibinfo{person}{Donald~E. Knuth}.}
  \bibinfo{year}{1981}\natexlab{}.
\newblock \bibinfo{booktitle}{\emph{The Art of Computer Programming :
  Seminumerical Algorithms, Volume 2}}.
\newblock \bibinfo{publisher}{Addison-Wesley Longman Publishing Co., Inc.},
  \bibinfo{address}{Boston, MA, USA}.
\newblock


\bibitem[L'Ecuyer and Simard(2007)]%
        {L'Ecuyer2007}
\bibfield{author}{\bibinfo{person}{Pierre L'Ecuyer} {and}
  \bibinfo{person}{Richard Simard}.} \bibinfo{year}{2007}\natexlab{}.
\newblock \showarticletitle{TestU01: A C Library for Empirical Testing of
  Random Number Generators}.
\newblock \bibinfo{journal}{\emph{ACM Trans. Math. Software}}
  \bibinfo{volume}{33}, \bibinfo{number}{4} (\bibinfo{date}{Aug.}
  \bibinfo{year}{2007}), \bibinfo{pages}{22:1--22:40}.
\newblock


\bibitem[Leporati and Mariot(2014)]%
        {leporati2014cryptographic}
\bibfield{author}{\bibinfo{person}{Alberto Leporati} {and}
  \bibinfo{person}{Luca Mariot}.} \bibinfo{year}{2014}\natexlab{}.
\newblock \showarticletitle{Cryptographic Properties of Bipermutive Cellular
  Automata Rules.}
\newblock \bibinfo{journal}{\emph{Journal of Cellular Automata}}
  \bibinfo{volume}{9}, \bibinfo{number}{5-6} (\bibinfo{year}{2014}),
  \bibinfo{pages}{437--475}.
\newblock


\bibitem[Lidl and Niederreiter(1986)]%
        {Lidl1986}
\bibfield{author}{\bibinfo{person}{Rudolf Lidl} {and} \bibinfo{person}{Harald
  Niederreiter}.} \bibinfo{year}{1986}\natexlab{}.
\newblock \bibinfo{booktitle}{\emph{Introduction to Finite Fields and Their
  Applications}}.
\newblock \bibinfo{publisher}{Cambridge University Press},
  \bibinfo{address}{New York, USA}.
\newblock
\showISBNx{0-521-30706-6}


\bibitem[Lin and Costello(1983)]%
        {Lin1983}
\bibfield{author}{\bibinfo{person}{S. Lin} {and} \bibinfo{person}{D.J.
  Costello}.} \bibinfo{year}{1983}\natexlab{}.
\newblock \bibinfo{booktitle}{\emph{Error Control Coding : Fundamentals and
  Applications}}.
\newblock \bibinfo{publisher}{Englewood Cliffs}, \bibinfo{address}{NJ:
  Prentice-Hall}.
\newblock


\bibitem[Marsaglia(1996)]%
        {diehard}
\bibfield{author}{\bibinfo{person}{G. Marsaglia}.}
  \bibinfo{year}{1996}\natexlab{}.
\newblock \showarticletitle{{DIEHARD}: {A} battery of tests of randomness}. In
  \bibinfo{booktitle}{\emph{http://stat.fsu.edu/\~~geo/diehard.html}}.
\newblock


\bibitem[Matsumoto(1998)]%
        {Makato98}
\bibfield{author}{\bibinfo{person}{Makoto Matsumoto}.}
  \bibinfo{year}{1998}\natexlab{}.
\newblock \showarticletitle{Simple Cellular Automata As Pseudorandom M-sequence
  Generators for Built-in Self-test}.
\newblock \bibinfo{journal}{\emph{ACM Transactions on Modeling and Computer
  Simulation}} \bibinfo{volume}{8}, \bibinfo{number}{1}
  (\bibinfo{date}{January} \bibinfo{year}{1998}), \bibinfo{pages}{31--42}.
\newblock
\showISSN{1049-3301}


\bibitem[McEliece(1987)]%
        {McEliece87}
\bibfield{author}{\bibinfo{person}{R.~J. McEliece}.}
  \bibinfo{year}{1987}\natexlab{}.
\newblock \bibinfo{booktitle}{\emph{Finite Field for Scientists and
  Engineers}}.
\newblock \bibinfo{publisher}{Kluwer Academic Publishers},
  \bibinfo{address}{Norwell, MA, USA}.
\newblock
\showISBNx{0-89838-191-6}


\bibitem[Mesirov and Sweet(1987)]%
        {Mesirov1987}
\bibfield{author}{\bibinfo{person}{Jill~P. Mesirov} {and}
  \bibinfo{person}{Melvin~M. Sweet}.} \bibinfo{year}{1987}\natexlab{}.
\newblock \showarticletitle{Continued fraction expansions of rational
  expressions with irreducible denominators in characteristic 2}.
\newblock \bibinfo{journal}{\emph{Journal of Number Theory}}
  \bibinfo{volume}{27}, \bibinfo{number}{2} (\bibinfo{year}{1987}),
  \bibinfo{pages}{144 -- 148}.
\newblock


\bibitem[Mitchell et~al\mbox{.}(1993)]%
        {Mitchell1993chaos}
\bibfield{author}{\bibinfo{person}{M. Mitchell}, \bibinfo{person}{P.~T.
  Hraber}, {and} \bibinfo{person}{J.~P. Crutchfield}.}
  \bibinfo{year}{1993}\natexlab{}.
\newblock \showarticletitle{Revisiting the edge of chaos: Evolving cellular
  automata to perform computations}.
\newblock \bibinfo{journal}{\emph{Complex Systems}} \bibinfo{volume}{7},
  \bibinfo{number}{2} (\bibinfo{year}{1993}), \bibinfo{pages}{89--130}.
\newblock


\bibitem[Mukherjee et~al\mbox{.}(2021)]%
        {Sukanya-IJMPC-2021}
\bibfield{author}{\bibinfo{person}{Sukanya Mukherjee}, \bibinfo{person}{Sumit
  Adak}, \bibinfo{person}{Kamalika Bhattacharjee}, {and}
  \bibinfo{person}{Sukanta Das}.} \bibinfo{year}{2021}\natexlab{}.
\newblock \showarticletitle{Non-uniform Non-linear Cellular Automata with Large
  Cycles and Their Application in Pseudo-Random Number Generation}.
\newblock \bibinfo{journal}{\emph{International Journal of Modern Physics C}}
  \bibinfo{volume}{32}, \bibinfo{number}{07} (\bibinfo{year}{2021}),
  \bibinfo{pages}{2150091}.
\newblock


\bibitem[Nandi et~al\mbox{.}(1994)]%
        {nandi1994theory}
\bibfield{author}{\bibinfo{person}{Sukumar Nandi}, \bibinfo{person}{B~K Kar},
  {and} \bibinfo{person}{Parimal Pal~Chaudhuri}.}
  \bibinfo{year}{1994}\natexlab{}.
\newblock \showarticletitle{Theory and applications of cellular automata in
  cryptography}.
\newblock \bibinfo{journal}{\emph{IEEE Transactions on computers}}
  \bibinfo{volume}{43}, \bibinfo{number}{12} (\bibinfo{year}{1994}),
  \bibinfo{pages}{1346--1357}.
\newblock


\bibitem[Nandi and Pal~Chaudhuri(1993)]%
        {Nandi93d}
\bibfield{author}{\bibinfo{person}{Sukumar Nandi} {and}
  \bibinfo{person}{Parimal Pal~Chaudhuri}.} \bibinfo{year}{1993}\natexlab{}.
\newblock \showarticletitle{Additive cellular automata as an on-chip test
  pattern generator}. In \bibinfo{booktitle}{\emph{Proceedings of $2^{nd}$
  Asian Test Symposium}}. IEEE, \bibinfo{pages}{166--171}.
\newblock


\bibitem[Nandi and Pal~Chaudhuri(1996)]%
        {Nandi96}
\bibfield{author}{\bibinfo{person}{Sukumar Nandi} {and}
  \bibinfo{person}{Parimal Pal~Chaudhuri}.} \bibinfo{year}{1996}\natexlab{}.
\newblock \showarticletitle{Analysis of periodic and intermediate boundary
  90/150 cellular automata}.
\newblock \bibinfo{journal}{\emph{IEEE Transactions on computers}}
  \bibinfo{volume}{45}, \bibinfo{number}{1} (\bibinfo{year}{1996}),
  \bibinfo{pages}{1--12}.
\newblock


\bibitem[Naskar et~al\mbox{.}(2014)]%
        {ACRI14}
\bibfield{author}{\bibinfo{person}{Nazma Naskar}, \bibinfo{person}{Sumit Adak},
  \bibinfo{person}{Pradipta Maji}, {and} \bibinfo{person}{Sukanta Das}.}
  \bibinfo{year}{2014}\natexlab{}.
\newblock \showarticletitle{{S}ynthesis of {N}on-uniform {C}ellular {A}utomata
  {H}aving only {P}oint {A}ttractors}. In
  \bibinfo{booktitle}{\emph{{P}roceedings of $11^{th}$ {I}nternational
  {C}onference on {C}ellular {A}utomata for {R}esearch and {I}ndustry
  ({ACRI})}}. \bibinfo{pages}{105--114}.
\newblock


\bibitem[Pal~Chaudhuri et~al\mbox{.}(1997)]%
        {ppc1}
\bibfield{author}{\bibinfo{person}{Parimal Pal~Chaudhuri},
  \bibinfo{person}{Dipanwita Roy~Chowdhury}, \bibinfo{person}{Sukumar Nandi},
  {and} \bibinfo{person}{S Chatterjee}.} \bibinfo{year}{1997}\natexlab{}.
\newblock \bibinfo{booktitle}{\emph{Additive Cellular Automata -- Theory and
  Applications}}. Vol.~\bibinfo{volume}{1}.
\newblock \bibinfo{publisher}{IEEE Computer Society Press, USA, ISBN
  0-8186-7717-1}.
\newblock


\bibitem[Paul(2002)]%
        {KolinPhd}
\bibfield{author}{\bibinfo{person}{Kolin Paul}.}
  \bibinfo{year}{2002}\natexlab{}.
\newblock \emph{\bibinfo{title}{{T}heory and {A}pplication of ${GF}(2^p)$
  {C}ellular {A}utomata}}.
\newblock \bibinfo{thesistype}{Ph.\,D. Dissertation}. \bibinfo{school}{{B}engal
  {E}ngineering {C}ollege (a {D}eemed {U}niversity), {I}ndia}.
\newblock


\bibitem[Paul et~al\mbox{.}(2002)]%
        {kolintc}
\bibfield{author}{\bibinfo{person}{Kolin Paul}, \bibinfo{person}{Dipanwita
  Roy~Chowdhury}, {and} \bibinfo{person}{Parimal Pal~Chaudhuri}.}
  \bibinfo{year}{2002}\natexlab{}.
\newblock \showarticletitle{{T}heory of {E}xtended {L}inear {M}achines}.
\newblock \bibinfo{journal}{\emph{IEEE Trans. Comput.}} \bibinfo{volume}{51},
  \bibinfo{number}{9} (\bibinfo{year}{2002}), \bibinfo{pages}{1106--1110}.
\newblock


\bibitem[Peterson and Weldon(1972)]%
        {Peterson1972}
\bibfield{author}{\bibinfo{person}{W.~Wesley Peterson} {and}
  \bibinfo{person}{E.~J Weldon}.} \bibinfo{year}{1972}\natexlab{}.
\newblock \bibinfo{booktitle}{\emph{Error-correcting codes}
  (\bibinfo{edition}{2nd} ed.)}.
\newblock \bibinfo{publisher}{MIT Press Cambridge}. xi, 560 pages.
\newblock


\bibitem[Pries et~al\mbox{.}(1986)]%
        {Pries86}
\bibfield{author}{\bibinfo{person}{Werner Pries}, \bibinfo{person}{Adonios
  Thanailakis}, {and} \bibinfo{person}{Howard~C. Card}.}
  \bibinfo{year}{1986}\natexlab{}.
\newblock \showarticletitle{Group properties of cellular automata and {VLSI}
  applications}.
\newblock \bibinfo{journal}{\emph{IEEE Trans. Comput.}} \bibinfo{volume}{C-35},
  \bibinfo{number}{12} (\bibinfo{date}{December} \bibinfo{year}{1986}),
  \bibinfo{pages}{1013--1024}.
\newblock


\bibitem[Rabin(1980)]%
        {Rabin-irreducible}
\bibfield{author}{\bibinfo{person}{M. Rabin}.} \bibinfo{year}{1980}\natexlab{}.
\newblock \showarticletitle{Probabilistic Algorithms in Finite Fields}.
\newblock \bibinfo{journal}{\emph{SIAM J. Comput.}} \bibinfo{volume}{9},
  \bibinfo{number}{2} (\bibinfo{year}{1980}), \bibinfo{pages}{273--280}.
\newblock


\bibitem[Roy~Chowdhury et~al\mbox{.}(1994)]%
        {Chowd94a}
\bibfield{author}{\bibinfo{person}{D Roy~Chowdhury}, \bibinfo{person}{S Basu},
  \bibinfo{person}{I Sen~Gupta}, {and} \bibinfo{person}{P Pal~Chaudhuri}.}
  \bibinfo{year}{1994}\natexlab{}.
\newblock \showarticletitle{Design of {CAECC} -- {C}ellular Automata Based
  Error Correcting Code}.
\newblock \bibinfo{journal}{\emph{IEEE Trans. Comput.}} \bibinfo{volume}{43},
  \bibinfo{number}{6} (\bibinfo{date}{June} \bibinfo{year}{1994}),
  \bibinfo{pages}{759--764}.
\newblock


\bibitem[Rueppel(1986)]%
        {Rueppel1986}
\bibfield{author}{\bibinfo{person}{Rainer~A. Rueppel}.}
  \bibinfo{year}{1986}\natexlab{}.
\newblock \bibinfo{booktitle}{\emph{Analysis and Design of Stream Ciphers}}.
\newblock \bibinfo{publisher}{Springer-Verlag}, \bibinfo{address}{Berlin,
  Heidelberg}.
\newblock
\showISBNx{0387168702}


\bibitem[Sarkar(2003)]%
        {Sarkar2003}
\bibfield{author}{\bibinfo{person}{Palash Sarkar}.}
  \bibinfo{year}{2003}\natexlab{}.
\newblock \showarticletitle{Computing Shifts in 90/150 cellular automata
  sequences}.
\newblock \bibinfo{journal}{\emph{Finite Fields and Their Applications}}
  \bibinfo{volume}{9}, \bibinfo{number}{2} (\bibinfo{year}{2003}),
  \bibinfo{pages}{175 -- 186}.
\newblock


\bibitem[Saxena and McCluskey(2004)]%
        {Nirmal04}
\bibfield{author}{\bibinfo{person}{Nirmal~R. Saxena} {and}
  \bibinfo{person}{Edward~J. McCluskey}.} \bibinfo{year}{2004}\natexlab{}.
\newblock \showarticletitle{Primitive Polynomial Generation Algorithms
  Implementation and Performance Analysis}.
\newblock \bibinfo{journal}{\emph{Technical Report CRC TR 04-03, Center for
  Reliable Computing, Stanford University, California}} (\bibinfo{date}{April}
  \bibinfo{year}{2004}).
\newblock


\bibitem[Serra et~al\mbox{.}(1990)]%
        {Serra90c}
\bibfield{author}{\bibinfo{person}{M Serra}, \bibinfo{person}{T Slater},
  \bibinfo{person}{J~C Muzio}, {and} \bibinfo{person}{D~M Miller}.}
  \bibinfo{year}{1990}\natexlab{}.
\newblock \showarticletitle{The Analysis of One-Dimensional Linear Cellular
  Automata and Their Aliasing Probabilities}.
\newblock \bibinfo{journal}{\emph{IEEE Transactions on Computer-Aided Design}}
  \bibinfo{volume}{9}, \bibinfo{number}{7} (\bibinfo{date}{July}
  \bibinfo{year}{1990}), \bibinfo{pages}{767--778}.
\newblock


\bibitem[Sipper and Tomassini(1996)]%
        {sipper1996generating}
\bibfield{author}{\bibinfo{person}{Moshe Sipper} {and} \bibinfo{person}{Marco
  Tomassini}.} \bibinfo{year}{1996}\natexlab{}.
\newblock \showarticletitle{Generating parallel random number generators by
  cellular programming}.
\newblock \bibinfo{journal}{\emph{International Journal of Modern Physics C}}
  \bibinfo{volume}{7}, \bibinfo{number}{02} (\bibinfo{year}{1996}),
  \bibinfo{pages}{181--190}.
\newblock


\bibitem[Soto(1999)]%
        {Soto99}
\bibfield{author}{\bibinfo{person}{Juan Soto}.}
  \bibinfo{year}{1999}\natexlab{}.
\newblock \showarticletitle{Statistical Testing of Random Number Generators}.
  In \bibinfo{booktitle}{\emph{Proceedings of the $22^{nd}$ National
  Information Systems Security Conference, Arlington, Virginia, United
  States}}.
\newblock


\bibitem[Tezuka and Fushimi(1994)]%
        {Tezuka}
\bibfield{author}{\bibinfo{person}{S. Tezuka} {and} \bibinfo{person}{M.
  Fushimi}.} \bibinfo{year}{1994}\natexlab{}.
\newblock \showarticletitle{A {M}ethod of {D}esigning {C}ellular {A}utomata as
  {P}seudorandom {N}umber {G}enerators for {B}uilt-in {S}elf-{T}est for
  {VLSI}}.
\newblock \bibinfo{journal}{\emph{Finite Fields: Theory, Applications and
  Algorithms, Contemporary Mathematics AMS}}  \bibinfo{volume}{168}
  (\bibinfo{year}{1994}), \bibinfo{pages}{363--367}.
\newblock


\bibitem[Tomassini et~al\mbox{.}(2000)]%
        {tomassini2000generation}
\bibfield{author}{\bibinfo{person}{Marco Tomassini}, \bibinfo{person}{Moshe
  Sipper}, {and} \bibinfo{person}{Mathieu Perrenoud}.}
  \bibinfo{year}{2000}\natexlab{}.
\newblock \showarticletitle{On the generation of high-quality random numbers by
  two-dimensional cellular automata}.
\newblock \bibinfo{journal}{\emph{IEEE Transactions on computers}}
  \bibinfo{volume}{49}, \bibinfo{number}{10} (\bibinfo{year}{2000}),
  \bibinfo{pages}{1146--1151}.
\newblock


\bibitem[Tsalides(1990)]%
        {tsalides1990cellular}
\bibfield{author}{\bibinfo{person}{Ph Tsalides}.}
  \bibinfo{year}{1990}\natexlab{}.
\newblock \showarticletitle{Cellular automata-based built-in self-test
  structures for {VLSI} systems}.
\newblock \bibinfo{journal}{\emph{Electronics Letters}} \bibinfo{volume}{26},
  \bibinfo{number}{17} (\bibinfo{year}{1990}), \bibinfo{pages}{1350--1352}.
\newblock


\bibitem[Tsui(1987)]%
        {Tsui1987}
\bibfield{author}{\bibinfo{person}{F.F. Tsui}.}
  \bibinfo{year}{1987}\natexlab{}.
\newblock \bibinfo{booktitle}{\emph{LSI/VLSI Testability Design}}.
\newblock \bibinfo{publisher}{McGraw-Hill}, \bibinfo{address}{New York}.
\newblock


\bibitem[Wang et~al\mbox{.}(2008)]%
        {wang2008generating}
\bibfield{author}{\bibinfo{person}{Q. Wang}, \bibinfo{person}{S. Yu},
  \bibinfo{person}{W. Ding}, {and} \bibinfo{person}{M. Leng}.}
  \bibinfo{year}{2008}\natexlab{}.
\newblock \showarticletitle{Generating high-quality random numbers by cellular
  automata with {PSO}}. In \bibinfo{booktitle}{\emph{Proceedings of $4^{th}$
  International Conference on Natural Computation}}. IEEE,
  \bibinfo{pages}{430--433}.
\newblock


\bibitem[Witten et~al\mbox{.}(2016)]%
        {Weka-book}
\bibfield{author}{\bibinfo{person}{Ian~H. Witten}, \bibinfo{person}{Eibe
  Frank}, \bibinfo{person}{Mark~A. Hall}, {and} \bibinfo{person}{Christopher~J.
  Pal}.} \bibinfo{year}{2016}\natexlab{}.
\newblock \bibinfo{booktitle}{\emph{Data Mining, Fourth Edition: Practical
  Machine Learning Tools and Techniques} (\bibinfo{edition}{4th} ed.)}.
\newblock \bibinfo{publisher}{Morgan Kaufmann Publishers Inc.},
  \bibinfo{address}{San Francisco, CA, USA}.
\newblock


\bibitem[Wolfram(1983)]%
        {Wolfr83}
\bibfield{author}{\bibinfo{person}{Stephen Wolfram}.}
  \bibinfo{year}{1983}\natexlab{}.
\newblock \showarticletitle{Statistical mechanics of cellular automata}.
\newblock \bibinfo{journal}{\emph{Reviews of modern physics}}
  \bibinfo{volume}{55}, \bibinfo{number}{3} (\bibinfo{year}{1983}),
  \bibinfo{pages}{601--644}.
\newblock


\bibitem[Wolfram(1984)]%
        {Wolfram84b}
\bibfield{author}{\bibinfo{person}{Stephen Wolfram}.}
  \bibinfo{year}{1984}\natexlab{}.
\newblock \showarticletitle{Universality and complexity in cellular automata}.
\newblock \bibinfo{journal}{\emph{Physica D: Nonlinear Phenomena}}
  \bibinfo{volume}{10}, \bibinfo{number}{1-2} (\bibinfo{year}{1984}),
  \bibinfo{pages}{1 -- 35}.
\newblock


\bibitem[Wolfram(1985)]%
        {Wolfr85a}
\bibfield{author}{\bibinfo{person}{Stephen Wolfram}.}
  \bibinfo{year}{1985}\natexlab{}.
\newblock \showarticletitle{Origins of randomness in physical systems}.
\newblock \bibinfo{journal}{\emph{Physical Review Letters}}
  \bibinfo{volume}{55}, \bibinfo{number}{5} (\bibinfo{year}{1985}),
  \bibinfo{pages}{449--452}.
\newblock


\bibitem[Zhang et~al\mbox{.}(1991)]%
        {Zhang91}
\bibfield{author}{\bibinfo{person}{Shujian Zhang}, \bibinfo{person}{D.~M.
  Miller}, {and} \bibinfo{person}{Jon~C. Muzio}.}
  \bibinfo{year}{1991}\natexlab{}.
\newblock \showarticletitle{Determination of minimal cost one-dimensional
  linear hybrid cellular automata}.
\newblock \bibinfo{journal}{\emph{Electronics Letters}} \bibinfo{volume}{27},
  \bibinfo{number}{18} (\bibinfo{date}{August} \bibinfo{year}{1991}),
  \bibinfo{pages}{1625--1627}.
\newblock


\end{thebibliography}


\begin{thebibliography}{85}
\expandafter\ifx\csname natexlab\endcsname\relax\def\natexlab#1{#1}\fi
\providecommand{\url}[1]{\texttt{#1}}
\providecommand{\href}[2]{#2}
\providecommand{\path}[1]{#1}
\providecommand{\DOIprefix}{doi:}
\providecommand{\ArXivprefix}{arXiv:}
\providecommand{\URLprefix}{URL: }
\providecommand{\Pubmedprefix}{pmid:}
\providecommand{\doi}[1]{\href{http://dx.doi.org/#1}{\path{#1}}}
\providecommand{\Pubmed}[1]{\href{pmid:#1}{\path{#1}}}
\providecommand{\bibinfo}[2]{#2}
\ifx\xfnm\relax \def\xfnm[#1]{\unskip,\space#1}\fi
%Type = Book
\bibitem[{Abramowitz and Stegun(1965)}]{handbookmath}
\bibinfo{author}{Abramowitz, M.}, \bibinfo{author}{Stegun, I.A.},
  \bibinfo{year}{1965}.
\newblock \bibinfo{title}{Handbook of Mathematical Functions with Formulas,
  Graphs and Mathematical Tables}.
\newblock \bibinfo{publisher}{Dover Publications, Inc.}, \bibinfo{address}{New
  York}.
%Type = Article
\bibitem[{Acerbi et~al.(2009)Acerbi, Dennunzio and
  Formenti}]{acerbi2009conservation}
\bibinfo{author}{Acerbi, L.}, \bibinfo{author}{Dennunzio, A.},
  \bibinfo{author}{Formenti, E.}, \bibinfo{year}{2009}.
\newblock \bibinfo{title}{Conservation of some dynamical properties for
  operations on cellular automata}.
\newblock \bibinfo{journal}{Theoretical Computer Science}
  \bibinfo{volume}{410}, \bibinfo{pages}{3685--3693}.
%Type = Article
\bibitem[{Adak et~al.(2020)Adak, Bhattacharjee and Das}]{Adak-IJMPC2019}
\bibinfo{author}{Adak, S.}, \bibinfo{author}{Bhattacharjee, K.},
  \bibinfo{author}{Das, S.}, \bibinfo{year}{2020}.
\newblock \bibinfo{title}{Maximal length cellular automata in $gf$(q) and
  pseudo-random number generation}.
\newblock \bibinfo{journal}{International Journal of Modern Physics C}
  \bibinfo{volume}{31}, \bibinfo{pages}{2050037}.
%Type = Article
\bibitem[{Adak and Das(2019)}]{JCA19}
\bibinfo{author}{Adak, S.}, \bibinfo{author}{Das, S.}, \bibinfo{year}{2019}.
\newblock \bibinfo{title}{A study on maximal length cellular automata and
  primitive polynomials in {GF}(2)}.
\newblock \bibinfo{journal}{Journal of Cellular Automata} \bibinfo{volume}{14},
  \bibinfo{pages}{289--310}.
%Type = Article
\bibitem[{Adak and Das(2021)}]{Adak-TCS-2021}
\bibinfo{author}{Adak, S.}, \bibinfo{author}{Das, S.}, \bibinfo{year}{2021}.
\newblock \bibinfo{title}{(imperfect) strategies to generate primitive
  polynomials over gf(2)}.
\newblock \bibinfo{journal}{Theoretical Computer Science}
  \bibinfo{volume}{872}, \bibinfo{pages}{79 -- 96}.
%Type = Inproceedings
\bibitem[{Adak and Mukherjee(2023)}]{Ascat2023-Sumit}
\bibinfo{author}{Adak, S.}, \bibinfo{author}{Mukherjee, S.},
  \bibinfo{year}{2023}.
\newblock \bibinfo{title}{On non-linear maximal length cellular automata}, in:
  \bibinfo{booktitle}{Proceedings of Second Asian Symposium on Cellular
  Automata Technology}, \bibinfo{publisher}{Springer Nature Singapore},
  \bibinfo{address}{Singapore}. pp. \bibinfo{pages}{115--126}.
%Type = Inproceedings
\bibitem[{Adak et~al.(2018)Adak, Mukherjee and Das}]{ACRI18-Adak}
\bibinfo{author}{Adak, S.}, \bibinfo{author}{Mukherjee, S.},
  \bibinfo{author}{Das, S.}, \bibinfo{year}{2018}.
\newblock \bibinfo{title}{Do there exist non-linear maximal length cellular
  automata? {A} study}, in: \bibinfo{booktitle}{Proceedings of $13^{th}$
  International Conference on Cellular Automata for Research and Industry,
  {ACRI} 2018, Como, Italy, September 17-21, 2018}, pp.
  \bibinfo{pages}{289--297}.
%Type = Article
\bibitem[{Adak et~al.(2021)Adak, Mukherjee and Das}]{Adak-IS-2021}
\bibinfo{author}{Adak, S.}, \bibinfo{author}{Mukherjee, S.},
  \bibinfo{author}{Das, S.}, \bibinfo{year}{2021}.
\newblock \bibinfo{title}{Reachability problem in non-uniform cellular
  automata}.
\newblock \bibinfo{journal}{Information Sciences} \bibinfo{volume}{543},
  \bibinfo{pages}{72 -- 84}.
%Type = Article
\bibitem[{Adak et~al.(2016)Adak, Naskar, Maji and Das}]{Adak-JCA16}
\bibinfo{author}{Adak, S.}, \bibinfo{author}{Naskar, N.},
  \bibinfo{author}{Maji, P.}, \bibinfo{author}{Das, S.}, \bibinfo{year}{2016}.
\newblock \bibinfo{title}{{O}n {S}ynthesis of {N}on-uniform {C}ellular
  {A}utomata {H}aving {O}nly {P}oint {A}ttractors}.
\newblock \bibinfo{journal}{Journal of Cellular Automata} \bibinfo{volume}{12},
  \bibinfo{pages}{81--100}.
%Type = Inproceedings
\bibitem[{Bardell(1990)}]{Barde90}
\bibinfo{author}{Bardell, P.H.}, \bibinfo{year}{1990}.
\newblock \bibinfo{title}{Analysis of {C}ellular {A}utomata {U}sed as
  {P}seudo-random {P}attern {G}enerators}, in: \bibinfo{booktitle}{Proceedings
  of International Test Conference}, pp. \bibinfo{pages}{762--768}.
%Type = Article
\bibitem[{Bardell(1992)}]{Bardell1992}
\bibinfo{author}{Bardell, P.H.}, \bibinfo{year}{1992}.
\newblock \bibinfo{title}{Primitive polynomials of degree 301 through 500}.
\newblock \bibinfo{journal}{Journal of Electronic Testing} \bibinfo{volume}{3},
  \bibinfo{pages}{175--176}.
%Type = Article
\bibitem[{Bardell and McAnney(1986)}]{Bardell}
\bibinfo{author}{Bardell, P.H.}, \bibinfo{author}{McAnney, W.},
  \bibinfo{year}{1986}.
\newblock \bibinfo{title}{Pseudorandom arrays for built-in tests}.
\newblock \bibinfo{journal}{IEEE Transactions on Computers}
  \bibinfo{volume}{C-35}, \bibinfo{pages}{653--658}.
%Type = Inbook
\bibitem[{Bardell et~al.(1987)Bardell, McAnney and Savir}]{Barde87}
\bibinfo{author}{Bardell, P.H.}, \bibinfo{author}{McAnney, W.H.},
  \bibinfo{author}{Savir, J.}, \bibinfo{year}{1987}.
\newblock \bibinfo{title}{{B}uilt-in {T}est for {VLSI} : {P}seudo-{R}andom
  {T}echniques}. \bibinfo{publisher}{John Wiley \& Sons}.
%Type = Article
\bibitem[{Bhattacharjee et~al.(2018)Bhattacharjee, Naskar, Roy and
  Das}]{kamalikasurvey}
\bibinfo{author}{Bhattacharjee, K.}, \bibinfo{author}{Naskar, N.},
  \bibinfo{author}{Roy, S.}, \bibinfo{author}{Das, S.}, \bibinfo{year}{2018}.
\newblock \bibinfo{title}{A survey of cellular automata: Types, dynamics,
  non-uniformity and applications}.
\newblock \bibinfo{journal}{Natural Computing} , \bibinfo{pages}{1--29}.
%Type = Article
\bibitem[{Cattaneo et~al.(1999)Cattaneo, Formenti, Margara and
  Mauri}]{cattaneo1999dynamical}
\bibinfo{author}{Cattaneo, G.}, \bibinfo{author}{Formenti, E.},
  \bibinfo{author}{Margara, L.}, \bibinfo{author}{Mauri, G.},
  \bibinfo{year}{1999}.
\newblock \bibinfo{title}{On the dynamical behavior of chaotic cellular
  automata}.
\newblock \bibinfo{journal}{Theoretical Computer Science}
  \bibinfo{volume}{217}, \bibinfo{pages}{31--51}.
%Type = Phdthesis
\bibitem[{Cattell(1995)}]{CattellTh}
\bibinfo{author}{Cattell, K.}, \bibinfo{year}{1995}.
\newblock \bibinfo{title}{{C}haracteristic {P}olynomials of {O}ne-{D}imensional
  {L}inear {H}ybrid {C}ellular {A}utomata}.
\newblock Ph.D. thesis. Department of Computer Science, University of Victoria,
  Canada.
%Type = Article
\bibitem[{Cattell and Muzio(1991)}]{Cattell91}
\bibinfo{author}{Cattell, K.}, \bibinfo{author}{Muzio, J.C.},
  \bibinfo{year}{1991}.
\newblock \bibinfo{title}{A linear cellular automata algorithm:theory}.
\newblock \bibinfo{journal}{Dept. of Computer Science, University of Victoria,
  Canada} \bibinfo{note}{{T}echnical Report DCS-161-IR}.
%Type = Article
\bibitem[{Cattell and Muzio(1996a)}]{Cattell2}
\bibinfo{author}{Cattell, K.}, \bibinfo{author}{Muzio, J.C.},
  \bibinfo{year}{1996}a.
\newblock \bibinfo{title}{Analysis of one-dimensional linear hybrid cellular
  automata over {GF}(q)}.
\newblock \bibinfo{journal}{IEEE Transactions on Computers}
  \bibinfo{volume}{45}, \bibinfo{pages}{782--792}.
\newblock \DOIprefix\doi{10.1109/12.508317}.
%Type = Article
\bibitem[{Cattell and Muzio(1996b)}]{CattellM96}
\bibinfo{author}{Cattell, K.}, \bibinfo{author}{Muzio, J.C.},
  \bibinfo{year}{1996}b.
\newblock \bibinfo{title}{Synthesis of one-dimensional linear hybrid cellular
  automata}.
\newblock \bibinfo{journal}{{IEEE} Transactions on Computer-Aided Design of
  Integrated Circuits and Systems} \bibinfo{volume}{15},
  \bibinfo{pages}{325--335}.
%Type = Inproceedings
\bibitem[{Cattell and Serra(1990)}]{Cattell1}
\bibinfo{author}{Cattell, K.}, \bibinfo{author}{Serra, M.},
  \bibinfo{year}{1990}.
\newblock \bibinfo{title}{The analysis of one dimensional multiple-valued
  linear cellular automata}, in: \bibinfo{booktitle}{Proceedings of the
  $20^{th}$ International Symposium on Multiple-Valued Logic}, pp.
  \bibinfo{pages}{402--409}.
\newblock \DOIprefix\doi{10.1109/ISMVL.1990.122655}.
%Type = Article
\bibitem[{Cattell and Zhang(1995)}]{Cattell500}
\bibinfo{author}{Cattell, K.}, \bibinfo{author}{Zhang, S.},
  \bibinfo{year}{1995}.
\newblock \bibinfo{title}{Minimal cost one-dimensional linear hybrid cellular
  automata of degree through 500}.
\newblock \bibinfo{journal}{Journal of Electronic Testing} \bibinfo{volume}{6},
  \bibinfo{pages}{255--258}.
%Type = Incollection
\bibitem[{Cervelle et~al.(2009)Cervelle, Dennunzio and
  Formenti}]{cervelle2009chaotic}
\bibinfo{author}{Cervelle, J.}, \bibinfo{author}{Dennunzio, A.},
  \bibinfo{author}{Formenti, E.}, \bibinfo{year}{2009}.
\newblock \bibinfo{title}{Chaotic behavior of cellular automata}, in:
  \bibinfo{editor}{Meyers, B.} (Ed.), \bibinfo{booktitle}{Encyclopedia of
  Complexity and Systems Science}. \bibinfo{publisher}{Springer-Verlag}, pp.
  \bibinfo{pages}{978--989}.
%Type = Article
\bibitem[{Chang et~al.(1997)Chang, Song, Bae and Kim}]{CHANG97}
\bibinfo{author}{Chang, T.}, \bibinfo{author}{Song, I.}, \bibinfo{author}{Bae,
  J.}, \bibinfo{author}{Kim, K.S.}, \bibinfo{year}{1997}.
\newblock \bibinfo{title}{Maximum length cellular automaton sequences and its
  application}.
\newblock \bibinfo{journal}{Signal Processing} \bibinfo{volume}{56},
  \bibinfo{pages}{199 -- 203}.
%Type = Inproceedings
\bibitem[{Comer et~al.(2012)Comer, Cerda, Martinez and Hoe}]{comer2012random}
\bibinfo{author}{Comer, J.M.}, \bibinfo{author}{Cerda, J.C.},
  \bibinfo{author}{Martinez, C.D.}, \bibinfo{author}{Hoe, D.H.},
  \bibinfo{year}{2012}.
\newblock \bibinfo{title}{Random number generators using cellular automata
  implemented on {FPGAs}}, in: \bibinfo{booktitle}{Proceedings of the $44^{th}$
  Southeastern Symposium on System Theory}, \bibinfo{organization}{IEEE}. pp.
  \bibinfo{pages}{67--72}.
%Type = Article
\bibitem[{Compagner and Hoogland(1987)}]{COMPAGNER1987391}
\bibinfo{author}{Compagner, A.}, \bibinfo{author}{Hoogland, A.},
  \bibinfo{year}{1987}.
\newblock \bibinfo{title}{Maximum-length sequences, cellular automata, and
  random numbers}.
\newblock \bibinfo{journal}{Journal of Computational Physics}
  \bibinfo{volume}{71}, \bibinfo{pages}{391 -- 428}.
%Type = Article
\bibitem[{Das and Pal~Chaudhuri(1993)}]{das1993vector}
\bibinfo{author}{Das, A.K.}, \bibinfo{author}{Pal~Chaudhuri, P.},
  \bibinfo{year}{1993}.
\newblock \bibinfo{title}{Vector space theoretic analysis of additive cellular
  automata and its application for pseudoexhaustive test pattern generation}.
\newblock \bibinfo{journal}{IEEE Transactions on Computers}
  \bibinfo{volume}{42}, \bibinfo{pages}{340--352}.
%Type = Inproceedings
\bibitem[{Das et~al.(1990)Das, Saha, Roy~Chowdhury, Misra and
  Pal~Chaudhuri}]{Das90e}
\bibinfo{author}{Das, A.K.}, \bibinfo{author}{Saha, D.},
  \bibinfo{author}{Roy~Chowdhury, A.}, \bibinfo{author}{Misra, S.},
  \bibinfo{author}{Pal~Chaudhuri, P.}, \bibinfo{year}{1990}.
\newblock \bibinfo{title}{Signature analysers based on additive cellular
  automata}, in: \bibinfo{booktitle}{Proceedings of the $20^{th}$ International
  Symposium on Fault-Tolerant Computing}, \bibinfo{organization}{IEEE}. pp.
  \bibinfo{pages}{265--272}.
%Type = Phdthesis
\bibitem[{Das(2007)}]{SukantaTh}
\bibinfo{author}{Das, S.}, \bibinfo{year}{2007}.
\newblock \bibinfo{title}{{T}heory and {A}pplications of {N}onlinear {C}ellular
  {A}utomata {I}n {VLSI} {D}esign}.
\newblock Ph.D. thesis. {B}engal {E}ngineering and {S}cience {U}niversity,
  {S}hibpur, {I}ndia.
%Type = Inproceedings
\bibitem[{Das et~al.(2003)Das, Kundu, Sen, Sikdar and Pal~Chaudhuri}]{ats03}
\bibinfo{author}{Das, S.}, \bibinfo{author}{Kundu, A.}, \bibinfo{author}{Sen,
  S.}, \bibinfo{author}{Sikdar, B.K.}, \bibinfo{author}{Pal~Chaudhuri, P.},
  \bibinfo{year}{2003}.
\newblock \bibinfo{title}{Non-linear celluar automata based {PRPG} design
  (without prohibited pattern set) in linear time complexity}, in:
  \bibinfo{booktitle}{Proceedings of the $12^{th}$ Asian Test Symposium,
  ATS'03}, \bibinfo{organization}{IEEE}. pp. \bibinfo{pages}{78--83}.
%Type = Inproceedings
\bibitem[{Das et~al.(2004)Das, Kundu and Sikdar}]{ats04d}
\bibinfo{author}{Das, S.}, \bibinfo{author}{Kundu, A.},
  \bibinfo{author}{Sikdar, B.K.}, \bibinfo{year}{2004}.
\newblock \bibinfo{title}{Nonlinear {CA} based design of test set generator
  targeting pseudo-random pattern resistant faults}, in:
  \bibinfo{booktitle}{Proceedings of the $13^{th}$ Asian Test Symposium,
  ATS'04}, \bibinfo{organization}{IEEE}. pp. \bibinfo{pages}{196--201}.
%Type = Inproceedings
\bibitem[{Das and Roy~Chowdhury(2010)}]{ACRIDRC10}
\bibinfo{author}{Das, S.}, \bibinfo{author}{Roy~Chowdhury, D.},
  \bibinfo{year}{2010}.
\newblock \bibinfo{title}{{G}enerating {C}ryptographically {S}uitable
  {Non}-linear {M}aximum {L}ength {C}ellular {A}utomata}, in:
  \bibinfo{booktitle}{{P}roceedings of $9^{th}$ {I}nternational {C}onference on
  {C}ellular {A}utomata for {R}esearch and {I}ndustry ({ACRI})}, pp.
  \bibinfo{pages}{241--250}.
%Type = Article
\bibitem[{Das and Roy~Chowdhury(2011)}]{jcaDasR11}
\bibinfo{author}{Das, S.}, \bibinfo{author}{Roy~Chowdhury, D.},
  \bibinfo{year}{2011}.
\newblock \bibinfo{title}{Cryptographically suitable maximum length cellular
  automata.}
\newblock \bibinfo{journal}{Journal of Cellular Automata} \bibinfo{volume}{6},
  \bibinfo{pages}{439--459}.
%Type = Inbook
\bibitem[{Datta(1993)}]{Datta}
\bibinfo{author}{Datta, K.B..}, \bibinfo{year}{1993}.
\newblock \bibinfo{title}{{M}atrix and {L}inear {A}lgebra}.
  \bibinfo{publisher}{BPB Publications}.
%Type = Article
\bibitem[{Eisele(1991)}]{eisele1991long}
\bibinfo{author}{Eisele, M.}, \bibinfo{year}{1991}.
\newblock \bibinfo{title}{Long-range correlations in chaotic cellular
  automata}.
\newblock \bibinfo{journal}{Physica D: Nonlinear Phenomena}
  \bibinfo{volume}{48}, \bibinfo{pages}{295--310}.
%Type = Incollection
\bibitem[{Formenti et~al.(2014)Formenti, Imai, Martin and
  Yun{\`e}s}]{formenti2014advances}
\bibinfo{author}{Formenti, E.}, \bibinfo{author}{Imai, K.},
  \bibinfo{author}{Martin, B.}, \bibinfo{author}{Yun{\`e}s, J.B.},
  \bibinfo{year}{2014}.
\newblock \bibinfo{title}{Advances on random sequence generation by uniform
  cellular automata}, in: \bibinfo{booktitle}{Proceedings of Computing with New
  Resources}. \bibinfo{publisher}{Springer}, pp. \bibinfo{pages}{56--70}.
%Type = Inbook
\bibitem[{Fujiwara(1985)}]{Fujiw85}
\bibinfo{author}{Fujiwara, H.}, \bibinfo{year}{1985}.
\newblock \bibinfo{title}{Logic Design and Design for Testability}.
  \bibinfo{publisher}{The MIT Press, London}.
%Type = Phdthesis
\bibitem[{Ganguly(2004)}]{NiloyPhD}
\bibinfo{author}{Ganguly, N.}, \bibinfo{year}{2004}.
\newblock \bibinfo{title}{{C}ellular {A}utomata {E}volution : {T}heory and
  {A}pplications in {P}attern {R}ecognition and {C}lassification}.
\newblock Ph.D. thesis. {Bengal Engineering College (a Deemed University)},
  {I}ndia.
%Type = Inproceedings
\bibitem[{Ganguly et~al.(2002)Ganguly, Nandi, Das, Sikdar and
  Pal~Chaudhuri}]{ganguly2002evolutionary}
\bibinfo{author}{Ganguly, N.}, \bibinfo{author}{Nandi, A.},
  \bibinfo{author}{Das, S.}, \bibinfo{author}{Sikdar, B.K.},
  \bibinfo{author}{Pal~Chaudhuri, P.}, \bibinfo{year}{2002}.
\newblock \bibinfo{title}{An evolutionary strategy to design an on-chip test
  pattern generator without prohibited pattern set ({PPS})}, in:
  \bibinfo{booktitle}{Proceedings of the $11^{th}$ Asian Test Symposium,
  ATS'02}, \bibinfo{organization}{IEEE}. pp. \bibinfo{pages}{260--265}.
%Type = Article
\bibitem[{Ganguly et~al.(2008)Ganguly, Sikdar and Pal~Chaudhuri}]{NiloyCycle}
\bibinfo{author}{Ganguly, N.}, \bibinfo{author}{Sikdar, B.K.},
  \bibinfo{author}{Pal~Chaudhuri, P.}, \bibinfo{year}{2008}.
\newblock \bibinfo{title}{Exploring cycle structures of additive cellular
  automata}.
\newblock \bibinfo{journal}{Fundamenta Informaticae} \bibinfo{volume}{2},
  \bibinfo{pages}{137--154}.
%Type = Inproceedings
\bibitem[{Ghosh et~al.(2014)Ghosh, Sengupta, Saha and Chowdhury}]{ACRIDRC14}
\bibinfo{author}{Ghosh, S.}, \bibinfo{author}{Sengupta, A.},
  \bibinfo{author}{Saha, D.}, \bibinfo{author}{Chowdhury, D.R.},
  \bibinfo{year}{2014}.
\newblock \bibinfo{title}{{A} {S}calable {M}ethod for {C}onstructing
  {N}on-linear {C}ellular {A}utomata with {P}eriod ${2^n}-1$}, in:
  \bibinfo{booktitle}{{P}roceedings of $11^{th}$ {I}nternational {C}onference
  on {C}ellular {A}utomata for {R}esearch and {I}ndustry ({ACRI})}, pp.
  \bibinfo{pages}{65--74}.
%Type = Article
\bibitem[{Hansen and Mullen(1992)}]{Hansen1992}
\bibinfo{author}{Hansen, T.}, \bibinfo{author}{Mullen, G.L.},
  \bibinfo{year}{1992}.
\newblock \bibinfo{title}{Primitive polynomials over finite fields}.
\newblock \bibinfo{journal}{Mathematics of Computation} \bibinfo{volume}{59},
  \bibinfo{pages}{639--643}.
%Type = Article
\bibitem[{Hortensius et~al.(1989a)Hortensius, Card, McLeod and
  Pries}]{Horte89b}
\bibinfo{author}{Hortensius, P.D.}, \bibinfo{author}{Card, H.C.},
  \bibinfo{author}{McLeod, R.D.}, \bibinfo{author}{Pries, W.},
  \bibinfo{year}{1989}a.
\newblock \bibinfo{title}{Importance sampling for {I}sing computers using
  one-dimensional cellular automata}.
\newblock \bibinfo{journal}{IEEE Transactions on Computers}
  \bibinfo{volume}{C-38}, \bibinfo{pages}{769--774}.
%Type = Article
\bibitem[{Hortensius et~al.(1989b)Hortensius, McLeod and Card}]{Horte89c}
\bibinfo{author}{Hortensius, P.D.}, \bibinfo{author}{McLeod, R.D.},
  \bibinfo{author}{Card, H.C.}, \bibinfo{year}{1989}b.
\newblock \bibinfo{title}{Parallel random number generation for {VLSI} systems
  using cellular automata}.
\newblock \bibinfo{journal}{IEEE Transactions on Computers}
  \bibinfo{volume}{38}, \bibinfo{pages}{1466--1473}.
%Type = Article
\bibitem[{Hortensius et~al.(1990)Hortensius, McLeod and Card}]{Horte90b}
\bibinfo{author}{Hortensius, P.D.}, \bibinfo{author}{McLeod, R.D.},
  \bibinfo{author}{Card, H.C.}, \bibinfo{year}{1990}.
\newblock \bibinfo{title}{Cellular automata based signature analysis for
  built-in self-test}.
\newblock \bibinfo{journal}{IEEE Transactions on Computers}
  \bibinfo{volume}{C-39}, \bibinfo{pages}{1273--1283}.
%Type = Article
\bibitem[{Hortensius et~al.(1989c)Hortensius, Mcleod, Pries, Miller and
  Card}]{Horte89a}
\bibinfo{author}{Hortensius, P.D.}, \bibinfo{author}{Mcleod, R.D.},
  \bibinfo{author}{Pries, W.}, \bibinfo{author}{Miller, D.M.},
  \bibinfo{author}{Card, H.C.}, \bibinfo{year}{1989}c.
\newblock \bibinfo{title}{Cellular automata-based pseudorandom number
  generators for built-in self-test}.
\newblock \bibinfo{journal}{IEEE Transactions on Computer-Aided Design of
  Integrated Circuits and Systems} \bibinfo{volume}{8},
  \bibinfo{pages}{842--859}.
%Type = Article
\bibitem[{Kamilya et~al.(2019)Kamilya, Adak, Sikdar and Das}]{Supreeti-JCA19}
\bibinfo{author}{Kamilya, S.}, \bibinfo{author}{Adak, S.},
  \bibinfo{author}{Sikdar, B.K.}, \bibinfo{author}{Das, S.},
  \bibinfo{year}{2019}.
\newblock \bibinfo{title}{Sacas: (non-uniform) cellular automata that converge
  to a single fixed point}.
\newblock \bibinfo{journal}{Journal of Cellular Automata} \bibinfo{volume}{14},
  \bibinfo{pages}{27--49}.
%Type = Article
\bibitem[{Kamilya and Das(2018)}]{Supreeti_2018_chaos}
\bibinfo{author}{Kamilya, S.}, \bibinfo{author}{Das, S.}, \bibinfo{year}{2018}.
\newblock \bibinfo{title}{A study of chaos in cellular automata}.
\newblock \bibinfo{journal}{International Journal of Bifurcation and Chaos}
  \bibinfo{volume}{28}, \bibinfo{pages}{1830008}.
%Type = Book
\bibitem[{Knuth(1981)}]{Knuth81}
\bibinfo{author}{Knuth, D.E.}, \bibinfo{year}{1981}.
\newblock \bibinfo{title}{The Art of Computer Programming : Seminumerical
  Algorithms, Volume 2}.
\newblock \bibinfo{publisher}{Addison-Wesley Longman Publishing Co., Inc.},
  \bibinfo{address}{Boston, MA, USA}.
%Type = Article
\bibitem[{L'Ecuyer and Simard(2007)}]{L'Ecuyer2007}
\bibinfo{author}{L'Ecuyer, P.}, \bibinfo{author}{Simard, R.},
  \bibinfo{year}{2007}.
\newblock \bibinfo{title}{Testu01: A c library for empirical testing of random
  number generators}.
\newblock \bibinfo{journal}{ACM Transactions on Mathematical Software}
  \bibinfo{volume}{33}, \bibinfo{pages}{22:1--22:40}.
%Type = Article
\bibitem[{Leporati and Mariot(2014)}]{leporati2014cryptographic}
\bibinfo{author}{Leporati, A.}, \bibinfo{author}{Mariot, L.},
  \bibinfo{year}{2014}.
\newblock \bibinfo{title}{Cryptographic properties of bipermutive cellular
  automata rules.}
\newblock \bibinfo{journal}{Journal of Cellular Automata} \bibinfo{volume}{9},
  \bibinfo{pages}{437--475}.
%Type = Book
\bibitem[{Lidl and Niederreiter(1986)}]{Lidl1986}
\bibinfo{author}{Lidl, R.}, \bibinfo{author}{Niederreiter, H.},
  \bibinfo{year}{1986}.
\newblock \bibinfo{title}{Introduction to Finite Fields and Their
  Applications}.
\newblock \bibinfo{publisher}{Cambridge University Press},
  \bibinfo{address}{New York, USA}.
%Type = Book
\bibitem[{Lin and Costello(1983)}]{Lin1983}
\bibinfo{author}{Lin, S.}, \bibinfo{author}{Costello, D.},
  \bibinfo{year}{1983}.
\newblock \bibinfo{title}{Error Control Coding : Fundamentals and
  Applications}.
\newblock \bibinfo{publisher}{Englewood Cliffs}, \bibinfo{address}{NJ:
  Prentice-Hall}.
%Type = Inproceedings
\bibitem[{Marsaglia(1996)}]{diehard}
\bibinfo{author}{Marsaglia, G.}, \bibinfo{year}{1996}.
\newblock \bibinfo{title}{{DIEHARD}: {A} battery of tests of randomness}, in:
  \bibinfo{booktitle}{http://stat.fsu.edu/\~~geo/diehard.html}.
%Type = Article
\bibitem[{Matsumoto(1998)}]{Makato98}
\bibinfo{author}{Matsumoto, M.}, \bibinfo{year}{1998}.
\newblock \bibinfo{title}{Simple cellular automata as pseudorandom m-sequence
  generators for built-in self-test}.
\newblock \bibinfo{journal}{ACM Transactions on Modeling and Computer
  Simulation} \bibinfo{volume}{8}, \bibinfo{pages}{31--42}.
%Type = Book
\bibitem[{McEliece(1987)}]{McEliece87}
\bibinfo{author}{McEliece, R.J.}, \bibinfo{year}{1987}.
\newblock \bibinfo{title}{Finite Field for Scientists and Engineers}.
\newblock \bibinfo{publisher}{Kluwer Academic Publishers},
  \bibinfo{address}{Norwell, MA, USA}.
%Type = Article
\bibitem[{Mesirov and Sweet(1987)}]{Mesirov1987}
\bibinfo{author}{Mesirov, J.P.}, \bibinfo{author}{Sweet, M.M.},
  \bibinfo{year}{1987}.
\newblock \bibinfo{title}{Continued fraction expansions of rational expressions
  with irreducible denominators in characteristic 2}.
\newblock \bibinfo{journal}{Journal of Number Theory} \bibinfo{volume}{27},
  \bibinfo{pages}{144 -- 148}.
%Type = Article
\bibitem[{Mitchell et~al.(1993)Mitchell, Hraber and
  Crutchfield}]{Mitchell1993chaos}
\bibinfo{author}{Mitchell, M.}, \bibinfo{author}{Hraber, P.T.},
  \bibinfo{author}{Crutchfield, J.P.}, \bibinfo{year}{1993}.
\newblock \bibinfo{title}{Revisiting the edge of chaos: Evolving cellular
  automata to perform computations}.
\newblock \bibinfo{journal}{Complex Systems} \bibinfo{volume}{7},
  \bibinfo{pages}{89--130}.
%Type = Article
\bibitem[{Mukherjee et~al.(2021)Mukherjee, Adak, Bhattacharjee and
  Das}]{Sukanya-IJMPC-2021}
\bibinfo{author}{Mukherjee, S.}, \bibinfo{author}{Adak, S.},
  \bibinfo{author}{Bhattacharjee, K.}, \bibinfo{author}{Das, S.},
  \bibinfo{year}{2021}.
\newblock \bibinfo{title}{Non-uniform non-linear cellular automata with large
  cycles and their application in pseudo-random number generation}.
\newblock \bibinfo{journal}{International Journal of Modern Physics C}
  \bibinfo{volume}{32}, \bibinfo{pages}{2150091}.
%Type = Article
\bibitem[{Nandi et~al.(1994)Nandi, Kar and Pal~Chaudhuri}]{nandi1994theory}
\bibinfo{author}{Nandi, S.}, \bibinfo{author}{Kar, B.K.},
  \bibinfo{author}{Pal~Chaudhuri, P.}, \bibinfo{year}{1994}.
\newblock \bibinfo{title}{Theory and applications of cellular automata in
  cryptography}.
\newblock \bibinfo{journal}{IEEE Transactions on computers}
  \bibinfo{volume}{43}, \bibinfo{pages}{1346--1357}.
%Type = Inproceedings
\bibitem[{Nandi and Pal~Chaudhuri(1993)}]{Nandi93d}
\bibinfo{author}{Nandi, S.}, \bibinfo{author}{Pal~Chaudhuri, P.},
  \bibinfo{year}{1993}.
\newblock \bibinfo{title}{Additive cellular automata as an on-chip test pattern
  generator}, in: \bibinfo{booktitle}{Proceedings of $2^{nd}$ Asian Test
  Symposium}, \bibinfo{organization}{IEEE}. pp. \bibinfo{pages}{166--171}.
%Type = Article
\bibitem[{Nandi and Pal~Chaudhuri(1996)}]{Nandi96}
\bibinfo{author}{Nandi, S.}, \bibinfo{author}{Pal~Chaudhuri, P.},
  \bibinfo{year}{1996}.
\newblock \bibinfo{title}{Analysis of periodic and intermediate boundary 90/150
  cellular automata}.
\newblock \bibinfo{journal}{IEEE Transactions on computers}
  \bibinfo{volume}{45}, \bibinfo{pages}{1--12}.
%Type = Inproceedings
\bibitem[{Naskar et~al.(2014)Naskar, Adak, Maji and Das}]{ACRI14}
\bibinfo{author}{Naskar, N.}, \bibinfo{author}{Adak, S.},
  \bibinfo{author}{Maji, P.}, \bibinfo{author}{Das, S.}, \bibinfo{year}{2014}.
\newblock \bibinfo{title}{{S}ynthesis of {N}on-uniform {C}ellular {A}utomata
  {H}aving only {P}oint {A}ttractors}, in: \bibinfo{booktitle}{{P}roceedings of
  $11^{th}$ {I}nternational {C}onference on {C}ellular {A}utomata for
  {R}esearch and {I}ndustry ({ACRI})}, pp. \bibinfo{pages}{105--114}.
%Type = Inbook
\bibitem[{Pal~Chaudhuri et~al.(1997)Pal~Chaudhuri, Roy~Chowdhury, Nandi and
  Chatterjee}]{ppc1}
\bibinfo{author}{Pal~Chaudhuri, P.}, \bibinfo{author}{Roy~Chowdhury, D.},
  \bibinfo{author}{Nandi, S.}, \bibinfo{author}{Chatterjee, S.},
  \bibinfo{year}{1997}.
\newblock \bibinfo{title}{Additive Cellular Automata -- Theory and
  Applications}. \bibinfo{publisher}{IEEE Computer Society Press, USA, ISBN
  0-8186-7717-1}. volume~\bibinfo{volume}{1}.
%Type = Phdthesis
\bibitem[{Paul(2002)}]{KolinPhd}
\bibinfo{author}{Paul, K.}, \bibinfo{year}{2002}.
\newblock \bibinfo{title}{{T}heory and {A}pplication of ${GF}(2^p)$ {C}ellular
  {A}utomata}.
\newblock Ph.D. thesis. {B}engal {E}ngineering {C}ollege (a {D}eemed
  {U}niversity), {I}ndia.
%Type = Article
\bibitem[{Paul et~al.(2002)Paul, Roy~Chowdhury and Pal~Chaudhuri}]{kolintc}
\bibinfo{author}{Paul, K.}, \bibinfo{author}{Roy~Chowdhury, D.},
  \bibinfo{author}{Pal~Chaudhuri, P.}, \bibinfo{year}{2002}.
\newblock \bibinfo{title}{{T}heory of {E}xtended {L}inear {M}achines}.
\newblock \bibinfo{journal}{IEEE Transactions on Computers}
  \bibinfo{volume}{51}, \bibinfo{pages}{1106--1110}.
%Type = Book
\bibitem[{Peterson and Weldon(1972)}]{Peterson1972}
\bibinfo{author}{Peterson, W.W.}, \bibinfo{author}{Weldon, E.J.},
  \bibinfo{year}{1972}.
\newblock \bibinfo{title}{Error-correcting codes}.
\newblock \bibinfo{edition}{2nd} ed., \bibinfo{publisher}{MIT Press Cambridge}.
%Type = Article
\bibitem[{Pries et~al.(1986)Pries, Thanailakis and Card}]{Pries86}
\bibinfo{author}{Pries, W.}, \bibinfo{author}{Thanailakis, A.},
  \bibinfo{author}{Card, H.C.}, \bibinfo{year}{1986}.
\newblock \bibinfo{title}{Group properties of cellular automata and {VLSI}
  applications}.
\newblock \bibinfo{journal}{IEEE Transactions on Computers}
  \bibinfo{volume}{C-35}, \bibinfo{pages}{1013--1024}.
%Type = Article
\bibitem[{Rabin(1980)}]{Rabin-irreducible}
\bibinfo{author}{Rabin, M.}, \bibinfo{year}{1980}.
\newblock \bibinfo{title}{Probabilistic algorithms in finite fields}.
\newblock \bibinfo{journal}{SIAM Journal on Computing} \bibinfo{volume}{9},
  \bibinfo{pages}{273--280}.
%Type = Article
\bibitem[{Roy~Chowdhury et~al.(1994)Roy~Chowdhury, Basu, Sen~Gupta and
  Pal~Chaudhuri}]{Chowd94a}
\bibinfo{author}{Roy~Chowdhury, D.}, \bibinfo{author}{Basu, S.},
  \bibinfo{author}{Sen~Gupta, I.}, \bibinfo{author}{Pal~Chaudhuri, P.},
  \bibinfo{year}{1994}.
\newblock \bibinfo{title}{Design of {CAECC} -- {C}ellular automata based error
  correcting code}.
\newblock \bibinfo{journal}{IEEE Transactions on Computers}
  \bibinfo{volume}{43}, \bibinfo{pages}{759--764}.
%Type = Book
\bibitem[{Rueppel(1986)}]{Rueppel1986}
\bibinfo{author}{Rueppel, R.A.}, \bibinfo{year}{1986}.
\newblock \bibinfo{title}{Analysis and Design of Stream Ciphers}.
\newblock \bibinfo{publisher}{Springer-Verlag}, \bibinfo{address}{Berlin,
  Heidelberg}.
%Type = Article
\bibitem[{Sarkar(2003)}]{Sarkar2003}
\bibinfo{author}{Sarkar, P.}, \bibinfo{year}{2003}.
\newblock \bibinfo{title}{Computing shifts in 90/150 cellular automata
  sequences}.
\newblock \bibinfo{journal}{Finite Fields and Their Applications}
  \bibinfo{volume}{9}, \bibinfo{pages}{175 -- 186}.
%Type = Article
\bibitem[{Saxena and McCluskey(2004)}]{Nirmal04}
\bibinfo{author}{Saxena, N.R.}, \bibinfo{author}{McCluskey, E.J.},
  \bibinfo{year}{2004}.
\newblock \bibinfo{title}{Primitive polynomial generation algorithms
  implementation and performance analysis}.
\newblock \bibinfo{journal}{Technical Report CRC TR 04-03, Center for Reliable
  Computing, Stanford University, California} .
%Type = Article
\bibitem[{Serra et~al.(1990)Serra, Slater, Muzio and Miller}]{Serra90c}
\bibinfo{author}{Serra, M.}, \bibinfo{author}{Slater, T.},
  \bibinfo{author}{Muzio, J.C.}, \bibinfo{author}{Miller, D.M.},
  \bibinfo{year}{1990}.
\newblock \bibinfo{title}{The analysis of one-dimensional linear cellular
  automata and their aliasing probabilities}.
\newblock \bibinfo{journal}{IEEE Transactions on Computer-Aided Design}
  \bibinfo{volume}{9}, \bibinfo{pages}{767--778}.
%Type = Article
\bibitem[{Sipper and Tomassini(1996)}]{sipper1996generating}
\bibinfo{author}{Sipper, M.}, \bibinfo{author}{Tomassini, M.},
  \bibinfo{year}{1996}.
\newblock \bibinfo{title}{Generating parallel random number generators by
  cellular programming}.
\newblock \bibinfo{journal}{International Journal of Modern Physics C}
  \bibinfo{volume}{7}, \bibinfo{pages}{181--190}.
%Type = Inproceedings
\bibitem[{Soto(1999)}]{Soto99}
\bibinfo{author}{Soto, J.}, \bibinfo{year}{1999}.
\newblock \bibinfo{title}{Statistical testing of random number generators}, in:
  \bibinfo{booktitle}{Proceedings of the $22^{nd}$ National Information Systems
  Security Conference, Arlington, Virginia, United States}.
%Type = Article
\bibitem[{Tezuka and Fushimi(1994)}]{Tezuka}
\bibinfo{author}{Tezuka, S.}, \bibinfo{author}{Fushimi, M.},
  \bibinfo{year}{1994}.
\newblock \bibinfo{title}{A {M}ethod of {D}esigning {C}ellular {A}utomata as
  {P}seudorandom {N}umber {G}enerators for {B}uilt-in {S}elf-{T}est for
  {VLSI}}.
\newblock \bibinfo{journal}{Finite Fields: Theory, Applications and Algorithms,
  Contemporary Mathematics AMS} \bibinfo{volume}{168},
  \bibinfo{pages}{363--367}.
%Type = Article
\bibitem[{Tomassini et~al.(2000)Tomassini, Sipper and
  Perrenoud}]{tomassini2000generation}
\bibinfo{author}{Tomassini, M.}, \bibinfo{author}{Sipper, M.},
  \bibinfo{author}{Perrenoud, M.}, \bibinfo{year}{2000}.
\newblock \bibinfo{title}{On the generation of high-quality random numbers by
  two-dimensional cellular automata}.
\newblock \bibinfo{journal}{IEEE Transactions on computers}
  \bibinfo{volume}{49}, \bibinfo{pages}{1146--1151}.
%Type = Article
\bibitem[{Tsalides(1990)}]{tsalides1990cellular}
\bibinfo{author}{Tsalides, P.}, \bibinfo{year}{1990}.
\newblock \bibinfo{title}{Cellular automata-based built-in self-test structures
  for {VLSI} systems}.
\newblock \bibinfo{journal}{Electronics Letters} \bibinfo{volume}{26},
  \bibinfo{pages}{1350--1352}.
%Type = Book
\bibitem[{Tsui(1987)}]{Tsui1987}
\bibinfo{author}{Tsui, F.}, \bibinfo{year}{1987}.
\newblock \bibinfo{title}{LSI/VLSI Testability Design}.
\newblock \bibinfo{publisher}{McGraw-Hill}, \bibinfo{address}{New York}.
%Type = Inproceedings
\bibitem[{Wang et~al.(2008)Wang, Yu, Ding and Leng}]{wang2008generating}
\bibinfo{author}{Wang, Q.}, \bibinfo{author}{Yu, S.}, \bibinfo{author}{Ding,
  W.}, \bibinfo{author}{Leng, M.}, \bibinfo{year}{2008}.
\newblock \bibinfo{title}{Generating high-quality random numbers by cellular
  automata with {PSO}}, in: \bibinfo{booktitle}{Proceedings of $4^{th}$
  International Conference on Natural Computation},
  \bibinfo{organization}{IEEE}. pp. \bibinfo{pages}{430--433}.
%Type = Book
\bibitem[{Witten et~al.(2016)Witten, Frank, Hall and Pal}]{Weka-book}
\bibinfo{author}{Witten, I.H.}, \bibinfo{author}{Frank, E.},
  \bibinfo{author}{Hall, M.A.}, \bibinfo{author}{Pal, C.J.},
  \bibinfo{year}{2016}.
\newblock \bibinfo{title}{Data Mining, Fourth Edition: Practical Machine
  Learning Tools and Techniques}.
\newblock \bibinfo{edition}{4th} ed., \bibinfo{publisher}{Morgan Kaufmann
  Publishers Inc.}, \bibinfo{address}{San Francisco, CA, USA}.
%Type = Article
\bibitem[{Wolfram(1983)}]{Wolfr83}
\bibinfo{author}{Wolfram, S.}, \bibinfo{year}{1983}.
\newblock \bibinfo{title}{Statistical mechanics of cellular automata}.
\newblock \bibinfo{journal}{Reviews of modern physics} \bibinfo{volume}{55},
  \bibinfo{pages}{601--644}.
%Type = Article
\bibitem[{Wolfram(1984)}]{Wolfram84b}
\bibinfo{author}{Wolfram, S.}, \bibinfo{year}{1984}.
\newblock \bibinfo{title}{Universality and complexity in cellular automata}.
\newblock \bibinfo{journal}{Physica D: Nonlinear Phenomena}
  \bibinfo{volume}{10}, \bibinfo{pages}{1 -- 35}.
%Type = Article
\bibitem[{Wolfram(1985)}]{Wolfr85a}
\bibinfo{author}{Wolfram, S.}, \bibinfo{year}{1985}.
\newblock \bibinfo{title}{Origins of randomness in physical systems}.
\newblock \bibinfo{journal}{Physical Review Letters} \bibinfo{volume}{55},
  \bibinfo{pages}{449--452}.
%Type = Article
\bibitem[{Zhang et~al.(1991)Zhang, Miller and Muzio}]{Zhang91}
\bibinfo{author}{Zhang, S.}, \bibinfo{author}{Miller, D.M.},
  \bibinfo{author}{Muzio, J.C.}, \bibinfo{year}{1991}.
\newblock \bibinfo{title}{Determination of minimal cost one-dimensional linear
  hybrid cellular automata}.
\newblock \bibinfo{journal}{Electronics Letters} \bibinfo{volume}{27},
  \bibinfo{pages}{1625--1627}.

\end{thebibliography}

%% else use the following coding to input the bibitems directly in the
%% TeX file.

%% Refer following link for more details about bibliography and citations.
%% https://en.wikibooks.org/wiki/LaTeX/Bibliography_Management

\end{document}